\DeclareMathOperator*{\argmax}{arg\,max}
\DeclareMathOperator*{\argmin}{arg\,min}
\renewcommand\paragraph{\@startsection{paragraph}{4}%
  {0pt}%
  {1ex \@plus .2ex}%
  {-0.5em}%
  {\normalfont\bfseries}}
\renewcommand\subsubsection{\@startsection{subsubsection}{3}
  \z@{.5\linespacing\@plus.7\linespacing}{-.5em}%
  {\normalfont\bfseries}}
\def\l@paragraph{\@tocline{4}{0pt}{1pc}{7pc}{}}
\def\l@subparagraph{\@tocline{5}{0pt}{1pc}{7pc}{}}
\theoremstyle{plain}
\newtheorem{theorem}{Theorem}
\newtheorem{proposition}[theorem]{Proposition}
\newtheorem{lemma}[theorem]{Lemma}
\newtheorem{corollary}[theorem]{Corollary}
\newtheorem*{example1}{\textsc{Example 1}}
\newtheorem*{example1a}{\textsc{Example 1(a)}}
\newtheorem*{example2}{\textsc{Example 2}}
\theoremstyle{definition}
\newtheorem{assumption}[theorem]{Assumption}
\newtheorem{subassumption}{Assumption}[assumption]
\theoremstyle{remark}
\title{Tuning-Free Sampling via Optimization on the Space of Probability Measures}
\author{Louis Sharrock\textsuperscript{1}}
\address{\textsuperscript{1}Department of Statistical Science, University College London, UK}
\email{l.sharrock@ucl.ac.uk}
\author{Christopher Nemeth\textsuperscript{2}}
\address{\textsuperscript{2}School of Mathematical Sciences, Lancaster University, UK}
\email{c.nemeth@lancaster.ac.uk}
\begin{document}

\begin{abstract}
We introduce adaptive, tuning-free step size schedules for gradient-based 
sampling algorithms obtained as time-discretizations of Wasserstein gradient flows. The result is a suite of tuning-free sampling algorithms, including tuning-free variants of the unadjusted Langevin algorithm (ULA), stochastic gradient Langevin dynamics (SGLD), mean-field Langevin dynamics (MFLD), Stein variational gradient descent (SVGD), and variational gradient descent (VGD). More widely, our approach yields tuning-free algorithms for solving a broad class of stochastic optimization problems over the space of probability measures. Under mild assumptions (e.g., geodesic convexity and locally bounded stochastic gradients), we establish strong theoretical guarantees for our approach. In particular, we recover the convergence rate of optimally tuned versions of these algorithms up to logarithmic factors, in both nonsmooth and smooth settings. We then benchmark the performance of our methods against comparable existing approaches. Across a variety of tasks, our algorithms achieve similar performance to the optimal performance of existing algorithms, with no need to tune a step size parameter. 
\end{abstract}

\maketitle

\tableofcontents

\section{Introduction}
\label{sec:introduction}
Many tasks in computational statistics and machine learning can be viewed as optimization problems over a space of probability measures of the form
\begin{equation}
    \pi = \argmin_{\mu\in\mathcal{P}_2(\mathcal{X})} \mathcal{F}(\mu), \label{eq:wasserstein-optimization-intro}
\end{equation}
where $\mathcal{P}_2(\mathcal{X})$ is the set of Borel probability measures supported on $\mathcal{X}\subseteq\mathbb{R}^d$ with finite second moment, and $\mathcal{F}:\mathcal{P}_2(\mathcal{X})\rightarrow\mathbb{R}$ is a functional which maps probability measures to $\mathbb{R}$. Examples include sampling from a target probability measure \citep{wibisono2018sampling}, training mean-field neural networks \citep{suzuki2023convergence}, computing Wasserstein barycenters \citep{chewi2020gradient}, distributionally robust optimization \citep{kuhn2019wasserstein}, and marginal maximum likelihood estimation in latent variable models \citep{kuntz2023particle}. 

A general strategy for solving problems of this type is to simulate a time discretization of the \emph{gradient flow} of the objective functional over the space of probability measures, having equipped this space with a suitable metric \citep{ambrosio2008gradient}. Indeed, many well-known sampling algorithms can be interpreted through this lens. For instance, the unadjusted Langevin algorithm (ULA), a popular Markov chain Monte Carlo (MCMC) method, corresponds to the forward-flow discretization of the gradient flow of the Kullback-Leibler (KL) divergence with respect to the quadratic Wasserstein metric \citep{wibisono2018sampling,durmus2019analysis}. Similarly, the mean-field Langevin dynamics (MFLD), which arises in the analysis of mean-field neural networks, can be viewed as the forward-flow discretization of the Wasserstein gradient flow (WGF) of an entropy regularized objective functional \citep[e.g.,][]{suzuki2023convergence}. Meanwhile, Stein variational gradient descent (SVGD) can be interpreted as an explicit Euler discretization of the gradient flow of the KL divergence under a kernelized Wasserstein metric \citep{liu2017stein,duncan2019geometry}. 

A persistent challenge across all such methods is the selection of a suitable \emph{learning rate} or \emph{step size} parameter. This step size must be small enough to guarantee convergence to the minimizer of the optimization problem -- or a close approximation thereof -- while also sufficiently large to ensure convergence in a reasonable number of iterations. In principle, existing non-asymptotic convergence guarantees enable one to derive an optimal step size for certain algorithms (e.g., \citealp{korba2020nonasymptotic,salim2022convergence,sun2023convergence} for SVGD; \citealp{dalalyan2017further,dalalyan2017theoretical,durmus2017nonasymptotic,dalalyan2019user,durmus2019analysis,durmus2019high} for ULA). In practice, however, these optimal step sizes depend on properties of the unknown minimizer (see, e.g., Theorem 9 in \citealp{durmus2019analysis}; Corollary 6 in \citealp{korba2020nonasymptotic}), and thus provide limited practical utility.

To illustrate this point, let us consider ULA, a widely used sampling algorithm used to generate samples approximately distributed according to a target distribution $\pi$ on $\mathbb{R}^d$, which we assume has density $\pi(x)\propto e^{-U(x)}$ with respect to the Lebesgue measure.\footnote{In a slight abuse of notation, we use $\pi$ to denote both the target distribution and its density with respect to the Lebesgue measure.} Specifically, ULA defines a discrete-time, possibly nonhomogeneous Markov chain $(x_t)_{t\geq 0}$ according to the recursion
\begin{equation}
    x_{t+1} = x_t - \eta_t \nabla U(x_t) + \sqrt{2\eta_t}z_t, \quad\quad x_0\sim \mu_0 
\end{equation}
where $(z_t)_{t\geq 0}$ is an i.i.d. sequence of $d$-dimensional standard Gaussian random variables, and $(\eta_t)_{t\geq 0}$ is a positive sequence of step sizes. Under the assumptions that $\pi$ is log-concave (i.e., the potential $U$ is convex), and the step size $\eta_t:=\eta$ is constant for all $t$, then it is possible to establish the following non-asymptotic convergence rate (see Corollary \ref{corollary:average-bound-forward-flow})
\begin{equation}
        \mathrm{KL}(\bar{\mu}_T\|\pi) \leq \frac{1}{T}\left[\frac{W_2^2(\mu_{\frac{1}{2}},\pi)}{2\eta} + \frac{\eta}{2}\sum_{t=1}^{T} \int_{\mathbb{R}^d}\|\nabla U(x)\|^2\mathrm{d}\mu_t(x)\right], \label{eq:lmc-bound}
\end{equation}
where $\bar{\mu}_T=\frac{1}{T}\sum_{t=1}^T \mu_t$ denotes the average of the marginal distributions $(\mu_t)_{t\in[T]}$ of the algorithm iterates $(X_t)_{t\in[T]}$, where $[T]=1,\dots,T$; and where $\smash{\mu_{\frac{1}{2}}}$ denotes the marginal distribution of the algorithm after the first ``half-step'': $\smash{\mu_{\frac{1}{2}} = \mathrm{Law}(x_{\frac{1}{2}})}$, where $\smash{x_{\frac{1}{2}}=x_0 - \eta \nabla U(x_0)}$. By minimizing this upper bound with respect to the step size $\eta$, the \textit{optimal} fixed step size for ULA can easily be obtained as
\begin{equation}
    \eta_{*} = \frac{W_2(\mu_{\frac{1}{2}},\pi)}{\sqrt{\sum_{t=1}^{T} \int_{\mathbb{R}^d}\|\nabla U(x)\|^2\,\mathrm{d}\mu_t(x)}}. 
     \label{eq:optimal-average-bound-forward-flow-intro}
\end{equation}
Consequently, after substituting this step size back into the bound in \eqref{eq:lmc-bound}, the \textit{optimal} convergence rate for ULA (in the convex, nonsmooth setting) is given by
\begin{equation}
\mathrm{KL}(\bar{\mu}^{*}_T\|\pi) \leq \frac{1}{T} W_2(\mu_{\frac{1}{2}},\pi)\sqrt{\sum_{t=1}^{T} \int_{\mathbb{R}^d}\|\nabla U(x)\|^2\,\mathrm{d}\mu_t(x)}. \label{eq:optimal-rate-lmc-intro}
\end{equation}
Unfortunately, the optimal step size $\eta_{*}$ in \eqref{eq:optimal-average-bound-forward-flow-intro} cannot be computed in practice, since it relies on several unknown quantities: the Wasserstein distance $\smash{W_2(\mu_{\frac{1}{2}},\pi)}$; the sequence of gradients $\smash{(\nabla U(x_t))_{t\in[T]}}$; and the sequence of distributions $(\mu_t)_{t\in[T]}$. Thus, in practice, it is not possible to obtain the optimal upper bound in \eqref{eq:optimal-rate-lmc-intro}.

In reality, practitioners will often resort to a grid search, running the algorithm of choice (e.g., ULA) for multiple step sizes, and selecting the value which minimizes an appropriately chosen metric, such as the kernel Stein discrepancy (KSD) \citep{liu2016kernelized,chwialkowski2016kernel,gorham2019measuring}. This approach, however, is highly inefficient and does not guarantee an optimal choice of step size. Alternatively, methods based on \textit{coin betting} ideas from the parameter-free online learning literature \citep[e.g.,][]{orabona2014simultaneous,orabona2016coin,orabona2017training} have shown significant promise, but currently lack rigorous theoretical guarantees  \citep{sharrock2023coin,sharrock2023learning,sharrock2023tuning}. Finally, while there have been some efforts to semi-automate the design of effective step size schedules \citep{chen2016bridging,li2016preconditioned,kim2022stochastic,coullon2021efficient}, typically these approaches are still heavily dependent on an appropriate choice of certain hyperparameters, and have therefore not been widely adopted in practice. 

In this context, there remains a pressing need to develop effective and theoretically-grounded methods for solving optimization problems over the space of probability measures that do not require manual tuning of the step size parameter. This paper aims to fill this gap.

\subsection{Contributions}
 \label{sec:contributions}
 Our main contributions are summarized below:
 \begin{itemize}
     \item We introduce adaptive, tuning-free step size schedules for optimization algorithms obtained as time-discretizations of Wasserstein gradient flows (WGFs). In particular, we consider two time-discretizations of the WGF -- the forward Euler discretization and the forward-flow discretization -- and propose adaptive step size schedules tailored specifically to each case (see Section \ref{sec:adaptive-deterministic-step-size} and Section \ref{sec:adaptive-forward-euler}). 
     Collectively, we refer to our proposed step size schedule(s) as \textbf{\textsc{Fuse}} (\textbf{F}unctional \textbf{U}pper Bound \textbf{S}tep Size \textbf{E}stimator).
     \item We consider, as a special case, the forward-flow discretization of the WGF of the KL divergence, which corresponds to ULA. In this case, our proposed step size schedule reads 
    \begin{equation}
    \label{eq:dog-step-size-lmc-intro}
    \tcboxmath[colback=black!5,colframe=black!15,boxrule=0.4pt,
        arc=2pt,left=8pt, right=8pt, top=6pt, bottom=6pt]{\eta_t = \frac{\max\left[r_{\varepsilon},\max_{1\leq s\leq t}W_2(\mu_{\frac{1}{2}}, \mu_{s-\frac{1}{2}})\right]}{\left[{\sum_{s=1}^t \int ||\nabla U(x)||^2 \,\mathrm{d}\mu_s(x)}\right]^{\frac{1}{2}}}},
    \end{equation}
    where $r_{\varepsilon}>0$ is a small positive constant. Thus, the step size is equal to the maximum Wasserstein-2 distance between $\smash{\mu_{\frac{1}{2}}}$, the marginal distribution of the algorithm after the first half-step, and $\smash{(\mu_{s-\frac{1}{2}})_{1\leq s \leq t}}$, the marginal distributions of the algorithm after each subsequent half-step, divided by the square root of the sum of the squared $\smash{(L^2(\mu_s))_{1\leq s \leq t}}$ norms of the gradients of the potential. 
     \item We establish rigorous theoretical guarantees for the proposed step size schedules. In particular, under mild assumptions on the objective functional -- e.g., geodesic convexity and locally bounded (stochastic) gradients -- we recover the convergence rate of optimally tuned versions of the forward-flow discretization (see Sections \ref{sec:forward-flow-nonsmooth} - \ref{sec:forward-flow-smooth}) and the forward Euler discretization (see Sections \ref{sec:forward-euler-nonsmooth} - \ref{sec:forward-euler-smooth}) up to logarithmic factors, in both the smooth and nonsmooth settings. 
     \item Based on our theoretical results, we introduce practical, particle-based, step-size-free algorithms for optimization problems on the Wasserstein space, with a particular focus on the sampling problem. Our proposed algorithms include tuning-free variants of (parallel-chain) ULA (see Algorithm \ref{alg:ula}), in which case our practical step size schedule reads 
     \begin{equation}
     \tcboxmath[colback=black!5,colframe=black!15,boxrule=0.4pt,
        arc=2pt,left=8pt, right=8pt, top=6pt, bottom=6pt]{
      \hat{\eta}^n_t = \frac{\max\left[r_{\varepsilon},\max_{1\leq s\leq t} \big[\big(\frac{1}{n}\sum_{i=1}^n \|x_{\frac{1}{2}}^{i} - x_{s-\frac{1}{2}}^{i}\|^2\big)^{\frac{1}{2}} \big]\right]}{\left[\sum_{s=1}^t\left(\frac{1}{n}\sum_{i=1}^n  \|\nabla U(x_s^i)\|^2\right)\right]^{\frac{1}{2}}}
      },
      \end{equation}
      as well as its stochastic counterpart, stochastic gradient Langevin dynamics (SGLD) (see Algorithm \ref{alg:sgld}). We also obtain step-size-free variants of SVGD (see Algorithm \ref{alg:svgd}) and its mean-field extension, variational gradient descent (VGD); as well as mean-field Langevin dynamics (MFLD).
     \item We rigorously benchmark the performance of our methods. Over a wide range of tasks, our approach achieves comparable or superior performance to the \emph{optimal} performance of existing approaches, with no need to tune a step size parameter. In particular, our approach has a \emph{very} mild dependence on the initial movement size $r_{\varepsilon}$, and thus in practical terms can be considered entirely tuning-free.
 \end{itemize}
  
\subsection{Related Work}
Our work lies at the intersection of two distinct lines of research: parameter-free optimization methods on Euclidean spaces, and optimization over the space of probability measures. In this section, we provide a detailed survey of both of these areas.

\label{sec:related-work}
\subsubsection{Adaptive and Parameter-Free Optimization}
\label{sec:parameter-free-opt-euclidean}
There is a significant body of work devoted to both the development and analysis of \textit{adaptive} and \textit{parameter-free} (stochastic) optimization algorithms in Euclidean spaces. Below, we review some of the most prominent of these methods. 

\paragraph{Adaptive Methods}  Traditionally, the term \textit{adaptive} refers to algorithms which can adapt (in the sense of achieving the same or almost the same convergence guarantees) to some property of the problem at hand (e.g., subgradient bound, smoothness constant), without prior knowledge of this property. This meaning has somewhat muddied over time \citep[e.g.,][Section 4.3]{orabona2019modern}. In particular, \textit{adaptive} is now often used simply to refer to algorithms with (coordinate-wise) step size schedules which are a function of the past (stochastic) gradients; see, e.g., Section 1 in \citet{li2019convergence}. 

There are a number of different (coordinate-wise) adaptive methods. Typically, these methods can adapt (i.e., do not require knowledge of) the (sub)gradient bound or smoothness constant or strong convexity constant, but still require knowledge of the (initial) distance to the optimizer (or the domain diameter). 

Such methods include Adagrad -- introduced in the online learning community \citep{mcmahan2010adaptive,duchi2011adaptive} and analyzed in a number of subsequent works \citep[e.g.,][]{li2019convergence,ward2020adagrad,kavis2022high,attia2023sgd} -- and its variants, including Adagrad-Norm \citep{streeter2010less,faw2022power,kavis2022high}, RMSProp \citep{hinton2012neural}, Adam \citep{kingma2014adam,reddi2018convergence,chen2018convergence}, AdamW \citep{loshchilov2019decoupled}, AdaDelta \citep{zeiler2012adadelta}, AMSGrad \citep{reddi2018convergence,yang2024two}, and Shampoo \citep{gupta2018preconditioned}, amongst others. For some other related works, see also \citet{xu2012new,levy2018online,kavis2019unixgrad,luo2019adaptive,li2020high,malitsky2020adaptive,ene2021adaptive}.

\paragraph{Parameter-Free Methods} In this paper, we refer to an algorithm as \emph{parameter-free} if it does not require the specification of any tuning parameters. This follows other recent uses of this terminology \citep[see, e.g.][Section 2]{khaled2023dowg}, but differs from more precise definitions originating in the online learning literature \citep[e.g.,][Remark 1]{orabona2023normalized}. 

There are a number of traditional parameter-free methods for both deterministic and stochastic optimization. In the deterministic setting, line search \citep{beck2009fast,nesterov2015universal,grimmer2024optimal} achieves the optimal rate of convergence in smooth and nonsmooth settings with only an extra logarithmic factor. However, this approach is often expensive and outperformed in practice \citep[e.g.,][]{malitsky2020adaptive}. 

Meanwhile, Polyak's step size \citep[Section 5.3]{polyak1987introduction} recovers the optimal convergence rate up to a logarithmic factor for smooth, nonsmooth, and strongly convex optimization \citep{hazan2019revisiting}, but requires knowledge of the optimum function value. In combination with an adaptive re-estimation mechanism, this approach also achieves the optimal rate of convergence, up to a logarithmic factor, when $f$ is Lipschitz \citep{hazan2019revisiting}. The \emph{doubling trick} can also be used to render (stochastic) optimization algorithms parameter-free. This approach originates in the online learning literature \cite[e.g.,][]{cesa1997use,cesa2006prediction,hazan2007online,shalev2012online,streeter2012no}, and has since also been studied in the context of stochastic optimization \citep[e.g.,][]{hazan2014beyond}.

\paragraph{Parameter-Free Methods in Online Learning.} In the online learning literature, \textit{parameter-free} algorithms refer specifically to those which adapt to the unknown (initial) distance to the solution, but may still assume knowledge of other problem parameters \citep[e.g.,][Remark 1]{orabona2023normalized}. 

There are a number of algorithms which achieve such guarantees, including those based on coin betting \citep[e.g.,][]{orabona2014simultaneous,orabona2016coin,orabona2017training,chen2022better} and exponentiated gradients \citep{streeter2012no,orabona2013dimension}. Additional related works include \citet{chaudhuri2009parameter,mcmahan2014unconstrained,orabona2014simultaneous,cutkosky2018black,jun2019parameter,orabona2021parameter,orabona2019modern}.

In the deterministic setting, one can obtain optimization algorithms which do not require knowledge of \emph{any} tuning parameters (e.g., the distance to the optimum or the subgradient bound) by combining parameter-free online learning algorithms \citep[e.g.,][]{streeter2012no,orabona2016coin} with the normalization techniques described in \citet{levy2017online}. This is true for both Lipschitz functions \citep{orabona2021parameter} and smooth functions \citep{orabona2023normalized}.

\paragraph{Recent Parameter-Free Methods.} Recently, \citet{carmon2022making} proposed a novel bisection method which recovers the optimal rate of convergence for (stochastic) gradient descent in nonsmooth convex, smooth convex, and strongly convex settings up to a \emph{double}-logarithmic factor. This is better than any bound achievable using parameter-free regret bounds from the online learning literature, which are translated into stochastic convex optimization bounds via online-to-batch conversion \citep[e.g.,][]{hazan2016introduction}. However, this method requires resets (i.e., restarting the optimization procedure multiple times), which can be very expensive {in practice}, and disregards potentially useful information gained from previous runs. 

Building on this paper, several groups have proposed practical parameter-free optimization algorithms for both the deterministic \citep{defazio2023learning,khaled2023dowg,mishchenko2024prodigy} and stochastic \citep{ivgi2023dog} settings. \citet{defazio2023learning,mishchenko2024prodigy} proposed \emph{D-Adaptation} and \emph{Prodigy}, which have guarantees for the deterministic, nonsmooth setting, and achieve highly competitive results on modern machine learning problems. 

Most relevant to our work, \citet{ivgi2023dog} introduced \emph{Distance over Gradients} (DoG), which recovers the optimal convergence rate of SGD up to a logarithmic factor in high-probability in both the nonsmooth and smooth settings, assuming the iterates remain stable. A `tamed' version of their algorithm (T-DoG), whose step sizes are smaller by a logarithmic factor, can achieve a similar guarantee without an a priori assumption on iterate stability. Inspired by this approach, \citep{khaled2023dowg} proposed \emph{Distance over Weighted Gradients} (DoWG), a variant of DoG which uses adaptively weighted gradients. This algorithm matches the performance of gradient descent (in the deterministic setting) in both smooth and nonsmooth settings, up to an additional logarithmic factor. There have since been various extensions of these methods \citep[e.g.,][]{dodd2024learning,kreisler2024accelerated}.

\subsubsection{Optimization on the Space of Probability Measures}
\label{sec:lit-review-optimization-on-space-of-prob-measures}
There is also a vast literature regarding the use of gradient-based methods for solving optimization problems over the space of probability measures.

\paragraph{Methods} Undoubtedly, the most well-studied and widely used of these methods are \emph{Wasserstein gradient flows} (WGFs), which define a notion of gradient flow over the space of twice-integrable probability measures equipped with the quadratic Wasserstein metric \citep[e.g.,][]{ambrosio2008gradient,santambrogio2017euclidean}. More precisely, WGFs arise as the solutions to continuity equations driven by velocity fields that correspond to the direction of steepest descent of the chosen objective functional (e.g., relative entropy, potential energy) with respect to the Wasserstein geometry. Under mild conditions, WGFs admit a unique solution \citep[e.g.,][Theorem 11.1.4]{ambrosio2008gradient}, and decrease the value of the objective functional along the trajectory of the gradient flow. Thus, under additional structural assumptions (e.g., geodesic convexity), they converge to the infimum of the objective functional \citep[e.g.,][Chapter 11]{ambrosio2008gradient}. 

WGFs were first identified in the seminal work of \citet{jordan1998variational}, who showed that the Fokker-Planck equation describing the time evolution of the marginal law of the overdamped Langevin diffusion could be viewed as the WGF of $\mathrm{KL}(\cdot\|\pi)$, the KL divergence between the current distribution and the target distribution $\pi$. Building on this, \citet{otto2001geometry} developed a formal Riemannian calculus on the Wasserstein space, interpreting probability measures as points on an infinite-dimensional manifold and introducing a differential structure that enabled the use of tools from Riemannian geometry to analyze WGFs. This perspective made it possible to describe various diffusion-type PDEs -- such as the porous medium equation -- as gradient flows of natural energy functionals. Other significant early contributions include the work of \citet{villani2003topics,villani2008optimal} and \citet{ambrosio2008gradient}, who developed a rigorous theoretical framework for studying WGFs.

Subsequent work has focused on various aspects of WGFs, including theoretical generalizations; algorithmic improvements; and practical implementations. One key emphasis has been the development of effective time and space discretizations, which are necessary for the practical implementation of WGFs. Several different papers study different time-discretizations of WGFs, each leading to a distinct class of numerical algorithms. These include the backward Euler scheme, which results in the JKO or proximal point algorithm \citep{jordan1998variational}; the forward Euler scheme, which results in Wasserstein (sub)gradient descent \citep[e.g.,][]{guo2022online,wang2022projected}; the proximal gradient scheme, which results in the Wasserstein proximal gradient algorithm \citep{salim2020wasserstein}; and the forward-flow scheme which, when the objective functional coincides with $\mathrm{KL}(\cdot\|\pi)$, results in ULA \citep{wibisono2018sampling}. 

Spatial discretizations are equally essential for practical implementation of WGFs. Existing approaches include finite-element or finite-volume methods, which discretize the PDE underlying the WGF \citep[e.g.,][]{carrillo2015finite,benamou2016discretization,cances2020variational,li2020fisher,fu2023high}. Other popular methods include particle-based methods, which approximate measures via empirical distributions \citep[e.g.,][]{carrillo2019blob}. More recently, input convex neural networks (ICNNs) have been used to parametrize convex potentials in WGFs, offering a scalable, mesh-free method suitable for simulating high-dimensional flows \citep[e.g.,][]{mokrov2021large,alvarezmelis2022optimizing}.

Another line of work considers algorithmic extensions and modifications of the standard WGF. For example, various authors consider the use of metrics other than the quadratic Wasserstein metric, leading to a different geometric structure over the space of probability measures, and thus to different gradient flows. For example, use of a kernelized Wasserstein metric known as the \emph{Stein metric} leads to the Stein variational gradient flow \citep{liu2017stein,duncan2019geometry} and, after time-discretization, the popular SVGD algorithm \citep{liu2016stein}. Meanwhile, \citet{garbuno2020affine,garbuno2020interacting} consider the gradient flow induced by the \emph{Kalman-Wasserstein metric}, resulting in the affine-invariant Langevin dynamics (ALDI). Other authors have proposed the use of the \emph{Fisher-Rao metric}, yielding the so-called \textit{birth-death} dynamics \citep[e.g.,][]{lu2023birth}. Finally, \citet{lu2019accelerating,he2024regularized} consider combinations of different metrics; e.g., Wasserstein and Fisher-Rao \citep{lu2019accelerating}, or Wasserstein and Stein \citep{he2024regularized}.

Among other extensions of the standard WGF are so-called \textit{mirror} or \textit{mirrored} WGFs, inspired by mirror descent \citep[e.g.,][]{hsieh2018mirrored,chewi2020exponential,zhang2020wasserstein,ahn2021efficient,li2022mirror,sharrock2023learning}; accelerated WGFs, which are obtained by the introduction of an additional momentum or velocity variable \citep[e.g.,][]{wang2022accelerated,chen2025accelerating,stein2025accelerated}; and WGFs on submanifolds of the Wasserstein space, e.g., the Bures-Wasserstein space of Gaussian distributions \citep{chewi2020gradient,diao2023forward}. 

Despite many significant methodological developments, there remains a notable lack of work on tuning-free algorithms for optimization problems on the space of probability measures. While Euclidean optimization has seen considerable progress in this area (see Section \ref{sec:parameter-free-opt-euclidean}), analogous developments in the space of probability measures have remained sparse. There have been some recent efforts to extend parameter-free techniques such as \textit{coin betting} \citep[e.g.,][]{orabona2016coin} to the space of probability measures, with the introduction of so-called \emph{coin sampling} algorithms \citep{sharrock2023coin,sharrock2023learning,sharrock2023tuning}. Indeed, these algorithms have demonstrated comparable performance to that of \emph{optimally tuned} existing algorithms across a wide range of tasks -- yet they require no manual tuning of a step size or learning rate parameter. Despite their promise, however, this approach faces significant theoretical challenges. For example, due to the additional complexity introduced by the geometry of the space of probability measures, existing theoretical tools used to analyze the convergence properties of coin betting algorithms in Euclidean spaces cannot easily be applied \citep[see, e.g.,][Appendix A]{sharrock2023coin}. As a result, coin sampling algorithms still lack rigorous theoretical guarantees (e.g., non-asymptotic convergence rates), even under strong assumptions on the target measure (e.g., log-concavity).

\paragraph{Applications} Over the past two decades, WGFs have found application in a wide variety of problems across computational statistics, machine learning, and applied mathematics. Below, we survey two prominent examples, which we will later revisit in our numerical experiments.

\begin{example1}[\textsc{Sampling from a Target Probability Distribution}]
\label{example-1}
The task of sampling from a probability distribution $\pi$, with density $\pi(x) \propto e^{-U(x)}$ w.r.t. the Lebesgue measure, is of fundamental importance to Bayesian inference \citep{robert2004monte,stuart2010inverse,fearnhead2025scalable}, machine learning \citep{neal1996bayesian,andrieu2003introduction,welling2011bayesian,wilson2020bayesian}, molecular dynamics \citep{krauth2006statistical,lelievre2016partial,leimkuhler2016efficient}, and scientific computing \citep{mackay2003information,liu2009monte}. 

It is now well known that the sampling problem can be reformulated as an optimization problem over the space of measures \citep[e.g.,][]{jordan1998variational,wibisono2018sampling,durmus2019analysis}. In this setting, one views the target $\pi$ as the solution of
\begin{equation}
 \pi = \argmin_{\mu\in\mathcal{P}_2(\mathbb{R}^d)} \mathcal{F}(\mu), \label{eq:sampling-opt}
\end{equation} 
where $\mathcal{P}_2(\mathbb{R}^d)$ denotes the set of probability measures $\{\mu:\int_{\mathbb{R}^d}\|x\|^2\mu(\mathrm{d}x)<\infty\}$ with finite second moment, and $\mathcal{F}:\mathcal{P}_2(\mathbb{R}^d)\rightarrow\mathbb{R}$ is a functional which is uniquely minimized at $\pi$. Perhaps the most popular choice of functional in this context is $\mathcal{F}(\mu) =\mathrm{KL}(\mu \| \pi)$, the KL divergence of $\mu$ with respect to $\pi$, defined according to
\begin{equation}
\mathrm{KL}(\mu \| \pi) = \left\{ 
\begin{array}{lll} 
\displaystyle{\int_{\mathbb{R}^d} \log \left(\frac{\mathrm{d}\mu}{\mathrm{d}\pi}(x)\right)\mu(\mathrm{d}x)} & \hspace{-1mm}, \hspace{-1mm}& \mu \ll \pi, \\[3mm]
+\infty &\hspace{-1mm}, \hspace{-1mm}& \text{otherwise,} \\[1mm]
\end{array}
\right.
\label{eq:kl-divergence}
\end{equation}
where $\smash{\frac{\mathrm{d}\mu}{\mathrm{d}\pi}}$ denotes the Radon-Nikodym derivative of $\mu$ with respect to $\pi$. For this choice of $\mathcal{F}$, it is straightforward to verify that $\mathcal{F}(\mu)=\mathrm{KL}(\mu \| \pi)\geq 0$ for all $\mu\in\mathcal{P}_2(\mathbb{R}^d)$, with $\mathcal{F}(\mu)=\mathrm{KL}(\mu \| \pi)=0$ if and only if $\mu=\pi$. Thus, in particular, $\mathcal{F}(\mu)=\mathrm{KL}(\mu \| \pi)$ is indeed uniquely minimized at the target measure $\pi$. 

As noted earlier, many popular sampling algorithms can be viewed as discretizations of gradient flows of the KL divergence over the space of probability measures. These include classical MCMC algorithms such as ULA \citep[e.g.,][]{wibisono2018sampling,durmus2019analysis}, as well as more recent ParVI algorithms such as SVGD \citep{liu2016stein}, Laplacian adjusted Wasserstein gradient descent (LAWGD) \citep{chewi2020svgd}, and the various particle-based algorithms described in \citet{chen2018unified,liu2019understanding,liu2019understandingmcmc}. For different choices of the objective functional, one also obtains various other sampling algorithms, including the Wasserstein proximal gradient algorithm \citep{salim2020wasserstein}, kernel Stein discrepancy (KSD) descent \citep{korba2021kernel}, and mollified interaction energy descent (MIED) \citep{li2023sampling}. 

The connections between sampling and optimization have, in recent years, given rise to a flourishing literature. For a more comprehensive overview of this topic, we refer to any one of a number of recent survey papers \citep[e.g.,][]{chen2023gradient,chen2023sampling,trillos2023optimization,chewi2024statistical,chewi2024log,sharrock2025sampling}.
\end{example1}

\begin{example2}[\textsc{Training Neural Networks in the Mean-Field Regime}]
Over the past decade, neural networks have achieved remarkable success across a wide range of domains, including computer vision, natural language processing, and scientific computing \citep[e.g.,][]{lecun2015deep}. However, obtaining rigorous mathematical theory to explain their empirical success remains a significant challenge. In general, learning the parameters of a neural network involves solving a high-dimensional, highly non-convex optimization problem via stochastic gradient descent, for which convergence results are very hard to obtain \citep{danilova2022recent}.

Recently, significant theoretical progress has been made on this front by considering neural networks in a suitable scaling limit: the so-called \emph{mean-field regime} \citep[e.g.,][]{chizat2018global,mei2018mean,sirignano2020mean,sirignano2020meanfield,de2020quantitative,rotskoff2022trainability}. Specifically, consider the task of learning a neural network with $M$ trainable neurons, parameterized as 
\begin{equation}
    h_{\Theta}(z) := \frac{1}{M}\sum_{m=1}^M h_{\theta_m}(z),
\end{equation}
where each neuron $h_{\theta_{m}}(\cdot)$ depends on trainable parameters $\theta_m\in\mathbb{R}^d$, and where $\Theta = \{\theta_{m}\}_{m=1}^M$. For example, we might have $h_{\theta_m}(z) = \sigma(w_{m}^{\top}z + b_m)$, where $w_m\in\mathbb{R}^{d-1}$, $b_m\in\mathbb{R}$, and $\theta_m = (w_m,b_m)^{\top}$. Under suitable conditions, as the number of neurons $M\rightarrow\infty$, we obtain the \emph{mean-field limit}
\begin{equation}
    h_{\mu}(z) = \int h_{\theta}(z)\mu(\mathrm{d}\theta),
\end{equation}
where $\mu\in\mathcal{P}_2(\mathbb{R}^d)$ represents the probability distribution of the weights. Suppose now that we are given training data $\{(z_i,y_i)\}_{i=1}^n \in \mathbb{R}^{d-1}\times\mathbb{R}$. We can then define the empirical risk of $h_{\mu}$ as $U(\mu) = \frac{1}{n}\sum_{i=1}^n \ell (h_{\mu}(z_i),y_i)$ for a loss function $\ell:\mathbb{R}\times\mathbb{R}\rightarrow\mathbb{R}$. To train our neural network, we would like to minimize a regularized version of the empirical risk, viz
\begin{equation}
    \mathcal{F}(\mu) = \frac{1}{n}\sum_{i=1}^n \ell (h_{\mu}(z_i),y_i) + \lambda \int_{\mathbb{R}^d} r(\theta) \mu(\mathrm{d}\theta), \label{eq:nn-loss}
\end{equation}
where $\lambda\in\mathbb{R}_{+}$ is the regularization strength, and $r:\mathbb{R}^d\rightarrow\mathbb{R}$ is a regularization term, e.g., $r(\theta) = \|\theta\|^2$. In other words, training a mean-field neural network is equivalent to solving the following optimization problem over the space of probability measures:
\begin{equation}
    \mu_{\star} = \argmin_{\mu\in\mathcal{P}_2(\mathbb{R}^d)} \mathcal{F}(\mu),
\end{equation}
where $\mathcal{F}:\mathcal{P}_2(\mathbb{R}^d)\rightarrow \mathbb{R}$ is as defined in \eqref{eq:nn-loss}. By lifting the task of learning a neural network to an infinite-dimensional optimization problem on the space of probability measures, it is possible to leverage the convexity of the loss function to establish the global convergence of gradient-based optimization methods \citep[e.g.,][]{mei2018mean,chizat2018global,rotskoff2018parameters,araujo2019mean,mei2019mean,javanmard2020analysis,sirignano2020meanfield,sirignano2020mean,rotskoff2022trainability}.

One popular method for solving this optimization problem is the so-called \emph{mean-field Langevin dynamics} (MFLD). This can be viewed as the continuous-time limit of a noisy gradient descent algorithm on the infinite-width neural network, in which the additional noise encourages exploration and enables global convergence \citep[e.g.,][]{mei2018mean,hu2021meanfield}. Alternatively, these dynamics can be viewed as the WGF of an entropy-regularized version of the objective function in \eqref{eq:nn-loss} \citep[e.g.][]{mei2018mean,hu2021meanfield,chen2022uniform,chizat2022meanfield,nitanda2022convex,suzuki2023convergence}.

Based on this perspective, there has been significant progress towards understanding the theoretical properties of the MFLD. \citet{mei2018mean,hu2021meanfield,chizat2022meanfield,nitanda2022convex,chizat2022meanfield} all show exponential convergence of the MFLD to its unique invariant distribution, i.e., the minimizer of the entropy-regularized energy functional. Meanwhile, \citet{chen2022uniform,suzuki2023uniform} establish uniform-in-time propagation of chaos, i.e., global upper bounds for the distance between the empirical distribution of the interacting particle system and the distribution of the mean-field dynamics. \citet{nitanda2022convex} establish the convergence rate of a time-discretization of the MFLD. Finally, \citet{suzuki2023convergence} provide convergence rates for a space-time discretization of the MFLD. Many of these results have since been extended to the underdamped case, including exponential convergence to the unique invariant measure \citep{kazeykina2020ergodicity,chen2024uniform}, uniform-in-time propagation of chaos \citep{chen2024uniform,fu2023mean}, and convergence rates for a space-time discretization \citep{fu2023mean}.
\end{example2}

There are numerous other examples of optimization problems over the space of probability measures, which we do not have time to survey in detail here. We mention, amongst others: generalized and post-Bayesian inference \citep[e.g.,][]{knoblauch2022optimization,chazal2025computable,mclatchie2025predictively}, variational inference (VI) including mean-field VI \citep{yao2022mean,lacker2023independent,jiang2023algorithms} and Gaussian VI \citep{lambert2022variational,diao2023forward}, trajectory inference in stochastic differential equations (SDEs) \citep{chizat2022trajectory}, multi-objective optimization \citep{ren2024multi}, computing Wasserstein barycenters \citep{chewi2020gradient}, non-parametric maximum likelihood estimation \citep{yan2024learning}, mathematical modelling of transformers \citep{geshkovski2025mathematical}, reinforcement learning and policy optimization \citep{zhang2018policy}, distributionally robust optimization \citep[e.g.,][]{mohajerin2018data,kuhn2019wasserstein,rahimian2019distributionally,gao2023distributionally}, generative modelling \citep[e.g.,][]{arbel2019maximum,cheng2024convergence,haviv2024wasserstein,galashov2024deep}, sparse deconvolution \citep{chizat2022sparse}, lossy compression \citep[e.g.,][]{yang2023estimating}, and the modelling of granular media  \citep[e.g.,][]{otto2001geometry}. In principle, the methodology developed in this paper could be applied to any one of these domains.

\subsection{Paper Organization}
\label{sec:organization}
The remainder of this paper is organized as follows. In Section \ref{sec:preliminaries}, we recall some basic definitions relating to optimization on the Wasserstein space (e.g., the Wasserstein metric, geodesic convexity, subdifferential calculus in the Wasserstein space). In Section \ref{sec:optimization-wasserstein-space}, we introduce standard methods for performing optimization on this space (e.g., WGFs and their time-discretizations).  In Section \ref{sec:main-results}, we introduce our dynamic, tuning-free step size schedule, and present our main theoretical results. In Section \ref{sec:algorithms}, we propose practical, particle-based sampling algorithms based on our theoretical results. In Section \ref{sec:numerical-results}, we rigorously benchmark our approach against existing algorithms across several different tasks (e.g., sampling from a target probability distribution, training a mean-field neural network). Finally, in Section \ref{sec:conclusions}, we provide some concluding remarks, including possible directions for future work.

\section{Preliminaries}
\label{sec:preliminaries}
In this section, we recall some basic concepts relating to optimization on the Wasserstein space. These include the definitions of the Wasserstein metric, geodesic convexity, and subdifferential calculus on the Wasserstein space. For a more comprehensive coverage of these topics, we refer to the excellent books of \citet{ambrosio2008gradient} and \citet{villani2008optimal}.

\subsection{General Notation}
Let $\mathcal{P}_2(\mathbb{R}^d)$ denote the set of probability measures on $\mathbb{R}^d$ with finite second moment: $\int_{\mathbb{R}^d} \|x\|^2 \mu(\mathrm{d}x)<\infty$. In addition, let $\mathcal{P}_{2,\mathrm{ac}}(\mathbb{R}^d)$ denote the subset of $\mathcal{P}_{2}(\mathbb{R}^d)$ consisting of probability measures which are absolutely continuous with respect to the Lebesgue measure, $\mathcal{L}^d$. For any $\mu\in\mathcal{P}_{2}(\mathbb{R}^d)$, let $L^{2}(\mu)$ denote the set of measurable functions $f:\mathbb{R}^d\rightarrow\mathbb{R}^d$ such that $\int_{\mathbb{R}^d}\|f(x)\|^2\mu(\mathrm{d}x)<\infty$. We will write $\|\cdot\|_{L^2(\mu)}$ and $\langle\cdot,\cdot\rangle_{L^2(\mu)}$ to denote, respectively, the norm and the inner product of this space.

\subsection{The Wasserstein Space}
Given a probability measure $\mu\in\mathcal{P}_2(\mathbb{R}^d)$ and a measurable function $T:\mathbb{R}^d\rightarrow\mathbb{R}^d$, we write $T_{\#}\mu$ for the pushforward measure of $\mu$ under $T$, that is, the measure such that $T_{\#}\mu(B)=\mu(T^{-1}(B))$ for all Borel measurable $B\in\mathcal{B}(\mathbb{R}^d)$. For every $\mu,\nu\in\mathcal{P}_2(\mathbb{R}^d)$, let $\Gamma(\mu,\nu)$ be the set of couplings (or transport plans) between $\mu$ and $\nu$, defined as  
\begin{equation}
\Gamma(\mu,\nu) = \{\gamma\in\mathcal{P}_2(\mathbb{R}^d\times\mathbb{R}^d): Q^{1}_{\#}\gamma=\mu, Q^{2}_{\#}\gamma=\nu\},
\end{equation}
where $Q^{1}$ and $Q^{2}$ denote the projections onto the first and second components of $\mathbb{R}^d\times\mathbb{R}^d$. The Wasserstein $2$-distance between $\mu$ and $\nu$ is then defined according to
\begin{equation}
W_2^2(\mu,\nu) = \inf_{\gamma \in\Gamma(\mu,\nu)}  \int_{\mathbb{R}^d\times\mathbb{R}^d}\|x-y\|^2 \gamma(\mathrm{d}x,\mathrm{d}y). \label{eq:wasserstein}
\end{equation}
We denote by $\Gamma_{o}(\mu,\nu)$ the set of couplings which attain the minimum in \eqref{eq:wasserstein}; and refer to any $\gamma\in\Gamma_{o}(\mu,\nu)$ as an \emph{optimal coupling} or \emph{optimal transport plan} between $\mu$ and $\nu$. By \citet[Proposition 2.1]{villani2001limites}, the set $\Gamma_{o}(\mu,\nu)$ is always non-empty. That is, there always exists an optimal transport plan $\gamma\in\Gamma_{o}(\mu,\nu)$.

Under appropriate regularity conditions, there exists a \emph{unique} optimal coupling $\gamma_{*}\in\Gamma(\mu,\nu)$ \citep[e.g.,][Section 6.2.3]{ambrosio2008gradient}. In particular, suppose $\mu\in\mathcal{P}_{2,\mathrm{ac}}(\mathbb{R}^d)$ and $\nu\in\mathcal{P}_2(\mathbb{R}^d)$. Then there exists a unique optimal transport plan $\gamma_{*}\in\Gamma(\mu,\nu)$, i.e., $\Gamma_{o}(\mu,\nu) = \{\gamma_{*}\}$. Moreover, this optimal transport plan is given by
\begin{equation}
\gamma_{*} = \smash{(\boldsymbol{\mathrm{id}}\times \boldsymbol{t}_{\mu}^{\nu})_{\#}\mu},
\end{equation}
where $\boldsymbol{\mathrm{id}}:\mathbb{R}^d\rightarrow \mathbb{R}^d$ is the identity map, and $\boldsymbol{t}_{\mu}^{\nu}:\mathbb{R}^d\rightarrow\mathbb{R}^d$ is a measurable function known as the \emph{optimal transport map} (e.g., \citealp{brenier1991polar}; \citealt{villani2003topics}, Theorem 2.12; \citealt{ambrosio2008gradient}, Theorem 6.2.4; \citealp{gigli2011on}). It follows that $(\boldsymbol{t}_{\mu}^{\nu})_{\#}\mu = \nu$ and \begin{equation}
    W_{2}^2(\mu,\nu) = \int_{\mathbb{R}^d}\|x-y\|^2\gamma_{*}(\mathrm{d}x,\mathrm{d}y)= \int_{\mathbb{R}^d} \|x-\boldsymbol{t}_{\mu}^{\nu}(x)\|^2\mathrm{d}\mu(x).
\end{equation}

We also have a \textit{dynamic} formulation of the Wasserstein distance due to \citet{benamou2000computational}. In particular, suppose $\mu,\nu\in\mathcal{P}_{2}(\mathbb{R}^d)$. Then it holds that \citep[Proposition 1.1]{benamou2000computational}
\begin{equation}
    W_2^2(\mu,\nu) = \inf\left\{\int_0^1 \|v_t\|^2_{L^2(\mu_t)}\mathrm{d}t~\big|~ \partial_{t}\mu_t + \nabla \cdot(v_t\mu_t) = 0, \mu_0 = \mu, \mu_1 = \nu\right\}. \label{eq:benamou-brenier}
\end{equation}

The Wasserstein distance $W_2$ is a distance over $\mathcal{P}_2(\mathbb{R}^d)$. Thus $(\mathcal{P}_2(\mathbb{R}^d), W_2)$ is indeed a metric space of probability measures \citep[e.g.,][Theorem 8.3]{ambrosio2021lectures}, known as the Wasserstein space.

\paragraph{Riemannian Interpretation of the Wasserstein Space} 
Following \citet{otto2001geometry}, the Wasserstein space can be (formally) identified as an infinite-dimensional Riemannian manifold. In this geometric interpretation, one identifies the tangent space to $\mathcal{P}_2(\mathbb{R}^d)$ at $\mu$ as 
\begin{equation}
    \mathcal{T}_{\mu}\mathcal{P}_2(\mathbb{R}^d) = \overline{\{\nabla \psi: \psi\in C_c^{\infty}(\mathbb{R}^d)\}}^{L^2(\mu)} \subseteq L^2(\mu), \label{eq:tangent-space}
\end{equation}
where $C_c^{\infty}(\mathbb{R}^d)$ denotes the space of smooth, compactly supported scalar functions $\psi: \mathbb{R}^d \rightarrow \mathbb{R}$, and where $\smash{\overline{\{\cdot\}}^{L^2(\mu)}}$ denotes the $L^2(\mu)$ closure. This tangent space is endowed with the usual inner product from $L^2(\mu)$, namely, 
\begin{equation}
    \langle \nabla \psi_{1}, \nabla \psi_{2}\rangle_{\mu} = \int_{\mathbb{R}^d} \langle \nabla \psi_1(x), \nabla \psi_2(x)\rangle \mathrm{d}\mu(x).
\end{equation}

\subsection{Geodesic Convexity}
\label{sec:geo-convex}
Let $\mu,\nu\in\mathcal{P}_2(\mathbb{R}^d)$. We define a constant speed geodesic between $\mu$ and $\nu$ as a curve $(\mu^{\mu\rightarrow\nu}_t)_{t\in[0,1]}$ such that $\mu_0 = \mu$, $\mu_1 = \nu$, and $W_2(\mu_{s},\mu_{t}) = |t-s|W_2(\mu,\nu)$ for all $s,t\in[0,1]$. If $\boldsymbol{t}_{\mu}^{\nu}$ is the optimal transport map between $\mu$ and $\nu$, then a constant speed geodesic is given by \citep[e.g.,][Sec. 7.2]{ambrosio2008gradient}
\begin{equation}
\mu_{t}^{\mu\rightarrow\nu} = \left((1-t)\boldsymbol{\mathrm{id}} + t\, \boldsymbol{t}_{\mu}^{\nu}\right)_{\#}\mu.
\end{equation}
This constant speed geodesic is sometimes referred to as McCann's interpolant \citep{mccann1997convexity}. 
Let $\mathcal{F}:\mathcal{P}_2(\mathbb{R}^d)\rightarrow(-\infty,\infty]$. For $m\geq 0$, we say that $\mathcal{F}$ is $m$-geodesically convex, or $m$-displacement convex, if, for any $\mu,\nu\in\mathcal{P}_2(\mathbb{R}^d)$ and for all $t\in[0,1]$, 
\begin{equation}
\mathcal{F}(\mu_{t}^{\mu\rightarrow\nu}) \leq (1-t)\mathcal{F}(\mu) + t\mathcal{F}(\nu) - \frac{m}{2}t(1-t)W_2^2(\mu,\nu).
\end{equation}
In the case that this inequality holds for $m=0$, we will simply say that $\mathcal{F}$ is geodesically convex, or displacement convex.

\subsection{Subdifferential Calculus in the Wasserstein Space}
We are now ready to present the differential structure of the Wasserstein space. Throughout, we will consider a functional $\smash{\mathcal{F}:\mathcal{P}_2(\mathbb{R}^d)\rightarrow(-\infty,\infty]}$ with domain $\mathcal{D}(\mathcal{F})=\{\mu\in\mathcal{P}_{2,\mathrm{ac}}(\mathbb{R}^d):\mathcal{F}(\mu)<\infty\}$.\footnote{For simplicity, we here restrict the domain to $\mathcal{P}_{2,\mathrm{ac}}(\mathbb{R}^d)$, the set of measures which are absolutely continuous w.r.t. the Lebesgue measure, which ensures the existence of an optimal transport map $\boldsymbol{t}_{\mu}^{\nu}$ \cite[see, e.g.,][Section 10.1]{ambrosio2008gradient}. Many of these definitions, however, can be extended to the more general case (e.g., \citealp{ambrosio2008gradient}, Section 10.3; \citealp{lanzetti2022first}, Section 2.3).} Moreover, we will assume that the functional  $\mathcal{F}$ is proper and lower semicontinuous. That is, $\mathcal{D}(\mathcal{F})\neq \emptyset$, and $\mathcal{F}(\bar{\mu})\leq \liminf_{\mu\rightarrow\bar{\mu}}\mathcal{F}(\mu)$ for all $\bar{\mu}\in\mathcal{P}_2(\mathbb{R}^d)$.

Let $\mu\in\mathcal{D}(\mathcal{F})$. We say that a map $\xi\in L^2(\mu)$ belongs to the Fr\'{e}chet subdifferential of $\mathcal{F}$ at $\mu$, and write $\xi\in\partial\mathcal{F}(\mu)$ if
\begin{equation}
\mathcal{F}(\nu) - \mathcal{F}(\mu)\geq \int_{\mathbb{R}^d} \langle \xi(x), \boldsymbol{t}_{\mu}^{\nu}(x)-x\rangle \mathrm{d}\mu(x) + o(W_2(\mu,\nu))
\end{equation}
for any $\nu\in\mathcal{D}(\mathcal{F})$ \citep[e.g.,][Definition 10.1.1]{ambrosio2008gradient}. In this case, we say that $\xi$ is a Wasserstein subgradient of $\mathcal{F}$ at $\mu$. 

Suppose, in addition, that $\mathcal{F}$ is $m$-geodesically convex. Then $\xi\in L^2(\mu)$ belongs to the Fr\'{e}chet subdifferential $\partial\mathcal{F}(\mu)$ if and only if
\begin{equation}
\mathcal{F}(\nu) - \mathcal{F}(\mu) \geq \int_{\mathbb{R}^d}\langle \xi(x), \boldsymbol{t}_{\mu}^{\nu}(x) -x \rangle\mu(\mathrm{d}x) + \frac{m}{2}W_2^2(\mu,\nu)
\end{equation}
for any $\nu\in\mathcal{D}(\mathcal{F})$ \citep[e.g.,][Section 10.1.1]{ambrosio2008gradient}.
In this case, $\partial\mathcal{F}(\mu)$ has a unique element $\partial^{o}\mathcal{F}(\mu)$ known as the \emph{minimal selection} \citep[][Lemma 10.1.5]{ambrosio2008gradient}, which has the smallest norm in the following sense: $\smash{\partial^{o}\mathcal{F}(\mu) = \argmin_{\xi\in\partial\mathcal{F}(\mu)}\|\xi\|^2_{L^2(\mu)}}$. We will refer to the element of minimal norm as the Wasserstein gradient of $\mathcal{F}$ at $\mu$, and denote it by $\smash{\nabla_{W_2}\mathcal{F}(\mu)}$. 

Under mild regularity conditions, the Wasserstein gradient can be explicitly characterised in terms of the \emph{first variation} \citep[][Section 10.4.1]{ambrosio2008gradient}. In particular, suppose that $\mu\in\mathcal{D}(\mathcal{F})$, with density in $C^1(\mathbb{R}^d)$. Then the Wasserstein gradient is given by
\begin{equation}
\nabla_{W_2}\mathcal{F}(\mu) = \nabla \frac{\delta \mathcal{F}(\mu)}{\delta \mu}(x)~~~\text{for $\mu$-a.e. $x\in\mathbb{R}^d$}, \label{eq:wasserstein-first-var}
\end{equation}
where $\frac{\delta\mathcal{F}(\mu)}{\delta\mu}:\mathbb{R}^d\rightarrow\mathbb{R}$ denotes the first variation of $\mathcal{F}$ at $\mu$, that is, the unique (up to an additive constant) function such that
\begin{equation}
\lim_{\varepsilon\rightarrow0}\frac{1}{\varepsilon}\left( \mathcal{F}(\mu + \varepsilon \chi) - \mathcal{F}(\mu)\right) = \int_{\mathbb{R}^d} \frac{\delta \mathcal{F}(\mu)}{\delta \mu}(x)\chi(\mathrm{d}x), \label{eq:first-variation}
\end{equation}
for all signed measures $\chi$ such that $\mu+ \varepsilon\chi \in\mathcal{P}_{2}(\mathbb{R}^d)$ for sufficiently small $\varepsilon>0$ \citep[see, e.g.,][Lemma 10.4.1, Lemma 10.4.13]{ambrosio2008gradient}. It is clear that the Wasserstein gradient $\nabla_{W_2}\mathcal{F}(\mu)$ is an element of the tangent space $\mathcal{T}_{\mu}\mathcal{P}_2(\mathbb{R}^d)$.

We can also define the notion of the Hessian of a functional $\mathcal{F}:\mathcal{P}_2(\mathbb{R}^d)\rightarrow(-\infty,\infty]$ over the Wasserstein space. In particular, the \emph{Wasserstein Hessian} of $\mathcal{F}$ at $\mu$ is a bilinear form $\mathrm{Hess}_{W_2}\mathcal{F}(\mu):\mathcal{T}_{\mu}\mathcal{P}_2\times\mathcal{T}_{\mu}\mathcal{P}_2\rightarrow\mathbb{R}$ defined according to
\begin{equation}
    \mathrm{Hess}_{W_2}\mathcal{F}(\mu)[v,v] = \left.\frac{\mathrm{d}^2}{\mathrm{d}t^2}\right|_{t=0}\mathcal{F}(\mu_t),
\end{equation}
where $(\mu_t)_{t\in[0,1]}$ is a constant speed geodesic satisfying the initial conditions $\mu_0 = \mu$ and $\dot{\mu}_0 = v$ (e.g., \citealt{villani2008optimal}, Chapter 15).

\section{Optimization on the Wasserstein Space}
\label{sec:optimization-wasserstein-space}
We are interested in optimization problems over the Wasserstein space of probability measures of the form
\begin{equation}
    \pi = \argmin_{\mu\in\mathcal{P}_2(\mathbb{R}^d)} \mathcal{F}(\mu) ,\label{eq:wasserstein-optimization}
\end{equation}
where $\mathcal{F}:\mathcal{P}_2(\mathbb{R}^d)\rightarrow(-\infty,\infty]$ is a functional which maps probability measures to $\mathbb{R}$. In this section, we introduce the general form of the objective functionals studied in this paper, as well as their properties (e.g., geodesic convexity, Wasserstein gradients) (Section \ref{sec:objective-functions}). We then recall several standard gradient-based methods for solving optimization problems of this type (Section \ref{sec:methods}). 

\subsection{The Objective Function}
\label{sec:objective-functions}
In this paper, we will focus principally on objectives that can be expressed as the sum of the three fundamental functionals identified in Villani's well-known treatise on optimal transport \citep{villani2003topics}, viz
\begin{equation}
    \label{eq:functional}
    \mathcal{F}(\mu) = \underbrace{\int V(x)\mathrm{d}\mu(x)}_{\mathcal{V}(\mu)} + \underbrace{\frac{1}{2}\int_{\mathbb{R}^d\times\mathbb{R}^d}W(x-y)\mathrm{d}\mu(x)\mathrm{d}\mu(y)}_{\mathcal{W}(\mu)} + \underbrace{\int_{\mathbb{R}^d} H(\mu(x))\mathrm{d}x}_{\mathcal{H}(\mu)},
\end{equation}
where $\mathcal{V}:\mathcal{P}_2(\mathbb{R}^d)\rightarrow(-\infty,\infty]$ denotes the \emph{potential energy}, $\mathcal{W}:\mathcal{P}_2(\mathbb{R}^d)\rightarrow(-\infty,\infty]$ the \emph{interaction energy}, and $\mathcal{H}:\mathcal{P}_2(\mathbb{R}^d)\rightarrow(-\infty,\infty]$ the \emph{internal energy}.\footnote{Later, we will often treat the potential energy and the interaction energy separately from the internal energy. It will thus also be convenient to define $\mathcal{E}:\mathcal{P}_2(\mathbb{R}^d)\rightarrow(-\infty,\infty]$ as the sum of the potential energy and interaction energy, viz, $\mathcal{E}(\mu):= \mathcal{V}(\mu) + \mathcal{W}(\mu)$.} These three terms are defined as follows.

\subsubsection{The Potential Energy}  Let $V:\mathbb{R}^d\rightarrow(-\infty,\infty]$ be a proper, lower semicontinuous function (the \textit{potential function}), whose negative part satisfies a quadratic growth condition: for all $x\in\mathbb{R}^d$, there exist $A,B\in\mathbb{R}$ such that $V(x)\geq -A - B\|x\|^2$.  We define the potential energy as 
\begin{equation}
    \mathcal{V}(\mu) = \int_{\mathbb{R}^d} V(x)\mathrm{d}\mu(x).
\end{equation}
This functional encourages the distribution $\mu$ to concentrate in regions where the potential function is minimal. Under the assumptions stated above, the potential energy is both proper and lower semicontinuous \citep[e.g.,][Example 9.3.1]{ambrosio2008gradient}. If, in addition, the potential function is $m$-convex, then the potential energy is $m$-geodesically convex \citep[][Proposition 9.3.2]{ambrosio2008gradient}. Finally, for all $\mu\in\mathcal{P}_2(\mathbb{R}^d)$, the Wasserstein gradient of $\mathcal{V}$ at $\mu$ is given by \citep[e.g.,][Proposition 4.1.3, Section 10.4.7]{ambrosio2007gradient}
\begin{equation}
\label{eq:potential-minimal-selection}
    \nabla_{W_2} \mathcal{V}(\mu) (x) = \nabla V(x) \quad \quad \text{ for $\mu$-a.e. $x\in\mathbb{R}^d$},
\end{equation}
where, in a slight abuse of notation, we have used $\nabla V(x)$ to denote the minimal selection $\partial^{o}V(x):=\argmin\{\|g\|:g\in\partial V(x)\}$, where $\partial V(x)$ denotes the subdifferential of $V$ at $x$ \citep[e.g.,][]{rockafellar1970convex}.

\subsubsection{The Interaction Energy} Let $W:\mathbb{R}^d\rightarrow(-\infty,\infty]$ be a function (the \textit{interaction kernel}) whose negative part satisfies the usual quadratic growth condition; and satisfying $W(0)<+\infty$. We define the interaction energy as 
\begin{equation}
    \mathcal{W}(\mu) = \frac{1}{2} \int_{\mathbb{R}^d\times\mathbb{R}^d} W(x-y)\mathrm{d}\mu(x) \mathrm{d}\mu(y). \label{eq:interaction}
\end{equation}
This term is a form of potential energy associated to pairs of particles. It can be both attractive (e.g., $W(x-y) = \|x-y\|^2$) and repulsive (e.g., $W(x-y) = - \log \|x-y\|$). Once again, this functional is both proper and lower semicontinuous under our assumptions \citep[e.g.,][Example 9.3.4]{ambrosio2008gradient}. If, in addition, the interaction kernel $W$ is convex, then $\mathcal{W}$ is geodesically convex \citep[][Proposition 9.3.5]{ambrosio2008gradient}. Finally, suppose $W$ is a differentiable, even function, and satisfies a doubling condition, i.e., there exists $C_W>0$ such that, for all $x,y$, $W(x+y)\leq C_W(1+W(x)+W(y))$. Then, for each $\mu\in\mathcal{P}_{2}(\mathbb{R}^d)$, the Wasserstein gradient of $\mathcal{W}$ at $\mu$ is given by \citep[e.g.,][Section 10.4.7]{ambrosio2008gradient}
\begin{equation}
    \nabla_{W_2}\mathcal{W}(\mu)(x) = \nabla W\star \mu(x) := \int \nabla W(x-y)\mathrm{d}\mu(y).
\end{equation}

\subsubsection{The Internal Energy} Let $H:[0,\infty)\rightarrow(-\infty,\infty]$ be a proper, lower semicontinuous function which satisfies $H(0) = 0$ and $\smash{\liminf_{s\rightarrow0}\frac{H(s)}{s^{\alpha}}>-\infty}$ for some $\smash{\alpha>\frac{d}{d+2}}$. The internal energy is defined as 
\begin{equation}
    \mathcal{H}(\mu) = \left\{ \begin{array}{lll} \int_{\mathbb{R}^d} H(\mu(x))\mathrm{d}x &, & \text{if $\mu\in\mathcal{P}_{2,\mathrm{ac}}(\mathbb{R}^d)$} \\[1mm] +\infty & , & \text{otherwise},
    \end{array}
    \right.
\end{equation}
where, in a slight abuse of notation, we write $\mu(x)$ for the density of $\mu\in\mathcal{P}_{2,\mathrm{ac}}(\mathbb{R}^d)$. This term is repulsive, favouring distributions evenly spread throughout the domain. 

Once more, under the assumptions above, this functional is proper and lower semicontinuous \cite[e.g.,][Example 9.3.6]{ambrosio2008gradient}. Suppose, in addition, that the function $H$ is convex, and satisfies the condition: 
\begin{equation}
     \text{\emph{$s\mapsto s^{d}H(s^{-d})$ is convex and non-increasing in $(0,+\infty)$}}. \label{eq:internal-energy-convexity}
\end{equation}
Then it follows that the internal energy $\mathcal{H}$ is geodesically convex (e.g., \citealp{ambrosio2008gradient}, Proposition 9.3.9; \citealp{erbar2010heat}).
Suppose also that $H$ satisfies the \textit{doubling} condition, i.e., there exists $C_{H}>0$ such that for all $x,y$, $H(x+y) \leq C_{H}(1+H(x)+H(y))$. Then, for all $\mu\in\mathcal{P}_{2,\mathrm{ac}}(\mathbb{R}^d)$, the Wasserstein gradient of $\mathcal{H}$ at $\mu$ is defined by (\citealt{ambrosio2008gradient}, Theorem 10.4.6; \citealt{lanzetti2022first}, Proposition 2.25)
    \begin{equation}
    \label{eq:internal-minimal-selection}
        \nabla_{W_2}\mathcal{H}(\mu) = \frac{\nabla \left(\mu H'(\mu) - H(\mu)\right)}{\mu}.
    \end{equation}

\subsubsection{Extensions} More generally, it may be of interest to consider generic entropy-regularized objective functionals of the form
\begin{equation}
    \mathcal{F}(\mu) =  \mathcal{E}(\mu) + \mathrm{Ent}(\mu), \label{eq:entropy-regularized}
\end{equation} 
 where $\mathrm{Ent}(\mu) := \int \log [\mu(x)] \,\mathrm{d}\mu(x)$ denotes the (negative) entropy, and $\mathcal{E}:\mathcal{P}_2(\mathbb{R}^d)\rightarrow(-\infty,\infty]$ is a known functional, often of the form 
\begin{equation}
    \mathcal{E}(\mu) = \mathcal{E}_0(\mu) + \frac{\lambda}{2}\int_{\mathbb{R}^d}\|\cdot\|^2\mathrm{d}\mu(\cdot),
\end{equation}
for some convex functional $\mathcal{E}_0:\mathcal{P}_2(\mathbb{R}^d)\rightarrow(-\infty,\infty]$. This encompasses, for example, the objective functionals which arise in the analysis of mean-field neural networks \citep[e.g.][]{suzuki2023convergence,kook2024sampling}, as well as those encountered in post-Bayesian inference \citep[e.g.,][]{chazal2025computable,mclatchie2025predictively}. We leave a detailed theoretical treatment of this more general setting to future work. We will, however, revisit this case in our later discussion of practical algorithms (Section \ref{sec:algorithms}), as well as in our numerical experiments (Section \ref{sec:numerical-results}).

\begin{tcolorbox}[enhanced,
  colback=white,
  frame hidden,
  borderline north={0.5pt}{0pt}{black!70},
  borderline south={0.5pt}{0pt}{black!70},
  arc=2pt,             
  left=0pt,            
  right=0pt,           
  top=4pt,             
  bottom=4pt,           
  parbox=false,
  before skip=10pt, 
  after skip=15pt]
\begin{example1}[\textsc{Sampling from a Target Probability Distribution}]
Consider again the example introduced in Section \ref{sec:introduction}: $\smash{\mathcal{F}(\mu) = \mathrm{KL}(\mu\|\pi)}$, where $\smash{\pi\in\mathcal{P}_2(\mathbb{R}^d)}$ is a {probability distribution} on $\smash{\mathbb{R}^d}$, with density $\smash{\pi(x)\propto e^{-U(x)}}$ with respect to the Lebesgue measure.

This functional 
can be rewritten in the general form of the objective functional in \eqref{eq:functional} (see, e.g., \citealt{ambrosio2008gradient}, Remark 9.4.2; \citealt{ambrosio2021lectures}, Proposition 15.6). 
In particular, we have that
\begin{align}
\label{eq:kl-divergence-energy}
    \mathrm{KL}(\mu\|\pi) 
    = \left\{ \begin{array}{lll}  \int_{\mathbb{R}^d} U(x) \mathrm{d}\mu(x) + \int_{\mathbb{R}^d} \log \left[\mu(x)\right]\mu(x)\mathrm{d}x + \mathrm{const.} & , & \text{if $\mu \in\mathcal{P}_{2,\mathrm{ac}}(\mathbb{R}^d)$ }, \\ + \infty & , & \text{otherwise,} \end{array} \right. 
\end{align}
so that $\smash{\mathcal{V}(\mu) = \int U(x)\mathrm{d}\mu(x)}$, $\smash{\mathcal{W}(\mu)= 0}$, 
and $\smash{\mathcal{H}(\mu)=\int_{\mathbb{R}^d}\log [\mu(x)]\mu(x)\mathrm{d}x}$ if $\mu \in\mathcal{P}_{2,\mathrm{ac}}(\mathbb{R}^d)$, $\mathcal{H}(\mu)=+\infty$ otherwise.

Suppose that $\pi\propto e^{-U}$ is \textit{log-concave}, i.e., the potential $U:\mathbb{R}^d\rightarrow(-\infty,\infty]$ is convex. Suppose also that the negative part of $U$ satisfies a quadratic growth condition. It follows that $\mathcal{V}$ is proper, lower semicontinuous, and geodesically convex. In addition, $\mathcal{H}$ is proper, lower semicontinuous, and geodesically convex, since the function $H(s) = s\log s$ satisfies all of the necessary conditions \cite[e.g.,][pg 214]{ambrosio2008gradient}. Putting these results together, it follows that $\mathrm{KL}(\mu\|\pi) = \mathcal{V}(\mu) + \mathcal{W}(\mu) + \mathcal{H}(\mu)$ is proper, lower semicontinuous, and geodesically convex \citep[][Theorem 9.4.11]{ambrosio2008gradient}.

We can also compute its Wasserstein gradient. Using the results stated above, we have that $\smash{\nabla_{W_2} \mathcal{V}(\mu) = \nabla U = - \nabla \log \pi}$ and $\smash{\nabla_{W_2}\mathcal{W}(\mu) = 0}$.
In addition, $\smash{\nabla_{W_2} \mathcal{H}(\mu) = \nabla \mu / \mu = \nabla \log \mu}$ for $\mu\in\mathcal{P}_{2,\mathrm{ac}}(\mathbb{R}^d)$, noting that $H(s) = s\log s$ satisfies all of the conditions required to compute the $W_2$ gradient \citep[e.g.,][Section 10.4.3]{ambrosio2008gradient}. It follows that, for all $\mu\in\mathcal{P}_{2,\mathrm{ac}}(\mathbb{R}^d)$, we have
\begin{equation}
    \nabla_{W_2}\mathrm{KL}(\mu\|\pi) = \nabla_{W_2}\mathcal{V}(\mu) + \nabla_{W_2}\mathcal{W}(\mu) + \nabla_{W_2}\mathcal{H}(\mu) = \nabla U+\nabla \log \mu = \nabla \log \frac{\mu}{\pi}. \label{eq:kl-gradient}
\end{equation}
\end{example1}
\end{tcolorbox}

\subsection{Gradient Flows} 
\label{sec:methods}
We now turn our attention to \emph{how} to solve the optimization problem in \eqref{eq:wasserstein-optimization}. A natural solution is to consider a gradient flow of the objective functional over the space of probability measures. 

\subsubsection{Wasserstein Gradient Flow}
We will focus primarily on the \emph{Wasserstein gradient flow} (WGF) of $\mathcal{F}$, defined as the weak solution $\smash{\mu:[0,\infty)\rightarrow\mathcal{P}_2(\mathbb{R}^d)}$ of the continuity equation \citep[][Chapter 11]{ambrosio2008gradient}
\begin{equation}
    \frac{\partial \mu_t}{\partial t}  +\nabla \cdot \left(v_t\mu_t\right)=0~,~~~v_t \in -\partial \mathcal{F}(\mu_t). \label{eq:wasserstein-grad-flow}
\end{equation}
Under mild conditions, this equation admits a unique solution for any initial condition (e.g., Theorem 11.1.4, Theorem 11.2.1 in \citealp{ambrosio2008gradient}; Proposition 4.13 in \citealp{santambrogio2017euclidean}). In particular, this solution is dictated by the unique minimal selection of the subdifferential \citep[e.g.,][Theorem 5.3]{ambrosio2007gradient}. Thus, for almost all $t>0$, we have
\begin{equation}
    \frac{\partial \mu_t}{\partial t} +\nabla \cdot \left(v_t\mu_t\right)=0~,~~~v_t = -\nabla_{W_2}\mathcal{F}(\mu_t). \label{eq:wasserstein-grad-flow-minimal-norm}
\end{equation}
In addition, the function $t\mapsto \mathcal{F}(\mu_t)$ is decreasing and so, under additional constraints (e.g., $\mathcal{F}$ is geodesically convex), $\lim_{t\rightarrow\infty} \mathcal{F}(\mu_t) = \inf_{\mu\in\mathcal{P}_2(\mathbb{R}^d)}\mathcal{F}(\mu)$ \citep[][Chapter 11]{ambrosio2008gradient}.

We can also give a Lagrangian description of the WGF (e.g., Section 8 in \citealp{ambrosio2008gradient}; Section 4 in \citealp{santambrogio2015optimal}). In particular, suppose that we define $x=(x_t)_{t\geq 0}$ as the integral curve of the vector fields $(v_t)_{t\geq 0}$, viz
\begin{equation}
    \dot{x}_t = v_t(x_t)~,~~~v_t = -\nabla_{W_2}\mathcal{F}(\mu_t),\label{eq:wasserstein-grad-flow-lagrangian}
\end{equation}
with initial condition $x_0\sim\mu_0$. Then, if we let $(\mu_t)_{t\geq 0}$ denote the marginal law of $(x_t)_{t\geq 0}$, the continuity equation in \eqref{eq:wasserstein-grad-flow-minimal-norm} precisely describes the evolution of $(\mu_t)_{t\geq 0}$.

\begin{tcolorbox}[enhanced,
  colback=white,
  frame hidden,
  borderline north={0.5pt}{0pt}{black!70},
  borderline south={0.5pt}{0pt}{black!70},
  arc=2pt,             
  left=0pt,            
  right=0pt,           
  top=4pt,             
  bottom=4pt,           
  parbox=false,
  before skip=10pt, 
  after skip=15pt]
\begin{example1}[\textsc{Sampling from a Target Probability Distribution}]
Consider again our running example: $\mathcal{F}(\mu) = \mathrm{KL}(\mu\|\pi)$, with $\pi\propto e^{-U}$. Substituting the Wasserstein gradient of the KL divergence, cf. \eqref{eq:kl-gradient}, into the general form of the WGF, cf.  \eqref{eq:wasserstein-grad-flow-minimal-norm}, it follows that the WGF of the KL divergence is given by
\begin{align}
    \frac{\partial \mu_t}{\partial t} &= \nabla \cdot \Big( \mu_t \nabla \log \frac{\mu_t}{\pi}\Big) = \nabla \cdot\Big(\mu_t \Big[\nabla U + \nabla \log \mu_t\Big]\Big). \label{eq:wgf-kl-divergence}
\intertext{Meanwhile, substituting \eqref{eq:kl-gradient} into \eqref{eq:wasserstein-grad-flow-lagrangian}, the Lagrangian formulation of the WGF of the KL divergence is given by}
    \mathrm{d}x_t &= -\Big(\nabla \log \frac{\mu_t(x_t)}{\pi(x_t)} \Big)\mathrm{d}t = -\Big(\nabla U(x_t) + \nabla \log \mu_t(x_t)\Big) \mathrm{d}t, 
    \label{eq:wgf-kl-divergence-lagrangian}
\end{align}
where $\mu_t = \mathrm{Law}(x_t)$. This is sometimes referred to as the \textit{deterministic Langevin diffusion} \citep[e.g.,][]{maoutsa2020interacting,grumitt2022deterministic}.

We can also obtain an alternative representation of this WGF. In particular, using the identity $\nabla \log \mu_t = \nabla \mu_t / \mu_t$, the WGF in \eqref{eq:wgf-kl-divergence} can be rewritten as
\begin{align}
    \frac{\partial \mu_t}{\partial t}  &= \nabla \cdot (\mu_t \nabla U) + \Delta \mu_t. \label{eq:wgf-kl-divergence-langevin}
\end{align}
This is precisely the Fokker-Planck equation governing the evolution of the marginal law of the overdamped Langevin diffusion \citep[e.g.,][]{fruehwirth2024ergodicity}, viz, 
\begin{equation}
    \mathrm{d}x_t = -\nabla U(x_t)\mathrm{d}t + \sqrt{2}\mathrm{d}w_t, \label{eq:wgf-kl-divergence-langevin-lagrangian}
\end{equation}
where $w=(w_t)_{t\geq 0}$ is a standard $\mathbb{R}^d$-valued Brownian motion. Thus, the overdamped Langevin diffusion can be viewed as the WGF of the KL divergence \citep{jordan1998variational}.
\end{example1}
\end{tcolorbox}

\subsubsection{Alternative Gradient Flows} 
\label{sec:alternative-gradient-flows}
There are, in fact, several other gradient flows which one could use to solve the minimization problem in \eqref{eq:wasserstein-optimization}. These commonly arise as a result of a different choice of metric over the space of probability measures \citep[e.g.,][]{liu2017stein,garbuno2020interacting,garbuno2020affine,duncan2019geometry,lu2023birth}. Below, we survey several popular alternatives to the Wasserstein metric, and the corresponding gradient flows. 

We mention these since, in principle, our adaptive step size schedule(s) could also be used to obtain tuning-free versions of the algorithms obtained as time-discretizations of \emph{these} gradient flows. Indeed, given an appropriate modification of our existing assumptions (see Section \ref{sec:assumptions}) -- e.g., replacing geodesic convexity w.r.t. the Wasserstein geometry with geodesic convexity w.r.t. the new geometry -- our theoretical analysis would go through largely unchanged. We leave a more detailed treatment of this to future work.

\paragraph{The Fisher-Rao Metric}
The Fisher-Rao metric \citep[e.g.,][]{amari2016information,nihat2015information}, also known as the Hellinger distance \citep{hellinger1909neue} or Hellinger-Kakutani distance \citep{kakutani1948equivalence},\footnote{To be precise, the Fisher-Rao distance is a metric over the space $\mathcal{M}_{+}(\mathbb{R}^d)$ of positive measures on $\mathbb{R}^d$. In the case that $\mu,\nu$ are both probability measures, it coincides with the Hellinger distance.} is a metric over the space $\mathcal{M}_{+}(\mathbb{R}^d)$ of positive measures over $\mathbb{R}^d$, defined via
    \begin{equation}
        d_{\mathrm{FR}}^2(\mu,\nu) = \int_{\mathbb{R}^d}\left|\sqrt{\frac{\mathrm{d}\mu}{\mathrm{d}\lambda}} - \sqrt{\frac{\mathrm{d}\nu}{\mathrm{d}\lambda}}\right|^2 \mathrm{d}\lambda,
    \end{equation}
where $\lambda$ is any reference measure such that $\mu\ll \lambda$ and $\nu\ll \lambda$. Given a continuous and differentiable functional $\mathcal{F}:\mathcal{P}(\mathbb{R}^d)\rightarrow\mathbb{R}\cup\{+\infty\}$, the gradient with respect to the Fisher-Rao metric can be computed as \citep[e.g.,][Appendix A]{lu2019accelerating}
    \begin{equation}
        \nabla_{\mathrm{FR}}\mathcal{F}(\mu) = \frac{\delta \mathcal{F}(\mu)}{\delta \mu} - \int \frac{\delta \mathcal{F}(\mu)}{\delta \mu}(x) \mathrm{d}\mu(x).
    \end{equation}
The \emph{Fisher-Rao gradient flow} is then defined as the curve $\mu:[0,\infty)\rightarrow\mathcal{P}(\mathbb{R}^d)$ of probability measures which satisfy the \emph{reaction equation} \cite[e.g.,][Section 2]{lu2023birth}  
    \begin{equation}
    \label{eq:fisher-rao-gradient-flow}
        \frac{\partial \mu_t}{\partial t} = \Lambda_t \mu_t,\quad \Lambda_t = -\nabla_{\mathrm{FR}}\mathcal{F}(\mu_t).
    \end{equation}
This is a mean-field equation of \emph{birth-death type}, where the birth-death rate $\Lambda_t$ contains a nonlocal interaction term that allows global movement of mass. 

The convergence properties of the Fisher-Rao gradient flow of the KL divergence (e.g., exponential convergence rate, independent of the target measure) have recently been established by various authors \citep{domingo2023explicit,lu2023birth,chen2023sampling}. More recently,  \citet{carrillo2024fisher} study their convergence properties for a wider-class of energy functionals. See also \citet{chen2024efficient,maurais2024sampling} for some applications to sampling problems.

Other authors have considered a composite of the Wasserstein and Fisher-Rao geometries \citep[e.g.,][]{kondratyev2016new,chizat2018interpolating,liero2018optimal}. In particular, \citet{gallouet2017jko,lu2019accelerating,kondratyev2019spherical,yan2024learning,tan2024accelerate} all study gradient flows with respect to the Wasserstein-Fisher-Rao metric. Such gradient flows allow for both mass transportation and mass teleportation.

\paragraph{The Stein Metric}
An alternative choice is the Stein metric, first rigorously defined in \citet{duncan2019geometry}. To introduce this metric, we will first require some additional notation. 

Let $k:\mathbb{R}^d\times\mathbb{R}^d\rightarrow\mathbb{R}$ denote a continuous, symmetric, positive-definite kernel. Let $\mathcal{H}_k$ denote the reproducing kernel Hilbert space (RKHS) associated to the kernel $k$. Let $\langle \cdot, \cdot \rangle_{\mathcal{H}_k}$ denote the inner product on this space, and let $\|\cdot\|_{\mathcal{H}_k}$ the norm induced by this inner product. Let $\mathcal{H}_k^{d} = \mathcal{H}_k \times \cdots \times \mathcal{H}_k$ denote the $d$-fold Cartesian product of $\mathcal{H}_k$, i.e., the Hilbert space of vector fields $v = (v_1,\cdots,v_d)$ with norm $\smash{\|v\|_{\mathcal{H}_k^d}^2 = \sum_{i=1}^d \|v_i\|^2_{\mathcal{H}_k}}$. We can then define the Stein distance as \citep[Definition 15, Lemma 17.3]{duncan2019geometry}: 
    \begin{equation}
        d_k^2(\mu,\nu) = \inf \left\{\int_{0}^{1} \|v_t\|^2_{\mathcal{H}_k^{d}} \mathrm{d}t, ~~~\partial_{t}\mu_t 
 + \nabla \cdot (v_t\mu_t) = 0,~\mu_0 = \mu,~\mu_1 = \nu \right\}.
    \end{equation}
The gradient with respect to the Stein geometry can then be computed as \citep[Remark 10, Lemma 11]{duncan2019geometry}
\begin{equation}
    \nabla_{k}\mathcal{F}(\mu) = \mathcal{S}_{\mu} \nabla \frac{\delta \mathcal{F}}{\delta \mu}, \label{eq:stein-gradient}
\end{equation}
where $\mathcal{S}_{\mu}:L^2(\mu)\rightarrow\mathcal{H}_k^d$ denotes the linear operator given by $\mathcal{S}_{\mu} f = \int_{\mathbb{R}^d} k(x,\cdot)f(x) \mu(\mathrm{d}x)$. Finally, the \emph{Stein gradient flow} is defined as the curve of probability measures which satisfy
\begin{equation}
    \frac{\partial \mu_t}{\partial t} + \nabla \cdot (v_t\mu_t) = 0, \quad v_t = - \nabla_{k}\mathcal{F}(\mu_t). \label{eq:stein-gradient-flow}
\end{equation}
In the case where $\mathcal{F}(\mu)=\mathrm{KL}(\mu\|\pi)$, this is known as the \emph{Stein variational gradient flow} \citep[e.g.,][]{he2024regularized}. This algorithm (or more precisely, its space-time discretization) was first introduced in \citet{liu2016stein}, under the name \emph{Stein variational gradient descent} (SVGD). The geometric interpretation presented here was subsequently developed in \citet{liu2017stein,duncan2019geometry}.

Since its introduction, the theoretical properties of SVGD have been the subject of intense research, although the convergence theory remains largely underdeveloped. For a non-exhaustive list of significant theoretical contributions, we refer to \citet{liu2017stein,liu2018stein,lu2019scaling,chu2020equivalence,gorham2020stochastic,korba2020nonasymptotic,salim2022convergence,das2023provably,duncan2019geometry,liu2024towards,nusken2021stein,karimi2023stochastic,shi2022finite,sun2023convergence,carrillo2024stein,fujisawa2024convergence,he2024regularized,balasubramanian2024improved,carrillo2023convergence}. 

\paragraph{The Kalman-Wasserstein Metric} 
Finally, we mention the Kalman-Wasserstein metric, defined according to (e.g., \citealt{garbuno2020interacting}, Definition 6; \citealt{burger2024covariance}, Definition 1.1)
\begin{equation}
    W_{\mathcal{C}}^2(\mu,\nu) = \inf \left\{ \int_0^1 \left[ \int_{\mathbb{R}^d} \|v_t\|^2_{\mathcal{C}(\mu_t)} \,\mathrm{d}\mu_t(x) \right] \mathrm{d}t ~\big|~ \partial_t \mu_t + \nabla \cdot (v_t\mu_t) = 0,~\mu_0 = \mu,~\mu_1 = \nu\right\}
\end{equation}
where $\smash{\|v\|^2_{\mathcal{C}(\mu)} = \langle v, \mathcal{C}^{-1}(\mu)v\rangle}$, with $\langle\cdot,\cdot\rangle$ the usual Euclidean inner product on $\mathbb{R}^d$, and $\smash{\mathcal{C}(\mu)}$ denotes the covariance of the measure $\smash{\mu\in\mathcal{P}_2(\mathbb{R}^d)}$. For a suitably regular functional $\mathcal{F}:\mathcal{P}_2(\mathbb{R}^d)\rightarrow (-\infty,\infty]$, the gradient with respect to the Kalman-Wasserstein geometry can be computed as 
\begin{equation}
    \nabla_{\mathrm{KW}}\mathcal{F}(\mu) = \mathcal{C}(\mu) \nabla \frac{\delta \mathcal{F}(\mu)}{\delta \mu}. \label{eq:kalman-wasserstein-gradient}
\end{equation}
The \emph{Kalman Wasserstein gradient flow} is then defined as the curve $\mu:[0,\infty)\rightarrow\mathcal{P}_2(\mathbb{R}^d)$ of probability measures which satisfy the continuity equation
\begin{equation}
    \frac{\partial \mu_t}{\partial t} + \nabla \cdot (v_t\mu_t) = 0, \quad v_t = - \nabla_{\mathrm{KW}}\mathcal{F}(\mu_t). \label{eq:kalman-wasserstein-gradient-flow}
\end{equation}
The properties of this gradient flow, in the particular case $\mathcal{F}(\mu) =\mathrm{KL}(\mu\|\pi)$, are studied in several recent papers \citep{garbuno2020interacting,garbuno2020affine,carrillo2021wasserstein,burger2024covariance}; see also \citet[Section 4]{chen2023sampling}.

\subsection{Practical Algorithms}
\label{sec:practical-algorithms}
In order to obtain practical optimization algorithms, it will of course be necessary to discretize the continuous-time gradient flow defined in \eqref{eq:wasserstein-grad-flow} or, indeed, \eqref{eq:fisher-rao-gradient-flow}, \eqref{eq:stein-gradient-flow}, or \eqref{eq:kalman-wasserstein-gradient-flow}. This is the subject of Section \ref{sec:time-discretization}.

\subsubsection{Time Discretization}
\label{sec:time-discretization}
We will focus on two common time-discretizations of the continuous-time dynamics in \eqref{eq:wasserstein-grad-flow}.

\paragraph{Forward-Flow Discretization} 
The first of these is the so-called \emph{forward-flow discretization} \citep[e.g.,][Section 2.2.2]{wibisono2018sampling}. This is applicable when $\mathcal{F}:\mathcal{P}_2(\mathbb{R}^d)\rightarrow(-\infty,\infty]$ takes the general form in \eqref{eq:functional}, but now with $\mathcal{H}(\mu) = \mathrm{Ent}(\mu)$ given specifically by the negative entropy; see also \eqref{eq:entropy-regularized}. In its most general form, this discretization is defined as follows. Fix $\mu_0\in\mathcal{P}_2(\mathbb{R}^d)$. Then, for $t\geq 0$, update
\begin{align}
    \mu_{t+\frac{1}{2}} &= \left(\mathrm{id} - \eta_t \zeta_t\right)_{\#}\mu_{t}, \quad \zeta_t = \nabla_{W_2}\mathcal{E}(\mu_t) \label{eq:forward-flow-1} \\
    \mu_{t+1} &= \mathcal{N}(0,2\eta_t \mathbf{I}_d) \star \mu_{t+\frac{1}{2}} \label{eq:forward-flow-2} 
\end{align}
where $(\eta_t)_{t\geq 0}$ is a positive, non-increasing sequence known as the \emph{step size schedule} or \emph{learning-rate schedule}. Naturally, we can give a corresponding description of this time-discretization at the level of a single particle. In particular, suppose we initialize $x_0\sim\mu_0$, where $\mu_0\in\mathcal{P}_2(\mathbb{R}^d)$. Then, for $t\geq 0$, we update
\begin{align}
    x_{t+\frac{1}{2}} &= x_t - \eta_t\zeta_t(x_t), \quad \zeta_t = \nabla_{W_2} \mathcal{E}(\mu_t) \label{eq:forward-flow-lagrangian-1} \\
    x_{t+1} &= x_{t+\frac{1}{2}} + \sqrt{2\eta_{t}} z_t \label{eq:forward-flow-lagrangian-2}
\end{align}
where $\mu_t = \mathrm{Law}(x_t)$, and where $(z_t)_{t\in\mathbb{N}}$ are a sequence of i.i.d. random variables distributed according to $\mathcal{N}(0,\mathbf{I}_d)$. More compactly, we can write this as
\begin{align}
x_{t+1} &= x_t - \eta_t\zeta_t(x_t) + \sqrt{2\eta_t} z_t, \quad \zeta_t = \nabla_{W_2} \mathcal{E}(\mu_t). \label{eq:forward-flow-lagrangian}
\end{align}
This algorithm is sometimes referred to as the \emph{mean-field Langevin algorithm} \citep[e.g.,][]{hu2021meanfield,chizat2022meanfield,nitanda2022convex,suzuki2023convergence}.
~\\

\paragraph{Forward Euler Discretization} The second time-discretization we consider is the forward or explicit Euler discretization. This yields the so-called \emph{Wasserstein subgradient descent} algorithm \citep{guo2022online,wang2022projected}, defined as follows. Fix $\mu_0\in\mathcal{P}_2(\mathbb{R}^d)$. Then, for $t\geq 0$, update 
\begin{equation}
    \mu_{t+1} = (\mathrm{id} - \eta_t \xi_t)_{\#}\mu_t,\quad \xi_t = \nabla_{W_2}\mathcal{F}(\mu_t). \label{eq:wasserstein-sub-grad-descent}
\end{equation}
Similar to before, we can also give a Lagrangian description of the Wasserstein subgradient descent algorithm. Suppose we initialize $x_0\sim\mu_0$, where $\mu_0\in\mathcal{P}_2(\mathbb{R}^d)$ as above. For $t\geq 0$, we update 
\begin{equation}
    x_{t+1} = x_t - \eta_t \xi_t(x_t),\quad \xi_t= \nabla_{W_2}\mathcal{F}\mathcal(\mu_t). \label{eq:wasserstein-sub-grad-descent-lagrange}
\end{equation}
Then, if we write $(\mu_t)_{t\in\mathbb{N}_0}$ for the marginal law of $(x_t)_{t\in\mathbb{N}_0}$, then the update equation in  \eqref{eq:wasserstein-sub-grad-descent} precisely describes the evolution of $(\mu_t)_{t\in\mathbb{N}_0}$.

\paragraph{Other Time Discretizations} There are, of course, other possible time-discretizations of the WGF in \eqref{eq:wasserstein-grad-flow}, though we do not consider these any further in this paper. These include a backward Euler discretization, which corresponds to the minimizing movement scheme (MMS) \citep[Definition 2.0.6]{ambrosio2008gradient} or Jordan-Kinderlehrer-Otto (JKO) algorithm \citep{jordan1998variational}; and a forward-backward discretization, which yields the Wasserstein proximal gradient algorithm \citep{salim2020wasserstein,zhu2025convergence}.

\begin{tcolorbox}[enhanced,
  colback=white,
  frame hidden,
  borderline north={0.5pt}{0pt}{black!70},
  borderline south={0.5pt}{0pt}{black!70},
  arc=2pt,             
  left=0pt,            
  right=0pt,           
  top=4pt,             
  bottom=4pt,           
  parbox=false,
  before skip=10pt, 
  after skip=15pt]
\begin{example1}[\textsc{Sampling from a Target Probability Distribution}]
Consider again our running example: $\mathcal{F}(\mu) = \mathrm{KL}(\mu\|\pi)$, with $\pi\propto e^{-U}$. Substituting \eqref{eq:kl-gradient} into \eqref{eq:forward-flow-lagrangian}, the forward-flow discretization of the WGF of the KL divergence is given by
    \begin{equation}
         \hspace{18mm} x_{t+1} = x_{t} - \eta_t \nabla U(x_t) + \sqrt{2\eta_t} z_t, \quad z_t\sim \mathcal{N}(0,\mathbf{I}_{d}). 
        \label{eq:lmc}
     \end{equation}
This is the well-known Langevin Monte Carlo (LMC) or unadjusted Langevin algorithm (ULA) \citep[e.g.,][]{dalalyan2019user,durmus2019analysis}. We note that this algorithm can also be obtained by considering the Euler-Maruyama discretization of the WGF of KL divergence, as defined by the overdamped Langevin diffusion in \eqref{eq:wgf-kl-divergence-langevin-lagrangian}.

Meanwhile, substituting \eqref{eq:kl-gradient} into \eqref{eq:wasserstein-sub-grad-descent-lagrange}, the Wasserstein subgradient descent algorithm for the KL divergence writes as
    \begin{align}
    x_{t+1}&= x_t - \eta_t\left[\nabla U(x_t) + \nabla \log {\mu_t(x_t)} \right]. \label{eq:deterministic-lmc}  
    \end{align}
This algorithm is sometimes referred to as \textit{deterministic Langevin Monte Carlo} \citep[e.g.,][]{grumitt2022deterministic}.
\end{example1}
\end{tcolorbox}

\subsubsection{Constrained Domains}
\label{sec:constrained-domains}
In certain applications, we may in fact be interested in solving a constrained version of the optimization problem in \eqref{eq:wasserstein-optimization}, viz
\begin{equation}
    \pi = \argmin_{\mu\in\mathcal{P}_2(\mathcal{X})} \mathcal{F}(\mu),
\end{equation}
where $\mathcal{X}\subseteq\mathbb{R}^d$ is a closed, convex set and where $\mathcal{P}_{2}(\mathcal{X})$ is the set of Borel probability measures supported on $\mathcal{X}\subseteq\mathbb{R}^d$ with finite second moment. This includes the unconstrained case $\mathcal{X}=\mathbb{R}^d$ as an important special case. 

In this setting, we can consider \emph{projected} versions of the previous algorithms, cf.  \eqref{eq:forward-flow-1} - \eqref{eq:forward-flow-2} and \eqref{eq:wasserstein-sub-grad-descent}, which ensure that the iterates remains in the feasible set. To be precise, after each iteration we can make use of the Wasserstein projection operator $\mathrm{Proj}_{\mathcal{P}_2(\mathcal{X})}:\mathcal{P}_2(\mathbb{R}^d)\rightarrow\mathcal{P}_2(\mathcal{X})$, which defines the projection onto the set of probability measures with support contained in $\mathcal{X}$, viz 
\begin{equation}
    \mathrm{Proj}_{\mathcal{P}_2(\mathcal{X})}(\mu) := \argmin_{\bar{\mu}\in\mathcal{P}_2(\mathcal{X})}W_2(\mu,\bar{\mu}).
\end{equation}
This projection operator is well defined, and coincides with the pushforward of $\mu$ under the Euclidean projection operator \citep[e.g.,][Proposition 3.1]{lanzetti2023stochastic}. That is, for all $\mu\in\mathcal{P}_2(\mathbb{R}^d)$, $\smash{\mathrm{Proj}_{\mathcal{P}_2(\mathcal{X})}(\mu) = \big(\mathrm{Proj}_{\mathcal{X}}\left[\cdot\right]\big)_{\#}\mu}$, where $\smash{\mathrm{Proj}_{\mathcal{X}}:\mathbb{R}^d\rightarrow\mathcal{X}}$ denotes the standard Euclidean projection onto $\mathcal{X}$. Naturally, the Wasserstein projection operator reduces to the identity operator in the unconstrained case where $\mathcal{X} = \mathbb{R}^d$.

\paragraph{Forward-Flow Discretization} We can now define a projected version of the forward-flow discretization of the WGF, as defined by \eqref{eq:forward-flow-1} - \eqref{eq:forward-flow-2}. Once more, let $\mu_0\in\mathcal{P}_2(\mathcal{X})$. Then, instead of \eqref{eq:forward-flow-1} - \eqref{eq:forward-flow-2}, we now iterate
\begin{align}
\mu_{t+\frac{1}{2}} &= \left(\mathrm{id} - \eta_t\zeta_t\right)_{\#}\mu_{t} , \quad \zeta_t = \nabla_{W_2}\mathcal{E}(\mu_t)\label{eq:projected-forward-flow-1} \\
    \mu_{t+1} &= \mathrm{Proj}_{\mathcal{P}_2(\mathcal{X})}\Big[\mathcal{N}(0,2\eta_t \mathbf{I}_d) \star \mu_{t+\frac{1}{2}}\Big]. \label{eq:projected-forward-flow-2} 
\end{align}
Due to the result in \citet[Proposition 3.1]{lanzetti2023stochastic}, we also have a simple Lagrangian description of this algorithm. Let $x_0\sim \mu_0$, with $\mu_0\in\mathcal{P}_2(\mathcal{X})$. Then, for $t\geq 0$, we update
\begin{align}
    x_{t+\frac{1}{2}} &= x_t - \eta_t\zeta_t(x_t), \quad \zeta_t = \nabla_{W_2} \mathcal{E}(\mu_t) \label{eq:projected-forward-flow-lagrangian-1}\\
    x_{t+1} &= \mathrm{Proj}_{\mathcal{X}}\Big[x_{t+\frac{1}{2}} + \sqrt{2\eta_{t}} z_t\Big] \label{eq:projected-forward-flow-lagrangian-2}
\end{align}

\paragraph{Forward Euler Discretization} We can provide a similar generalization of the Wasserstein subgradient descent algorithm. Let $\mu_0\in\mathcal{P}_2(\mathcal{X})$. Then, for $t\geq 0$, update
\begin{align}
    \mu_{t+1} &= \mathrm{Proj}_{\mathcal{P}_2(\mathcal{X})}\Big[\left(\mathrm{id} - \eta_t\xi_t\right)_{\#}\mu_t\Big], \quad \xi_t = \nabla_{W_2}\mathcal{F}(\mu_t). \label{eq:projected-wasserstein-subgrad-descent}
\end{align}
Alternatively, at the level of a single particle: let $x_0\sim \mu_0$, for some $\mu_0\in\mathcal{P}_2(\mathcal{X})$. Then, for $t\geq 0$, update
\begin{align}
    x_{t+1} &= \mathrm{Proj}_{\mathcal{X}}\Big[x_{t} - \eta_t \xi_t(x_t)\Big], \quad \xi_t = \nabla_{W_2}\mathcal{F}(\mu_t).
\end{align}

\subsubsection{Stochastic Gradients} 
\label{sec:stochastic-gradients}
In many cases of interest, we may only have access to (unbiased) stochastic estimates of the Wasserstein (sub)gradients of our objective functional \citep[e.g.,][]{welling2011bayesian,ma2015complete,bardenet2017markov}. In such cases, we can consider stochastic (e.g., mini-batch) versions of our previous algorithms.

\paragraph{Forward-Flow Discretization} Suppose that there exists a measurable space $(\mathsf{B},\mathcal{B})$, a probability measure $\nu$ on $(\mathsf{B},\mathcal{B})$ and, for each $\mu\in\mathcal{P}_2(\mathbb{R}^d)$, a measurable function $\hat{\zeta}(\mu):\mathbb{R}^d\times\mathsf{B}\rightarrow\mathbb{R}^d$ such that, for all $x\in\mathbb{R}^d$,
\begin{equation}
    \mathbb{E}_{b\sim \nu}\left[\hat{\zeta}(\mu)(x,b)\right]:= \int_{\mathsf{Z}} \hat{\zeta}(\mu)(x,b)\mathrm{d}\nu(b) \in \partial \mathcal{E} (\mu)(x). \label{eq:stochastic-gradient-forward-flow-def}
\end{equation}
Thus, $\hat{\zeta}(\mu)(\cdot,b):\mathbb{R}^d\rightarrow\mathbb{R}^d$ represents an unbiased estimate of the subgradient of the sum of the potential and interaction energies, viz 
$\partial\mathcal{E} (\mu):\mathbb{R}^d\rightarrow\mathbb{R}^d$. We can now define a stochastic version of the (projected) forward-flow discretization of the WGF, as defined by \eqref{eq:projected-forward-flow-1} - \eqref{eq:projected-forward-flow-2} or \eqref{eq:projected-forward-flow-lagrangian-1} - \eqref{eq:projected-forward-flow-lagrangian-2}. Let $(b_t)_{t\in\mathbb{N}}$ denote a sequence of i.i.d. random variables distributed according to $\nu$. Let $\mu_0\in\mathcal{P}_2(\mathbb{R}^d)$. Then, for $t\geq 0$, we now update
\begin{align}
\mu_{t+\frac{1}{2}} &= \big(\mathrm{id} - \eta_t\hat{\zeta_t}\big)_{\#}\mu_{t} , \quad \hat{\zeta}_t := \hat{\zeta_t}(\mu_t)(\cdot, b_t) \label{eq:stochastic-forward-flow-1} \\
    \mu_{t+1} &= \mathrm{Proj}_{\mathcal{P}_2(\mathcal{X})}\left[\mathcal{N}(0,2\eta_t \mathbf{I}_d) \star \mu_{t+\frac{1}{2}}\right]. \label{eq:stochastic-forward-flow-2} 
\end{align}
Naturally, we also have a Lagrangian description of these updates. In particular, let $x_0\sim \mu_0$. Then, for $t\geq 0$, we define
\begin{align}
    x_{t+\frac{1}{2}} &= x_t - \eta_t\hat{\zeta}_t(x_t), \quad \hat{\zeta}_t =  \hat{\zeta}(\mu_t)(\cdot,b_t) \label{eq:stochastic-forward-flow-lagrangian-1}\\
    x_{t+1} &= \mathrm{Proj}_{\mathcal{X}}\Big[x_{t+\frac{1}{2}} + \sqrt{2\eta_{t}} z_t\Big]. \label{eq:stochastic-forward-flow-lagrangian-2}
\end{align}

\paragraph{Forward Euler Discretization} Alternatively, suppose that there exists a measurable space $(\mathsf{B},\mathcal{B})$, a probability measure $\nu$ on $(\mathsf{B},\mathcal{B})$ and, for each $\mu\in\mathcal{P}_2(\mathbb{R}^d)$, a measurable function $\hat{\xi}(\mu):\mathbb{R}^d\times\mathsf{B}\rightarrow\mathbb{R}^d$ such that, for all $x\in\mathbb{R}^d$,
\begin{equation}
    \mathbb{E}_{b\sim \nu}\left[\hat{\xi}(\mu)(x,b)\right]:= \int_{\mathsf{Z}} \hat{\xi}(\mu)(x,b)\mathrm{d}\nu(b) \in \partial \mathcal{F}(\mu)(x). \label{eq:stochastic-gradient-forward-euler-def}
\end{equation}
Thus, $\hat{\xi}(\mu)(\cdot,b):\mathbb{R}^d\rightarrow\mathbb{R}^d$ is an unbiased estimate of the subgradient $\partial\mathcal{F}(\mu):\mathbb{R}^d\rightarrow\mathbb{R}^d$. 
Similar to above, we can then define a more general version of \eqref{eq:wasserstein-sub-grad-descent} and \eqref{eq:wasserstein-sub-grad-descent-lagrange}, which utilizes stochastic estimates of the subgradients $(\xi_t)_{t\geq 0}$. 
Similar to above, let $(b_t)_{t\in\mathbb{N}}$ denote a sequence of i.i.d. random variables distributed according to $\nu$. Let $\mu_0\in\mathcal{P}_2(\mathbb{R}^d)$. Then, for $t\geq 0$, we define 
\begin{equation}
    \mu_{t+1} = \mathrm{Proj}_{\mathcal{P}_2(\mathcal{X})}\Big[\big(\mathrm{id} - \eta_t \hat{\xi}_t\big)_{\#}\Big]\mu_t,\quad \hat{\xi}_t = \hat{\xi}(\mu_t)(\cdot,b_t). \label{eq:wasserstein-stoc-sub-grad-descent}
\end{equation}
Similar to before, at the level of a single particle, this corresponds to the following update equation. Let $x_0\sim \mu_0$. Then, for $t\geq 0$, set
\begin{equation}
    x_{t+1} = \mathrm{Proj}_{\mathcal{X}}\big[x_t - \eta_t \hat{\xi}_t(x_t)\big],\quad \hat{\xi}_t = \hat{\xi}(\mu_t)(\cdot,b_t). \label{eq:WGF-stoc}
\end{equation}

\begin{tcolorbox}[enhanced,
  colback=white,
  frame hidden,
  borderline north={0.5pt}{0pt}{black!70},
  borderline south={0.5pt}{0pt}{black!70},
  arc=2pt,             
  left=0pt,            
  right=0pt,           
  top=4pt,             
  bottom=4pt,           
  parbox=false,
  before skip=10pt, 
  after skip=15pt]
\begin{example1}[\textsc{Sampling from a Target Probability Distribution}]
Consider again our running example: $\mathcal{F}(\mu) = \mathrm{KL}(\mu\|\pi)$, for some target distribution $\pi\propto e^{-U}$. Suppose now that $\pi$ is a Bayesian posterior. In particular, let $x\in\mathbb{R}^d$ denote a parameter of interest. Let $\pi_0(x)$ denote a prior distribution for this parameter, and $\mathcal{L}(y|x)$ the likelihood of an observation $y$ given the parameter $x$. Given a set of observations $\{y_i\}_{i=1}^n$, the posterior distribution is given by Bayes' theorem as $\smash{\pi(x)\propto e^{-U(x)}}$, where $\smash{U(x) = -[\ln \pi_0(x) +\sum_{i=1}^N \ln \mathcal{L}(y_i|x)}]$. We can thus write the objective functional as
\begin{align}
    \mathcal{F}(\mu) = - \textstyle \int \left[ \ln \pi_0(x) + \sum_{i=1}^n  \ln \mathcal{L}(y_i|x) \right]\mathrm{d}\mu(x) + \int \log \mu(x) \mathrm{d}\mu(x).
\end{align}
Meanwhile, from \eqref{eq:kl-gradient}, the Wasserstein gradient of our objective functional can now be expressed as
\begin{equation}
    \nabla_{W_2}\mathcal{F}(\mu)(x) 
    = \textstyle \underbrace{-\big[ \textstyle \nabla\ln \pi_0(x) + \sum_{i=1}^N \nabla \ln \mathcal{L}(y_i|x)\big]}_{\nabla_{W_2}\mathcal{E}(\mu)(x)} + \underbrace{\textstyle \vphantom{\sum_{i=1}^N}\nabla \log \mu(x)}_{\nabla_{W_2}\mathcal{H}(\mu)(x)}.
\end{equation}
In the ``big-data'' setting, where $N$ is very large, it may be prohibitively costly to evaluate this gradient over the full data set. In this case, we can form an unbiased estimate of the true gradient by subsampling a mini-batch $\Omega \subseteq \{1,\dots,N\}$ of the data uniformly at random, and defining  
\begin{equation}
    \widehat{\xi(\mu)}(x) = \textstyle \underbrace{ -\big[\textstyle \nabla \ln \pi_0(x) + \frac{N}{|\Omega|}\sum_{i\in\Omega}\nabla \ln \mathcal{L}(y_i|x)\big]}_{\hat{\zeta}(\mu)(x)} + \underbrace{\textstyle \vphantom{\frac{N}{|\Omega|}\sum_{i\in\Omega}}\nabla \log \mu(x)}_{\nabla_{W_2}\mathcal{H}(\mu)(x)}.
\end{equation}
Let $(\Omega_t)_{t\in\mathbb{N}}$ denote a sequence of independent mini-batches, each sampled uniformly at random. The stochastic version of the forward-flow discretization of the WGF, cf. \eqref{eq:stochastic-forward-flow-lagrangian-1} - \eqref{eq:stochastic-forward-flow-lagrangian-2}, is given by
\begin{equation}
    x_{t+1} = \textstyle \mathrm{Proj}_{\mathcal{X}}\left[x_{t} + \eta_t \left(\nabla \ln \pi_0(x_t) + \frac{N}{|\Omega_t|} \sum_{i\in\Omega_t} \nabla \ln\mathcal{L}(y_i|x_t)\right) + \sqrt{2\eta_t} z_t\right],
\end{equation}
where $(z_t)_{t\geq 0}$ are a sequence of i.i.d. random variables distributed as $\mathcal{N}(0,\mathbf{I}_d)$. This is the (projected version of) the popular stochastic (sub)gradient Langevin dynamics (SGLD) algorithm \citep[e.g.,][]{welling2011bayesian,durmus2019analysis}. Meanwhile, the Wasserstein stochastic subgradient descent algorithm, cf. \eqref{eq:WGF-stoc}, reads
\begin{equation}
    x_{t+1} = \textstyle \mathrm{Proj}_{\mathcal{X}}\left[x_{t} + \eta_t \left(\nabla \ln \pi_0(x_t) + \frac{N}{|\Omega_t|} \sum_{i\in\Omega_t} \nabla \ln\mathcal{L}(y_i|x_t) - \nabla \log \mu_t(x_t)\right)\right].
\end{equation}
This is a stochastic gradient version of the (projected version of) the \textit{deterministic Langevin Monte Carlo} algorithm, defined in \eqref{eq:deterministic-lmc}.
\end{example1}
\end{tcolorbox}

\section{Main Results}
\label{sec:main-results}
In this section, we present our main results. To be specific, we analyze the convergence of the forward-flow and the forward Euler discretizations of the WGF, as well as their stochastic counterparts. We first obtain non-asymptotic convergence rates in the case of a constant step size, in both nonsmooth and smooth settings. We then introduce an adaptive, tuning-free step size schedule inspired by the approach in \citet{ivgi2023dog}. Finally, we establish that the resulting algorithms achieve the optimal rates of convergence up to logarithmic factors. 

\subsection{Assumptions}
\label{sec:assumptions}
Throughout this section, we will impose the following assumptions. 

\begin{assumption}
    \label{assumption:lsc}
    The functional $\mathcal{F}:\mathcal{P}_2(\mathcal{X})\rightarrow(-\infty,\infty]$ is proper and lower semicontinuous. That is, $\mathcal{D}(\mathcal{F})=\{\mu\in\mathcal{P}_2(\mathcal{X}):\mathcal{F}(\mu)<\infty\}\neq \emptyset$ and $\mathcal{F}(\bar{\mu})\leq \liminf_{\mu\rightarrow\bar{\mu}}\mathcal{F}(\mu)$ for all $\bar{\mu}\in\mathcal{P}_2(\mathcal{X})$.
\end{assumption}

This is a standard technical condition in the literature on WGFs \citep[e.g.,][Section 11]{ambrosio2008gradient} and, more generally, optimization on Wasserstein spaces \citep[e.g.,][]{lanzetti2022first}. This is a mild assumption, being satisfied by most of the usual functionals that we will encounter in practice. For example, the squared Wasserstein distance is proper and lower semicontinuous \citep[][Corollary 2.19]{lanzetti2022first}; 
the potential energy $\mathcal{V}$ is proper and lower semicontinuous if the potential functional $V$ is \citep[][Example 9.3.1]{ambrosio2008gradient}; the interaction energy $\mathcal{W}$ is proper and lower semicontinuous if the interaction kernel $W$ is lower semicontinuous and satisfies $W(x,x)<\infty$ for some $x\in\mathcal{X}$ \citep[][Example 9.3.4]{ambrosio2008gradient}; and the internal energy $\mathcal{H}$ is proper and lower semicontinuous if the function $H$ is a proper, lower semicontinuous, convex function, which has superlinear growth at infinity which satisfies $H(0)=0$ \citep[][Example 9.3.6]{ambrosio2008gradient}.

\begin{assumption}
\label{assumption:geo-convex}
    The functional $\mathcal{F}:\mathcal{P}_2(\mathcal{X})\rightarrow(-\infty,\infty]$ is geodesically convex. That is, for any $\mu,\nu\in\mathcal{P}_2(\mathcal{X})$, there exists a constant speed geodesic $(\lambda_{\eta}^{\mu\rightarrow\nu})_{\eta\in[0,1]}$ between $\mu$ and $\nu$ such that, for all $\eta\in[0,1]$, 
\begin{equation}
\mathcal{F}(\lambda_{\eta}^{\mu\rightarrow\nu}) \leq (1-\eta)\mathcal{F}(\mu) + \eta\mathcal{F}(\nu).
\end{equation}
\end{assumption}

This is a standard assumption in both the classical (i.e., Euclidean) optimization literature \citep[e.g.,][]{bubeck2015convex}; and the literature on WGFs \citep[e.g.,][]{ambrosio2008gradient}. In the context of sampling as an optimization problem, where $\mathcal{F}(\mu) = \mathrm{KL}(\mu\|\pi)$, it is satisfied when the target density $\pi(x)\propto e^{-U(x)}$ is log-concave (see Section \ref{sec:objective-functions}). This is a rather typical assumption in the sampling literature \citep[e.g.,][]{durmus2019analysis,chewi2024log}. This being said, it is worth noting that the convergence of WGF-related sampling algorithms (e.g., ULA) can be established under somewhat weaker assumptions such as the log-Sobolev inequality or Poincar\'{e} inequality \citep[e.g.,][]{bakry2014analysis,chewi2024analysis}. We leave to future work the extension of our results to this setting.

\begin{tcolorbox}[enhanced,
  colback=white,
  frame hidden,
  borderline north={0.5pt}{0pt}{black!70},
  borderline south={0.5pt}{0pt}{black!70},
  arc=2pt,             
  left=0pt,            
  right=0pt,           
  top=4pt,             
  bottom=4pt,           
  parbox=false,
  before skip=10pt, 
  after skip=15pt]
\begin{example1}[\textsc{Sampling from a Target Probability Distribution}]
Consider again our running example: $\mathcal{F}(\mu) = \mathrm{KL}(\mu\|\pi)$, for some target distribution $\pi\propto e^{-U}$. Suppose that $U$ is lower semicontinuous, convex, and with negative part satisfying a quadratic growth condition. In this case, both Assumption \ref{assumption:lsc} and Assumption \ref{assumption:geo-convex} are satisfied (see Section \ref{sec:objective-functions}). That is, the KL divergence is proper and lower semicontinuous \citep[e.g.,][Example 2.29]{lanzetti2022first}, and geodesically convex \citep[e.g.,][Theorem 9.4.11]{ambrosio2008gradient}.
\end{example1}
\end{tcolorbox}

\subsubsection{Nonsmooth Functionals}
In the nonsmooth setting, we will impose different assumptions depending on whether we are considering the forward-flow discretization of the WGF or the forward Euler discretization of the WGF. To be specific, we will require one of the following.

\refstepcounter{assumption}

\begin{subassumption}[Forward-Flow]
\label{assumption:bounded-grads-forward-flow}
There exists a continuous function $g:\mathcal{P}_2(\mathcal{X})\rightarrow\mathbb{R}_{+}$ such that $\|\hat{\zeta}(\mu)(\cdot,b)\|_{L^2(\mu)}\leq g(\mu)$ for almost all $b\in\mathsf{B}$, where $\hat{\zeta}(\mu):\mathbb{R}^d\times \mathsf{B}\rightarrow\mathbb{R}^d$ is the stochastic gradient oracle defined in \eqref{eq:stochastic-gradient-forward-flow-def}.
\end{subassumption}

\begin{subassumption}[Forward Euler]
\label{assumption:bounded-grads}
There exists a positive constant $0<G<\infty$ such that $\|\hat{\xi}(\mu_t)(\cdot,b_t)\|_{L^2(\mu_t)}<G$ for all $t\in[T]$, where $\hat{\xi}(\mu):\mathbb{R}^d\times \mathsf{B}\rightarrow\mathbb{R}^d$ is the stochastic gradient oracle defined in \eqref{eq:stochastic-gradient-forward-euler-def}.
\end{subassumption}

Assumption \ref{assumption:bounded-grads-forward-flow} stipulates that the stochastic (sub)gradients of the potential energy $\mathcal{V}:\mathcal{P}_2(\mathcal{X})\rightarrow(-\infty,\infty]$ 
and the interaction energy $\smash{\mathcal{W}:\mathcal{P}_2(\mathcal{X})\rightarrow(-\infty,\infty]}$. 
are pointwise bounded. 
In the case that the objective is the KL divergence, it reduces to the assumption that the stochastic (sub)gradient of the potential energy is pointwise bounded.
A stronger version of this condition (uniform boundedness) has previously appeared in the analysis of stochastic (sub)gradient Langevin dynamics (e.g., \citealt{durmus2019analysis}, Assumption A3; \citealt{habring2024subgradient}, Assumption A).

Assumption \ref{assumption:bounded-grads} is significantly stronger than Assumption \ref{assumption:bounded-grads-forward-flow}, requiring that the stochastic (sub)gradients of the objective functional $\mathcal{F}:\mathcal{P}_2(\mathcal{X})\rightarrow(-\infty,\infty]$ are uniformly bounded along the path of the algorithm iterates. 
In particular, it requires boundedness of the stochastic (sub)gradient of the internal energy
(e.g., the negative entropy), which can be problematic even in relatively simple cases \citep[e.g.,][]{xu2024forward}. Nonetheless, this is a somewhat standard assumption in existing analyses of the (stochastic) Wasserstein gradient descent algorithm, and variants thereof (e.g., \citealp{korba2020nonasymptotic}, Assumption A3; \citealp{guo2022online}, Remark 9; \citealp{lanzetti2023stochastic}, Assumption 3.2). 

More generally, we note that the Euclidean analogues of these assumptions are standard in the literature on (stochastic) subgradient descent \citep[e.g.,][]{zinkevich2003online,moulines2011non,shamir2013stochastic,garrigos2023handbook,zamani2023exact}, including in the analysis of parameter-free algorithms \citep[e.g.,][]{orabona2017training,cutkosky2018black,jun2019parameter,ivgi2023dog}.

\begin{tcolorbox}[enhanced,
  colback=white,
  frame hidden,
  borderline north={0.5pt}{0pt}{black!70},
  borderline south={0.5pt}{0pt}{black!70},
  arc=2pt,             
  left=0pt,            
  right=0pt,           
  top=4pt,             
  bottom=4pt,           
  parbox=false,
  before skip=10pt, 
  after skip=15pt]
\begin{example1}[\textsc{Sampling from a Target Probability Distribution}]
Suppose now that the potential function decomposes as $U(x) = \frac{1}{n}\sum_{i=1}^n U_i(x)$. Using the results in Section \ref{sec:objective-functions}, the Wasserstein gradient is
\begin{equation}
    \nabla_{W_2}\mathcal{E}(\mu)(x) = \frac{1}{n}\sum_{i=1}^n \nabla U_i(x), \quad \nabla_{W_2}\mathcal{F}(\mu)(x) = \frac{1}{n}\sum_{i=1}^n \nabla U_i(x) + \nabla \log \mu(x). 
\end{equation}
We can then form unbiased stochastic estimates of these two quantities by randomly sampling an index $b\in\mathsf{B}:=\{1,\dots,N\}$, and setting 
\begin{equation}
    \hat{\zeta}(\mu)(x,b) = \nabla U_{b}(x), \quad \hat{\xi}(\mu)(x,b) = \nabla U_b(x) + \nabla \log \mu(x).
\end{equation}

Suppose that $U_{b}$ is $G_b$-Lipschitz for each $b\in\{1,\dots,n\}$, we have $\|\nabla U_{b}(x)\|\leq G_b$ for all $x\in\mathbb{R}^d$ \citep[e.g.,][Proposition 16.20]{bauschke2017convex}. It then follows straightforwardly that $\|\nabla U_b\|_{L^2(\mu)}\leq G_{b}$ for all $\mu\in\mathcal{P}_2(\mathcal{X})$.

For Assumption \ref{assumption:bounded-grads-forward-flow}, we just require that $\|\hat{\zeta}(\mu)(\cdot,b)\|_{L^2(\mu)}\leq g(\mu)$ for some continuous function $g:\mathcal{P}_2(\mathcal{X})\rightarrow\mathbb{R}_{+}$, for all $b\in\mathsf{B}$. This assumption is clearly satisfied, defining for example $g(\mu) = G$ for all $\mu\in\mathcal{P}_2(\mathcal{X})$, where $G:=\max_{b\in\mathsf{B}}G_b$.

For Assumption \ref{assumption:bounded-grads}, we require a bound on $\smash{\|\hat{\xi}(\mu_t)(\cdot,b_t)\|_{L^2(\mu_t)}}$. Using Minkowski's inequality, we have $\smash{\|\hat{\xi}(\mu_t)\|_{L^2(\mu_t)} \leq \|\nabla U_{b}\|_{L^2(\mu_t)} + \|\nabla \log \mu_t\|_{L^2(\mu_t)}}$, so it is sufficient to bound each term separately. Arguing as above, we can bound the first term by $G:=\max_{b\in\mathsf{B}}G_b$. Meanwhile, bounding the second term is equivalent to bounding the Fisher information, since $\smash{I(\mu_t):= \|\nabla \log \mu_t\|^2_{L^2(\mu_t)}}$. This is a somewhat strong assumption, which can be hard to verify in practice, and may fail to hold even for relatively simple examples \citep[e.g.,][]{xu2024forward}. Nonetheless, similar assumptions have appeared elsewhere in the sampling literature \citep[e.g.,][Assumption A3]{korba2020nonasymptotic}.
\end{example1}
\end{tcolorbox}

\subsubsection{Smooth Functionals}
In the smooth case, we will also require different assumptions for the forward-flow discretization and the forward Euler discretization. In particular, we impose the following.

\refstepcounter{assumption}

\begin{subassumption}[Forward-Flow]
    \label{assumption:smooth-forward-flow}
    The functional $\mathcal{E}:\mathcal{P}_2(\mathcal{X})\rightarrow(-\infty,\infty]$ is $L$-geodesically-smooth. That is, there exists a constant $0<L<\infty$ such that, for any $v\in\mathcal{T}_{\mu}\mathcal{P}_2(\mathcal{X})$, it holds that
    \begin{equation}
        \mathrm{Hess}_{W_2}\mathcal{E}(\mu)[v,v] \leq L \|v\|_{L^2(\mu)}^2 .
    \end{equation}
\end{subassumption}

\begin{subassumption}[Forward Euler]
    \label{assumption:smooth}
    The functional $\mathcal{F}:\mathcal{P}_2(\mathcal{X})\rightarrow(-\infty,\infty]$ is $L$-geodesically-smooth. That is, there exists a constant $0<L<\infty$ such that, for any $v\in\mathcal{T}_{\mu}\mathcal{P}_2(\mathcal{X})$, it holds that
    \begin{equation}
        \mathrm{Hess}_{W_2}\mathcal{F}(\mu)[v,v] \leq L \|v\|_{L^2(\mu)}^2 .
    \end{equation}
\end{subassumption}

Smoothness assumptions are common in the optimization literature \citep[e.g.,][]{boyd2004convex,nesterov2018lectures,garrigos2023handbook}, including in the analysis of parameter-free algorithms \citep[e.g.,][]{ivgi2023dog,khaled2023dowg}. In the Euclidean setting,  it is well known that smoothness can improve the non-asymptotic convergence rate of (stochastic) gradient descent from $\smash{\mathcal{O}(\frac{1}{\sqrt{T}})}$ \citep[e.g.,][Theorem 3.2.2]{nesterov2018lectures} to $ \smash{\mathcal{O}(\frac{1}{T})}$ \citep[e.g.,][Theorem 2.1.14]{nesterov2018lectures}. The same is also true for (stochastic) gradient descent on Riemannian manifolds \citep[e.g.,][]{zhang2016first}.

In many cases, smoothness assumptions are stated in terms of a Lipschitz condition on the gradient of the objective function, rather than a bound on the (operator) norm of its Hessian. In particular, in the Euclidean case, the function $f:\mathbb{R}^d\rightarrow(-\infty,\infty]$ is said to be $L$-smooth if $\|\nabla f(x) - \nabla f(y)\|\leq L\|x-y\|$ for all $x,y\in\mathbb{R}^d$. This is strictly weaker than the assumption that $\|\nabla^2 f(x)\|_{\mathrm{op}}\leq L$ for all $x\in\mathbb{R}^d$, as it also defines a notion of smoothness for functions that are not twice differentiable. The two conditions are equivalent, however, when the function $f$ is twice differentiable \citep[e.g.,][Lemma 2.26]{garrigos2023handbook}.

In our setting, one can similarly characterise the smoothness of the objective functional in terms of its Wasserstein gradients. In particular, the functional $\mathcal{F}:\mathcal{P}_2(\mathbb{R}^d)\rightarrow(-\infty,\infty]$ is $L$-smooth iff, for any $\mu\in\mathcal{P}_{2,\mathrm{ac}}(\mathbb{R}^d)$ and $\nu\in\mathcal{P}_2(\mathbb{R}^d)$, it holds that
\begin{equation}
    \|\nabla_{W_2}\mathcal{F}(\nu) \circ \boldsymbol{t}_{\mu}^{\nu} - \nabla_{W_2}\mathcal{F}(\mu)\|_{L^2(\mu)} \leq L\|t_{\mu}^{\nu} - \mathrm{id}\|_{L^2(\mu)},
    \label{eq:l-smoothness-grad}     
\end{equation}
where $\boldsymbol{t}_{\mu}^{\nu}$ is the optimal transport map between $\mu$ and $\nu$ \citep[e.g.,][Corollary 10.47]{boumal2023introduction}. Note that, unlike the Euclidean case, we do not directly compare the gradients $\nabla_{W_2}\mathcal{F}(\nu)$ and $\nabla_{W_2}\mathcal{F}(\mu)$, since they live in different (tangent) spaces $\mathcal{T}_{\nu}\mathcal{P}_2(\mathbb{R}^d)\subseteq L^2(\nu)$ and $\mathcal{T}_{\mu}\mathcal{P}_2(\mathbb{R}^d)\subseteq L^2(\mu)$.

Regarding optimization over the space of probability measures, smoothness assumptions have appeared in various works (e.g., \citealp{durmus2019analysis} for ULA; \citealp{korba2020nonasymptotic} for SVGD; \citealp{salim2020wasserstein} for Wasserstein proximal gradient). Typically, however, such assumptions are stated in terms of, e.g., the potential $V:\mathcal{X}\rightarrow(-\infty,\infty]$ or the interaction kernel $W:\mathcal{X}\rightarrow(-\infty,\infty]$, rather than on the potential \emph{energy} $\mathcal{V}:\mathcal{P}_2(\mathcal{X})\rightarrow(-\infty,\infty]$ or the interaction \emph{energy} $\mathcal{W}:\mathcal{P}_2(\mathcal{X})\rightarrow(-\infty,\infty]$. We find the latter more convenient for our analysis. It is, however, straightforward to obtain sufficient conditions which imply smoothness in the sense of our assumptions. For example, if the potential $V:\mathbb{R}^d\rightarrow(-\infty,\infty]$ is $L$-smooth (in the Euclidean sense), it is straightforward to show that the potential energy $\mathcal{V}:\mathcal{P}_2(\mathcal{X})\rightarrow(-\infty,\infty]$ is $L$-smooth (in the Wasserstein sense).

We conclude this commentary with the remark that Assumption \ref{assumption:smooth-forward-flow}, which is used in the analysis of the {forward-flow} discretization of the WGF, is significantly weaker than Assumption \ref{assumption:smooth}, which is used in the analysis of the {forward Euler} discretization. Indeed, the latter additionally requires smoothness of the internal energy $\mathcal{H}:\mathcal{P}_2(\mathcal{X})\rightarrow(-\infty,\infty]$, which is known \emph{not} to hold in many important cases, including the (negative) entropy \citep[e.g.,][Section 6.2.2]{chewi2024statistical}.

\begin{tcolorbox}[enhanced,
  colback=white,
  frame hidden,
  borderline north={0.5pt}{0pt}{black!70},
  borderline south={0.5pt}{0pt}{black!70},
  arc=2pt,             
  left=0pt,            
  right=0pt,           
  top=4pt,             
  bottom=4pt,           
  parbox=false,
  before skip=10pt, 
  after skip=15pt]
\begin{example1}[\textsc{Sampling from a Target Probability Distribution}]
Suppose now that the potential function $U:\mathbb{R}^d\rightarrow(-\infty,\infty]$ is $L$-smooth. This is rather a standard assumption used in the analysis of ULA \citep[e.g.,][Assumption A2]{durmus2019analysis}. 

In this case, it is straightforward to show that Assumption \ref{assumption:smooth-forward-flow} is satisfied. Indeed, working from the RHS of the inequality in \eqref{eq:l-smoothness-grad}, we have  
\begin{align}
    \|\nabla_{W_2} \mathcal{E}(\nu) \circ \boldsymbol{t}_{\mu}^{\nu} - \nabla_{W_2}\mathcal{E}(\mu) \|_{L^2(\mu)} & =\|\nabla_{W_2} \mathcal{V}(\nu) \circ \boldsymbol{t}_{\mu}^{\nu} - \nabla_{W_2}\mathcal{V}(\mu) \|_{L^2(\mu)} \\
    &= \big[\textstyle \int \| \nabla U\left(t_{\mu}^{\nu}(x)\right) - \nabla U(x)\|^2 \mathrm{d}\mu(x)\big]^{\frac{1}{2}} 
    \\
    &\leq \left[\textstyle \int L^2 \|t_{\mu}^{\nu}(x) - x\|^2 \mathrm{d}\mu(x) \right]^{\frac{1}{2}} 
    \\[2mm]
    &= L \|t_{\mu}^{\nu} - \mathrm{id}\|_{L^2(\mu)},
\end{align}
where the first equality follows from the fact that $\mathcal{W}(\mu)=0\implies \mathcal{E}(\mu) = \mathcal{V}(\mu) + \mathcal{W}(\mu) = \mathcal{V}(\mu)$; the second equality from the definition of the Wasserstein gradient; and the third inequality from the $L$-smoothness of the potential.

On the other hand, Assumption \ref{assumption:smooth} is \emph{not} satisfied. Indeed, working from the definition, it is possible to show that the Wasserstein Hessian of the (negative) entropy is given by \citep{otto2005eulerian,villani2008optimal}
\begin{equation}
    \mathrm{Hess}_{W_2}\mathcal{\mathcal{H}
    }(\mu)[v,v] = \int \|\nabla v(x) - x\|^2_{\mathrm{HS}}\mathrm{d}\mu(x).
\end{equation}
It follows that there is no constant $C>0$ such that $\mathrm{Hess}_{W_2}\mathcal{H}(\mu)[v,v]\leq C\int \|v(x)\|^2 \mathrm{d}\mu(x)$ for all $v\in\mathcal{T}_{\mu}\mathcal{P}_2(\mathcal{X})$ \citep[e.g.,][]{diao2023forward,chewi2024statistical}.

\end{example1}
\end{tcolorbox}

\subsection{Forward-Flow Discretization, Nonsmooth Setting}
\label{sec:forward-flow-nonsmooth}
We are now ready to analyze the convergence of \eqref{eq:forward-flow-1} - \eqref{eq:forward-flow-2}, the forward-flow discretization of the WGF in \eqref{eq:wasserstein-grad-flow}. We begin in the nonsmooth setting. 

\subsubsection{Deterministic Case: Constant Step Size}
We start with the case where the step size in \eqref{eq:forward-flow-1} - \eqref{eq:forward-flow-2} is constant, that is, $\eta_t=\eta$ for all $t\geq 0$. We will analyze the convergence of the average iterate defined according to the following recursion:
    \begin{align}
        \bar{\mu}_1 &= \mu_1, \label{eq:bar-mu-t-forward-flow-1-first-def} \\
        \bar{\mu}_{t+1} &= \left[\left(1-\frac{1}{t+1}\right) \mathrm{id} + \frac{1}{t+1} \boldsymbol{t}_{\bar{\mu}_{t}}^{\mu_{t+1}}\right]_{\#}\bar{\mu}_{t} \quad \text{for $t\geq 1$.} \label{eq:bar-mu-t-forward-flow-2-first-def}
    \end{align}

\begin{lemma}
\label{lemma:average}
    Suppose that Assumption \ref{assumption:lsc} and \ref{assumption:geo-convex} hold. Let $\bar{\mu}_T$ be the average iterate defined according to \eqref{eq:bar-mu-t-forward-flow-1-first-def} - \eqref{eq:bar-mu-t-forward-flow-2-first-def}. Then, for all $T\geq 1$, it holds that 
    \begin{equation}
    \mathcal{F}(\bar{\mu}_{T}) - \mathcal{F}(\pi) \leq \frac{1}{T}\left[\sum_{t=1}^{T} \mathcal{F}(\mu_t) - \sum_{t=1}^{T}\mathcal{F}(\pi)\right]. \label{eq:jensens-inequality-average-iterate}
    \end{equation}
\end{lemma}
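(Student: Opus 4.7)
The plan is a one-line reduction: the recursion \eqref{eq:bar-mu-t-forward-flow-2-first-def} is precisely McCann's interpolant from $\bar{\mu}_t$ to $\mu_{t+1}$ evaluated at parameter $\eta = \frac{1}{t+1}$. That is, in the notation of Section \ref{sec:geo-convex},
\begin{equation}
\bar{\mu}_{t+1} = \lambda_{1/(t+1)}^{\bar{\mu}_t \to \mu_{t+1}} = \left[(1-\tfrac{1}{t+1})\,\boldsymbol{\mathrm{id}} + \tfrac{1}{t+1}\,\boldsymbol{t}_{\bar{\mu}_t}^{\mu_{t+1}}\right]_{\#}\bar{\mu}_t,
\end{equation}
so $\bar{\mu}_{t+1}$ lies on a constant speed geodesic between $\bar{\mu}_t$ and $\mu_{t+1}$ at convex combination weight $\frac{1}{t+1}$.

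Given this identification, the proof reduces to a direct application of geodesic convexity (Assumption \ref{assumption:geo-convex}) followed by induction on $T$. First, I would apply Assumption \ref{assumption:geo-convex} along the geodesic between $\bar{\mu}_t$ and $\mu_{t+1}$ to obtain
\begin{equation}
\mathcal{F}(\bar{\mu}_{t+1}) \;\leq\; \left(1-\tfrac{1}{t+1}\right)\mathcal{F}(\bar{\mu}_t) + \tfrac{1}{t+1}\,\mathcal{F}(\mu_{t+1}).
\end{equation}
Multiplying through by $(t+1)$ yields the telescoping-friendly form
\begin{equation}
(t+1)\,\mathcal{F}(\bar{\mu}_{t+1}) \;\leq\; t\,\mathcal{F}(\bar{\mu}_t) + \mathcal{F}(\mu_{t+1}).
\end{equation}

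Next, I would run induction on $T$. The base case $T=1$ holds with equality since $\bar{\mu}_1 = \mu_1$ by \eqref{eq:bar-mu-t-forward-flow-1-first-def}. For the inductive step, using the hypothesis $t\,\mathcal{F}(\bar{\mu}_t) \leq \sum_{s=1}^t \mathcal{F}(\mu_s)$ together with the one-step inequality above gives $(t+1)\,\mathcal{F}(\bar{\mu}_{t+1}) \leq \sum_{s=1}^{t+1}\mathcal{F}(\mu_s)$. Dividing by $T$ at $t+1=T$ and subtracting $\mathcal{F}(\pi) = \frac{1}{T}\sum_{t=1}^T \mathcal{F}(\pi)$ from both sides delivers the stated inequality \eqref{eq:jensens-inequality-average-iterate}.

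There is no real obstacle here, but there is one subtle point worth checking, namely that the particular curve $\eta \mapsto ((1-\eta)\,\boldsymbol{\mathrm{id}} + \eta\,\boldsymbol{t}_{\bar{\mu}_t}^{\mu_{t+1}})_{\#}\bar{\mu}_t$ is indeed \emph{a} constant speed geodesic (so that Assumption \ref{assumption:geo-convex} applies to it). This is exactly McCann's interpolant as recalled in Section \ref{sec:geo-convex}, and it is well defined because the domain $\mathcal{D}(\mathcal{F})\subseteq \mathcal{P}_{2,\mathrm{ac}}(\mathbb{R}^d)$ guarantees existence and uniqueness of the optimal transport map $\boldsymbol{t}_{\bar{\mu}_t}^{\mu_{t+1}}$; one should only note that an inductive absolute-continuity check on $\bar{\mu}_t$ (pushforward by an invertible map, since $(1-\eta)\mathrm{id} + \eta\boldsymbol{t}$ is a gradient of a convex function for $\eta\in[0,1]$ by Brenier's theorem) ensures the construction stays inside $\mathcal{D}(\mathcal{F})$ throughout.
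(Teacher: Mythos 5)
Your proof is correct and takes essentially the same approach as the paper: identify $\bar{\mu}_{t+1}$ as a point on the constant-speed geodesic (McCann interpolant) from $\bar{\mu}_t$ to $\mu_{t+1}$ at parameter $\frac{1}{t+1}$, apply geodesic convexity to get the one-step inequality, and induct on $T$. The extra absolute-continuity check you flag at the end is a reasonable point of care that the paper's proof leaves implicit, but otherwise the argument matches.
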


\begin{proof}
    See Appendix \ref{app:additional-proofs-forward-flow-non-smooth-deterministic-constant}. 
\end{proof}

\begin{lemma}
    \label{lemma:evi-forward-flow-transport}
    Suppose that Assumption \ref{assumption:lsc} and \ref{assumption:geo-convex} hold. Let $(\mu_t)_{t\geq 0}$ denote the sequence of measures defined by \eqref{eq:forward-flow-1} - \eqref{eq:forward-flow-2}. Suppose that $\eta_t = \eta>0$ for all $t\geq 0$. Then, for any $t\geq 1$, and for any $\pi\in\mathcal{P}_2(\mathbb{R}^d)$, we have 
    \begin{equation}
        \mathcal{E}(\mu_t) - \mathcal{E}(\pi) \leq \frac{W_2^2(\mu_{t},\pi) - W_2^2(\mu_{t+\frac{1}{2}}, \pi)}{2\eta} + \frac{\eta}{2} \int_{\mathbb{R}^d} \|\zeta_t(x)\|^2\,\mathrm{d}\mu_t(x).
        \label{eq:evi-forward-flow-transport}
    \end{equation}
\end{lemma}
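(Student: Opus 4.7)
The plan is to combine three standard ingredients: (i) the geodesically convex subgradient inequality applied to $\mathcal{E}$ at $\mu_t$, (ii) the Euclidean three-point identity expanded against the pushforward structure of the first half-step, and (iii) the elementary observation that the pushforward of an optimal transport map under $\mathrm{id}-\eta\zeta_t$ yields a coupling (in general, suboptimal) between $\mu_{t+\frac{1}{2}}$ and $\pi$.

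First, I would note that for $t\geq 1$ the measure $\mu_t$ is absolutely continuous with respect to the Lebesgue measure (thanks to the Gaussian convolution step in \eqref{eq:forward-flow-2}), so the optimal transport map $\boldsymbol{t}_{\mu_t}^{\pi}$ from $\mu_t$ to $\pi$ is well-defined. Since $\mathcal{E}$ is proper, lower semicontinuous, and geodesically convex by Assumptions \ref{assumption:lsc} and \ref{assumption:geo-convex}, and $\zeta_t = \nabla_{W_2}\mathcal{E}(\mu_t)$ is the minimal selection of $\partial\mathcal{E}(\mu_t)$, the subgradient inequality for $0$-geodesically convex functionals yields
\begin{equation}
\mathcal{E}(\mu_t) - \mathcal{E}(\pi) \;\leq\; \int_{\mathbb{R}^d} \langle \zeta_t(x),\, x - \boldsymbol{t}_{\mu_t}^{\pi}(x)\rangle\,\mathrm{d}\mu_t(x).
\end{equation}

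Next, I would apply the pointwise identity
\begin{equation}
\langle \zeta_t(x),\, x - \boldsymbol{t}_{\mu_t}^{\pi}(x)\rangle = \frac{1}{2\eta}\|x - \boldsymbol{t}_{\mu_t}^{\pi}(x)\|^2 + \frac{\eta}{2}\|\zeta_t(x)\|^2 - \frac{1}{2\eta}\|x - \eta\zeta_t(x) - \boldsymbol{t}_{\mu_t}^{\pi}(x)\|^2,
\end{equation}
which is simply the expansion $2\langle a,b\rangle = \|a\|^2 + \|b\|^2 - \|a-b\|^2$ with $a = x-\boldsymbol{t}_{\mu_t}^{\pi}(x)$ and $b = \eta\zeta_t(x)$, divided by $2\eta$. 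Integrating against $\mathrm{d}\mu_t(x)$, the first term on the right becomes $W_2^2(\mu_t,\pi)/(2\eta)$ by the definition of the optimal transport map; the second term is precisely $(\eta/2)\int\|\zeta_t(x)\|^2\mathrm{d}\mu_t(x)$; and the third term can be written as
\begin{equation}
\frac{1}{2\eta}\int_{\mathbb{R}^d}\|(\mathrm{id}-\eta\zeta_t)(x) - \boldsymbol{t}_{\mu_t}^{\pi}(x)\|^2\,\mathrm{d}\mu_t(x).
\end{equation}

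Finally, I would observe that $\bigl((\mathrm{id}-\eta\zeta_t)\times \boldsymbol{t}_{\mu_t}^{\pi}\bigr)_{\#}\mu_t$ is a (generally non-optimal) coupling between $\mu_{t+\frac{1}{2}} = (\mathrm{id}-\eta\zeta_t)_{\#}\mu_t$ and $\pi = (\boldsymbol{t}_{\mu_t}^{\pi})_{\#}\mu_t$. Hence by the definition of $W_2$,
\begin{equation}
W_2^2(\mu_{t+\frac{1}{2}},\pi) \;\leq\; \int_{\mathbb{R}^d}\|(\mathrm{id}-\eta\zeta_t)(x) - \boldsymbol{t}_{\mu_t}^{\pi}(x)\|^2\,\mathrm{d}\mu_t(x),
\end{equation}
so that subtracting the corresponding term contributes $-W_2^2(\mu_{t+\frac{1}{2}},\pi)/(2\eta)$. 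Combining the three pieces yields the claim. The only delicate point is the existence and use of $\boldsymbol{t}_{\mu_t}^{\pi}$, which is guaranteed by the absolute continuity of $\mu_t$ for $t\geq 1$; I would make this explicit at the start of the proof. No smoothness assumption on $\mathcal{E}$ is needed, since the argument only uses the subgradient inequality and the Euclidean three-point identity.
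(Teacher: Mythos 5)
Your proof is correct and takes essentially the same route as the paper: both rest on (a) the coupling bound $W_2^2(\mu_{t+\frac{1}{2}},\pi)\leq\int\|(\mathrm{id}-\eta\zeta_t)(x)-\boldsymbol{t}_{\mu_t}^{\pi}(x)\|^2\,\mathrm{d}\mu_t(x)$, (b) the Euclidean square-expansion, and (c) the geodesic-convexity subgradient inequality for $\mathcal{E}$. The only stylistic difference is direction of traversal — the paper starts from the Wasserstein difference and lower-bounds it, while you start from the subgradient inequality and unpack via the three-point identity; these are the same chain of inequalities read in opposite orders. Your explicit note that the Gaussian convolution step guarantees absolute continuity of $\mu_t$ for $t\geq1$ (hence existence of $\boldsymbol{t}_{\mu_t}^{\pi}$) is a worthwhile addition that the paper's proof leaves implicit.
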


\begin{proof}
    See Appendix \ref{app:additional-proofs-forward-flow-non-smooth-deterministic-constant}.
\end{proof}

\begin{lemma}
    \label{lemma:evi-forward-flow-heat}
    Let $(\mu_t)_{t\geq 0}$ denote the sequence of measures defined by \eqref{eq:forward-flow-1} - \eqref{eq:forward-flow-2}. Suppose that $\eta_t = \eta>0$ for all $t\geq 0$. Then, for any $t\geq 1$, and for any $\pi\in\mathcal{P}_2(\mathbb{R}^d)$, we have 
    \begin{equation}
        \mathcal{H}(\mu_t) - \mathcal{H}(\pi) \leq \frac{W_2^2(\mu_{t-\frac{1}{2}},\pi) - W_2^2(\mu_t,\pi)}{2\eta}.
        \label{eq:evi-forward-flow-heat}
    \end{equation}
\end{lemma}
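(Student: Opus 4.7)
The plan is to recognize that the half-step update $\mu_{t-\frac{1}{2}} \mapsto \mu_t = \mathcal{N}(0, 2\eta \mathbf{I}_d) \star \mu_{t-\frac{1}{2}}$ is exactly the time-$\eta$ map of the heat semigroup. By the Jordan--Kinderlehrer--Otto theorem (cf. the discussion around \eqref{eq:wgf-kl-divergence-langevin}), the heat flow is the Wasserstein gradient flow of the (negative) entropy $\mathcal{H}$, which is geodesically convex on $\mathcal{P}_2(\mathbb{R}^d)$ (Section \ref{sec:objective-functions}). The lemma is then a direct consequence of the Evolution Variational Inequality (EVI) associated with WGFs of geodesically convex functionals, applied to $\mathcal{H}$ alone over the half-step.

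Concretely, I would let $(\nu_s)_{s \in [0, \eta]}$ denote the WGF of $\mathcal{H}$ initialized at $\nu_0 = \mu_{t-\frac{1}{2}}$, so that $\nu_\eta = \mu_t$. The standard EVI characterization of gradient flows of geodesically convex functionals \citep[e.g.,][Theorem 11.1.4]{ambrosio2008gradient} states that, for any $\pi \in \mathcal{P}_2(\mathbb{R}^d)$,
\[
    \tfrac{d^+}{ds}\, \tfrac{1}{2} W_2^2(\nu_s, \pi) + \mathcal{H}(\nu_s) \leq \mathcal{H}(\pi).
\]
Integrating from $s = 0$ to $s = \eta$ and using $\nu_\eta = \mu_t$, $\nu_0 = \mu_{t-\frac{1}{2}}$ yields
\[
    \tfrac{1}{2}\left[W_2^2(\mu_t, \pi) - W_2^2(\mu_{t-\frac{1}{2}}, \pi)\right] + \int_0^\eta \mathcal{H}(\nu_s)\, ds \;\leq\; \eta\, \mathcal{H}(\pi).
\]
Since $\mathcal{H}$ is non-increasing along its own Wasserstein gradient flow (the basic energy dissipation property of WGFs), we have $\mathcal{H}(\nu_s) \geq \mathcal{H}(\nu_\eta) = \mathcal{H}(\mu_t)$ for all $s \in [0, \eta]$, hence $\int_0^\eta \mathcal{H}(\nu_s)\, ds \geq \eta\, \mathcal{H}(\mu_t)$. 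Substituting and rearranging gives the claim.

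The main technical point to be careful with is that $\mathcal{H}(\mu_{t-\frac{1}{2}})$ need not be finite for a general $\mu_{t-\frac{1}{2}} \in \mathcal{P}_2(\mathbb{R}^d)$, so one must check that the EVI and the subsequent integration are legitimate at the left endpoint. This is handled by the instantaneous smoothing of the heat semigroup: for any $s > 0$, $\nu_s$ has a smooth, strictly positive density and $\mathcal{H}(\nu_s)$ is finite. One applies the EVI on $[\varepsilon, \eta]$ for $\varepsilon > 0$, then sends $\varepsilon \downarrow 0$, using continuity of $s \mapsto W_2^2(\nu_s, \pi)$ along the heat flow together with lower semicontinuity of $\mathcal{H}$. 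This routine approximation step completes the proof.
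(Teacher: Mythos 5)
Your proof is correct, and it is the standard argument: the Gaussian convolution step is the time-$\eta$ map of the heat semigroup, i.e.\ the Wasserstein gradient flow of the (displacement-convex) negative entropy $\mathcal{H}$, so the integrated EVI together with monotonicity of $\mathcal{H}$ along its own flow yields the claim, with the $\varepsilon$-truncation handling possibly infinite initial entropy. The paper simply cites \citet[Lemma 5]{durmus2019analysis} for this lemma, and the argument you give is essentially the one underlying that cited result (also used, e.g., by Wibisono), so this is the same approach rather than a new one; the only small nit is that the EVI characterization in Ambrosio--Gigli--Savar\'e is more naturally located in Chapter 4 (e.g.\ Theorem 4.0.4) than in Theorem 11.1.4, which concerns existence/uniqueness of the flow.
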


\begin{proof}
    See \citet[][Lemma 5]{durmus2019analysis}.
\end{proof}

\begin{corollary}
\label{corollary:evi-forward-flow}
    Suppose that Assumption \ref{assumption:lsc} and \ref{assumption:geo-convex} hold. Let $(\mu_t)_{t\geq 0}$ denote the sequence of measures defined by \eqref{eq:forward-flow-1} - \eqref{eq:forward-flow-2}. Suppose that $\eta_t = \eta>0$ for all $t\geq 0$. Then, for any $t\geq 1$, and for any $\pi\in\mathcal{P}_2(\mathbb{R}^d)$, we have 
    \begin{equation}
        \mathcal{F}(\mu_t) - \mathcal{F}(\pi) \leq \frac{W_2^2(\mu_{t-\frac{1}{2}},\pi) - W_2^2(\mu_{t+\frac{1}{2}}, \pi)}{2\eta} + \frac{\eta}{2} \int_{\mathbb{R}^d} \|\zeta_t(x)\|^2\,\mathrm{d}\mu_t(x).
        \label{eq:evi-forward-flow}
    \end{equation}
\end{corollary}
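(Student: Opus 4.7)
The plan is to derive the corollary directly by summing the two preceding lemmas, exploiting the decomposition $\mathcal{F} = \mathcal{E} + \mathcal{H}$ inherent to the objective functional introduced in \eqref{eq:functional} (and reflected in the fact that the forward-flow scheme \eqref{eq:forward-flow-1}--\eqref{eq:forward-flow-2} treats the $\mathcal{E}$-step via a transport map and the $\mathcal{H}$-step via Gaussian convolution). Under Assumptions \ref{assumption:lsc} and \ref{assumption:geo-convex}, both $\mathcal{E}$ and $\mathcal{H}$ are proper, lower semicontinuous, and geodesically convex, so Lemmas \ref{lemma:evi-forward-flow-transport} and \ref{lemma:evi-forward-flow-heat} apply at every iterate $\mu_t$.

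Concretely, first I would apply Lemma \ref{lemma:evi-forward-flow-transport} to the transport sub-step $\mu_t \mapsto \mu_{t+\frac{1}{2}}$, yielding the one-step energy dissipation inequality for the combined potential and interaction energies in terms of $W_2^2(\mu_t,\pi) - W_2^2(\mu_{t+\frac{1}{2}},\pi)$ plus the error term $\tfrac{\eta}{2}\|\zeta_t\|_{L^2(\mu_t)}^2$. Then I would apply Lemma \ref{lemma:evi-forward-flow-heat}, which is the standard heat-flow contraction estimate (the Gaussian convolution is the time-$\eta$ heat semigroup, contractive along Wasserstein gradient flows of entropy), to the sub-step $\mu_{t-\frac{1}{2}} \mapsto \mu_t$, controlling $\mathcal{H}(\mu_t) - \mathcal{H}(\pi)$ by $\tfrac{1}{2\eta}[W_2^2(\mu_{t-\frac{1}{2}},\pi) - W_2^2(\mu_t,\pi)]$.

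Summing the two inequalities and writing $\mathcal{F}(\mu_t) - \mathcal{F}(\pi) = [\mathcal{E}(\mu_t) - \mathcal{E}(\pi)] + [\mathcal{H}(\mu_t) - \mathcal{H}(\pi)]$, the intermediate term $W_2^2(\mu_t,\pi)$ telescopes, leaving exactly the right-hand side of \eqref{eq:evi-forward-flow}. No further manipulation is needed; this is a one-line combination.

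There is no genuine obstacle here. The only subtle point worth noting, but which has been absorbed into the two lemmas, is that the transport lemma uses geodesic convexity of $\mathcal{E}$ via the subdifferential inequality against the optimal transport map from $\mu_t$ to $\pi$, whereas the heat-flow lemma is a purely analytic contraction estimate for the entropy along Gaussian convolution (as in \citet[Lemma 5]{durmus2019analysis}) and does not require geodesic convexity of $\mathcal{H}$ as an extra hypothesis beyond what is already implicit in Assumption \ref{assumption:lsc}. Thus the proof is essentially algebraic telescoping once both lemmas are in hand.
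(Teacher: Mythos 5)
Your proposal is correct and matches the paper's proof exactly: the paper also derives the corollary by summing Lemma \ref{lemma:evi-forward-flow-transport} and Lemma \ref{lemma:evi-forward-flow-heat} via the decomposition $\mathcal{F}(\mu_t) - \mathcal{F}(\pi) = [\mathcal{E}(\mu_t) - \mathcal{E}(\pi)] + [\mathcal{H}(\mu_t)-\mathcal{H}(\pi)]$, with the intermediate term $W_2^2(\mu_t,\pi)$ telescoping. Nothing further is needed.
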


\begin{proof}
The result is an immediate consequence of Lemma \ref{lemma:evi-forward-flow-transport} and Lemma \ref{lemma:evi-forward-flow-heat}, making use of the decomposition $\smash{\mathcal{F}(\mu_t) - \mathcal{F}(\pi) = [\mathcal{E}(\mu_t) - \mathcal{E}(\pi)] + [\mathcal{H}(\mu_t)-\mathcal{H}(\pi)]}$.
\end{proof}

\begin{proposition}
\label{prop:regret-bound-forward-flow}
    Suppose that Assumption \ref{assumption:lsc} and \ref{assumption:geo-convex} hold. Let $(\mu_t)_{t\geq 0}$ denote the sequence of measures defined by \eqref{eq:forward-flow-1} - \eqref{eq:forward-flow-2}. 
    Suppose that $\eta_t=\eta>0$ for all $t\geq 0$. Then, for any $T\geq 1$, and for any $\pi\in\mathcal{P}_2(\mathbb{R}^d)$, 
    \begin{equation}
        \sum_{t=1}^{T} \mathcal{F}(\mu_t) - \sum_{t=1}^{T} \mathcal{F}(\pi) \leq \frac{W_2^2(\mu_{\frac{1}{2}},\pi) - W_2^2(\mu_{T+\frac{1}{2}},\pi)}{2\eta} + \frac{\eta}{2}\sum_{t=1}^{T} \int_{\mathbb{R}^d}\|\zeta_t(x)\|^2\,\mathrm{d}\mu_t(x).
    \end{equation}
\end{proposition}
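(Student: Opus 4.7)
The plan is to obtain the stated inequality as a direct telescoping consequence of the one-step bound in Corollary~\ref{corollary:evi-forward-flow}. Nothing new needs to be proved at the level of Wasserstein calculus or geodesic convexity: those have already been absorbed into Lemmas \ref{lemma:evi-forward-flow-transport} and \ref{lemma:evi-forward-flow-heat}, and combined in the corollary.

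Concretely, I would start by applying Corollary~\ref{corollary:evi-forward-flow} at each $t \in \{1,\dots,T\}$, which yields
\begin{equation*}
\mathcal{F}(\mu_t) - \mathcal{F}(\pi) \;\leq\; \frac{W_2^2(\mu_{t-\frac{1}{2}},\pi) - W_2^2(\mu_{t+\frac{1}{2}},\pi)}{2\eta} + \frac{\eta}{2}\int_{\mathbb{R}^d}\|\zeta_t(x)\|^2\,\mathrm{d}\mu_t(x).
\end{equation*}
Summing over $t=1,\dots,T$ and recognising that the first term on the right-hand side is a telescoping sum in the index $t-\tfrac{1}{2}$, I get
\begin{equation*}
\sum_{t=1}^{T}\frac{W_2^2(\mu_{t-\frac{1}{2}},\pi) - W_2^2(\mu_{t+\frac{1}{2}},\pi)}{2\eta} \;=\; \frac{W_2^2(\mu_{\frac{1}{2}},\pi) - W_2^2(\mu_{T+\frac{1}{2}},\pi)}{2\eta},
\end{equation*}
which gives exactly the claimed bound after adding the (non-telescoping) gradient-norm term.

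There is no genuine obstacle here; the proposition is a cumulative regret-style restatement of the per-iteration energy-dissipation inequality. The only thing to verify is that the convention $\bar{\mu}_{\frac{1}{2}}$ is consistent across indices, i.e.\ that $\mu_{t+\frac{1}{2}}$ produced at step $t$ coincides with $\mu_{(t+1)-\frac{1}{2}}$ used at step $t+1$, which is immediate from the definition of the forward-flow update~\eqref{eq:forward-flow-1}--\eqref{eq:forward-flow-2}. The entire argument therefore fits in a few lines and requires no additional assumptions beyond those already invoked in Corollary~\ref{corollary:evi-forward-flow}.
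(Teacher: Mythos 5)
Your proof is correct and matches the paper's own argument exactly: both sum the per-iteration bound from Corollary~\ref{corollary:evi-forward-flow} over $t=1,\dots,T$ and collapse the telescoping sum of Wasserstein-distance differences. (One trivial typo: you write $\bar{\mu}_{\frac{1}{2}}$ in the final paragraph where you mean $\mu_{\frac{1}{2}}$.)
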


\begin{proof}
    This proposition is an immediate consequence of Corollary \ref{corollary:evi-forward-flow}. In particular, summing \eqref{eq:evi-forward-flow} over $t\in[T]$, and cancelling like terms in the telescoping sum, the result follows.
\end{proof}

\begin{corollary}
\label{corollary:average-bound-forward-flow}
    Suppose that Assumption \ref{assumption:lsc} and \ref{assumption:geo-convex} hold. Let $(\mu_t)_{t\geq 0}$ denote the sequence of measures defined by \eqref{eq:forward-flow-1} - \eqref{eq:forward-flow-2}. Suppose that $\eta_t=\eta>0$ for all $t\geq 0$. Let $\bar{\mu}_{T}$ be the average iterate defined in \eqref{eq:bar-mu-t-forward-flow-1-first-def} - \eqref{eq:bar-mu-t-forward-flow-2-first-def}. Then, for all $\pi\in\mathcal{P}_2(\mathbb{R}^d)$, 
    \begin{equation}
        \mathcal{F}(\bar{\mu}_{T}) - \mathcal{F}(\pi) \leq \frac{1}{T}\left[\frac{W_2^2(\mu_{\frac{1}{2}},\pi)}{2\eta} + \frac{\eta}{2}\sum_{t=1}^{T} \int_{\mathbb{R}^d}\|\zeta_t(x)\|^2\,\mathrm{d}\mu_t(x)\right].\label{eq:average-bound-forward-flow}
    \end{equation}
\end{corollary}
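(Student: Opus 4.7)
The plan is to combine the two main ingredients that have already been established in the excerpt, namely Lemma \ref{lemma:average} and Proposition \ref{prop:regret-bound-forward-flow}, and then drop a non-negative term. No new estimates are needed.

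First, I would invoke Lemma \ref{lemma:average}, which uses geodesic convexity of $\mathcal{F}$ (Assumption \ref{assumption:geo-convex}) together with the inductive definition of $\bar{\mu}_T$ in \eqref{eq:bar-mu-t-forward-flow-1-first-def}–\eqref{eq:bar-mu-t-forward-flow-2-first-def} to give the Jensen-type bound
\begin{equation}
    \mathcal{F}(\bar{\mu}_T) - \mathcal{F}(\pi) \;\leq\; \frac{1}{T}\left[\sum_{t=1}^T \mathcal{F}(\mu_t) - \sum_{t=1}^T \mathcal{F}(\pi)\right].
\end{equation}
The right-hand side is exactly the averaged ``regret'' bounded in Proposition \ref{prop:regret-bound-forward-flow}, so I would substitute the latter to obtain
\begin{equation}
    \mathcal{F}(\bar{\mu}_T) - \mathcal{F}(\pi) \;\leq\; \frac{1}{T}\left[\frac{W_2^2(\mu_{\frac{1}{2}},\pi) - W_2^2(\mu_{T+\frac{1}{2}},\pi)}{2\eta} + \frac{\eta}{2}\sum_{t=1}^T \int_{\mathbb{R}^d}\|\zeta_t(x)\|^2\,\mathrm{d}\mu_t(x)\right].
\end{equation}

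Finally, since $W_2^2(\mu_{T+\frac{1}{2}},\pi)\geq 0$, dropping this non-negative term on the right only loosens the inequality, which yields the claimed bound in \eqref{eq:average-bound-forward-flow}. There is no real obstacle here: all the heavy lifting (the evolution variational inequalities for the potential/interaction and entropy pieces in Lemmas \ref{lemma:evi-forward-flow-transport}–\ref{lemma:evi-forward-flow-heat}, their aggregation in Corollary \ref{corollary:evi-forward-flow}, and the telescoping in Proposition \ref{prop:regret-bound-forward-flow}) has already been done upstream. The corollary is simply the packaging of Jensen plus the telescoped regret bound, with a trivial non-negativity drop.
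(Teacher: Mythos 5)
Your proposal is correct and follows exactly the paper's argument: the paper's proof of Corollary~\ref{corollary:average-bound-forward-flow} is stated as an immediate consequence of Lemma~\ref{lemma:average} and Proposition~\ref{prop:regret-bound-forward-flow}, with the non-negative term $W_2^2(\mu_{T+\frac{1}{2}},\pi)$ dropped implicitly. Nothing to add.
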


\begin{proof}
    The result is an immediate consequence of Lemma \ref{lemma:average} and Proposition \ref{prop:regret-bound-forward-flow}.
\end{proof}

\begin{theorem}
\label{thm:constant-step-forward-flow}
    Suppose that Assumption \ref{assumption:lsc} and \ref{assumption:geo-convex} hold. Let $(\mu_t)_{t\geq 0}$ denote the sequence of measures defined by \eqref{eq:forward-flow-1} - \eqref{eq:forward-flow-2}. Suppose that $\eta_t=\eta>0$ for all $t\geq 0$. Let $\bar{\mu}_T$ be the average iterate defined in \eqref{eq:bar-mu-t-forward-flow-1-first-def} - \eqref{eq:bar-mu-t-forward-flow-2-first-def}. Then, for all $\pi\in\mathcal{P}_2(\mathbb{R}^d)$, the upper bound in \eqref{eq:average-bound-forward-flow} is minimized when
    \begin{equation}
        \eta = \frac{W_2(\mu_{\frac{1}{2}},\pi)}{\sqrt{\sum_{t=1}^{T} \int_{\mathbb{R}^d}\|\zeta_t(x)\|^2\,\mathrm{d}\mu_t(x)}}. \label{eq:optimal-lr-forward-flow}
    \end{equation}
    Moreover, this choice of $\eta$ guarantees
    \begin{equation}
        \mathcal{F}(\bar{\mu}_T) - \mathcal{F}(\pi) \leq \frac{1}{T} W_2(\mu_{\frac{1}{2}},\pi)\sqrt{\sum_{t=1}^{T} \int_{\mathbb{R}^d}\|\zeta_t(x)\|^2\,\mathrm{d}\mu_t(x)}. \label{eq:optimal-average-bound-forward-flow}
    \end{equation}
\end{theorem}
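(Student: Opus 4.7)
The statement is a direct consequence of Corollary~\ref{corollary:average-bound-forward-flow}: once that bound is in hand, the theorem is a one-variable calculus (or AM--GM) exercise in the single free parameter $\eta$. So my plan is to treat the right-hand side of \eqref{eq:average-bound-forward-flow} as a function $\phi(\eta)$ of $\eta>0$ with the measure-valued quantities $W_2^2(\mu_{1/2},\pi)$ and $\sum_{t=1}^T\int\|\zeta_t\|^2\,\mathrm{d}\mu_t$ held fixed, and minimize it.

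Concretely, I would set
\begin{equation*}
a := W_2^2(\mu_{\tfrac{1}{2}},\pi), \qquad b := \sum_{t=1}^{T}\int_{\mathbb{R}^d}\|\zeta_t(x)\|^2\,\mathrm{d}\mu_t(x),
\end{equation*}
so that Corollary~\ref{corollary:average-bound-forward-flow} reads $\mathcal{F}(\bar{\mu}_T)-\mathcal{F}(\pi) \leq \tfrac{1}{T}\phi(\eta)$ with $\phi(\eta)=\tfrac{a}{2\eta}+\tfrac{\eta b}{2}$. If $a=0$ then $\mu_{1/2}=\pi$ and the bound is already trivial, and if $b=0$ the statement is vacuous in the limit sense, so we may assume $a,b>0$. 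Then $\phi$ is strictly convex on $(0,\infty)$ with $\phi'(\eta) = -\tfrac{a}{2\eta^2}+\tfrac{b}{2}$, and setting $\phi'(\eta)=0$ yields the unique minimizer $\eta_\star = \sqrt{a/b}$, which is exactly \eqref{eq:optimal-lr-forward-flow}. Substituting $\eta_\star$ back into $\phi$ gives $\phi(\eta_\star)=\sqrt{ab}$, which, divided by $T$, is precisely the bound in \eqref{eq:optimal-average-bound-forward-flow}.

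Equivalently, and perhaps more cleanly, I would just invoke the AM--GM inequality: for any $\eta>0$,
\begin{equation*}
\tfrac{a}{2\eta} + \tfrac{\eta b}{2} \;\geq\; 2\sqrt{\tfrac{a}{2\eta}\cdot\tfrac{\eta b}{2}} \;=\; \sqrt{ab},
\end{equation*}
with equality if and only if $\tfrac{a}{2\eta}=\tfrac{\eta b}{2}$, i.e.\ $\eta=\sqrt{a/b}$. This simultaneously identifies the minimizer and evaluates $\min_\eta \phi(\eta)$, yielding both \eqref{eq:optimal-lr-forward-flow} and \eqref{eq:optimal-average-bound-forward-flow} in one line.

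There is no real obstacle here; the theorem is purely a consequence of optimizing the bound already established in Corollary~\ref{corollary:average-bound-forward-flow}. The only point worth a brief comment in the write-up is the (vacuous) edge case $a=0$ or $b=0$, and the observation that $\eta_\star$ depends on the unknown quantities $W_2(\mu_{1/2},\pi)$ and $(\mu_t)_{t\in[T]}$, motivating the adaptive schedule introduced later in the paper.
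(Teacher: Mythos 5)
Your proposal is correct and follows essentially the same route as the paper: Corollary \ref{corollary:average-bound-forward-flow} is invoked, the right-hand side is minimized over $\eta$ by setting the derivative to zero (your AM--GM variant is an equivalent packaging of the same one-variable optimization), and the minimizer is substituted back to obtain \eqref{eq:optimal-average-bound-forward-flow}. Nothing further is needed.
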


\begin{proof}
   This result follows straightforwardly from Corollary \ref{corollary:average-bound-forward-flow}. In particular, differentiating the RHS of \eqref{eq:average-bound-forward-flow} w.r.t. $\eta$, and setting equal to zero, gives $\smash{0=-\eta^{-2}W_2^2(\mu_{\frac{1}{2}},\pi) + \sum_{t=1}^{T} \int_{\mathbb{R}^d} \|\zeta_t(x)\|^2\,\mathrm{d}\mu_t(x)}$. Solving for $\eta$ gives \eqref{eq:optimal-lr-forward-flow}, and substituting back into \eqref{eq:average-bound-forward-flow} gives \eqref{eq:optimal-average-bound-forward-flow}.
\end{proof}

\begin{tcolorbox}[enhanced,
  colback=white,
  frame hidden,
  borderline north={0.5pt}{0pt}{black!70},
  borderline south={0.5pt}{0pt}{black!70},
  arc=2pt,             
  left=0pt,            
  right=0pt,           
  top=4pt,             
  bottom=4pt,           
  parbox=false,
  before skip=10pt, 
  after skip=15pt]
\begin{example1}[\textsc{Sampling from a Target Probability Distribution}]
Consider again our running example: $\mathcal{F}(\mu) = \mathrm{KL}(\mu\|\pi)$, with $\pi\propto e^{-U}$. In this case, Corollary \ref{corollary:average-bound-forward-flow} yields the convergence rate
\begin{equation}
    \mathrm{KL}(\bar{\mu}_T\|\pi) \leq \frac{1}{T}\left[ \frac{W_2^2(\mu_{\frac{1}{2}},\pi)}{2\eta} + \frac{\eta}{2} \sum_{t=1}^T \int_{\mathbb{R}^d}\|\nabla U(x)\|^2\,\mathrm{d}\mu_t(x)\right] \label{eq:kl-constant-step-size-bound}
\end{equation}
for the forward-flow discretization of the WGF (i.e., ULA). Meanwhile, Corollary \ref{thm:constant-step-forward-flow} gives the theoretically optimal step size for this algorithm as
\begin{equation}
    \eta = \frac{W_2(\mu_{\frac{1}{2}},\pi)}{\sqrt{\sum_{t=1}^T \int_{\mathbb{R}^d}\|\nabla U(x)\|^2\,\mathrm{d}\mu_t(x)}}. 
    \label{eq:optimal-step-size-ula}
\end{equation}
Substituting this step size back into \eqref{eq:kl-constant-step-size-bound}, the optimal convergence rate for ULA (with a constant step size) is given by 
\begin{equation}
    \mathrm{KL}(\bar{\mu}_T\|\pi) \leq \frac{1}{T} W_2(\mu_{\frac{1}{2}},\pi)\sqrt{\sum_{t=1}^{T} \int_{\mathbb{R}^d}\|\nabla U(x)\|^2\,\mathrm{d}\mu_t(x)}. \label{eq:optimal-average-bound-forward-flow-kl}
\end{equation}
Suppose that the potential is Lipschitz. That is, there exists $0<G<\infty$ such that $|U(x) - U(y)|\leq G\|x-y\|$ for all $x,y\in\mathcal{X}$. It follows that the (sub)gradients of the potential are uniformly bounded, i.e., $\|\nabla U(x)\|\leq G$ for all $x\in\mathcal{X}$ \citep[e.g.,][Proposition 16.20]{bauschke2017convex}. Thus, in this case, writing also $D_{\frac{1}{2},\pi} = W_2(\mu_{\frac{1}{2}},\pi)$, the bound in \eqref{eq:optimal-average-bound-forward-flow-kl} simplifies further as 
\begin{equation}
    \mathrm{KL}(\bar{\mu}_T\|\pi) \leq \frac{GD_{\frac{1}{2},\pi}}{\sqrt{T}}. \label{eq:optimal-average-bound-forward-flow-kl-bounded-grad}
\end{equation}

\paragraph{The Gaussian Case}
In general, one cannot directly compute the optimal step size in \eqref{eq:optimal-step-size-ula}, even in hindsight, since both the Wasserstein distance $\smash{W_2(\mu_{\frac{1}{2}},\pi)}$
and the expectations $\int_{\mathbb{R}^d} [\cdot]\,\mathrm{d}\mu_t(x)$ are intractable. 

An exception to this is when $\mu_0 =\mathcal{N}(m_0,\Sigma_0)$ and $\pi = \mathcal{N}(m_{\pi},\Sigma_{\pi})$ are both Gaussian. In this case, $(\mu_t)_{t\geq 0}$ remain Gaussian for all times \citep[e.g.,][Example 2]{wibisono2018sampling}. In particular, we have that $\mu_{t} = \mathcal{N}(m_t,\Sigma_t)$ and $\smash{\mu_{t+\frac{1}{2}} = \mathcal{N}(m_{t+\frac{1}{2}},\Sigma_{t+\frac{1}{2}})}$ for all $t\geq 0$, where 
\begin{alignat}{2}
\label{eq:ula-gaussian-mean}
    m_{t+\frac{1}{2}} &= m_t - \eta \Sigma_{\pi}^{-1}(m_t - m_{\pi}) \quad \quad &&m_{t+1} = m_{t+\frac{1}{2}}, \\
    \Sigma_{t+\frac{1}{2}} &= (\mathbf{I}_{d}-\eta\Sigma_{\pi}^{-1})\Sigma_t(\mathbf{I}_d - \eta \Sigma_{\pi}^{-1})^{\top} \quad \quad &&\Sigma_{t+1} = \Sigma_{t+\frac{1}{2}} + 2\eta\mathbf{I}_d.
\label{eq:ula-gaussian-var}
\end{alignat}
Using these quantities, and the general formula for the Wasserstein distance between two Gaussians \citep[e.g.,][]{olkin1982distance}, we can compute
\begin{align}
\label{eq:w2-lmc-gaussian-v1}
    W_2^2(\mu_{\frac{1}{2}},\pi) 
    &=\|m_{\frac{1}{2}} -m_{\pi}\|^2 + \mathrm{Tr}\big(\Sigma_{\frac{1}{2}} + \Sigma_{\pi} - 2(\Sigma_{\frac{1}{2}}^{\frac{1}{2}}\Sigma_{\pi}\Sigma_{\frac{1}{2}}^{\frac{1}{2}})^{\frac{1}{2}}\big) \\
    \int ||\nabla U(x)||^2 \,\mathrm{d}\mu_t(x) &=  \int \left[ \|\Sigma_{\pi}^{-1}(x-m_{\pi})\|^2\right] \,\mathrm{d}\mu_t(x) = \mathrm{Tr}(\Sigma_{\pi}^{-2}\Sigma_t) + \|\Sigma_{\pi}^{-1}\left(m_t - m_{\pi}\right)\|^2. \label{eq:grad-lmc-gaussian-v1}
\end{align}
It follows, substituting \eqref{eq:w2-lmc-gaussian-v1} - \eqref{eq:grad-lmc-gaussian-v1} into \eqref{eq:optimal-step-size-ula}, that in the Gaussian case the optimal step size is given by
\begin{equation}
    \eta = \frac{\Big[ \|m_{\frac{1}{2}} -m_{\pi}\|^2 + \mathrm{Tr}\big(\Sigma_{\frac{1}{2}} + \Sigma_{\pi} - 2(\Sigma_{\frac{1}{2}}^{\frac{1}{2}}\Sigma_{\pi}\Sigma_{\frac{1}{2}}^{\frac{1}{2}})^{\frac{1}{2}}\big)\Big]^{\frac{1}{2}}}{\sqrt{\sum_{t=1}^T \left(\mathrm{Tr}(\Sigma_{\pi}^{-2}\Sigma_t) + \|\Sigma_{\pi}^{-1}\left(m_t - m_{\pi}\right)\|^2\right)}}.
    \label{eq:ideal-step-size-lmc-gaussian}
\end{equation}
\end{example1}
\end{tcolorbox}

\paragraph{Remark} \emph{In the case that the functional $\mathcal{F}:\mathcal{P}_2(\mathcal{X})\rightarrow(-\infty,\infty]$ is \emph{convex}, and not just \emph{geodesically convex}, the bounds in Corollary \ref{corollary:average-bound-forward-flow} and Theorem \ref{thm:constant-step-forward-flow} also hold for the uniform mixture $\smash{\bar{\mu}_T = \frac{1}{T}\sum_{t=1}^T \mu_t}$ by Jensen's inequality. This is the case, for instance, when $\mathcal{F} = \mathrm{KL}(\cdot\|\pi)$ (see, e.g., \citealp{cover2006elements}, Theorem 2.7.2; \citealp{van2014renyi}, Theorem 11).}

\subsubsection{Deterministic Case: Adaptive Step Size}
\label{sec:adaptive-deterministic-step-size}
In practice, we cannot compute the optimal step size in Theorem \ref{thm:constant-step-forward-flow}, since both the Wasserstein distance $\smash{W_2(\mu_{\frac{1}{2}},\pi)}$ and the sequence of expectations $\int_{\mathbb{R}^d} \|\zeta_t(x)\|^2\,\mathrm{d}\mu_t(x)$ are intractable. Inspired by the approach introduced in \citet{ivgi2023dog}, we thus introduce an adaptive sequence of step sizes $\smash{(\eta_t)_{t\geq 1}}$ defined according to
\begin{equation}
\tcboxmath[colback=black!5,colframe=black!15,boxrule=0.4pt, arc=2pt, left=8pt, right=8pt, top=6pt, bottom=6pt]{\eta_t = \frac{\max\left[r_{\varepsilon},\max_{1\leq s\leq t}W_2(\mu_{\frac{1}{2}}, \mu_{s-\frac{1}{2}})\right]}{\sqrt{\sum_{s=1}^t \int ||\zeta_s(x)||^2 \,\mathrm{d}\mu_s(x)}}} ,
\label{eq:dog-lr-forward-flow}
\end{equation}
where $r_\varepsilon>0$ is some small initial value which ensures that the algorithm takes a step in the first iteration. 

Thus, the step size in the $\smash{t^{\text{th}}}$ iteration is equal to the maximum of the Wasserstein distances between the law $\smash{\mu_{\frac{1}{2}}}$ and the subsequent ``half-step'' laws $\smash{(\mu_{t-\frac{1}{2}})_{1\leq s \leq t}}$ generated by the forward-flow discretization of the WGF, scaled by the square root of the cumulative squared $L_2$ norms of the sum of the Wasserstein (sub)gradients of the potential energy and the interaction energy. This is a natural proxy to the oracle step size in \eqref{eq:optimal-lr-forward-flow}.

Based on its connection with the functional upper bound in Corollary \ref{corollary:average-bound-forward-flow}, we will refer to our proposed step size schedule as 
\begin{equation}
\tcboxmath[colback=black!5,colframe=black!15,boxrule=0.4pt, arc=2pt, left=8pt, right=8pt, top=6pt, bottom=6pt]{
\text{\textbf{\textsc{Fuse}}: \textbf{F}unctional \textbf{U}pper-Bound \textbf{S}tep-Size \textbf{E}stimator.}
}
\end{equation}
\vspace{-1.5mm}

\paragraph{Convergence Analysis Assuming Bounded Iterates}
We now study the convergence of the forward-flow discretization of the WGF in \eqref{eq:forward-flow-1} - \eqref{eq:forward-flow-2} when using the adaptive step size schedule defined in \eqref{eq:dog-lr-forward-flow}. 

In order to proceed with our analysis, it will be useful to introduce some new notation. In particular, for $t\geq 1$, let us define
{\allowdisplaybreaks
\begin{alignat}{3}
    r_t &= W_2(\mu_{\frac{1}{2}},\mu_{t-\frac{1}{2}}),\quad &&\bar{r}_t = \max_{1\leq s \leq t}\left[r_s,r_\varepsilon\right],\quad &&G_t = \sum_{s=1}^t \int_{\mathbb{R}^d} \|\zeta_s(x)\|^2\,\mathrm{d}\mu_s(x) \label{eq:defs1-forward-flow} \\
    r'_t &= W_2(\mu_{\frac{1}{2}},\mu_t), \quad &&\bar{r}'_t = \max_{1\leq s\leq t}\left[r'_s,r_{\varepsilon}\right]  \label{eq:defs1-1-forward-flow} \\[-2mm]
    d_t &= W_2(\mu_{t-\frac{1}{2}},\pi),\quad &&\bar{d}_t = \max_{1\leq s \leq t} d_s
    ,\quad &&\bar{g}_t = \max_{1\leq s \leq t} \|\zeta_s\|_{L^2(\mu_s)}:= \left[\int_{\mathbb{R}^d}\|\zeta_s(x)\|^2\,\mathrm{d}\mu_s(x)\right]^{\frac{1}{2}}. \label{eq:defs2-forward-flow}
\end{alignat}
}

We will now analyze the convergence of the weighted average iterate defined according to the following recursion:
\begin{align}
    \tilde{\mu}_1 &= \mu_1 \label{eq:tilde-mu-t-1-forward-flow} \\
     \tilde{\mu}_{t+1} &= \left[\left(1-\frac{\bar{r}_t}{\sum_{s=1}^{t} \bar{r}_s }\right) \mathrm{id} + \frac{\bar{r}_t}{\sum_{s=1}^{t} \bar{r}_s } \boldsymbol{t}_{\tilde{\mu}_{t}}^{\mu_{t+1}}\right]_{\#}\tilde{\mu}_{t},\quad t\geq 1. \label{eq:tilde-mu-t-2-forward-flow}
\end{align}

\begin{lemma}
\label{lemma:tilde-mu-t-bound-1-forward-flow}
    Suppose that Assumption 
    \ref{assumption:lsc} and \ref{assumption:geo-convex} hold. Let $\tilde{\mu}_T$ be the average iterate defined in \eqref{eq:tilde-mu-t-1-forward-flow} - \eqref{eq:tilde-mu-t-2-forward-flow}. Then, for all $T\geq 1$, it holds that
    \begin{align}
        \mathcal{F}(\tilde{\mu}_{T}) - \mathcal{F}(\pi) &\leq \frac{1}{\sum_{t=1}^{T}\bar{r}_t} \sum_{t=1}^{T}\bar{r}_t \left[\mathcal{F}(\mu_t) - \mathcal{F}(\pi)\right]. \label{eq:average-ineq-forward-flow}
    \end{align}
\end{lemma}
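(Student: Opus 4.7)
The plan is to mimic the proof of Lemma~\ref{lemma:average}, proceeding by induction on $T$ and applying the geodesic convexity of $\mathcal{F}$ (Assumption~\ref{assumption:geo-convex}) along the specific geodesic used to build $\tilde{\mu}_{T+1}$.

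First, I would observe that \eqref{eq:average-ineq-forward-flow} is equivalent, after multiplying both sides by $\sum_{t=1}^{T}\bar{r}_t>0$, to the weighted Jensen-type inequality
\[
\left(\sum_{t=1}^{T}\bar{r}_t\right)\mathcal{F}(\tilde{\mu}_T) \;\leq\; \sum_{t=1}^{T}\bar{r}_t\,\mathcal{F}(\mu_t),
\]
since the $\mathcal{F}(\pi)$ contributions coincide on both sides and cancel. Thus it suffices to establish this inequality.

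Next, I would proceed by induction on $T$. The base case $T=1$ is immediate from $\tilde{\mu}_1=\mu_1$. For the inductive step, the key observation is that the recursion \eqref{eq:tilde-mu-t-2-forward-flow} writes $\tilde{\mu}_{t+1}$ as McCann's interpolant between $\tilde{\mu}_t$ and $\mu_{t+1}$ at parameter
\[
\alpha_t \;=\; \frac{\bar{r}_{t+1}}{\sum_{s=1}^{t+1}\bar{r}_s},
\]
so $\tilde{\mu}_{t+1}$ lies on the constant-speed geodesic from $\tilde{\mu}_t$ to $\mu_{t+1}$ (cf. Section~\ref{sec:geo-convex}). Geodesic convexity of $\mathcal{F}$ along this geodesic then yields
\[
\mathcal{F}(\tilde{\mu}_{t+1}) \;\leq\; (1-\alpha_t)\,\mathcal{F}(\tilde{\mu}_t) + \alpha_t\,\mathcal{F}(\mu_{t+1}).
\]
Multiplying through by $\sum_{s=1}^{t+1}\bar{r}_s$, substituting the definitions of $\alpha_t$ and $1-\alpha_t=\bigl(\sum_{s=1}^{t}\bar{r}_s\bigr)/\bigl(\sum_{s=1}^{t+1}\bar{r}_s\bigr)$, and applying the inductive hypothesis to the $\bigl(\sum_{s=1}^{t}\bar{r}_s\bigr)\mathcal{F}(\tilde{\mu}_t)$ term would yield the desired weighted bound at step $t+1$.

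I do not anticipate any substantive technical obstacle, since the argument is a direct weighted analogue of Lemma~\ref{lemma:average}. The only subtle point is verifying that $\tilde{\mu}_{t+1}$ genuinely lies on a geodesic between $\tilde{\mu}_t$ and $\mu_{t+1}$, which follows from the standard representation of McCann's interpolant via the optimal transport map $\boldsymbol{t}_{\tilde{\mu}_t}^{\mu_{t+1}}$. Existence of this transport map requires $\tilde{\mu}_t\in\mathcal{P}_{2,\mathrm{ac}}(\mathbb{R}^d)$, which propagates inductively from the smoothing effect of the Gaussian convolution step in \eqref{eq:forward-flow-2} and the fact that pushforwards of absolutely continuous measures under diffeomorphisms remain absolutely continuous.
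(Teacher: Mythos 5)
Your argument follows the same route as the paper's own proof: induction on $T$, with the inductive step driven by the geodesic convexity of $\mathcal{F}$ applied along the constant-speed geodesic from $\tilde{\mu}_t$ to $\mu_{t+1}$ generated by the optimal transport map $\boldsymbol{t}_{\tilde{\mu}_t}^{\mu_{t+1}}$, exactly mirroring Lemma~\ref{lemma:average}.

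One point worth flagging, since it matters for the correctness of your step: you read off the interpolation parameter as $\alpha_t = \bar r_{t+1}/\sum_{s=1}^{t+1}\bar r_s$, but the recursion \eqref{eq:tilde-mu-t-2-forward-flow} as printed has the coefficient $\bar r_t/\sum_{s=1}^{t}\bar r_s$. With the printed weight, the induction does not close (at $t=1$ the coefficient is $1$, giving $\tilde\mu_2=\mu_2$ rather than a weighted average of $\mu_1,\mu_2$, and in general the telescoping coefficients fail to reproduce $\sum_{t=1}^{T+1}\bar r_t$ in the denominator). The weight you used is the one that actually makes the lemma true and is consistent with the forward-Euler analogue \eqref{eq:tilde-mu-t-2}; the printed definition has an indexing slip, which the paper's own appendix proof also tacitly (and slightly inconsistently) works around. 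So you have silently corrected the definition, and with that correction your calculation
\[
\mathcal F(\tilde\mu_{t+1}) \;\leq\; \frac{\sum_{s=1}^{t}\bar r_s}{\sum_{s=1}^{t+1}\bar r_s}\,\mathcal F(\tilde\mu_t) \;+\; \frac{\bar r_{t+1}}{\sum_{s=1}^{t+1}\bar r_s}\,\mathcal F(\mu_{t+1})
\]
combined with the inductive hypothesis telescopes exactly as claimed. Your side remark on absolute continuity of $\tilde\mu_t$ (needed for $\boldsymbol{t}_{\tilde\mu_t}^{\mu_{t+1}}$ to exist) is a genuine technical prerequisite that the paper does not spell out; it does propagate as you describe. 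Overall the proposal is correct and faithful to the paper's argument.
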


\begin{proof}
    See Appendix \ref{app:additional-proofs-forward-flow-non-smooth-deterministic-adaptive}.
\end{proof}

\begin{lemma}
\label{lemma:tilde-mu-t-bound-2-forward-flow}
    Suppose that Assumption \ref{assumption:lsc} and \ref{assumption:geo-convex} hold. Let $(\mu_t)_{t\geq 0}$ denote the sequence of measures defined by \eqref{eq:forward-flow-1} - \eqref{eq:forward-flow-2}, with $(\eta_t)_{t\geq 0}$ defined as in \eqref{eq:dog-lr-forward-flow}. Then, for all $T\geq 1$,
    \begin{align}
        \sum_{t=1}^{T}\bar{r}_t\left[\mathcal{F}(\mu_t) - \mathcal{F}(\pi)\right] &\leq \sum_{t=1}^T \bar{r}_t \left[\frac{W_2^2(\mu_{t-\frac{1}{2}},\pi) - W_2^2(\mu_{t+\frac{1}{2}}, \pi)}{2\eta_t} + \frac{\eta_t}{2} \int_{\mathbb{R}^d} \|\zeta_t(x)\|^2\,\mathrm{d}\mu_t(x)\right].
    \end{align}
\end{lemma}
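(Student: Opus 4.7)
The plan is to obtain the claim as an immediate weighted summation of a per-step version of the evolution variational inequality established in Corollary~\ref{corollary:evi-forward-flow}. The only subtlety is that Lemma~\ref{lemma:evi-forward-flow-transport} and Lemma~\ref{lemma:evi-forward-flow-heat} are stated under the constant step-size hypothesis $\eta_t\equiv\eta$, whereas here the step sizes are adaptive. However, both of these one-step bounds are purely local: at step $t$ they only involve the update rule from $\mu_{t-\frac{1}{2}}$ to $\mu_t$ and from $\mu_t$ to $\mu_{t+\frac{1}{2}}$, each of which uses only $\eta_t$. Hence the same arguments go through verbatim if one substitutes $\eta_t$ for $\eta$ at each iteration.

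Concretely, I would first observe that Lemma~\ref{lemma:evi-forward-flow-transport} and Lemma~\ref{lemma:evi-forward-flow-heat} yield, for any $t\geq 1$ and any step size $\eta_t>0$ used at iteration $t$, the per-step inequalities
\begin{align}
\mathcal{E}(\mu_t)-\mathcal{E}(\pi) &\leq \frac{W_2^2(\mu_t,\pi)-W_2^2(\mu_{t+\frac{1}{2}},\pi)}{2\eta_t}+\frac{\eta_t}{2}\int_{\mathbb{R}^d}\|\zeta_t(x)\|^2\,\mathrm{d}\mu_t(x),\\
\mathcal{H}(\mu_t)-\mathcal{H}(\pi) &\leq \frac{W_2^2(\mu_{t-\frac{1}{2}},\pi)-W_2^2(\mu_t,\pi)}{2\eta_t}.
\end{align}
Adding these and using $\mathcal{F}=\mathcal{E}+\mathcal{H}$ gives the adaptive analogue of Corollary~\ref{corollary:evi-forward-flow}:
\begin{equation}
\mathcal{F}(\mu_t)-\mathcal{F}(\pi) \leq \frac{W_2^2(\mu_{t-\frac{1}{2}},\pi)-W_2^2(\mu_{t+\frac{1}{2}},\pi)}{2\eta_t}+\frac{\eta_t}{2}\int_{\mathbb{R}^d}\|\zeta_t(x)\|^2\,\mathrm{d}\mu_t(x).
\end{equation}

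From here, the lemma follows by a one-line argument. Since $\bar{r}_t\geq r_\varepsilon>0$ by definition, multiplying the displayed inequality by the nonnegative weight $\bar{r}_t$ preserves its direction. Summing the resulting inequalities over $t=1,\dots,T$ then yields exactly the claimed bound. No telescoping or cancellation is needed at this stage: the weighting by $\bar{r}_t$ is precisely what is left untouched, to be exploited subsequently when estimating the resulting sum via the concrete form of $\eta_t$ in \eqref{eq:dog-lr-forward-flow}. There is no substantive obstacle in the present lemma; the only point requiring a brief justification is the extension of the per-step inequalities to time-varying step sizes, which is immediate from the local nature of the updates \eqref{eq:forward-flow-1}--\eqref{eq:forward-flow-2}.
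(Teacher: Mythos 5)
Your argument is correct and is essentially the same as the paper's: the paper proves this lemma by taking a weighted sum of the inequality in Corollary~\ref{corollary:evi-forward-flow} with $\eta$ replaced by $\eta_t$, which is exactly what you do. Your extra remark that the one-step bounds are purely local and so apply verbatim with iteration-dependent $\eta_t$ is a reasonable clarification, not a divergence in approach.
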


\begin{proof}
    The result follows immediately upon taking a weighted sum of the inequality in Corollary \ref{corollary:evi-forward-flow}, now setting $\eta=\eta_t$.
\end{proof}

\begin{lemma}
\label{lemma:tilde-mu-t-bound-2-forward-flow-v2}
    Suppose that Assumption \ref{assumption:lsc} and \ref{assumption:geo-convex} hold. Let $(\mu_t)_{t\geq 0}$ denote the sequence of measures defined by \eqref{eq:forward-flow-1} - \eqref{eq:forward-flow-2}, with $(\eta_t)_{t\geq 0}$ defined as in \eqref{eq:dog-lr-forward-flow}. Then, for all $T\geq 1$,
    \begin{equation}
    \sum_{t=1}^T \bar{r}_t \left[\frac{W_2^2(\mu_{t-\frac{1}{2}},\pi) - W_2^2(\mu_{t+\frac{1}{2}}, \pi)}{2\eta_t} + \frac{\eta_t}{2} \int_{\mathbb{R}^d} \|\zeta_t(x)\|^2\,\mathrm{d}\mu_t(x)\right] \leq \bar{r}_{T+1} (2\bar{d}_{T+1}+ \bar{r}_{T+1}) \sqrt{G_{T}}. \label{eq:tilde-mu-t-bound-2-forward-flow}
    \end{equation}
\end{lemma}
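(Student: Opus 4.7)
The plan is to substitute the adaptive step size $\eta_t = \bar{r}_t/\sqrt{G_t}$ directly into the LHS of \eqref{eq:tilde-mu-t-bound-2-forward-flow}. Since $\bar{r}_t/\eta_t = \sqrt{G_t}$ and $\bar{r}_t\eta_t = \bar{r}_t^2/\sqrt{G_t}$, the weighting by $\bar{r}_t$ cancels in the first term, and the LHS reduces to
\begin{equation*}
\tfrac{1}{2}\sum_{t=1}^T \sqrt{G_t}\bigl[d_t^2 - d_{t+1}^2\bigr] \;+\; \tfrac{1}{2}\sum_{t=1}^T \frac{\bar{r}_t^2}{\sqrt{G_t}} \int_{\mathbb{R}^d} \|\zeta_t(x)\|^2\,\mathrm{d}\mu_t(x),
\end{equation*}
where I write $d_t := W_2(\mu_{t-\frac{1}{2}},\pi)$ for brevity. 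I would then bound the two sums separately.

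The gradient sum is routine: the sequence $t\mapsto\bar{r}_t$ is nondecreasing, so $\bar{r}_t\le\bar{r}_{T+1}$ for every $t\le T+1$. Writing $v_t := \int\|\zeta_t\|^2\,\mathrm{d}\mu_t = G_t-G_{t-1}$, the standard concave-square-root inequality $\sqrt{G_t}-\sqrt{G_{t-1}}\ge (G_t-G_{t-1})/(2\sqrt{G_t})$ gives $\sum_{t=1}^T v_t/\sqrt{G_t}\le 2\sqrt{G_T}$, and hence the gradient contribution is at most $\bar{r}_{T+1}^2\sqrt{G_T}$.

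The distance sum is more delicate. I would apply summation by parts to obtain
\begin{equation*}
\sum_{t=1}^T \sqrt{G_t}\bigl(d_t^2-d_{t+1}^2\bigr) \;=\; \sqrt{G_1}\,d_1^2 + \sum_{t=2}^T \bigl(\sqrt{G_t}-\sqrt{G_{t-1}}\bigr)d_t^2 \;-\; \sqrt{G_T}\,d_{T+1}^2.
\end{equation*}
Bounding each $d_t^2$ with $1\le t\le T$ by $\bar{d}_{T+1}^2$ telescopes the first two terms to $\bar{d}_{T+1}^2\sqrt{G_T}$. The critical point is to retain, rather than drop, the negative end-term $-\sqrt{G_T}d_{T+1}^2$, yielding the compact bound $(\bar{d}_{T+1}^2 - d_{T+1}^2)\sqrt{G_T}$. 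I would then factor $\bar{d}_{T+1}^2 - d_{T+1}^2 = (\bar{d}_{T+1}-d_{T+1})(\bar{d}_{T+1}+d_{T+1})$, noting that the second factor is at most $2\bar{d}_{T+1}$.

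The key step, and the one I expect to be the main obstacle, is establishing $\bar{d}_{T+1}-d_{T+1}\le 2\bar{r}_{T+1}$. Picking $s^\ast\in\argmax_{1\le s\le T+1}d_s$, this follows from two applications of the triangle inequality in $(\mathcal{P}_2(\mathbb{R}^d),W_2)$ through $\mu_{\frac{1}{2}}$:
\begin{equation*}
\bar{d}_{T+1} - d_{T+1} \;=\; d_{s^\ast} - d_{T+1} \;\le\; W_2\bigl(\mu_{s^\ast-\frac{1}{2}},\mu_{T+\frac{1}{2}}\bigr) \;\le\; r_{s^\ast}+r_{T+1} \;\le\; 2\bar{r}_{T+1}.
\end{equation*}
Combining, the distance sum is at most $4\bar{r}_{T+1}\bar{d}_{T+1}\sqrt{G_T}$, so after the overall factor of $\tfrac12$ its contribution is $2\bar{r}_{T+1}\bar{d}_{T+1}\sqrt{G_T}$. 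Adding the gradient contribution $\bar{r}_{T+1}^2\sqrt{G_T}$ gives exactly $\bar{r}_{T+1}\bigl(2\bar{d}_{T+1}+\bar{r}_{T+1}\bigr)\sqrt{G_T}$, as claimed.
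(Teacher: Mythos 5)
Your proof is correct and follows essentially the same route as the paper's: the same $\bar{r}_t$-cancellation, the same Abel summation with the negative end term $-\sqrt{G_T}d_{T+1}^2$ retained, the same factorization of $\bar d_{T+1}^2 - d_{T+1}^2$ controlled by the triangle inequality through $\mu_{\frac12}$, and the same telescoping bound for the gradient sum. The only cosmetic difference is that the paper bounds $(\bar r_S + \bar r_{T+1})(d_S + d_{T+1}) \le 4\bar r_{T+1}\bar d_{T+1}$ in one step, whereas you bound the two factors separately — these are equivalent.
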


\begin{proof}
    See Appendix \ref{app:additional-proofs-forward-flow-non-smooth-deterministic-adaptive}.
\end{proof}

\begin{proposition}
\label{prop:dog-wgd-bound-1-forward-flow}
    Suppose that Assumption 
    \ref{assumption:lsc} and \ref{assumption:geo-convex} hold. Let $(\mu_t)_{t\geq 0}$ be the sequence of measures defined according to \eqref{eq:forward-flow-1} - \eqref{eq:forward-flow-2}, with $(\eta_t)_{t\geq 1}$ defined as in \eqref{eq:dog-lr-forward-flow}. Let $(\tilde{\mu}_t)_{t\geq 0}$ be the sequence of measures defined according to \eqref{eq:tilde-mu-t-1-forward-flow} - \eqref{eq:tilde-mu-t-2-forward-flow}. Then, for all $t\leq T$, we have 
    \begin{equation}
    \label{eq:dog-wgd-bound-1-forward-flow}
        \mathcal{F}(\tilde{\mu}_t) - \mathcal{F}(\pi) =  \mathcal{O}\left(\frac{(d_1 + \bar{r}_{t+1})\sqrt{G_{t}}}{\sum_{s=1}^{t}\bar{r}_s/\bar{r}_{t+1}}\right).
    \end{equation}
\end{proposition}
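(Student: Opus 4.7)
The plan is to simply chain together the three preceding lemmas, and then convert the quantity $\bar{d}_{T+1}$ appearing in the resulting bound into a quantity depending only on $d_1$ and $\bar{r}_{T+1}$ via the triangle inequality. This is the natural route because \cref{lemma:tilde-mu-t-bound-1-forward-flow} reduces the convergence of the weighted average iterate $\tilde{\mu}_t$ to a weighted sum of functional suboptimalities along the trajectory, and \cref{lemma:tilde-mu-t-bound-2-forward-flow,lemma:tilde-mu-t-bound-2-forward-flow-v2} together upper bound precisely this weighted sum.

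More concretely, I would first combine \cref{lemma:tilde-mu-t-bound-1-forward-flow,lemma:tilde-mu-t-bound-2-forward-flow,lemma:tilde-mu-t-bound-2-forward-flow-v2} (applied with $T=t$) to obtain, for every $t \leq T$,
\begin{equation*}
    \mathcal{F}(\tilde{\mu}_t) - \mathcal{F}(\pi) \;\leq\; \frac{\bar{r}_{t+1}(2\bar{d}_{t+1} + \bar{r}_{t+1})\sqrt{G_t}}{\sum_{s=1}^{t}\bar{r}_s} \;=\; \frac{(2\bar{d}_{t+1} + \bar{r}_{t+1})\sqrt{G_t}}{\sum_{s=1}^{t}\bar{r}_s/\bar{r}_{t+1}}.
\end{equation*}
Next, I would bound $\bar{d}_{t+1}$ in terms of the quantities that appear in the statement. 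For any $s \leq t+1$, the triangle inequality in $(\mathcal{P}_2(\mathbb{R}^d),W_2)$ gives
\begin{equation*}
    d_s \;=\; W_2(\mu_{s-\frac{1}{2}},\pi) \;\leq\; W_2(\mu_{s-\frac{1}{2}},\mu_{\frac{1}{2}}) + W_2(\mu_{\frac{1}{2}},\pi) \;=\; r_s + d_1 \;\leq\; \bar{r}_{t+1} + d_1,
\end{equation*}
so taking the maximum over $s \leq t+1$ yields $\bar{d}_{t+1} \leq \bar{r}_{t+1} + d_1$. Plugging this back in gives $2\bar{d}_{t+1} + \bar{r}_{t+1} \leq 3\bar{r}_{t+1} + 2 d_1 = \mathcal{O}(d_1 + \bar{r}_{t+1})$, and the stated bound follows.

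There is essentially no obstacle here: all the hard work has already been done in the preceding lemmas, and the only genuinely new step is the triangle-inequality conversion from $\bar{d}_{t+1}$ to $d_1 + \bar{r}_{t+1}$, which is a single line. The only mild subtlety worth checking is that the two bounds in \cref{lemma:tilde-mu-t-bound-2-forward-flow,lemma:tilde-mu-t-bound-2-forward-flow-v2} can in fact be applied with $T$ replaced by an arbitrary $t \leq T$ (so that the resulting bound holds uniformly in $t$), but this is immediate since those lemmas hold for all $T \geq 1$.
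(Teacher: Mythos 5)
Your proof is correct and follows exactly the same route as the paper: chain Lemmas \ref{lemma:tilde-mu-t-bound-1-forward-flow}--\ref{lemma:tilde-mu-t-bound-2-forward-flow-v2} and then bound $\bar{d}_{t+1}\leq d_1+\bar{r}_{t+1}$ via the triangle inequality. No gaps.
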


\begin{proof}
    The result follows as an immediate consequence of Lemmas \ref{lemma:tilde-mu-t-bound-1-forward-flow} - \ref{lemma:tilde-mu-t-bound-2-forward-flow-v2}, using also the fact that $\bar{d}_{t+1} \leq d_1 + \bar{r}_{t+1}$.
\end{proof}

\begin{corollary}
\label{corr:dog-wgd-bound-2-forward-flow}
    Suppose that Assumption 
    \ref{assumption:lsc} and \ref{assumption:geo-convex} hold. Let $(\mu_t)_{t\geq 0}$ be the sequence of measures defined according to \eqref{eq:forward-flow-1} - \eqref{eq:forward-flow-2}, with $(\eta_t)_{t\geq 1}$ defined as in \eqref{eq:dog-lr-forward-flow}. Let $(\tilde{\mu}_t)_{t\geq 0}$ be the sequence of measures defined according to \eqref{eq:tilde-mu-t-1-forward-flow} - \eqref{eq:tilde-mu-t-2-forward-flow}. Let $D\geq d_1$, and define 
    \begin{equation}
        G_D = \sup_{\mu\in\mathcal{P}_2(\mathcal{X}_{D})} \|\zeta(\mu)\|_{L^2(\mu)}
    \end{equation}
    where $\smash{\mathcal{P}_2(\mathcal{X}_{D}) = \{\mu\in\mathcal{P}_2(\mathcal{X}):W_2(\mu,\mu_{\frac{1}{2}})\leq D\}}$. In addition, let $\smash{\tau\in\argmax_{1\leq t\leq T} \sum_{s=1}^{t} \bar{r}_s /\bar{r}_{t+1}}$. Then, on the event $\{\bar{r}_{T+1} \leq D\}\cup\{\bar{r}'_{T+1} \leq D\}$, it holds that
    \begin{equation}
        \mathcal{F}(\tilde{\mu}_{\tau}) - \mathcal{F}(\pi) = \mathcal{O}\left(\frac{DG_D}{\sqrt{T}}\log_{+}\left(\frac{D}{r_{\varepsilon}}\right)\right).
    \end{equation}
\end{corollary}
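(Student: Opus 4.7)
The plan is to turn Proposition \ref{prop:dog-wgd-bound-1-forward-flow} into the advertised rate by bounding the numerator $(d_1+\bar{r}_{t+1})\sqrt{G_t}$ from above and the denominator $\sum_{s=1}^{t}\bar{r}_s/\bar{r}_{t+1}$ from below, both evaluated at $t=\tau$.

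First I would control the numerator under the stability hypothesis. Since $D\geq d_1$ and $\bar{r}_{t+1}\leq \bar{r}_{T+1}\leq D$, the prefactor satisfies $d_1+\bar{r}_{t+1}\leq 2D$ for every $t\leq T$. For the gradient-sum $G_t$, the condition $\bar{r}'_{T+1}\leq D$ places every iterate $\mu_t$ inside $\mathcal{P}_2(\mathcal{X}_D)$, so $\|\zeta_t\|_{L^2(\mu_t)}\leq G_D$ and hence $G_T=\sum_{t=1}^{T}\|\zeta_t\|^2_{L^2(\mu_t)}\leq T G_D^{2}$. Combining these two estimates yields $(d_1+\bar{r}_{\tau+1})\sqrt{G_\tau}=\mathcal{O}(D\,G_D\sqrt{T})$.

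The technical heart of the proof, and its main obstacle, is a matching lower bound $\max_{1\leq t\leq T}\sum_{s=1}^{t}\bar{r}_s/\bar{r}_{t+1}=\Omega(T/\log_+(D/r_\varepsilon))$. The key point is that $(\bar{r}_t)_{t\geq 1}$ is non-decreasing with $\bar{r}_1\geq r_\varepsilon$ and (on the stability event) $\bar{r}_{T+1}\leq D$. I would cover $[r_\varepsilon,D]$ by $K=\lceil\log_2(D/r_\varepsilon)\rceil+1$ dyadic buckets $I_k=[2^{k-1}r_\varepsilon,2^{k}r_\varepsilon)$ and set $B_k=\{t\in\{1,\dots,T\}:\bar{r}_t\in I_k\}$. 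By pigeonhole some block $B_{k^\star}$ has $|B_{k^\star}|\geq T/K$. Taking $t^\star=\max B_{k^\star}$, monotonicity of $(\bar{r}_t)$ together with the dyadic structure forces $\bar{r}_s\geq \bar{r}_{t^\star+1}/2$ for every $s\in B_{k^\star}$, and hence
\begin{equation*}
\sum_{s=1}^{\tau}\frac{\bar{r}_s}{\bar{r}_{\tau+1}}\;\geq\;\sum_{s=1}^{t^\star}\frac{\bar{r}_s}{\bar{r}_{t^\star+1}}\;\geq\;\sum_{s\in B_{k^\star}}\frac{\bar{r}_s}{\bar{r}_{t^\star+1}}\;\geq\;\frac{|B_{k^\star}|}{2}\;\geq\;\frac{T}{2K}\;=\;\Omega\!\left(\frac{T}{\log_+(D/r_\varepsilon)}\right),
\end{equation*}
where the first inequality uses the defining maximization property of $\tau$. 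The degenerate case $D\leq r_\varepsilon$ (so $\log_+(D/r_\varepsilon)=0$) is trivial, since then $\bar{r}_t\equiv r_\varepsilon$ and the sum simply equals $t$.

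Plugging the numerator bound $\mathcal{O}(D\,G_D\sqrt{T})$ and the denominator bound $\Omega(T/\log_+(D/r_\varepsilon))$ into Proposition \ref{prop:dog-wgd-bound-1-forward-flow} evaluated at $t=\tau$ yields
\begin{equation*}
\mathcal{F}(\tilde{\mu}_\tau)-\mathcal{F}(\pi)\;=\;\mathcal{O}\!\left(\frac{D\,G_D\sqrt{T}}{T/\log_+(D/r_\varepsilon)}\right)\;=\;\mathcal{O}\!\left(\frac{D\,G_D}{\sqrt{T}}\log_+\!\Big(\frac{D}{r_\varepsilon}\Big)\right),
\end{equation*}
as claimed. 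Apart from the dyadic bucketing, every step is routine book-keeping. The bucketing argument follows the template of the DoG analysis in \citet{ivgi2023dog}, here transported to the Wasserstein-gradient-flow setting in which $\bar{r}_t$ measures Wasserstein half-step displacements rather than Euclidean distances between iterates.
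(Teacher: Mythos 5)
Your overall structure matches the paper's: bound the numerator by $\mathcal{O}(DG_D\sqrt{T})$ on the stability event, and bound the denominator by $\Omega(T/\log_+(D/r_\varepsilon))$. The numerator step is fine. For the denominator, the paper simply invokes \citet[Lemma~3]{ivgi2023dog} as a black box, whereas you attempt to reprove it via dyadic bucketing, and that is where the gap lies. Your key claim — that taking $t^\star=\max B_{k^\star}$ forces $\bar{r}_s\geq\bar{r}_{t^\star+1}/2$ for every $s\in B_{k^\star}$ — is false. You know $\bar{r}_s<2^{k^\star}r_\varepsilon$ for $s\in B_{k^\star}$, but since $t^\star$ is the \emph{last} index in the bucket, $\bar{r}_{t^\star+1}$ lands in a strictly higher bucket and can jump arbitrarily high (up to $D$). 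Concretely, if $\bar{r}_t\equiv r_\varepsilon$ for $t\leq T$ while $\bar{r}_{T+1}=D\gg r_\varepsilon$, then $B_1=\{1,\dots,T\}$, $t^\star=T$, and $\sum_{s\in B_1}\bar{r}_s/\bar{r}_{T+1}=Tr_\varepsilon/D$ is tiny, far below your claimed lower bound $|B_1|/2$; the true maximizer $\tau$ in this example is $T-1$, not $t^\star=T$, which your bucketing does not locate.

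The gap is fixable. One option is to take $t^\star=(\max B_{k^\star})-1$, so that $t^\star+1\in B_{k^\star}$ and hence $\bar{r}_{t^\star+1}<2^{k^\star}r_\varepsilon$, giving $\bar{r}_s\geq\bar{r}_{t^\star+1}/2$ for $s\in B_{k^\star}$ with $s\leq t^\star$ and thus $\sum_{s\leq t^\star}\bar{r}_s/\bar{r}_{t^\star+1}\geq(|B_{k^\star}|-1)/2$; you then need to handle the degenerate case $|B_{k^\star}|=1$ separately. Alternatively, argue as \citet[Lemma~3]{ivgi2023dog} actually does: if $\sum_{s\leq t}\bar{r}_s/\bar{r}_{t+1}\leq M$ for every $t\leq T$, then $\bar{r}_{t+1}>\frac{1}{M}\sum_{s\leq t}\bar{r}_s$ drives geometric growth of the partial sums, which contradicts $\bar{r}_{T+1}\leq D$ unless $M\gtrsim T/\log_+(D/r_\varepsilon)$. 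With either fix, the rest of your argument goes through and recovers the stated rate.
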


\begin{proof}
    See Appendix \ref{app:additional-proofs-forward-flow-non-smooth-deterministic-adaptive}.
\end{proof}

\paragraph{Uniform Averaging} In fact, it is also possible to derive guarantees similar to those obtained in Proposition \ref{prop:dog-wgd-bound-1-forward-flow} and Corollary \ref{corr:dog-wgd-bound-2-forward-flow} for the unweighted average $\bar{\mu}_T$ defined in \eqref{eq:bar-mu-t-forward-flow-1-first-def} - \eqref{eq:bar-mu-t-forward-flow-2-first-def}. 

\begin{proposition}
\label{prop:dog-wgd-bound-1-forward-flow-uniform}
    Suppose that Assumption 
    \ref{assumption:lsc} and \ref{assumption:geo-convex} hold. Let $(\mu_t)_{t\geq 0}$ be the sequence of measures defined according to \eqref{eq:forward-flow-1} - \eqref{eq:forward-flow-2}, with $(\eta_t)_{t\geq 1}$ defined as in \eqref{eq:dog-lr-forward-flow}. Suppose there exists $0<\eta_{\mathrm{max}}<\infty$ such that $\eta_t\leq \eta_{\mathrm{max}}$ for all $1\leq t\leq T$. Let $(\bar{\mu}_t)_{t\geq 0}$ be the sequence of measures defined according to \eqref{eq:bar-mu-t-forward-flow-1-first-def} - \eqref{eq:bar-mu-t-forward-flow-2-first-def}.  Then, for all $T\geq 1$, we have
    \begin{equation}
    \label{eq:dog-wgd-bound-1-forward-flow-uniform}
        \mathcal{F}(\bar{\mu}_T) - \mathcal{F}(\pi) =  \mathcal{O}\left(\frac{(d_1\log_{+}\frac{\bar{r}_{T+1}}{r_{\varepsilon}} + \bar{r}_{T+1})\sqrt{G_{T}}}{T}\right).
    \end{equation}
\end{proposition}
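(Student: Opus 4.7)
The plan is to combine the uniform-averaging inequality
\[
\mathcal{F}(\bar{\mu}_T)-\mathcal{F}(\pi) \leq \frac{1}{T}\sum_{t=1}^T[\mathcal{F}(\mu_t)-\mathcal{F}(\pi)]
\]
from Lemma \ref{lemma:average} with a summed version of the per-iteration EVI bound from Corollary \ref{corollary:evi-forward-flow}, now instantiated at the variable step size $\eta_t$ prescribed by \eqref{eq:dog-lr-forward-flow}. This gives, writing $D_t := W_2^2(\mu_{t-\frac{1}{2}},\pi)$,
\begin{equation*}
\sum_{t=1}^T [\mathcal{F}(\mu_t)-\mathcal{F}(\pi)] \leq \underbrace{\sum_{t=1}^T \frac{D_t - D_{t+1}}{2\eta_t}}_{(\textnormal{I})} + \underbrace{\sum_{t=1}^T \frac{\eta_t}{2}\int_{\mathbb{R}^d}\|\zeta_t(x)\|^2\,\mathrm{d}\mu_t(x)}_{(\textnormal{II})}.
\end{equation*}

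The ``gradient-norm'' sum $(\textnormal{II})$ can be controlled exactly as in the proof of Lemma \ref{lemma:tilde-mu-t-bound-2-forward-flow-v2}: substituting $\eta_t = \bar{r}_t/\sqrt{G_t}$, using the monotonicity $\bar{r}_t \leq \bar{r}_{T+1}$, and invoking the standard integral estimate $\sum_{t=1}^T (G_t-G_{t-1})/\sqrt{G_t} \leq 2\sqrt{G_T}$ yields $(\textnormal{II}) = \mathcal{O}(\bar{r}_{T+1}\sqrt{G_T})$, contributing precisely the $\bar{r}_{T+1}\sqrt{G_T}$ term to the final bound.

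The main obstacle is the quasi-telescoping sum $(\textnormal{I})$, which does not collapse neatly because $\eta_t$ varies with $t$. I would rewrite it via summation by parts as
\begin{equation*}
(\textnormal{I}) = \frac{D_1}{2\eta_1} - \frac{D_{T+1}}{2\eta_T} + \sum_{t=2}^T D_t\Bigl(\frac{1}{2\eta_t}-\frac{1}{2\eta_{t-1}}\Bigr),
\end{equation*}
discard the non-positive boundary term, and bound $D_t \leq (d_1+\bar{r}_t)^2 = d_1^2 + 2d_1\bar{r}_t + \bar{r}_t^2$ via the triangle inequality $d_t \leq d_1 + r_t$. The $\bar{r}_t^2$ and cross $2d_1\bar{r}_t$ contributions close into clean telescopes of order $\mathcal{O}(\bar{r}_{T+1}\sqrt{G_T})$ and $\mathcal{O}(d_1\sqrt{G_T})$ by exploiting the identity $\bar{r}_t/\eta_t = \sqrt{G_t}$ (mirroring the mechanism behind Lemma \ref{lemma:tilde-mu-t-bound-2-forward-flow-v2}). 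The residual $d_1^2$ contribution, together with the boundary term $D_1/(2\eta_1)$, is the source of the $\log_+(\bar{r}_{T+1}/r_\varepsilon)$ factor: since $\bar{r}_t \geq r_\varepsilon$ is non-decreasing, a careful accounting---via a layer-cake decomposition of $[T]$ into regions where $\bar{r}_t$ doubles, analogous to the argument used in the proof of Corollary \ref{corr:dog-wgd-bound-2-forward-flow}---yields a \emph{logarithmic} rather than polynomial dependence on $\bar{r}_{T+1}/r_\varepsilon$. The assumption $\eta_t \leq \eta_{\max}$ is used to prevent $D_1/(2\eta_1)$ from being pathologically large when $\bar{r}_1$ happens to equal $r_\varepsilon$, and, more broadly, to ensure stability of the iterates along the trajectory.

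The hardest step will be the bookkeeping required to extract this logarithmic (rather than polynomial) dependence on $\bar{r}_{T+1}/r_\varepsilon$ from the $d_1^2$ residual; this appears to be a direct analogue of the trick producing the $\log_+(D/r_\varepsilon)$ factor in Corollary \ref{corr:dog-wgd-bound-2-forward-flow}. Once the bounds on $(\textnormal{I})$ and $(\textnormal{II})$ are in hand, summing them and dividing by $T$ yields the advertised rate.
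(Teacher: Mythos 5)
Your high-level plan (Lemma~\ref{lemma:average} plus a per-iteration EVI bound with variable step size, then split into a quasi-telescoping distance term and a gradient-norm term) starts the same way the paper does, but the two proofs diverge sharply at the handling of what you call~$(\textnormal{I})$, and I think your route has a genuine gap there. The paper does \emph{not} perform summation by parts on the unweighted sum $\sum_{t}\frac{D_t-D_{t+1}}{2\eta_t}$. Instead, it partitions $[1,T]$ into doubling blocks $\{\tau_{k-1},\dots,\tau_k-1\}$ on which $\bar{r}_t$ varies by at most a factor of~$2$, rewrites the \emph{unweighted} sum on each block as a \emph{weighted} sum divided by $\bar{r}_{\tau_{k-1}}$, applies the already-proved weighted bound from Lemma~\ref{lemma:tilde-mu-t-bound-2-forward-flow-v2} block-by-block, and then pays a $K=\mathcal{O}(\log_{+}(\bar{r}_{T+1}/r_{\varepsilon}))$ factor for the number of blocks. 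No Abel summation against $1/\eta_t$ is ever performed.

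The reason this matters is that your Abel coefficients $\frac{1}{2\eta_t}-\frac{1}{2\eta_{t-1}}$ have ambiguous sign. In Lemma~\ref{lemma:tilde-mu-t-bound-2-forward-flow-v2} the Abel coefficient is $\sqrt{G_t}-\sqrt{G_{t-1}}\geq 0$, because the weighted step size satisfies $\bar{r}_t/\eta_t=\sqrt{G_t}$, which is monotone; one can then freely bound $d_t^2\leq \bar{d}_{T+1}^2$ and telescope. For the unweighted version, $1/\eta_t=\sqrt{G_t}/\bar{r}_t$ is a ratio of two nondecreasing sequences and is not monotone. So after Abel summation you cannot bound $D_t$ from above and collapse the sum: you are left with the positive variation of $1/\eta_t$, which has no useful a~priori bound. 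Your assertion that the $\bar{r}_t^2$ and $2d_1\bar{r}_t$ pieces of the bound $D_t\leq(d_1+\bar{r}_t)^2$ ``close into clean telescopes'' by exploiting $\bar{r}_t/\eta_t=\sqrt{G_t}$ does not check out: for instance $\bar{r}_t^2\cdot a_t=\bar{r}_t\sqrt{G_t}/2$, but $\bar{r}_t^2\cdot a_{t-1}\neq \bar{r}_{t-1}^2\cdot a_{t-1}$, so the cross terms do not cancel. You correctly intuit that a doubling / layer-cake decomposition is what produces the logarithm, but the paper applies that decomposition \emph{before} any Abel step, to the full sum, precisely so that on each block one can work with the well-behaved weighted quantity.

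You have also mislocated the role of the assumption $\eta_t\leq \eta_{\max}$. It is not there to tame the boundary term $D_1/(2\eta_1)$, nor is it a generic ``stability'' hypothesis. It enters in a very specific place that is special to the forward-flow discretization: the half-step comparison $W_2(\mu_{t-\frac12},\mu_t)\leq\sqrt{2\eta_{t-1}d}$ coming from the heat step. This is used to prove the growth bound $\bar{r}_{t+1}\leq\big(2+\sqrt{2\eta_{\max}d}/r_{\varepsilon}\big)\bar{r}_t$, which in turn controls the ratio $\bar{r}_{\tau_k}/\bar{r}_{\tau_k-1}$ needed when converting the weighted block bound back into an unweighted one. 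Without $\eta_{\max}$ the heat step could cause $\bar{r}_{t+1}/\bar{r}_t$ to blow up, and the block-to-block accounting would fail. (In the forward-Euler analogue, Proposition~\ref{prop:dog-wgd-bound-1-uniform}, this difficulty does not arise, because there is no heat step and $\bar{r}_{t+1}\leq 2\bar{r}_t$ automatically; that is why the forward-Euler version needs no $\eta_{\max}$ hypothesis.)
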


\begin{proof}
    See Appendix \ref{app:additional-proofs-forward-flow-non-smooth-deterministic-adaptive}.
\end{proof}

\begin{corollary}
\label{corr:dog-wgd-bound-2-forward-flow-uniform}
    Suppose that Assumption 
    \ref{assumption:lsc} and \ref{assumption:geo-convex} hold. Let $(\mu_t)_{t\geq 0}$ be the sequence of measures defined according to \eqref{eq:forward-flow-1} - \eqref{eq:forward-flow-2}, with $(\eta_t)_{t\geq 1}$ defined as in \eqref{eq:dog-lr-forward-flow}. Suppose there exists $0<\eta_{\mathrm{max}}<\infty$ such that $\eta_t\leq \eta_{\mathrm{max}}$ for all $1\leq t\leq T$.  Let $(\bar{\mu}_t)_{t\geq 0}$ be the sequence of measures defined according to \eqref{eq:bar-mu-t-forward-flow-1-first-def} - \eqref{eq:bar-mu-t-forward-flow-2-first-def}. Let $D\geq d_1$, and define 
    \begin{equation}
        G_D = \sup_{\mu\in\mathcal{P}_2(\mathcal{X}_{D})} \|\zeta(\mu)\|_{L^2(\mu)}
    \end{equation}
    where $\smash{\mathcal{P}_2(\mathcal{X}_{D}) = \{\mu\in\mathcal{P}_2(\mathcal{X}):W_2(\mu,\mu_{\frac{1}{2}})\leq D\}}$. Then, on the event $\{\bar{r}_{T+1} > D\}\cup\{\bar{r}'_{T+1} \leq D\}$, it holds that
    \begin{equation}
        \mathcal{F}(\bar{\mu}_{T}) - \mathcal{F}(\pi) = \mathcal{O}\left(\frac{DG_D}{\sqrt{T}}\log_{+}\left(\frac{D}{r_{\varepsilon}}\right)\right).
    \end{equation}
\end{corollary}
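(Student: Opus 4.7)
The plan is to reduce the statement to Proposition \ref{prop:dog-wgd-bound-1-forward-flow-uniform}, which already delivers an iterate-dependent bound
\[
\mathcal{F}(\bar{\mu}_T) - \mathcal{F}(\pi) = \mathcal{O}\!\left(\frac{(d_1\log_{+}\tfrac{\bar{r}_{T+1}}{r_{\varepsilon}} + \bar{r}_{T+1})\sqrt{G_T}}{T}\right),
\]
and then to translate the boundedness assumption encoded by the event into uniform control of the three data-dependent quantities $\bar r_{T+1}$, $d_1$, and $G_T$. In spirit, this is a direct analogue of the passage from Proposition \ref{prop:dog-wgd-bound-1-forward-flow} to Corollary \ref{corr:dog-wgd-bound-2-forward-flow}.

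First I would work on the event $\{\bar r_{T+1}\le D\}$. On this event, every half-step iterate satisfies $W_2(\mu_{s-\frac{1}{2}},\mu_{\frac{1}{2}})\le D$, so by definition of $\mathcal{P}_2(\mathcal{X}_D)$ we have $\mu_s\in\mathcal{P}_2(\mathcal{X}_D)$ up to the Gaussian smoothing step; a short triangle-inequality step, controlled by $\eta_t\le \eta_{\max}$ together with $\mathbb{E}\|z_t\|^2=d$, shows that the full-step iterates remain in a ball of comparable radius. It then follows from the definition of $G_D$ that $\|\zeta_s\|_{L^2(\mu_s)}\le G_D$ for every $1\le s\le T$, hence $G_T\le T G_D^2$ and $\sqrt{G_T}\le \sqrt{T}\,G_D$. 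Combined with $d_1\le D$ (by hypothesis) and $\bar r_{T+1}\le D$, this yields
\[
d_1 \log_{+}\!\tfrac{\bar r_{T+1}}{r_\varepsilon} + \bar r_{T+1} \;\le\; D\log_{+}\!\tfrac{D}{r_\varepsilon} + D \;=\; \mathcal{O}\!\bigl(D\log_{+}\tfrac{D}{r_\varepsilon}\bigr),
\]
and substitution back into Proposition \ref{prop:dog-wgd-bound-1-forward-flow-uniform} gives precisely the claimed $\mathcal{O}(D G_D T^{-1/2}\log_{+}(D/r_\varepsilon))$ rate.

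The second case, $\{\bar r'_{T+1}\le D\}$, is handled by transferring control of the full-step iterates to the half-step iterates used in the step-size definition \eqref{eq:dog-lr-forward-flow}. Since $\mu_{s-\frac{1}{2}}$ is obtained from $\mu_{s-1}$ by the deterministic push-forward $\mathrm{id}-\eta_{s-1}\zeta_{s-1}$, and $\mu_s$ from $\mu_{s-\frac{1}{2}}$ by convolution with $\mathcal{N}(0,2\eta_{s-1}\mathbf{I}_d)$, two applications of the triangle inequality for $W_2$ yield
\[
W_2(\mu_{\frac{1}{2}},\mu_{s-\frac{1}{2}}) \;\le\; W_2(\mu_{\frac{1}{2}},\mu_{s-1}) + \eta_{s-1}\|\zeta_{s-1}\|_{L^2(\mu_{s-1})},
\]
so $\bar r_{T+1}$ is at most $\bar r'_{T+1}$ plus a term of order $\eta_{\max}G_D$. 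Choosing constants so this additional term is absorbed into $D$ (or equivalently adjusting $D$ by a constant factor), we again land on the regime analysed in the first case.

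The step I expect to be the main obstacle is the triangle-inequality comparison between $\bar r_{T+1}$ and $\bar r'_{T+1}$: one must ensure that the uniform bound $\eta_t\le\eta_{\max}$ and the a priori uniform bound $\|\zeta_s\|_{L^2(\mu_s)}\le G_D$ can be coupled without circularity, since the latter is itself predicated on the iterates lying in $\mathcal{P}_2(\mathcal{X}_D)$. The cleanest way to close this loop is via an inductive/bootstrapping argument on $t$: assuming $\bar r_t\le D$ (say) up to time $t-1$ propagates $\|\zeta_{t-1}\|_{L^2(\mu_{t-1})}\le G_D$, which together with the one-step triangle inequality above keeps $\bar r_t$ under control; this is why the event is stated in terms of both $\bar r_{T+1}$ and $\bar r'_{T+1}$. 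Once this bootstrap is in place, the remaining arithmetic is routine substitution into Proposition \ref{prop:dog-wgd-bound-1-forward-flow-uniform}.
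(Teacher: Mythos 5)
Your proposal treats the stated event as a genuine disjunction and tries to prove the bound separately on $\{\bar r_{T+1}\le D\}$ and on $\{\bar r'_{T+1}\le D\}$. The paper's proof does neither: it works directly on the \emph{intersection} $\{\bar r_{T+1}\le D\}\cap\{\bar r'_{T+1}\le D\}$ (the corollary statement has a typo, writing $>$ and $\cup$ where the proof uses $\le$ and $\cap$). On that intersection the two parts of the event control precisely complementary quantities: $\bar r_{T+1}\le D$, together with $d_1\le D$, controls the numerator $d_1\log_{+}(\bar r_{T+1}/r_\varepsilon)+\bar r_{T+1}=\mathcal{O}(D\log_{+}(D/r_\varepsilon))$ in Proposition~\ref{prop:dog-wgd-bound-1-forward-flow-uniform}, while $\bar r'_{T+1}\le D$ means by definition that $W_2(\mu_{\frac12},\mu_t)\le D$ for every $1\le t\le T$, i.e.\ $\mu_t\in\mathcal{P}_2(\mathcal{X}_D)$, so $\bar g_t\le G_D$ and hence $G_T\le TG_D^2$ \emph{with no triangle-inequality step at all}. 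The quantity $r'_t$ is defined precisely so that this implication is immediate, which is the point you missed. Once both controls are available, the result is routine substitution; there is no circularity to bootstrap around, because one is simply conditioning on the realized event rather than trying to prove a priori that the iterates stay confined (that is the separate job of Proposition~\ref{prop:iterate-stability-forward-flow} with the tamed step size).

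Your attempt to close each case individually does contain genuine gaps. In your first case, the half-step control $\bar r_{T+1}\le D$ only places the half-step iterates within $D$ of $\mu_{\frac12}$; adding the Gaussian convolution pushes the full-step iterates out to radius roughly $D+\sqrt{2\eta_{\max}d}$, so the gradient bound you obtain is $G_{D+\sqrt{2\eta_{\max}d}}$, not $G_D$ -- a different constant than the corollary asserts. In your second case, the triangle inequality gives $\bar r_{T+1}\le D+\eta_{\max}G_D$, which you propose to absorb by ``adjusting $D$ by a constant factor,'' but $\eta_{\max}G_D$ is not a constant multiple of $D$ and cannot be absorbed without changing the stated bound. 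Both gaps disappear the moment you recognize the event as an intersection and use $\bar r'_{T+1}\le D$ for the gradient control.
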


\begin{proof}
See Appendix \ref{app:additional-proofs-forward-flow-non-smooth-deterministic-adaptive}.
\end{proof}

\begin{tcolorbox}[enhanced,
  colback=white,
  frame hidden,
  borderline north={0.5pt}{0pt}{black!70},
  borderline south={0.5pt}{0pt}{black!70},
  arc=2pt,             
  left=0pt,            
  right=0pt,           
  top=4pt,             
  bottom=4pt,           
  parbox=false,
  before skip=10pt, 
  after skip=15pt]
\begin{example1}[\textsc{Sampling from a Target Probability Distribution}]
Recall again our running example: $\mathcal{F}(\mu) = \mathrm{KL}(\mu\|\pi)$, with $\pi\propto e^{-U}$. In this case, the \textsc{Fuse} step size schedule $(\eta_t)_{t\geq 1}$ for the forward-flow discretization of the WGF (i.e., ULA) is given by
\begin{equation}
    \eta_t = \frac{\max\left[r_{\varepsilon}, \max_{1\leq s\leq t}W_2(\mu_{\frac{1}{2}}, \mu_{s-\frac{1}{2}})\right]}{\sqrt{\sum_{s=1}^t \int_{\mathbb{R}^d}\|\nabla U(x)\|^2\,\mathrm{d}\mu_s(x)}}.
    \label{eq:dog-step-size-lmc}
\end{equation}
Due to Corollary \ref{corr:dog-wgd-bound-2-forward-flow} and Corollary \ref{corr:dog-wgd-bound-2-forward-flow-uniform}, we then have the following guarantees. For any $D\geq W_2(\mu_{\frac{1}{2}},\pi)$, on the event $\{\bar{r}_{T+1} \leq D\}\cup \{\bar{r}_{T+1}'>D\}$, it holds that
\begin{align}
    \mathrm{KL}\left(\tilde{\mu}_{\tau}\|\pi\right) &= \mathcal{O}\left(\frac{DG_D}{\sqrt{T}}\log_{+}\left(\frac{D}{r_{\varepsilon}}\right)\right)
\intertext{and}
    \mathrm{KL}\left(\bar{\mu}_{T}\|\pi\right) &= \mathcal{O}\left(\frac{DG_D}{\sqrt{T}}\log_{+}\left(\frac{D}{r_{\varepsilon}}\right)\right)
\end{align}
where $\tilde{\mu}_T$ is the weighted average defined in \eqref{eq:tilde-mu-t-1-forward-flow} - \eqref{eq:tilde-mu-t-2-forward-flow}, $\bar{\mu}_T$ is the uniform average defined in \eqref{eq:bar-mu-t-forward-flow-1-first-def} - \eqref{eq:bar-mu-t-forward-flow-2-first-def}, and where 
\begin{equation}
    G_D = \sup_{\mu\in\mathcal{P}_2(\mathcal{X}_{D})}  [\int \| \nabla U(x)\|^2\mathrm{d}\mu(x)]^{\frac{1}{2}}.
\end{equation}
Thus, in particular, our algorithm matches the convergence rate of optimally tuned ULA, cf. \eqref{eq:optimal-average-bound-forward-flow-kl-bounded-grad}, up to an additional log-factor.

\vspace{2mm}
\paragraph{The Gaussian Case}
In general, it is still not possible to compute  $(\eta_t)_{t\geq 1}$ directly,  since both the Wasserstein distance  $\smash{W_2(\mu_{\frac{1}{2}},\mu_{i-\frac{1}{2}})}$
and the expectations $\int_{\mathbb{R}^d} [\cdot]\,\mathrm{d}\mu_s(x)$ are unknown. In the case where $\mu_0 =\mathcal{N}(m_0,\Sigma_0)$ and $\pi = \mathcal{N}(m_{\pi},\Sigma_{\pi})$ are both Gaussian, however, we can compute everything in closed form. In particular, following our previous calculations, cf. \eqref{eq:ula-gaussian-mean} - \eqref{eq:grad-lmc-gaussian-v1}, we have
\begin{equation}
    \eta_t = \frac{\max_{1\leq s\leq t}\left[ \sqrt{\|m_{\frac{1}{2}} -m_{s-\frac{1}{2}}\|^2 + \mathrm{Tr}\big(\Sigma_{\frac{1}{2}} + \Sigma_{s-\frac{1}{2}} - 2(\Sigma_{\frac{1}{2}}^{\frac{1}{2}}\Sigma_{s-\frac{1}{2}}\Sigma_{\frac{1}{2}}^{\frac{1}{2}})^{\frac{1}{2}}\big)},r_{\varepsilon}\right]}{\sqrt{\sum_{s=1}^t \left(\mathrm{Tr}(\Sigma_{\pi}^{-2}\Sigma_s) + \|\Sigma_{\pi}^{-1}\left(m_s - m_{\pi}\right)\|^2\right)}}.
    \label{eq:dog-step-size-lmc-gaussian}
\end{equation}

\end{example1}
\end{tcolorbox}

\paragraph{Iterate Stability Bound.} The results in the previous section are immediately useful when $\mathcal{X}\subseteq\mathbb{R}^d$ and hence $\mathcal{P}_2(\mathcal{X})\subseteq\mathcal{P}_2(\mathbb{R}^d)$ is bounded, and thus the iterates $(\mu_t)_{t\geq 0}$ generated by the forward-flow discretization of the WGF are also bounded. In practice, the iterates remain stable even when this is not the case. Nonetheless, it remains useful to define a variant of the dynamic step size formula $(\eta_t)_{t\geq 1}$ in \eqref{eq:dog-lr-forward-flow} which ensures that the iterates $(\mu_t)_{t\geq 0}$ remain bounded. To this end, we now define 
\begin{equation}
    \eta_t = \frac{\max\Big[r_{\varepsilon},\max_{1\leq s\leq t} W_2(\mu_{\frac{1}{2}},\mu_{s-\frac{1}{2}})\Big]}{\sqrt{8^4\log_{+}^2 (1+ t\bar{g}_t^2/\bar{g}_1^2)(\sum_{s=1}^{t-1} \int_{\mathbb{R}^d}\|\zeta_s(x)\|^2\,\mathrm{d}\mu_s(x) + 16\bar{g}_t^2)}}:= \frac{\bar{r}_t}{\sqrt{G'_{t}}} \label{eq:tamed-dog-lr-forward-flow}
\end{equation}
where we have adopted the convention that $\sum_{i=1}^{0}[\cdot] = 0$, we recall that $\bar{g}_t = \max_{1\leq s \leq t} \|\zeta_s\|_{L^2(\mu_s)}$, and in the second equality we have defined 
\begin{equation}
    G'_t = 8^4\log_{+}^2 \left(1+ \frac{t\bar{g}_t^2}{\bar{g}_1^2}\right)\left(G_{t-1} + 16\bar{g}_t^2\right).
\end{equation}
In this case, we have the following iterate stability guarantee.

\begin{proposition}
\label{prop:iterate-stability-forward-flow}
    Suppose that Assumption 
    \ref{assumption:lsc} and \ref{assumption:geo-convex} hold. Let $(\mu_t)_{t\geq 0}$ be the sequence of measures defined according to \eqref{eq:forward-flow-1} - \eqref{eq:forward-flow-2}, with $(\eta_t)_{t\geq 1}$ defined as in \eqref{eq:tamed-dog-lr-forward-flow}. Let $(\tilde{\mu}_t)_{t\geq 0}$ be the sequence of measures defined according to \eqref{eq:tilde-mu-t-1-forward-flow} - \eqref{eq:tilde-mu-t-2-forward-flow}. In addition, let $r_{\varepsilon}\leq 3d_1$. Then, for all $T\geq 1$, $\bar{r}_T\leq 3d_1$.
\end{proposition}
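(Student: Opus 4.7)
The plan is to argue by strong induction on $T$. The base case $T=1$ holds because $r_1 = W_2(\mu_{\frac{1}{2}},\mu_{\frac{1}{2}}) = 0$, so $\bar{r}_1 = r_{\varepsilon} \leq 3d_1$ by hypothesis. For the inductive step, suppose $\bar{r}_s \leq 3d_1$ for all $s \leq T$; since $\bar{r}_{T+1} = \max(\bar{r}_T, r_{T+1}, r_{\varepsilon})$, it suffices to show $r_{T+1} \leq 3d_1$. By the triangle inequality for $W_2$,
\[
r_{T+1} = W_2(\mu_{\frac{1}{2}}, \mu_{T+\frac{1}{2}}) \leq W_2(\mu_{\frac{1}{2}}, \pi) + W_2(\pi, \mu_{T+\frac{1}{2}}) = d_1 + d_{T+1},
\]
so the goal reduces to proving $d_{T+1} \leq 2 d_1$.

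To control $d_{T+1}$, I would apply Corollary \ref{corollary:evi-forward-flow} (with the variable step size $\eta_t$ in place of $\eta$) at each iteration and use $\mathcal{F}(\mu_t) \geq \mathcal{F}(\pi)$, which gives the one-step recursion $d_{t+1}^2 \leq d_t^2 + \eta_t^2 \|\zeta_t\|_{L^2(\mu_t)}^2$. Telescoping from $t=1$ to $T$ and substituting $\eta_t = \bar{r}_t/\sqrt{G'_t}$ yields
\[
d_{T+1}^2 \leq d_1^2 + \sum_{t=1}^T \frac{\bar{r}_t^2 \|\zeta_t\|_{L^2(\mu_t)}^2}{G'_t} \leq d_1^2 + 9 d_1^2 \sum_{t=1}^T \frac{\|\zeta_t\|_{L^2(\mu_t)}^2}{G'_t},
\]
where the second inequality uses the inductive hypothesis $\bar{r}_t^2 \leq 9d_1^2$ for $t\leq T$. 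Hence it suffices to establish the purely deterministic inequality
\[
\sum_{t=1}^T \frac{\|\zeta_t\|_{L^2(\mu_t)}^2}{G'_t} \leq \frac{1}{3},
\]
since this gives $d_{T+1}^2 \leq 4 d_1^2$, hence $d_{T+1} \leq 2 d_1$, and closes the induction.

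The main obstacle is this last bound, which is exactly where the specific form of the tamed denominator is used. The structural ingredients are (i) $G_{t-1} + 16 \bar{g}_t^2 \geq G_t$, since $g_t^2 \leq \bar{g}_t^2$, so $G'_t \geq 8^4 \log_{+}^2(1 + t\bar{g}_t^2/\bar{g}_1^2)\, G_t$; (ii) the standard telescoping estimate $\sum_t g_t^2/G_t \leq 1 + \log(G_T/g_1^2)$; and (iii) the monotonicity of $\bar{g}_t$, which allows one to lower-bound the logarithmic term by something that grows with $t$ and with $G_T$. My plan is to prove this sum bound via a technical lemma in the spirit of \citet[Lemma 1]{ivgi2023dog}: partition the index set $[T]$ into geometric "levels" on which $\bar{g}_t$ is roughly constant, apply estimate (ii) within each level (using that the log factor in the denominator is essentially constant there), and sum the resulting contributions as a convergent geometric-like series. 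The quadratic logarithmic factor in $G'_t$, together with the large constant $8^4$, is precisely what makes the resulting universal constant no larger than $1/3$. Once the sum bound is in hand, the induction closes and the proposition follows.
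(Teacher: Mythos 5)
Your overall strategy is exactly the paper's: strong induction on $T$, the base case $\bar r_1=r_\varepsilon\le3d_1$, the one-step recursion $d_{t+1}^2-d_t^2\le\eta_t^2\|\zeta_t\|^2_{L^2(\mu_t)}$ obtained from Corollary~\ref{corollary:evi-forward-flow} (dropping the nonpositive $\mathcal F(\pi)-\mathcal F(\mu_t)$ term), telescoping, invoking the inductive hypothesis $\bar r_k^2\le9d_1^2$ to pull out $9d_1^2$, reducing everything to a deterministic sum bound, and finally closing with the Wasserstein triangle inequality $r_{T+1}\le d_1+d_{T+1}\le3d_1$. All of that is right.

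The gap is in how you propose to prove the sum bound. Your plan is to partition $[T]$ into levels on which $\bar g_t$ is roughly constant and treat the logarithmic factor $\log_+^2\!\big(1+t\bar g_t^2/\bar g_1^2\big)$ as ``essentially constant there.'' That step fails, because the log factor depends on $t$, not just on $\bar g_t$: if $\bar g_t=\bar g_1$ for all $t$ (a perfectly legal scenario), there is a single level, the log argument ranges from $2$ up to $1+T$, and replacing it by its value at the start of the level loses the decay you need. Ingredient~(i) as you state it, $G'_t\ge 8^4\log_+^2\!\big(1+t\bar g_t^2/\bar g_1^2\big)G_t$, is also too lossy: the log argument is not yet tied to the running sum $G_t$, which is what makes the integral test converge. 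The fix is the chain the paper uses: since $\bar g_k$ is nondecreasing,
\begin{equation}
1+\frac{t\bar g_t^2}{\bar g_1^2}
=\frac{\bar g_1^2+\bar g_t^2+(t-1)\bar g_t^2}{\bar g_1^2}
\ge \frac{\sum_{i=1}^{t}\bar g_i^2+\bar g_t^2}{\bar g_1^2}
\ge \frac{G_t+\bar g_t^2}{\bar g_1^2},
\end{equation}
and likewise $G_{t-1}+16\bar g_t^2\ge G_t+\bar g_t^2$. This puts the denominator in the form $A_t\log_+^2(A_t/\bar g_1^2)$ with $A_t=G_t+\bar g_t^2$ nondecreasing and $A_t-A_{t-1}\ge g_t^2$, so that $\sum_t (G_t-G_{t-1})/\big(A_t\log_+^2(A_t/\bar g_1^2)\big)\le\sum_t(A_t-A_{t-1})/\big(A_t\log_+^2(A_t/\bar g_1^2)\big)$ is controlled by $\int du/(u\log_+^2 u)<\infty$ — this is precisely \citet[Lemma~6]{ivgi2023dog}, which is what the paper invokes (your citation of Lemma~1 is a slip). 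Once you replace the level-partition sketch by this monotonicity chain plus the Ivgi--Hinder--Carmon lemma, the argument closes; your targeted bound of $1/3$ is then comfortably met, as the sum is in fact $O(1/8^4)$.
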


\begin{proof}
    See Appendix \ref{app:additional-proofs-forward-flow-non-smooth-deterministic-adaptive}.
\end{proof}

\subsubsection{Stochastic Case: Adaptive Step Size}
\label{sec:stochastic-theory}
We now turn our attention to the stochastic variant of the forward-flow discretization of the WGF, cf. \eqref{eq:stochastic-forward-flow-1} - \eqref{eq:stochastic-forward-flow-2}, or equivalently \eqref{eq:stochastic-forward-flow-lagrangian-1} - \eqref{eq:stochastic-forward-flow-lagrangian-2}. In this case, the adaptive step size is given by
\begin{equation}
\tcboxmath[colback=black!5,colframe=black!15,boxrule=0.4pt, arc=2pt, left=8pt, right=8pt, top=6pt, bottom=6pt]{
    \eta_t = \frac{\max\left[r_{\varepsilon},\max_{1\leq s\leq t} W_2(\mu_{\frac{1}{2}},\mu_{s-\frac{1}{2}})\right]}{\sqrt{\sum_{s=1}^{t} \int_{\mathbb{R}^d}\|\hat{\zeta}_s(x)\|^2\,\mathrm{d}\mu_s(x)}}
}. \label{eq:dog-lr-stoc-forward-flow}
\end{equation} 
Similar to before, cf. \eqref{eq:defs1-forward-flow} - \eqref{eq:defs2-forward-flow}, it will be convenient to define some additional notation. In particular, we will frequently make use of
\begin{alignat}{3}
    &r_t = W_2(\mu_{\frac{1}{2}},\mu_{t-\frac{1}{2}}),\quad &&\bar{r}_t = \max_{1\leq s \leq t}\left[r_s,r_\varepsilon\right],\quad&&G_t= \sum_{s=1}^t \int_{\mathbb{R}^d} \|\hat{\zeta}_s(x)\|^2\,\mathrm{d}\mu_s(x),\label{eq:defs1-stoc-forward-flow} \\
    &r'_t = W_2(\mu_{\frac{1}{2}},\mu_{t}),\quad &&\bar{r}'_t = \max_{1\leq s \leq t}\left[r'_s,r_\varepsilon\right],\label{eq:defs1-1-stoc-forward-flow} \\[1mm]
    &d_t = W_2(\mu_{t-\frac{1}{2}},\pi),\quad &&\bar{d}_t = \max_{1\leq s \leq t} d_s
    ,\quad &&\bar{g}_t = \max_{1\leq s \leq t} g(\mu_s). \label{eq:defs2-stoc-forward-flow}
\end{alignat}
In addition, for our stochastic analysis, we will require an additional definition:
\begin{equation}
    \theta_{t,\delta}:= \log \left(\frac{60\log (6t)}{\delta}\right).
\end{equation}
We begin with a result which decomposes the error into two terms: a weighted regret term and a noise term. 

\begin{lemma}
\label{lemma:tilde-mu-t-bound-1-stoch-forward-flow}
    Suppose that Assumption 
    \ref{assumption:lsc} and \ref{assumption:geo-convex} hold. Let $(\mu_t)_{t\geq 0}$ be the sequence of measures defined according to \eqref{eq:stochastic-forward-flow-1} - \eqref{eq:stochastic-forward-flow-2}, with $(\eta_t)_{t\geq 1}$ defined as in \eqref{eq:dog-lr-stoc-forward-flow}. Let $(\tilde{\mu}_t)_{t\geq 0}$ be the sequence of measures defined according to \eqref{eq:tilde-mu-t-1-forward-flow} - \eqref{eq:tilde-mu-t-2-forward-flow}. In addition, let  $\delta_t(x):=\hat{\zeta}_t(x) - {\zeta}_t(x)$. Then, for all $T\geq 1$, 
    \begin{align}
    \sum_{t=1}^{T}\bar{r}_t\left[\mathcal{F}(\mu_t) - \mathcal{F}(\pi)\right] &\leq \sum_{t=1}^T \bar{r}_t  \bigg[\frac{W_2^2(\mu_{t-\frac{1}{2}},\pi) - W_2^2(\mu_{t+\frac{1}{2}},\pi)}{2\eta_t} + \frac{\eta_t}{2} \int_{\mathbb{R}^d} \|\hat{\zeta}_t(x)\|^2\,\mathrm{d}\mu_t(x)\bigg] \\
        &- \sum_{t=1}^{T}\bar{r}_t\bigg[\int_{\mathbb{R}^d}\langle {\delta}_t(x), x- \boldsymbol{t}_{\mu_t}^{\pi}(x) \rangle \,\mathrm{d}\mu_t(x)\bigg]. \label{eq:decomp-stoc-forward-flow} 
    \end{align}
\end{lemma}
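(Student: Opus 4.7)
My plan is to follow the structure of Corollary \ref{corollary:evi-forward-flow}, but to track carefully the noise term introduced by replacing $\zeta_t$ with its stochastic analogue $\hat{\zeta}_t$ in the pushforward update \eqref{eq:stochastic-forward-flow-1}. Since the second ``noise step'' \eqref{eq:stochastic-forward-flow-2} is identical to the deterministic case, Lemma \ref{lemma:evi-forward-flow-heat} still controls the internal-energy contribution $\mathcal{H}(\mu_t)-\mathcal{H}(\pi)$. What needs modifying is only the transport EVI, i.e., the analogue of Lemma \ref{lemma:evi-forward-flow-transport} for $\mathcal{E}$.

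For the transport step, I would start from the non-expansiveness of $W_2^2(\cdot,\pi)$ under the pushforward map $\mathrm{id}-\eta_t\hat{\zeta}_t$. Writing $\boldsymbol{t}_{\mu_t}^{\pi}$ for the optimal transport map from $\mu_t$ to $\pi$, this gives
\begin{equation*}
W_2^2(\mu_{t+\frac{1}{2}},\pi)\leq \int\|x-\eta_t\hat{\zeta}_t(x)-\boldsymbol{t}_{\mu_t}^{\pi}(x)\|^2\,\mathrm{d}\mu_t(x)
= W_2^2(\mu_t,\pi)-2\eta_t\!\int\!\langle\hat{\zeta}_t,x-\boldsymbol{t}_{\mu_t}^{\pi}\rangle\,\mathrm{d}\mu_t+\eta_t^2\!\int\!\|\hat{\zeta}_t\|^2\mathrm{d}\mu_t.
\end{equation*}
Now I split $\hat{\zeta}_t=\zeta_t+\delta_t$ and use the geodesic convexity subgradient inequality for $\mathcal{E}$ (Assumption \ref{assumption:geo-convex} and the fact that $\zeta_t\in\partial\mathcal{E}(\mu_t)$), namely
\[
\mathcal{E}(\mu_t)-\mathcal{E}(\pi)\leq \int\langle\zeta_t(x),x-\boldsymbol{t}_{\mu_t}^{\pi}(x)\rangle\,\mathrm{d}\mu_t(x).
\]
Rearranging the expanded squared distance, dividing by $2\eta_t$, and substituting the convexity inequality yields
\[
\mathcal{E}(\mu_t)-\mathcal{E}(\pi)\leq \frac{W_2^2(\mu_t,\pi)-W_2^2(\mu_{t+\frac{1}{2}},\pi)}{2\eta_t}+\frac{\eta_t}{2}\int\|\hat{\zeta}_t(x)\|^2\,\mathrm{d}\mu_t(x)-\int\langle\delta_t(x),x-\boldsymbol{t}_{\mu_t}^{\pi}(x)\rangle\,\mathrm{d}\mu_t(x),
\]
which is the stochastic counterpart of Lemma \ref{lemma:evi-forward-flow-transport}.

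To finish, I add this to the (unchanged) internal-energy bound of Lemma \ref{lemma:evi-forward-flow-heat} to obtain a per-iterate bound on $\mathcal{F}(\mu_t)-\mathcal{F}(\pi)$ whose right-hand side features the telescoping quantity $\tfrac{1}{2\eta_t}[W_2^2(\mu_{t-\frac{1}{2}},\pi)-W_2^2(\mu_{t+\frac{1}{2}},\pi)]$, the stochastic gradient energy term $\tfrac{\eta_t}{2}\int\|\hat{\zeta}_t\|^2\,\mathrm{d}\mu_t$, and the inner product involving $\delta_t$. I then multiply both sides by the weight $\bar{r}_t$ and sum over $t\in[T]$, which immediately yields the desired decomposition \eqref{eq:decomp-stoc-forward-flow}. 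No real obstacle is expected: the argument is a direct adaptation of the deterministic EVI derivation, and the only new ingredient is the bookkeeping that isolates the mean-zero noise term $\int\langle\delta_t,\cdot\rangle\,\mathrm{d}\mu_t$ so that it can be controlled separately (via concentration) in the subsequent high-probability analysis.
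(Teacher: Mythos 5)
Your proposal is correct and follows essentially the same route as the paper: decompose $\mathcal{F}=\mathcal{E}+\mathcal{H}$, reuse Lemma \ref{lemma:evi-forward-flow-heat} unchanged for $\mathcal{H}$, derive a stochastic version of Lemma \ref{lemma:evi-forward-flow-transport} by expanding the pushforward-induced coupling cost, split $\hat{\zeta}_t=\zeta_t+\delta_t$ to isolate the noise term, and conclude by the weighted sum. One small terminological nit: what you invoke is not ``non-expansiveness'' of $W_2^2$ under pushforward (that is false in general), but the fact that pushing the optimal $\mu_t\to\pi$ coupling through $\mathrm{id}-\eta_t\hat{\zeta}_t$ gives a feasible (not necessarily optimal) coupling of $(\mu_{t+\frac{1}{2}},\pi)$, hence an upper bound on $W_2^2(\mu_{t+\frac{1}{2}},\pi)$ — which is exactly the inequality you write.
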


\begin{proof}
    See Appendix \ref{app:additional-proofs-forward-flow-non-smooth-stochastic-adaptive}.
\end{proof}

\begin{lemma}
\label{lemma:tilde-mu-t-bound-2-stoch-forward-flow}
Suppose that Assumption 
    \ref{assumption:lsc} and \ref{assumption:geo-convex} hold. Let $(\mu_t)_{t\geq 0}$ be the sequence of measures defined according to \eqref{eq:stochastic-forward-flow-1} - \eqref{eq:stochastic-forward-flow-2}, with $(\eta_t)_{t\geq 1}$ defined as in \eqref{eq:dog-lr-stoc-forward-flow}. Let $(\tilde{\mu}_t)_{t\geq 0}$ be the sequence of measures defined according to \eqref{eq:tilde-mu-t-1-forward-flow} - \eqref{eq:tilde-mu-t-2-forward-flow}. In addition, let  $\delta_t(x):=\hat{\zeta}_t(x) - {\zeta}_t(x)$. Then, for all $T\geq 1$, 
    \begin{align}
    \sum_{t=1}^T \bar{r}_t  \bigg[\frac{W_2^2(\mu_{t-\frac{1}{2}},\pi) - W_2^2(\mu_{t+\frac{1}{2}},\pi)}{2\eta_t} + \frac{\eta_t}{2} \int_{\mathbb{R}^d} \|\hat{\zeta}_t(x)\|^2\,\mathrm{d}\mu_t(x)\bigg]\leq \bar{r}_{T+1} (2\bar{d}_{T+1}+ \bar{r}_{T+1}) \sqrt{G_{T}}\quad \text{a.s.}
    \end{align}
\end{lemma}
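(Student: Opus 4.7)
The claim is the stochastic counterpart of Lemma \ref{lemma:tilde-mu-t-bound-2-forward-flow-v2}, with $\hat{\zeta}_t$ in place of $\zeta_t$ throughout, including in the definition of $G_T = \sum_{s=1}^T \int \|\hat{\zeta}_s\|^2 \,\mathrm{d}\mu_s$. Crucially, every quantity on both sides of the inequality---$\bar{r}_t$, $\eta_t$, $G_t$, $\bar{d}_{T+1}$, and $W_2^2(\mu_{t\pm 1/2},\pi)$---is a measurable function of the realized sampling sequence $(b_s)_{s\le T}$. The inequality is thus a purely pathwise statement, so it suffices to establish it pointwise in the underlying randomness; the ``almost surely'' qualifier then follows automatically. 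The plan is therefore to reproduce, line by line, the proof of Lemma \ref{lemma:tilde-mu-t-bound-2-forward-flow-v2}, substituting $\hat{\zeta}_t$ for $\zeta_t$ in every manipulation.

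First I would substitute the explicit form of the adaptive step size $\eta_t = \bar{r}_t/\sqrt{G_t}$ (noting that, by \eqref{eq:dog-lr-stoc-forward-flow}, $\bar{r}_t/\eta_t = \sqrt{G_t}$ and $\bar{r}_t \eta_t = \bar{r}_t^2/\sqrt{G_t}$), which splits the left-hand side into the two sums
\begin{equation}
\tfrac{1}{2}\sum_{t=1}^T \sqrt{G_t}\bigl[W_2^2(\mu_{t-\frac{1}{2}},\pi) - W_2^2(\mu_{t+\frac{1}{2}},\pi)\bigr] \;+\; \tfrac{1}{2}\sum_{t=1}^T \frac{\bar{r}_t^2 \int \|\hat{\zeta}_t\|^2 \,\mathrm{d}\mu_t}{\sqrt{G_t}}. \nonumber
\end{equation}

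For the first sum, I would apply Abel's summation, exploiting that $(\sqrt{G_t})_t$ is nondecreasing, so that the sum collapses to $\sqrt{G_1}d_1^2 - \sqrt{G_T}d_{T+1}^2 + \sum_{t=2}^T (\sqrt{G_t}-\sqrt{G_{t-1}})d_t^2$. Dropping the negative boundary term, I would then bound each $d_t^2 = W_2^2(\mu_{t-\frac{1}{2}},\pi)$ using the triangle inequality $d_t \le d_1 + r_t$ combined with the ``pull-out'' identity $d_t^2 \le d_t(d_1 + r_t) \le \bar{d}_{T+1}(d_1 + r_t)$. This factors out one copy of $\bar{d}_{T+1}$ uniformly, and leaves the residual factor $d_1 + r_t$, which telescopes against $\sum_t(\sqrt{G_t}-\sqrt{G_{t-1}}) = \sqrt{G_T}$ (using $r_1 = 0$ and $r_t \le \bar{r}_{T+1}$) to yield $\bar{d}_{T+1}(d_1 + \bar{r}_{T+1})\sqrt{G_T} \le 2\bar{d}_{T+1}\bar{r}_{T+1}\sqrt{G_T}$ after absorbing $d_1 \le \bar{d}_{T+1}$ in a manner compatible with the target coefficient. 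For the second sum, I would factor out $\bar{r}_t^2 \le \bar{r}_{T+1}^2$ and invoke the standard telescoping estimate $\int\|\hat{\zeta}_t\|^2\,\mathrm{d}\mu_t/\sqrt{G_t} = (G_t - G_{t-1})/\sqrt{G_t} \le 2(\sqrt{G_t} - \sqrt{G_{t-1}})$, which telescopes to $2\sqrt{G_T}$ and combines with the $1/2$ prefactor to give $\bar{r}_{T+1}^2 \sqrt{G_T}$. Summing the two contributions yields the claimed bound $\bar{r}_{T+1}(2\bar{d}_{T+1} + \bar{r}_{T+1})\sqrt{G_T}$.

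The main obstacle I anticipate is the algebraic bookkeeping required to match the precise coefficient $2\bar{d}_{T+1}\bar{r}_{T+1}$ on the cross-term: naively bounding $d_t^2 \le \bar{d}_{T+1}^2$ yields $\tfrac{1}{2}\bar{d}_{T+1}^2 \sqrt{G_T}$, which is incomparable to the target. The fix, as sketched above, is to retain one factor of $d_t$ in its linear form and exploit the telescoping structure of $r_t$, which is the only place where the specific form of the adaptive step size really enters the analysis. Because none of the manipulations rely on the oracle being deterministic---each is a purely algebraic consequence of the pathwise identities---the argument carries over verbatim from the deterministic case, giving the bound almost surely.
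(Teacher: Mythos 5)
Your overall strategy is exactly the paper's: the inequality is pathwise in the realized randomness, so the proof is literally the deterministic argument (Lemma~\ref{lemma:tilde-mu-t-bound-2-forward-flow-v2}) with the hatted definitions \eqref{eq:defs1-stoc-forward-flow}--\eqref{eq:defs2-stoc-forward-flow} substituted throughout; the paper's proof is a one-line pointer to that lemma, and your treatment of the second sum via $\bar{r}_t^2\le\bar{r}_{T+1}^2$ and $(G_t-G_{t-1})/\sqrt{G_t}\le 2(\sqrt{G_t}-\sqrt{G_{t-1}})$ matches it. However, your reconstruction of the first sum contains a genuine gap. After Abel summation you \emph{drop} the negative boundary term $-\sqrt{G_T}\,d_{T+1}^2$ and bound $d_t^2\le \bar d_{T+1}(d_1+r_t)$, which after telescoping leaves $\tfrac12\bar d_{T+1}(d_1+\bar r_{T+1})\sqrt{G_T}$. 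To fit this under the target cross-term $2\bar d_{T+1}\bar r_{T+1}\sqrt{G_T}$ you would need $d_1\le 3\bar r_{T+1}$, which is false in general: $d_1=W_2(\mu_{\frac12},\pi)$ can be large while $\bar r_{T+1}$ can be as small as $r_\varepsilon$. Saying $d_1\le\bar d_{T+1}$ does not rescue this; the residual $d_1$ cannot be absorbed into a coefficient proportional to $\bar r_{T+1}$.

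The correct closing move, as in the paper, is to \emph{keep} the boundary term. Bounding $d_t^2\le\bar d_{T+1}^2$ inside the Abel sum gives exactly $\sqrt{G_T}\bigl(\bar d_{T+1}^2-d_{T+1}^2\bigr)$, and this difference of squares is where the adaptive radius enters: writing $S\in\argmax_{1\le t\le T+1}d_t$,
\begin{equation}
\bar d_{T+1}^2-d_{T+1}^2=(d_S-d_{T+1})(d_S+d_{T+1})\le W_2\bigl(\mu_{S-\frac12},\mu_{T+\frac12}\bigr)\,(d_S+d_{T+1})\le(\bar r_S+\bar r_{T+1})\cdot 2\bar d_{T+1}\le 4\bar r_{T+1}\bar d_{T+1},
\end{equation}
using the triangle inequality through $\mu_{\frac12}$. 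With the prefactor $\tfrac12$ this yields $2\bar r_{T+1}\bar d_{T+1}\sqrt{G_T}$, which combined with your (correct) second-sum bound $\bar r_{T+1}^2\sqrt{G_T}$ gives the stated result. So the gap is not mere bookkeeping: the cancellation against $-d_{T+1}^2\sqrt{G_T}$ is essential, and your ``pull-out'' substitute for it does not deliver the claimed coefficient.
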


\begin{proof}
    See Appendix \ref{app:additional-proofs-forward-flow-non-smooth-stochastic-adaptive}.
\end{proof}

\begin{lemma}
\label{lemma:tilde-mu-t-bound-3-stoch-forward-flow}
    Suppose that Assumption \ref{assumption:bounded-grads-forward-flow} holds. Let $(\mu_t)_{t\geq 0}$ be the sequence of measures defined according to \eqref{eq:stochastic-forward-flow-1} - \eqref{eq:stochastic-forward-flow-2}, with $(\eta_t)_{t\geq 1}$ defined as in \eqref{eq:dog-lr-stoc-forward-flow}. Let $(\tilde{\mu}_t)_{t\geq 0}$ be the sequence of measures defined according to \eqref{eq:tilde-mu-t-1-forward-flow} - \eqref{eq:tilde-mu-t-2-forward-flow}. Then, for all $0<\delta<1$, $G>0$, and $T\geq 1$, 
    \begin{align}
        \mathbb{P}\bigg(\exists t \leq T : \bigg|\sum_{k=1}^{t} \bar{r}_k \int_{\mathbb{R}^d} \langle \delta_k(x),x - \boldsymbol{t}_{\mu_k}^{\pi}(x)\rangle\mathrm{d}\mu_k(x) \bigg| \geq 8 \bar{r}_{t}\bar{d}_{t}\sqrt{\theta_{t,\delta} G_{t} + \theta_{t,\delta}^2G^2}\bigg) \leq \delta + \mathbb{P}\big(\bar{g}_{T}> G\big).
    \end{align}
\end{lemma}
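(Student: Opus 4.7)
The plan is to recognize the sum $M_t := \sum_{k=1}^t Z_k$ with $Z_k := \bar r_k \int \langle \delta_k(x), x - \boldsymbol{t}_{\mu_k}^{\pi}(x)\rangle \,\mathrm{d}\mu_k(x)$ as a martingale with respect to the filtration $\mathcal{F}_k := \sigma(b_1,\dots,b_k)$, and to apply a time-uniform, empirical Freedman-type inequality on the ``good'' event $A := \{\bar g_T \leq G\}$, paying $\mathbb{P}(A^{c}) = \mathbb{P}(\bar g_T > G)$ outside of $A$. This is the same strategy used to analyze DoG in \citet{ivgi2023dog}. The martingale property is immediate: $\mu_k$, $\bar r_k$, and $\boldsymbol{t}_{\mu_k}^{\pi}$ are $\mathcal{F}_{k-1}$-measurable, and by the definition of the stochastic gradient oracle in \eqref{eq:stochastic-gradient-forward-flow-def}, $\mathbb{E}[\delta_k(x)\mid\mathcal{F}_{k-1}]=0$ pointwise in $x$, which carries through the $(\mathcal{F}_{k-1}$-measurable$)$ integration against $\mu_k$ and multiplication by $\bar r_k$.

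The two ingredients fed to the concentration are an a.s.\ pointwise increment bound and an empirical quadratic-variation bound, both on $A$. Cauchy--Schwarz in $L^2(\mu_k)$ gives $|Z_k|\leq \bar r_k \,\|\delta_k\|_{L^2(\mu_k)}\, W_2(\mu_k,\pi)$. Under Assumption~\ref{assumption:bounded-grads-forward-flow}, $\|\hat\zeta_k\|_{L^2(\mu_k)}\leq g(\mu_k)$ and, by Jensen's inequality, $\|\zeta_k\|_{L^2(\mu_k)}\leq g(\mu_k)$, hence $\|\delta_k\|_{L^2(\mu_k)}\leq 2g(\mu_k)\leq 2G$ on $A$. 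Monotonicity in $k$ of $\bar r_k$ and a triangle-inequality argument relating $W_2(\mu_k,\pi)$ to $\bar d_t$ then give $\max_{k\leq t}|Z_k|\leq c\,\bar r_t\bar d_t\,G$. For the quadratic variation, the same Cauchy--Schwarz estimate yields
\begin{equation*}
\sum_{k=1}^t Z_k^2 \;\leq\; \bar r_t^2\bar d_t^2 \sum_{k=1}^t \|\delta_k\|_{L^2(\mu_k)}^2,
\end{equation*}
and $\|\delta_k\|^2 \leq 2\|\hat\zeta_k\|^2 + 2\|\zeta_k\|^2$ with $\|\zeta_k\|^2 \leq \mathbb{E}[\|\hat\zeta_k\|^2\mid\mathcal{F}_{k-1}]$ by Jensen. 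An auxiliary Freedman bound (with its own $\delta/2$-level) applied to the zero-mean sequence $\|\hat\zeta_k\|^2-\mathbb{E}[\|\hat\zeta_k\|^2\mid\mathcal{F}_{k-1}]$ (whose increments are bounded by $G^2$ on $A$) converts $\sum_k \mathbb{E}[\|\hat\zeta_k\|^2\mid\mathcal{F}_{k-1}]$ into $c(G_t+\theta_{t,\delta}G^2)$, delivering $\sum_{k=1}^t Z_k^2 \leq c'\bar r_t^2\bar d_t^2(G_t+\theta_{t,\delta}G^2)$ on $A$. Plugging these into a time-uniform empirical Freedman/Bernstein inequality in the style of \citet{ivgi2023dog} yields, with probability at least $1-\delta/2$, for all $t\leq T$,
\begin{equation*}
|M_t|\leq c_1\sqrt{\theta_{t,\delta}\sum_{k=1}^t Z_k^2}+c_2\,\theta_{t,\delta}\max_{k\leq t}|Z_k|.
\end{equation*}
Substituting, using $\sqrt a + \sqrt b\leq \sqrt{2(a+b)}$, and tracking universal constants gives $8\,\bar r_t\bar d_t\sqrt{\theta_{t,\delta}G_t+\theta_{t,\delta}^2G^2}$. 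A union bound over the two $\delta/2$-level events and the decomposition $\mathbb{P}(\cdot)\leq \mathbb{P}(\cdot\cap A)+\mathbb{P}(A^c)$ closes the argument.

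The main obstacle is producing the empirical $G_t$ under the square root, rather than the weaker $tG^2$ that a naive predictable-variance Freedman bound would produce and which would destroy the $1/\sqrt{T}$ rate. This requires the auxiliary concentration step that converts $\sum_k\mathbb{E}[\|\hat\zeta_k\|^2\mid\mathcal{F}_{k-1}]$ into $G_t$ up to a $\theta_{t,\delta}G^2$ correction, and is precisely where the empirical-Bernstein flavor of the DoG analysis is used. A minor technicality is the gap between $W_2(\mu_k,\pi)$ appearing via Cauchy--Schwarz and the half-step distance $d_k=W_2(\mu_{k-1/2},\pi)$ underlying $\bar d_t$; this is handled by a triangle inequality $W_2(\mu_k,\pi)\leq d_k+W_2(\mu_k,\mu_{k-1/2})$, with the Gaussian-noise term absorbed into the universal constant.
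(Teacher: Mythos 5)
Your proof reaches the stated bound by a correct but genuinely different route. Both arguments apply the time-uniform empirical-Bernstein result (Lemma 7 of \citet{ivgi2023dog}), but the paper's proof is shorter because of its choice of test quantities: it sets $X_t = \frac{1}{\bar{d}_t}\int\langle\delta_t(x), x-\boldsymbol{t}_{\mu_t}^{\pi}(x)\rangle\,\mathrm{d}\mu_t(x)$ and $\hat{X}_t = -\frac{1}{\bar{d}_t}\int\langle\zeta_t(x), x-\boldsymbol{t}_{\mu_t}^{\pi}(x)\rangle\,\mathrm{d}\mu_t(x)$ (note the \emph{minus} sign), so that $X_t - \hat{X}_t = \frac{1}{\bar{d}_t}\int\langle\hat{\zeta}_t(x), x-\boldsymbol{t}_{\mu_t}^{\pi}(x)\rangle\,\mathrm{d}\mu_t(x)$ involves the full observed stochastic gradient $\hat{\zeta}_t$ rather than the noise $\delta_t$. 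A single Cauchy--Schwarz then gives $\sum_{k\le t}(X_k-\hat{X}_k)^2 \le G_t$ almost surely, which is exactly the empirical quadratic-variation input Lemma~7 expects, and the constant $8$ falls out directly with one application and one $\delta$. Your proof instead controls $\sum_k Z_k^2 \lesssim \bar{r}_t^2\bar{d}_t^2\sum_k\|\delta_k\|_{L^2(\mu_k)}^2$, expands $\|\delta_k\|^2 \le 2\|\hat{\zeta}_k\|^2 + 2\|\zeta_k\|^2$, and is then forced to run an auxiliary Freedman argument to convert the predictable (unobservable) term $\sum_k\|\zeta_k\|^2$ into $\mathcal{O}(G_t + \theta_{t,\delta} G^2)$. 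That self-bounding step --- solving $P_t\le G_t + c\sqrt{\theta G^2 P_t} + c\theta G^2$ for $P_t$ --- is sound, and the union bound over the two concentration events and $\{\bar{g}_T > G\}$ gives the stated probability, but it spends an extra $\delta/2$ and picks up the factor of two from the decomposition of $\delta_k$, so it cannot recover the sharp constant $8$. One thing you handle more carefully than the paper: you flag that Cauchy--Schwarz produces $W_2(\mu_k,\pi)$ while $d_k = W_2(\mu_{k-1/2},\pi)$ is a half-step distance; the paper's line $\int\|x-\boldsymbol{t}_{\mu_k}^{\pi}(x)\|^2\,\mathrm{d}\mu_k(x) = d_k^2$ silently conflates the two, and your triangle-inequality patch is the rigorous way to close that gap.
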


\begin{proof}
    See Appendix \ref{app:additional-proofs-forward-flow-non-smooth-stochastic-adaptive}.
\end{proof}

\begin{proposition}
\label{prop:dog-wgd-bound-1-stoc-forward-flow}
    Suppose that Assumptions 
    \ref{assumption:lsc}, \ref{assumption:geo-convex}, and \ref{assumption:bounded-grads-forward-flow} hold. Let $(\mu_t)_{t\geq 0}$ be the sequence of measures defined according to \eqref{eq:stochastic-forward-flow-1} - \eqref{eq:stochastic-forward-flow-2}, with $(\eta_t)_{t\geq 1}$ defined as in \eqref{eq:dog-lr-stoc-forward-flow}. Let $(\tilde{\mu}_t)_{t\geq 0}$ be the sequence of measures defined according to \eqref{eq:tilde-mu-t-1-forward-flow} - \eqref{eq:tilde-mu-t-2-forward-flow}. Then, for all $0<\delta <1$, $G>0$, and for all $t\leq T$, we have with probability at least $1 - \delta - \mathbb{P}(\bar{g}_T>G)$, 
    \begin{equation}
    \label{eq:dog-wgd-bound-1-stoc-forward-flow}
        \mathcal{F}(\tilde{\mu}_t) - \mathcal{F}(\pi) =  \mathcal{O}\left(\frac{(d_1 + \bar{r}_{t+1})\sqrt{G_{t} + G_{t}\theta_{t,\delta} + \theta_{t,\delta}^2 G^2}}{\sum_{s=1}^{t}\bar{r}_s/\bar{r}_{t+1}}\right).
    \end{equation}
\end{proposition}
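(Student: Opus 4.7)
The proof will follow by composing the four lemmas that precede the proposition, in the same spirit as the deterministic analogue (Proposition \ref{prop:dog-wgd-bound-1-forward-flow}), but with the noise term handled via the high-probability martingale bound of Lemma \ref{lemma:tilde-mu-t-bound-3-stoch-forward-flow}. The plan is to reduce the functional gap $\mathcal{F}(\tilde{\mu}_t) - \mathcal{F}(\pi)$ to a weighted regret, split the weighted regret into a deterministic stability part and a stochastic noise part, bound each separately, and then repackage the result in the announced $\mathcal{O}$ form.

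First, I would invoke Lemma \ref{lemma:tilde-mu-t-bound-1-forward-flow}, which is purely a convexity/Jensen-type statement on the weighted average iterate $\tilde{\mu}_t$ defined in \eqref{eq:tilde-mu-t-1-forward-flow}–\eqref{eq:tilde-mu-t-2-forward-flow}, to write
\begin{equation}
    \mathcal{F}(\tilde{\mu}_t) - \mathcal{F}(\pi) \;\leq\; \frac{1}{\sum_{s=1}^{t} \bar{r}_s} \sum_{s=1}^{t} \bar{r}_s\bigl[\mathcal{F}(\mu_s) - \mathcal{F}(\pi)\bigr].
\end{equation}
Note that Lemma \ref{lemma:tilde-mu-t-bound-1-forward-flow} was stated under Assumptions \ref{assumption:lsc}–\ref{assumption:geo-convex} only, so it applies verbatim here.

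Next, I would apply Lemma \ref{lemma:tilde-mu-t-bound-1-stoch-forward-flow} to decompose the weighted regret numerator into (i) a deterministic EVI-type term and (ii) a stochastic noise term involving $\delta_s(x) = \hat{\zeta}_s(x) - \zeta_s(x)$. For (i), Lemma \ref{lemma:tilde-mu-t-bound-2-stoch-forward-flow} gives the almost-sure bound
\begin{equation}
    \sum_{s=1}^{t}\bar{r}_s\Bigl[\tfrac{W_2^2(\mu_{s-\frac{1}{2}},\pi) - W_2^2(\mu_{s+\frac{1}{2}},\pi)}{2\eta_s} + \tfrac{\eta_s}{2}\int \|\hat{\zeta}_s\|^2\,\mathrm{d}\mu_s\Bigr] \;\leq\; \bar{r}_{t+1}(2\bar{d}_{t+1} + \bar{r}_{t+1})\sqrt{G_t}.
\end{equation}
For (ii), I would apply Lemma \ref{lemma:tilde-mu-t-bound-3-stoch-forward-flow} at the single index $T$: with probability at least $1-\delta - \mathbb{P}(\bar{g}_T > G)$, the noise term is bounded, simultaneously for all $t \leq T$, by $8\bar{r}_t\bar{d}_t\sqrt{\theta_{t,\delta}G_t + \theta_{t,\delta}^2 G^2}$. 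The uniform-in-$t$ nature of this bound is critical and is exactly what Lemma \ref{lemma:tilde-mu-t-bound-3-stoch-forward-flow} is designed to provide (via a time-uniform martingale inequality, hence the $\log\log$ dependence inside $\theta_{t,\delta}$), so no further union bound is needed.

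Putting the two bounds together and dividing by $\sum_{s=1}^{t}\bar{r}_s$, I obtain, on the intersection of the good events (which has probability at least $1-\delta - \mathbb{P}(\bar{g}_T > G)$),
\begin{equation}
    \mathcal{F}(\tilde{\mu}_t) - \mathcal{F}(\pi) \;\leq\; \frac{\bar{r}_{t+1}(2\bar{d}_{t+1} + \bar{r}_{t+1})\sqrt{G_t} + 8\bar{r}_t\bar{d}_t\sqrt{\theta_{t,\delta}G_t + \theta_{t,\delta}^2 G^2}}{\sum_{s=1}^{t}\bar{r}_s}.
\end{equation}
Finally, I would use the monotonicity $\bar{r}_t \leq \bar{r}_{t+1}$, $\bar{d}_t \leq \bar{d}_{t+1}$ together with the triangle-inequality bound $\bar{d}_{t+1} \leq d_1 + \bar{r}_{t+1}$ (already used in Proposition \ref{prop:dog-wgd-bound-1-forward-flow}) to collapse the numerator to $\mathcal{O}\bigl((d_1+\bar{r}_{t+1})\sqrt{G_t + G_t\theta_{t,\delta} + \theta_{t,\delta}^2 G^2}\bigr)$, then multiply and divide by $\bar{r}_{t+1}$ in the denominator to obtain the stated form. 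The main subtlety, and essentially the only nontrivial step, is ensuring that applying Lemma \ref{lemma:tilde-mu-t-bound-3-stoch-forward-flow} once at horizon $T$ indeed yields a bound that holds simultaneously for every $t \leq T$ (so that the statement of the proposition, which is uniform in $t$, is valid) and that the failure event for the boundedness of $\bar{g}_T$ suffices to control $\bar{g}_t$ for all $t \leq T$; both follow directly from the monotonicity of $\bar{g}_t$ in $t$ and from how Lemma \ref{lemma:tilde-mu-t-bound-3-stoch-forward-flow} is phrased (``$\exists t \leq T$'').
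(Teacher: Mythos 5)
Your proof is correct and follows essentially the same route as the paper's: it combines the weighted-Jensen inequality of Lemma \ref{lemma:tilde-mu-t-bound-1-forward-flow}, the regret/noise decomposition of Lemma \ref{lemma:tilde-mu-t-bound-1-stoch-forward-flow}, the almost-sure regret bound of Lemma \ref{lemma:tilde-mu-t-bound-2-stoch-forward-flow}, and the time-uniform high-probability noise bound of Lemma \ref{lemma:tilde-mu-t-bound-3-stoch-forward-flow}, then repackages via $\bar{r}_t\le\bar{r}_{t+1}$, $\bar{d}_t\le\bar{d}_{t+1}$, and $\bar{d}_{t+1}\le d_1+\bar{r}_{t+1}$. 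You are in fact slightly more explicit than the paper's one-line proof in spelling out the role of the Jensen-type lemma and the uniform-in-$t$ nature of the martingale bound.
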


\begin{proof}
    The result follows as an immediate consequence of Lemma \ref{lemma:tilde-mu-t-bound-1-stoch-forward-flow}, Lemma \ref{lemma:tilde-mu-t-bound-2-stoch-forward-flow}, and Lemma \ref{lemma:tilde-mu-t-bound-3-stoch-forward-flow}, using also the facts that $\bar{d}_t\leq \bar{d}_{t+1}$, $\bar{r}_{t}\leq \bar{r}_{t+1}$, and $\bar{d}_{t+1} \leq d_1 + \bar{r}_{t+1}$.
\end{proof}

\begin{corollary}
    Suppose that Assumption 
    \ref{assumption:lsc} and \ref{assumption:geo-convex} hold. Let $(\mu_t)_{t\geq 0}$ be the sequence of measures defined according to \eqref{eq:stochastic-forward-flow-1} - \eqref{eq:stochastic-forward-flow-2}, with $(\eta_t)_{t\geq 1}$ defined as in \eqref{eq:dog-lr-stoc-forward-flow}. Let $(\tilde{\mu}_t)_{t\geq 0}$ be the sequence of measures defined according to \eqref{eq:tilde-mu-t-1-forward-flow} - \eqref{eq:tilde-mu-t-2-forward-flow}. Let $D\geq d_1$, and define 
    \begin{equation}
        G_D = \sup_{\mu\in\mathcal{P}_2(\mathcal{X}_{D})} \|{g}(\mu)\|_{L^2(\mu)}
    \end{equation}
    where $\smash{\mathcal{P}_2(\mathcal{X}_{D}) = \left\{\mu\in\mathcal{P}_2(\mathcal{X}):W_2(\mu,\mu_{\frac{1}{2}})\leq D\right\}}$. In addition, let $\smash{\tau\in\argmax_{1\leq t\leq T} \sum_{s=1}^{t} \bar{r}_s /\bar{r}_{t+1}}$. Then, with probability at least $1 - \delta - \mathbb{P}(\{\bar{r}_{T+1} > D\}\cup\{\bar{r}'_{T+1}>D\})$, it holds that
    \begin{equation}
        \mathcal{F}(\tilde{\mu}_{\tau}) - \mathcal{F}(\pi) = \mathcal{O}\left(\frac{D\sqrt{G_{\tau}\theta_{\tau,\delta} + G_D^2 \theta_{\tau,\delta}^2}}{T}\log_{+}\left(\frac{D}{r_{\varepsilon}}\right)\right)=\mathcal{O}\left(\frac{DG_D}{\sqrt{T}}\theta_{\tau,\delta}\log_{+}\left(\frac{D}{r_{\varepsilon}}\right)\right).
    \end{equation}
\end{corollary}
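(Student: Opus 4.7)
The plan is to apply Proposition \ref{prop:dog-wgd-bound-1-stoc-forward-flow} with $t = \tau$ and $G = G_D$, and then simplify the resulting high-probability bound using properties that hold on the stability event
\begin{equation*}
    E := \{\bar{r}_{T+1} \leq D\} \cap \{\bar{r}'_{T+1} \leq D\}.
\end{equation*}
First I would observe that on $E$ every iterate $\mu_s$ (and every half-iterate $\mu_{s-\frac{1}{2}}$) for $1 \leq s \leq T$ satisfies $W_2(\mu_s, \mu_{\frac{1}{2}}) \leq D$, so $\mu_s \in \mathcal{P}_2(\mathcal{X}_D)$. Under Assumption \ref{assumption:bounded-grads-forward-flow} this gives $\|\hat{\zeta}_s(\cdot, b_s)\|_{L^2(\mu_s)} \leq g(\mu_s) \leq G_D$, and hence $\bar{g}_T \leq G_D$ on $E$.

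Next, I would refine the failure probability. Inspecting the proof of Lemma \ref{lemma:tilde-mu-t-bound-3-stoch-forward-flow}, the martingale concentration argument used there can be localized by a stopping time and actually yields the sharper statement $\mathbb{P}(\text{noise fails} \cap \{\bar{g}_T \leq G_D\}) \leq \delta$. Combined with the a.s. deterministic bounds in Lemmas \ref{lemma:tilde-mu-t-bound-1-stoch-forward-flow}--\ref{lemma:tilde-mu-t-bound-2-stoch-forward-flow}, the event $A$ on which the Proposition's bound holds with $G = G_D$ therefore satisfies $\mathbb{P}(A \cap E) \geq \mathbb{P}(E) - \delta = 1 - \delta - \mathbb{P}(E^c)$, matching the failure probability claimed in the corollary. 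This refinement is essential to avoid a factor-of-two loss from naively union-bounding $A^c$ with $E^c$.

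Then, on $A \cap E$, I would simplify the bound from Proposition \ref{prop:dog-wgd-bound-1-stoc-forward-flow}. Since $d_1 \leq D$ and $\bar{r}_{\tau+1} \leq \bar{r}_{T+1} \leq D$, the prefactor $(d_1 + \bar{r}_{\tau+1})$ is bounded by $2D$. For the denominator, the key step is the logarithmic averaging estimate
\begin{equation*}
    \sum_{s=1}^{\tau} \frac{\bar{r}_s}{\bar{r}_{\tau+1}} = \max_{1 \leq t \leq T} \sum_{s=1}^{t} \frac{\bar{r}_s}{\bar{r}_{t+1}} \gtrsim \frac{T}{\log_+(\bar{r}_{T+1}/r_\varepsilon)} \geq \frac{T}{\log_+(D/r_\varepsilon)},
\end{equation*}
which follows by iterating $\bar{r}_{t+1} \geq S_t/M$ (with $S_t = \sum_{s=1}^t \bar{r}_s$ and $M$ the maximum) to get $S_T \geq \bar{r}_1 (1 + 1/M)^{T-1}$, then comparing against $S_T \leq T \bar{r}_{T+1}$. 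This is the same averaging bound used in the deterministic Corollary \ref{corr:dog-wgd-bound-2-forward-flow}. Substituting back, and absorbing the $\sqrt{G_\tau}$ summand into $\sqrt{G_\tau \theta_{\tau,\delta}}$ via $\theta_{\tau,\delta} \geq 1$, yields the first form of the claimed rate.

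Finally, the second form follows from the crude a.s. bound $G_\tau = \sum_{s=1}^\tau \|\hat{\zeta}_s\|_{L^2(\mu_s)}^2 \leq \tau \bar{g}_\tau^2 \leq T G_D^2$ on $E$, giving
\begin{equation*}
    \sqrt{G_\tau \theta_{\tau,\delta} + G_D^2 \theta_{\tau,\delta}^2} \leq G_D\bigl(\sqrt{T \theta_{\tau,\delta}} + \theta_{\tau,\delta}\bigr) \leq 2 G_D \theta_{\tau,\delta} \sqrt{T}
\end{equation*}
(using $\theta_{\tau,\delta} \geq 1$ and $T \geq 1$); dividing by $T$ produces the $D G_D \theta_{\tau,\delta}/\sqrt{T}$ rate. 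The main technical obstacle is really the logarithmic averaging lemma, which is what causes the polylogarithmic growth of the \textsc{Fuse} step sizes under stability to translate into the $\log_+(D/r_\varepsilon)$ factor. Since that lemma is reused verbatim from the deterministic case, the only genuinely new ingredient in the stochastic setting is the stopping-time refinement of Lemma \ref{lemma:tilde-mu-t-bound-3-stoch-forward-flow} used above.
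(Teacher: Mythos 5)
Your proposal is correct and follows essentially the same route as the paper, whose proof is simply a pointer to the deterministic analogue (Corollary~\ref{corr:dog-wgd-bound-2-forward-flow}): instantiate Proposition~\ref{prop:dog-wgd-bound-1-stoc-forward-flow} at $t=\tau$ with $G=G_D$, note that on the stability event $\bar g_T\le G_D$ and $d_1+\bar r_{\tau+1}\le 2D$, invoke the logarithmic averaging estimate for $\max_t\sum_s\bar r_s/\bar r_{t+1}$, and crudely bound $G_\tau\le TG_D^2$. Your extra remark about localizing the martingale concentration in Lemma~\ref{lemma:tilde-mu-t-bound-3-stoch-forward-flow} so that $\mathbb{P}(\text{noise fails}\cap\{\bar g_T\le G_D\})\le\delta$ is a fair point of care the paper glosses over (a naive union bound would only give $1-\delta-2\,\mathbb{P}(E^c)$), and it is consistent with how the underlying concentration inequality of \citet{ivgi2023dog} is actually proved.
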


\begin{proof}
    The result follows from Proposition \ref{prop:dog-wgd-bound-1-stoc-forward-flow} in the same way that Corollary \ref{corr:dog-wgd-bound-2-forward-flow} follows from Proposition \ref{prop:dog-wgd-bound-1-forward-flow}.
\end{proof}

\paragraph{Uniform Averaging} Similar to the deterministic case, we can also derive guarantees for the uniform average iterate $\bar{\mu}_t$, as defined in \eqref{eq:bar-mu-t-forward-flow-1-first-def} - \eqref{eq:bar-mu-t-forward-flow-2-first-def}.

\begin{proposition}
\label{prop:dog-wgd-bound-1-stoc-forward-flow-uniform}
    Suppose that Assumptions 
    \ref{assumption:lsc}, \ref{assumption:geo-convex}, and \ref{assumption:bounded-grads-forward-flow} hold. Let $(\mu_t)_{t\geq 0}$ be the sequence of measures defined according to \eqref{eq:stochastic-forward-flow-1} - \eqref{eq:stochastic-forward-flow-2}, with $(\eta_t)_{t\geq 1}$ defined as in \eqref{eq:dog-lr-stoc-forward-flow}. Let $(\bar{\mu}_t)_{t\geq 0}$ be the sequence of measures defined according to \eqref{eq:bar-mu-t-forward-flow-1-first-def} - \eqref{eq:bar-mu-t-forward-flow-2-first-def}. Then, for all $0<\delta <1$, $G>0$, and for all $T\geq 1$, we have with probability at least $1 - \delta - \mathbb{P}(\bar{g}_T>G)$, 
    \begin{equation}
    \label{eq:dog-wgd-bound-1-stoc-forward-flow-uniform}
        \mathcal{F}(\bar{\mu}_T) - \mathcal{F}(\pi) =  \mathcal{O}\left(\frac{(d_1\log_{+}\frac{\bar{r}_{T+1}}{r_{\varepsilon}} + \bar{r}_{T+1})\sqrt{G_{T} + G_{T}\theta_{T,\delta} + \theta_{T,\delta}^2 G^2}}{T}\right).
    \end{equation}
\end{proposition}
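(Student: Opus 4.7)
The plan is to mirror the argument in the deterministic uniform-average case (Proposition \ref{prop:dog-wgd-bound-1-forward-flow-uniform}), introducing the noise term and controlling it via the stochastic concentration bound of Lemma \ref{lemma:tilde-mu-t-bound-3-stoch-forward-flow} followed by summation by parts. First, by Lemma \ref{lemma:average} applied to the uniform average, $\mathcal{F}(\bar{\mu}_T) - \mathcal{F}(\pi) \leq \tfrac{1}{T}\sum_{t=1}^T[\mathcal{F}(\mu_t) - \mathcal{F}(\pi)]$. Specialising the per-iterate inequality underlying Lemma \ref{lemma:tilde-mu-t-bound-1-stoch-forward-flow} to unit weights (i.e., applying Corollary \ref{corollary:evi-forward-flow} with $\hat{\zeta}_t = \zeta_t + \delta_t$ in place of $\zeta_t$), the inner sum is bounded by $(\mathrm{I}) + (\mathrm{II})$, where
\[
(\mathrm{I}) = \textstyle\sum_{t=1}^T\!\left[\tfrac{W_2^2(\mu_{t-\frac{1}{2}},\pi) - W_2^2(\mu_{t+\frac{1}{2}},\pi)}{2\eta_t} + \tfrac{\eta_t}{2}\!\int \|\hat{\zeta}_t\|^2 \mathrm{d}\mu_t \right], \quad (\mathrm{II}) = -\textstyle\sum_{t=1}^T \!\int \langle \delta_t, x - \boldsymbol{t}_{\mu_t}^{\pi}\rangle \mathrm{d}\mu_t.
\]

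For $(\mathrm{I})$, the argument is essentially identical to that in the proof of Proposition \ref{prop:dog-wgd-bound-1-forward-flow-uniform} (now with $G_t$ defined in terms of stochastic gradients): the telescoping $W_2^2$ differences are handled by Abel summation on $1/\eta_t = \sqrt{G_t}/\bar{r}_t$, which yields a $\log_+(\bar{r}_{T+1}/r_\varepsilon)$ factor on the $d_1$-contribution plus a clean $\bar{r}_{T+1}$ remainder, and the step-size sum is bounded via $\sum_t (G_t - G_{t-1})/\sqrt{G_t} \leq 2\sqrt{G_T}$ combined with $\bar{r}_t \leq \bar{r}_{T+1}$. This produces $(\mathrm{I}) = \mathcal{O}((d_1 \log_+(\bar{r}_{T+1}/r_\varepsilon) + \bar{r}_{T+1})\sqrt{G_T})$.

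For the noise term $(\mathrm{II})$, I will first apply Lemma \ref{lemma:tilde-mu-t-bound-3-stoch-forward-flow} to obtain the time-uniform bound $|V_t| \leq 8 \bar{r}_t \bar{d}_t \sqrt{\theta_{t,\delta} G_t + \theta_{t,\delta}^2 G^2}$ on the $\bar{r}$-weighted partial sums $V_t := \sum_{s=1}^t \bar{r}_s \int \langle \delta_s, x - \boldsymbol{t}_{\mu_s}^{\pi}\rangle \mathrm{d}\mu_s$, valid with probability at least $1-\delta - \mathbb{P}(\bar{g}_T > G)$, and then convert this to a bound on the unweighted sum via summation by parts:
\[
\textstyle\sum_{t=1}^T\!\int \langle \delta_t, x - \boldsymbol{t}_{\mu_t}^{\pi}\rangle \mathrm{d}\mu_t = \textstyle\sum_{t=1}^T \tfrac{V_t - V_{t-1}}{\bar{r}_t} = \tfrac{V_T}{\bar{r}_T} + \textstyle\sum_{t=1}^{T-1} V_t \bigl(\tfrac{1}{\bar{r}_t} - \tfrac{1}{\bar{r}_{t+1}}\bigr).
\]
Splitting $\bar{d}_t \leq d_1 + \bar{r}_{T+1}$ and applying the telescoping estimate $\sum_t(1 - \bar{r}_t/\bar{r}_{t+1}) \leq \log_+(\bar{r}_{T+1}/r_\varepsilon)$ to the $d_1$-contribution (while the $\bar{r}_{T+1}$-contribution is absorbed into the boundary term $|V_T|/\bar{r}_T$) gives $|(\mathrm{II})| = \mathcal{O}((d_1 \log_+(\bar{r}_{T+1}/r_\varepsilon) + \bar{r}_{T+1})\sqrt{\theta_{T,\delta} G_T + \theta_{T,\delta}^2 G^2})$. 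Combining with $(\mathrm{I})$ and folding $\sqrt{G_T}$ into the larger square root gives the claim.

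The main obstacle is the final summation-by-parts estimate for $(\mathrm{II})$: obtaining exactly the $(d_1 \log_+ + \bar{r}_{T+1})$ shape advertised in the statement — as opposed to the strictly weaker $(d_1 + \bar{r}_{T+1})(1 + \log_+(\bar{r}_{T+1}/r_\varepsilon))$ that a naive triangle inequality would yield — requires splitting the bound on $\bar{d}_t$ and running two separate Abel estimates, exploiting the fact that the $\bar{r}_{T+1}$-piece is controlled by the boundary term $|V_T|/\bar{r}_T$ alone, so that only the $d_1$-piece picks up the logarithmic factor.
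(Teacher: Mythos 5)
Your overall route differs from the paper's. The paper's (very terse) proof reduces the uniform average to the three \emph{weighted} lemmas (Lemmas \ref{lemma:tilde-mu-t-bound-1-stoch-forward-flow}--\ref{lemma:tilde-mu-t-bound-3-stoch-forward-flow}) via the doubling-epoch decomposition used in the deterministic analogue, Proposition \ref{prop:dog-wgd-bound-1-forward-flow-uniform}: one partitions $[1,T]$ into $K=\mathcal{O}(\log_{+}(\bar{r}_{T+1}/r_{\varepsilon}))$ epochs on which $\bar{r}_t$ varies by at most a factor of $2$, applies the weighted regret and noise bounds per epoch (differencing the time-uniform partial sums of Lemma \ref{lemma:tilde-mu-t-bound-3-stoch-forward-flow} at the epoch endpoints), and sums. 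You instead propose unit weights plus Abel summation against $1/\bar{r}_t$. For the noise term $(\mathrm{II})$ this is a legitimate and arguably cleaner alternative, and it correctly exploits the ``$\exists t\leq T$'' time-uniformity of Lemma \ref{lemma:tilde-mu-t-bound-3-stoch-forward-flow}; note however that to kill the logarithm on the $\bar{r}$-piece you must split $\bar{d}_t\leq d_1+\bar{r}_t$ (not $d_1+\bar{r}_{T+1}$) and use the telescoping $\sum_t \bar{r}_t(1-\bar{r}_t/\bar{r}_{t+1})\leq \sum_t(\bar{r}_{t+1}-\bar{r}_t)\leq \bar{r}_{T+1}$ — your stated reason (``absorbed into the boundary term'') is not the actual mechanism, since the $\bar{r}$-piece multiplies every summand, not just $V_T/\bar{r}_T$.

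The genuine gap is in term $(\mathrm{I})$. You claim the telescoping $W_2^2$ differences are ``handled by Abel summation on $1/\eta_t=\sqrt{G_t}/\bar{r}_t$.'' This fails: after summation by parts the remainder is $\sum_{t} d_t^2\,(\sqrt{G_t}/\bar{r}_t-\sqrt{G_{t-1}}/\bar{r}_{t-1})$, and the coefficient sequence $\sqrt{G_t}/\bar{r}_t$ is \emph{not} monotone ($\sqrt{G_t}$ and $\bar{r}_t$ both increase), so its positive variation cannot be collapsed to endpoint values — it can be as large as $\sqrt{G_T}/r_{\varepsilon}$, destroying the bound. The whole point of the $\bar{r}_t$-weighting in Lemma \ref{lemma:tilde-mu-t-bound-2-stoch-forward-flow} is that $\bar{r}_t/\eta_t=\sqrt{G_t}$ \emph{is} monotone. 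To repair your argument within your own framework, apply Lemma \ref{lemma:tilde-mu-t-bound-2-stoch-forward-flow} at every partial horizon $t$ to bound the weighted partial sums $W_t:=\sum_{s\leq t}\bar{r}_s[\cdots]$ by $\bar{r}_{t+1}(2\bar{d}_{t+1}+\bar{r}_{t+1})\sqrt{G_t}$, and then run the same summation by parts $\sum_t (W_t-W_{t-1})/\bar{r}_t = W_T/\bar{r}_T+\sum_{t<T}W_t(1/\bar{r}_t-1/\bar{r}_{t+1})$ that you use for $(\mathrm{II})$ (the nonnegativity of $1/\bar{r}_t-1/\bar{r}_{t+1}$ lets you substitute the upper bounds on $W_t$); alternatively, fall back on the paper's epoch decomposition. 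As written, the step for $(\mathrm{I})$ does not go through.
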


\begin{proof}
    The result follows as an immediate consequence of Lemma \ref{lemma:tilde-mu-t-bound-1-stoch-forward-flow}, Lemma \ref{lemma:tilde-mu-t-bound-2-stoch-forward-flow}, and Lemma \ref{lemma:tilde-mu-t-bound-3-stoch-forward-flow}, using also the fact that $\bar{d}_t \leq d_1 + \bar{r}_t$.
\end{proof}

\begin{corollary}
    Suppose that Assumption 
    \ref{assumption:lsc} and \ref{assumption:geo-convex} hold. Let $(\mu_t)_{t\geq 0}$ be the sequence of measures defined according to \eqref{eq:stochastic-forward-flow-1} - \eqref{eq:stochastic-forward-flow-2}, with $(\eta_t)_{t\geq 1}$ defined as in \eqref{eq:dog-lr-stoc-forward-flow}. Let $(\tilde{\mu}_t)_{t\geq 0}$ be the sequence of measures defined according to \eqref{eq:tilde-mu-t-1-forward-flow} - \eqref{eq:tilde-mu-t-2-forward-flow}. Let $D\geq d_1$, and define 
    \begin{equation}
        G_D = \sup_{\mu\in\mathcal{P}_2(\mathcal{X}_{D})} \|{g}(\mu)\|_{L^2(\mu)},
    \end{equation}
    where $\smash{\mathcal{P}_2(\mathcal{X}_{D}) = \{\mu\in\mathcal{P}_2(\mathcal{X}):W_2(\mu,\mu_{\frac{1}{2}})\leq D\}}$. Then, with probability at least $1 - \delta - \mathbb{P}(\{\bar{r}_{T+1} > D\}\cup\{\bar{r}'_{T+1}>D\})$, it holds that
    \begin{equation}
        \mathcal{F}(\bar{\mu}_{T}) - \mathcal{F}(\pi) = \mathcal{O}\left(\frac{D\sqrt{G_{T}\theta_{T,\delta} + G_D^2 \theta_{T,\delta}^2}}{T}\log_{+}\left(\frac{D}{r_{\varepsilon}}\right)\right)=\mathcal{O}\left(\frac{DG_D}{\sqrt{T}}\theta_{T,\delta}\log_{+}\left(\frac{D}{r_{\varepsilon}}\right)\right).
    \end{equation}
\end{corollary}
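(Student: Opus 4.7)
The plan is to reduce to Proposition \ref{prop:dog-wgd-bound-1-stoc-forward-flow-uniform} in exactly the way Corollary \ref{corr:dog-wgd-bound-2-forward-flow-uniform} is derived from Proposition \ref{prop:dog-wgd-bound-1-forward-flow-uniform} in the deterministic setting, with the extra care needed to control the noise event via the ``probe'' parameter $G$ appearing in the proposition.

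First, I will instantiate Proposition \ref{prop:dog-wgd-bound-1-stoc-forward-flow-uniform} with the specific choice $G=G_D$. The key observation is that on the event $\{\bar{r}'_{T+1}\leq D\}$ every iterate satisfies $W_2(\mu_s,\mu_{\frac{1}{2}})\leq D$, hence $\mu_s\in\mathcal{P}_2(\mathcal{X}_D)$ for every $s\leq T$. By the definition of $G_D$ (and by Assumption~\ref{assumption:bounded-grads-forward-flow}, which must be in force for the stochastic gradient bound $\|\hat\zeta_s\|_{L^2(\mu_s)}\leq g(\mu_s)$ to make sense), this gives $g(\mu_s)\leq G_D$ for every such $s$, so $\bar g_T\leq G_D$. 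In particular $\{\bar g_T>G_D\}\subseteq\{\bar r'_{T+1}>D\}$ and, using the same bound pointwise in $s$, $G_T=\sum_{s=1}^T\|\hat\zeta_s\|_{L^2(\mu_s)}^2\leq TG_D^2$.

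Next, on the event $\{\bar r_{T+1}\leq D\}$, since $D\geq d_1$, the numerator of \eqref{eq:dog-wgd-bound-1-stoc-forward-flow-uniform} is bounded by
\begin{equation}
d_1\log_{+}\tfrac{\bar r_{T+1}}{r_\varepsilon}+\bar r_{T+1}\ \leq\ D\log_{+}\tfrac{D}{r_\varepsilon}+D\ =\ O\!\bigl(D\log_{+}\tfrac{D}{r_\varepsilon}\bigr).
\end{equation}
Combining this with $G_T\leq TG_D^2$ in the square root factor of \eqref{eq:dog-wgd-bound-1-stoc-forward-flow-uniform} gives
\begin{equation}
\mathcal{F}(\bar\mu_T)-\mathcal{F}(\pi)\ =\ O\!\left(\frac{D\log_{+}\tfrac{D}{r_\varepsilon}\,\sqrt{TG_D^2(1+\theta_{T,\delta})+\theta_{T,\delta}^2G_D^2}}{T}\right),
\end{equation}
which after pulling out $G_D$ and $\sqrt{T}$ collapses (since $\theta_{T,\delta}^2/T$ is dominated by $\theta_{T,\delta}$ for any reasonable $T$) to the advertised $O\!\bigl(\tfrac{DG_D}{\sqrt T}\theta_{T,\delta}\log_{+}\tfrac{D}{r_\varepsilon}\bigr)$ rate, matching the first equality in the conclusion via the two ways of writing the same quantity.

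Finally, the probability bookkeeping is a union bound. Proposition \ref{prop:dog-wgd-bound-1-stoc-forward-flow-uniform} holds with probability at least $1-\delta-\mathbb{P}(\bar g_T>G_D)$, and we additionally need the iterate-stability events $\{\bar r_{T+1}\leq D\}\cap\{\bar r'_{T+1}\leq D\}$. Since $\{\bar g_T>G_D\}\subseteq\{\bar r'_{T+1}>D\}$, a union bound shows the desired inequality fails with probability at most $\delta+\mathbb{P}(\{\bar r_{T+1}>D\}\cup\{\bar r'_{T+1}>D\})$. The only mildly delicate step is keeping the roles of $\bar r_{T+1}$ (used to control the numerator) and $\bar r'_{T+1}$ (used to force iterates into $\mathcal{P}_2(\mathcal{X}_D)$ and thereby control $\bar g_T$ and $G_T$) straight, but both are absorbed into the single failure event appearing in the statement.
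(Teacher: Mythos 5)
Your proposal is correct and follows exactly the route the paper intends: the paper's own proof is a one-line pointer ("the result follows from Proposition \ref{prop:dog-wgd-bound-1-stoc-forward-flow-uniform} in the same way that Corollary \ref{corr:dog-wgd-bound-2-forward-flow-uniform} follows from Proposition \ref{prop:dog-wgd-bound-1-forward-flow-uniform}"), and you have spelled out precisely what that entails — instantiating the proposition with $G=G_D$, using $\{\bar r'_{T+1}\leq D\}\Rightarrow\bar g_T\leq G_D$ and $G_T\leq TG_D^2$, bounding the numerator by $O(D\log_{+}(D/r_\varepsilon))$ on $\{\bar r_{T+1}\leq D\}$, and finishing with the union bound. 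You also correctly observe the implicit reliance on Assumption \ref{assumption:bounded-grads-forward-flow} and correctly identify the inclusion $\{\bar g_T>G_D\}\subseteq\{\bar r'_{T+1}>D\}$ that makes the probability bookkeeping collapse to the stated failure event.
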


\begin{proof}
    The result follows from Proposition \ref{prop:dog-wgd-bound-1-stoc-forward-flow-uniform} in the same way that Corollary \ref{corr:dog-wgd-bound-2-forward-flow-uniform} follows from Proposition \ref{prop:dog-wgd-bound-1-forward-flow-uniform}.
\end{proof}

\subsection{Forward-Flow Discretization, Smooth Setting}
\label{sec:forward-flow-smooth}
We now turn our attention to the smooth setting. Interestingly, unlike the Euclidean case, the (optimal) convergence rate will remain of order $\smash{\mathcal{O}(T^{-\frac{1}{2}})}$.

\subsubsection{Deterministic Case: Constant Step Size}
Similar to before, we begin our analysis in the case where the step size in \eqref{eq:forward-flow-1} - \eqref{eq:forward-flow-2} is constant, that is, $\eta_t=\eta$ for all $t\geq 0$. 

\begin{lemma}
\label{lemma:g-descent-smooth}
Suppose that Assumptions \ref{assumption:lsc}, \ref{assumption:geo-convex}, and \ref{assumption:smooth-forward-flow} hold. Let $(\mu_t)_{t\geq 0}$ denote the sequence of measures defined by \eqref{eq:forward-flow-1} - \eqref{eq:forward-flow-2}. Suppose that $\eta_t = \eta$ for all $t\geq 0$, with $\eta\in(0,\frac{1}{L}]$. Then, for any $t\geq 1$, we have 
\begin{equation}
\label{eq:g-descent-smooth}
\mathcal{E}(\mu_{t+\frac12}) \leq 
\mathcal{E}(\mu_t)-\eta\left(1-\tfrac{L\eta}{2}\right)\int_{\mathbb{R}^d}\|\zeta_t(x)\|^2\,\mathrm d\mu_t(x).
\end{equation}
\end{lemma}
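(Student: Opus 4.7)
The plan is to apply the Wasserstein analogue of the classical Euclidean descent lemma. Concretely, starting from the $L$-geodesic-smoothness assumption on $\mathcal{E}$ (Assumption \ref{assumption:smooth-forward-flow}), a second-order Taylor expansion along the constant-speed Wasserstein geodesic $(\lambda_s)_{s\in[0,1]}$ between two measures $\mu$ and $\nu$ yields the quadratic upper bound
\begin{equation}
    \mathcal{E}(\nu) \leq \mathcal{E}(\mu) + \langle \nabla_{W_2}\mathcal{E}(\mu), \boldsymbol{t}_\mu^\nu - \mathrm{id}\rangle_{L^2(\mu)} + \frac{L}{2} W_2^2(\mu,\nu). \label{eq:wasserstein-descent-lemma-proposal}
\end{equation}
The first-order term arises from $\frac{\mathrm{d}}{\mathrm{d}s}\mathcal{E}(\lambda_s)|_{s=0} = \langle \nabla_{W_2}\mathcal{E}(\mu), \boldsymbol{t}_\mu^\nu - \mathrm{id}\rangle_{L^2(\mu)}$, while the second-order remainder is controlled by the Hessian bound $\mathrm{Hess}_{W_2}\mathcal{E}(\lambda_s)[\dot{\lambda}_s, \dot{\lambda}_s] \leq L \|\dot{\lambda}_s\|_{L^2(\lambda_s)}^2 = L\, W_2^2(\mu,\nu)$, using constant speed of the geodesic.

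Next I specialize \eqref{eq:wasserstein-descent-lemma-proposal} to $\mu = \mu_t$ and $\nu = \mu_{t+\frac{1}{2}} = (\mathrm{id} - \eta \zeta_t)_{\#}\mu_t$, where $\zeta_t = \nabla_{W_2}\mathcal{E}(\mu_t) = \nabla \tfrac{\delta \mathcal{E}}{\delta \mu}(\mu_t)$. The key observation is that, because $\eta \leq 1/L$, the candidate transport map can be written as $\mathrm{id} - \eta\zeta_t = \nabla\bigl(\tfrac{\|x\|^2}{2} - \eta \tfrac{\delta\mathcal{E}}{\delta\mu}(x)\bigr)$, and the $L$-smoothness of $\mathcal{E}$ at the level of the first variation ensures convexity of this potential (its Euclidean Hessian satisfies $\mathbf{I}_d - \eta \nabla^2 \tfrac{\delta\mathcal{E}}{\delta\mu} \succeq 0$). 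Brenier's theorem then identifies $\boldsymbol{t}_{\mu_t}^{\mu_{t+\frac{1}{2}}} = \mathrm{id} - \eta \zeta_t$ as the optimal transport map, and so
\begin{equation}
    \boldsymbol{t}_{\mu_t}^{\mu_{t+\frac{1}{2}}} - \mathrm{id} = -\eta\,\zeta_t, \qquad W_2^2(\mu_t, \mu_{t+\frac{1}{2}}) = \eta^2 \int_{\mathbb{R}^d} \|\zeta_t(x)\|^2\,\mathrm{d}\mu_t(x).
\end{equation}

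Plugging these identities into \eqref{eq:wasserstein-descent-lemma-proposal} gives
\begin{equation}
    \mathcal{E}(\mu_{t+\frac{1}{2}}) \leq \mathcal{E}(\mu_t) - \eta \int_{\mathbb{R}^d}\|\zeta_t(x)\|^2\,\mathrm{d}\mu_t(x) + \frac{L\eta^2}{2} \int_{\mathbb{R}^d}\|\zeta_t(x)\|^2\,\mathrm{d}\mu_t(x),
\end{equation}
and collecting the quadratic terms yields exactly \eqref{eq:g-descent-smooth}. The main obstacle to watch is step two: one needs the candidate map $\mathrm{id} - \eta\zeta_t$ to actually be an optimal transport map (equivalently, a Brenier map) for the second-order term in \eqref{eq:wasserstein-descent-lemma-proposal} to be computable as $\eta^2 \|\zeta_t\|_{L^2(\mu_t)}^2$ with equality. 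The step-size restriction $\eta \leq 1/L$ is precisely what guarantees this via convexity of the generating potential. Alternatively, one can sidestep the Brenier argument by running the Taylor expansion directly along the displacement curve $\lambda_s = (\mathrm{id} - s\eta\zeta_t)_{\#}\mu_t$, observing that for $\eta \leq 1/L$ this curve coincides with the Wasserstein geodesic from $\mu_t$ to $\mu_{t+\frac{1}{2}}$, and invoking Assumption \ref{assumption:smooth-forward-flow} with $v = -\eta\zeta_t$ along $\lambda_s$.
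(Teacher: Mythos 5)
Your proof is correct but takes a genuinely different route from the paper. The paper's proof (Appendix~\ref{app:additional-proofs-forward-flow-smooth-deterministic-constant}) works directly with the (a priori non-geodesic) interpolation curve $\nu_s = (\Phi_s)_{\#}\mu_t$, $\Phi_s = \mathrm{id} + s(T_\eta - \mathrm{id})$: it integrates $\tfrac{\mathrm d}{\mathrm ds}\mathcal{E}(\nu_s)$ via the Wasserstein chain rule and bounds the remainder by Cauchy--Schwarz combined with a Lipschitz-gradient form of smoothness, cf.\ \eqref{eq:Lip-grad}, applied along the (possibly non-optimal) maps $\Phi_s$. Crucially, that argument never invokes $W_2$ or Brenier, and it holds for every $\eta>0$ -- the restriction $\eta\le 1/L$ in the statement plays no role in the paper's proof of this particular lemma. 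You instead establish the abstract descent lemma \eqref{eq:wasserstein-descent-lemma-proposal} first and then identify $\mathrm{id}-\eta\zeta_t$ as the Brenier map, using $\eta\le 1/L$ to verify that $\varphi(x) = \tfrac12\|x\|^2 - \eta\tfrac{\delta\mathcal{E}}{\delta\mu}(\mu_t)(x)$ is convex; this collapses the first-order term to $-\eta\|\zeta_t\|^2_{L^2(\mu_t)}$ and makes $W_2^2(\mu_t,\mu_{t+\frac12})$ equal to $\eta^2\|\zeta_t\|^2_{L^2(\mu_t)}$ rather than merely bounded above by it, which is what lets the quadratic term go in the right direction. Both routes quietly lean on a slightly stronger, pointwise-Euclidean smoothness of $E_{\mu}(x)=V(x)+\int W(x-y)\mathrm d\mu(y)$ than the bare Wasserstein-Hessian bound of Assumption~\ref{assumption:smooth-forward-flow}: the paper needs \eqref{eq:Lip-grad} to hold along non-optimal maps (an extension it concedes in the footnote to Lemma~\ref{lemma:smooth-grad-bound-improved}), and your Brenier step needs $\nabla^2 E_{\mu_t}\preceq L\mathbf{I}_d$ pointwise to conclude $\nabla^2\varphi\succeq 0$. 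So the two proofs are equally rigorous under this common strengthening; yours is shorter, modular, and makes the role of $\eta\le 1/L$ transparent, while the paper's applies verbatim for all $\eta>0$ and avoids any optimality argument about the transport map.
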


\begin{proof}
    See Appendix \ref{app:additional-proofs-forward-flow-smooth-deterministic-constant}.
\end{proof}

\begin{lemma}
\label{lem:heat-step-tight}
Suppose that Assumptions \ref{assumption:lsc}, \ref{assumption:geo-convex}, and \ref{assumption:smooth-forward-flow} hold. Let $(\mu_t)_{t\geq 0}$ denote the sequence of measures defined by \eqref{eq:forward-flow-1} - \eqref{eq:forward-flow-2}. Suppose that $\eta_t = \eta>0$ for all $t\geq 0$. Then, for any $t\geq 1$, we have
\begin{equation}
\mathcal{E}(\mu_{t+1}) \leq \mathcal{E}(\mu_{t+\frac12}) + L\eta d.
\end{equation}
\end{lemma}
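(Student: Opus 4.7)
The plan is to view the step $\mu_{t+\frac{1}{2}}\to\mu_{t+1}$ as the time-$\eta$ marginal of the heat flow $\partial_s\rho_s=\Delta\rho_s$, and to bound the growth of $\mathcal{E}$ along this flow using the Wasserstein smoothness assumption. By \eqref{eq:forward-flow-2}, $\mu_{t+1}=\mathcal{N}(0,2\eta I_d)\star\mu_{t+\frac{1}{2}}$; equivalently, letting $X_0\sim\mu_{t+\frac{1}{2}}$ and $(B_s)_{s\geq 0}$ be an independent standard Brownian motion on $\mathbb{R}^d$, the process $X_s=X_0+\sqrt{2}\,B_s$ has marginals $\rho_s:=\mathrm{Law}(X_s)$ satisfying $\partial_s\rho_s=\Delta\rho_s$, with $\rho_0=\mu_{t+\frac{1}{2}}$ and $\rho_\eta=\mu_{t+1}$.

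Writing $\xi_s:=\delta\mathcal{E}/\delta\mu(\rho_s)$, so that $\nabla\xi_s=\nabla_{W_2}\mathcal{E}(\rho_s)$ by \eqref{eq:wasserstein-first-var}, and integrating by parts twice (valid under mild regularity/decay of $\xi_s$ and $\rho_s$), I would obtain
\begin{equation*}
\frac{d}{ds}\mathcal{E}(\rho_s)=\int \xi_s(x)\,\Delta\rho_s(x)\,dx=\int\mathrm{tr}\bigl(\nabla^2\xi_s(x)\bigr)\,d\rho_s(x).
\end{equation*}
The key estimate is then the pointwise bound $\mathrm{tr}(\nabla^2\xi_s(x))\leq L d$ for $\rho_s$-a.e.\ $x$. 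To extract this from Assumption \ref{assumption:smooth-forward-flow}, I would apply $\mathrm{Hess}_{W_2}\mathcal{E}(\rho_s)[v,v]\leq L\|v\|_{L^2(\rho_s)}^2$ to test vector fields $v(x)=e_i\,f(x)$ with $f$ a smooth bump concentrating at $x_0\in\mathrm{supp}\,\rho_s$, normalized so that $\int f^2\,d\rho_s=1$. Using the standard Wasserstein Hessian formulas $\mathrm{Hess}_{W_2}\mathcal{V}[v,v]=\int\langle\nabla^2 V\,v,v\rangle\,d\rho_s$ and $\mathrm{Hess}_{W_2}\mathcal{W}[v,v]=\tfrac{1}{2}\iint\langle\nabla^2 W(x-y)(v(x)-v(y)),v(x)-v(y)\rangle\,d\rho_s\,d\rho_s$, and passing to the concentration limit (the cross term $\iint f(x)f(y)\langle\nabla^2 W(x-y)e_i,e_i\rangle\,d\rho_s\,d\rho_s$ vanishes since $(\int f\,d\rho_s)^2\to 0$), one obtains
\begin{equation*}
(\nabla^2\xi_s(x_0))_{ii}=(\nabla^2 V(x_0))_{ii}+\int(\nabla^2 W(x_0-y))_{ii}\,d\rho_s(y)\leq L ;
\end{equation*}
summing over $i=1,\dots,d$ yields $\mathrm{tr}(\nabla^2\xi_s(x_0))\leq Ld$. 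Integrating the resulting differential inequality over $s\in[0,\eta]$ gives the claim.

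The main obstacle is precisely this pointwise trace bound: deducing it from the \emph{averaged} Wasserstein Hessian inequality is immediate when $\mathcal{W}=0$ (constant test fields $v\equiv e_i$ suffice, and the assumption reduces to Euclidean $L$-smoothness of $V$, under which $\Delta V\leq Ld$ pointwise is elementary), but requires the localization/concentration argument above when $\mathcal{W}\neq 0$, together with suitable regularity of $V$, $W$, and the density of $\rho_s$. Tightness of the $L\eta d$ bound can be checked in the Gaussian case $V(x)=W(x)=\tfrac{1}{2}\|x\|^2$, $\mu_{t+\frac{1}{2}}=\mathcal{N}(m,\Sigma)$, where both sides equal $L\eta d$ with $L=2$.
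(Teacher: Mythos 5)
Your approach is genuinely different from the paper's. The paper couples $X\sim\mu_{t+\frac12}$ with $Y=X+Z$ (where $Z\sim\mathcal N(0,2\eta\mathbf{I}_d)$ is independent of $X$), linearly interpolates $\nu_s=\mathrm{Law}\bigl((1-s)X+sY\bigr)$, differentiates $\mathcal E(\nu_s)$ in $s$, uses independence and $\mathbb E Z=0$ to annihilate the first-order term, and bounds the remainder by Cauchy--Schwarz together with the Lipschitz-gradient form of $L$-smoothness, arriving at $\tfrac L2\,\mathbb E\|Z\|^2=L\eta d$. You instead differentiate along the heat semigroup and reduce to the pointwise trace bound $\mathrm{tr}\bigl(\nabla^2\xi_s(x)\bigr)\le Ld$. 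Both routes yield the same constant, and you have correctly pinpointed the trace bound as the crux of yours.

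The gap is in extracting that pointwise bound from Assumption~\ref{assumption:smooth-forward-flow}. The test fields $v=e_i f$ you feed into $\mathrm{Hess}_{W_2}\mathcal E(\rho_s)[v,v]\le L\|v\|_{L^2(\rho_s)}^2$ are not gradients of scalar functions, so they do not lie in $\mathcal T_{\rho_s}\mathcal P_2=\overline{\{\nabla\psi:\psi\in C_c^\infty\}}^{L^2(\rho_s)}$, which is the only space on which the assumption constrains the Hessian. Trying to approximate $e_if$ by $\nabla\psi$ with, say, $\psi(x)=(x_i-(x_0)_i)f(x)$ produces a residual $(x_i-(x_0)_i)\nabla f$ of the \emph{same} order as $f\,e_i$ on a bump of width $\varepsilon$, so the localization does not close; and testing with Diracs controls $\nabla^2 V$ but, as you note, says nothing about $\nabla^2 W$. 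To repair the argument you would in effect have to assume pointwise operator-norm bounds on $\nabla^2 V$ and $\nabla^2 W$ --- equivalently, Euclidean $L$-smoothness of the effective potential $E_\mu(x)=V(x)+\int W(x-y)\,\mathrm d\mu(y)$ uniformly in $\mu$. It is only fair to note that the paper also tacitly strengthens the assumption: its coupling $\gamma$ is not optimal and the interpolation $\nu_s$ is not a geodesic, so the Lipschitz-gradient bound is invoked along a non-geodesic curve (cf.\ the footnote to Lemma~\ref{lemma:smooth-grad-bound-improved}, which states exactly the Euclidean $L$-smoothness of $E_\mu$). The real advantage of the paper's coupling approach is that it only ever needs an \emph{averaged} $L^2(\gamma)$ estimate, so it never has to localize, and it bypasses the heat-flow regularity and integration-by-parts justifications that your infinitesimal argument must also confront.
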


\begin{proof}
    See Appendix \ref{app:additional-proofs-forward-flow-smooth-deterministic-constant}.
\end{proof}

\begin{corollary}
\label{corollary:g-step-descent-smooth}
Suppose that Assumptions \ref{assumption:lsc}, \ref{assumption:geo-convex}, and \ref{assumption:smooth-forward-flow} hold. Let $(\mu_t)_{t\geq 0}$ denote the sequence of measures defined by \eqref{eq:forward-flow-1} - \eqref{eq:forward-flow-2}. Suppose that $\eta_t = \eta$ for all $t\geq 0$, with $\eta\in(0,\frac{1}{L}]$. Then, for any $t\geq 1$, we have 
\begin{equation}
    \mathcal{E}(\mu_{t+1}) \leq \mathcal{E}(\mu_t) -\eta\left(1-\tfrac{L\eta}{2}\right)\int_{\mathbb{R}^d}\|\zeta_t(x)\|^2\,\mathrm d\mu_t(x) + L\eta d.
\end{equation}
\end{corollary}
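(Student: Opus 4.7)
The proof is immediate by chaining the two preceding lemmas along the two half-steps of the forward-flow update. The plan is as follows. First, I apply Lemma \ref{lemma:g-descent-smooth} to control the descent of $\mathcal{E}$ across the deterministic gradient half-step $\mu_t\to\mu_{t+\frac12}$, which gives
\begin{equation}
\mathcal{E}(\mu_{t+\frac12})\;\leq\;\mathcal{E}(\mu_t)-\eta\bigl(1-\tfrac{L\eta}{2}\bigr)\int_{\mathbb{R}^d}\|\zeta_t(x)\|^2\,\mathrm d\mu_t(x),
\end{equation}
which is valid since $\eta\in(0,\tfrac{1}{L}]$. Second, I apply Lemma \ref{lem:heat-step-tight} to control the ascent of $\mathcal{E}$ across the Gaussian convolution half-step $\mu_{t+\frac12}\to\mu_{t+1}=\mathcal{N}(0,2\eta\mathbf{I}_d)\star\mu_{t+\frac12}$, yielding the bound $\mathcal{E}(\mu_{t+1})\leq\mathcal{E}(\mu_{t+\frac12})+L\eta d$. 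Substituting the first display into the right-hand side of the second produces exactly the claimed inequality.

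There is no real obstacle here: the result is purely a consequence of the decomposition $\mu_t\to\mu_{t+\frac12}\to\mu_{t+1}$ of the forward-flow scheme \eqref{eq:forward-flow-1}--\eqref{eq:forward-flow-2}, which isolates the gradient step (controlled by smoothness/descent in Lemma \ref{lemma:g-descent-smooth}) from the heat step (controlled by $L$-smoothness of $\mathcal{E}$ along the Wasserstein heat flow in Lemma \ref{lem:heat-step-tight}). The factor $L\eta d$ is the usual price paid for injecting isotropic Gaussian noise of variance $2\eta$ into a geodesically $L$-smooth functional, and nothing further is needed beyond additivity of the two one-sided bounds.
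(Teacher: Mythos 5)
Your proof is correct and follows exactly the paper's argument: chain Lemma \ref{lemma:g-descent-smooth} (gradient half-step) with Lemma \ref{lem:heat-step-tight} (heat half-step) and add the two bounds. Nothing further is needed.
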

\begin{proof}
    The result is an immediate corollary of Lemma \ref{lemma:g-descent-smooth} and Lemma \ref{lem:heat-step-tight}.
\end{proof}

\begin{corollary}
\label{corollary:evi-g}
Suppose that Assumptions \ref{assumption:lsc}, \ref{assumption:geo-convex}, and \ref{assumption:smooth-forward-flow} hold. Let $(\mu_t)_{t\geq 0}$ denote the sequence of measures defined by \eqref{eq:forward-flow-1} - \eqref{eq:forward-flow-2}. Suppose that $\eta_t = \eta$ for all $t\geq 0$, with $\eta\in(0,\frac{1}{L}]$. Then, for any $t\geq 1$, and for any $\pi\in\mathcal{P}_2(\mathbb{R}^d)$, we have
\begin{equation}
    \mathcal{E}(\mu_{t+1}) -  \mathcal{E}(\pi) \leq  \frac{W_2^2(\mu_t,\pi)-W_2^2(\mu_{t+\frac12},\pi)}{2\eta}
-\frac{\eta}{2}\,(1-L\eta)\int_{\mathbb{R}^d}\!\|\zeta_t(x)\|^2\,\mathrm{d}\mu_t(x) + L\eta d.
\end{equation}
\end{corollary}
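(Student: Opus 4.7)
The plan is to combine the two immediately preceding results, namely Lemma \ref{lemma:evi-forward-flow-transport} and Corollary \ref{corollary:g-step-descent-smooth}, since between them they already contain all the ingredients in the desired inequality. The first supplies the ``variational'' inequality relating $\mathcal{E}(\mu_t) - \mathcal{E}(\pi)$ to a telescoping Wasserstein term and the $L^2(\mu_t)$-norm of the velocity; the second supplies the smoothness-based ``descent'' inequality relating $\mathcal{E}(\mu_{t+1})$ to $\mathcal{E}(\mu_t)$, and crucially contributes the $L\eta d$ term from the heat flow half-step together with a \emph{negative} multiple of the same $L^2(\mu_t)$-norm of $\zeta_t$.

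Concretely, I would begin by rewriting Corollary \ref{corollary:g-step-descent-smooth} as
\begin{equation*}
\mathcal{E}(\mu_{t+1}) - \mathcal{E}(\pi) \;\leq\; \bigl[\mathcal{E}(\mu_t) - \mathcal{E}(\pi)\bigr] - \eta\Bigl(1-\tfrac{L\eta}{2}\Bigr)\!\int_{\mathbb{R}^d}\!\|\zeta_t(x)\|^2\,\mathrm d\mu_t(x) + L\eta d,
\end{equation*}
and then substitute the bound from Lemma \ref{lemma:evi-forward-flow-transport} into the bracket on the right-hand side. This produces
\begin{equation*}
\mathcal{E}(\mu_{t+1}) - \mathcal{E}(\pi) \;\leq\; \frac{W_2^2(\mu_t,\pi) - W_2^2(\mu_{t+\frac12},\pi)}{2\eta} + \Bigl[\tfrac{\eta}{2} - \eta\bigl(1-\tfrac{L\eta}{2}\bigr)\Bigr]\!\int_{\mathbb{R}^d}\!\|\zeta_t(x)\|^2\,\mathrm d\mu_t(x) + L\eta d.
\end{equation*}
The last step is the trivial algebraic simplification $\tfrac{\eta}{2} - \eta + \tfrac{L\eta^2}{2} = -\tfrac{\eta}{2}(1 - L\eta)$, which yields precisely the claimed inequality.

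There is no serious obstacle here: the hypothesis $\eta\in(0,1/L]$ inherited from Lemma \ref{lemma:g-descent-smooth} ensures that the coefficient $1 - L\eta$ is non-negative, so that the term $-\tfrac{\eta}{2}(1-L\eta)\int\|\zeta_t\|^2\,\mathrm d\mu_t$ is genuinely a dissipation term rather than a free parameter of unknown sign (this is what allows it, together with $L\eta d$, to be absorbed or dropped in the subsequent telescoping arguments). Nothing further beyond the two cited results and $\eta \leq 1/L$ is needed.
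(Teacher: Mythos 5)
Your proof is correct and follows exactly the decomposition the paper intends, namely $\mathcal{E}(\mu_{t+1})-\mathcal{E}(\pi)=[\mathcal{E}(\mu_{t+1})-\mathcal{E}(\mu_t)]+[\mathcal{E}(\mu_t)-\mathcal{E}(\pi)]$, bounding the first bracket by Corollary \ref{corollary:g-step-descent-smooth} and the second by the transport-step EVI, then combining the $\int\|\zeta_t\|^2\,\mathrm{d}\mu_t$ coefficients. In fact you cite the \emph{right} ingredient: the paper's proof text references Corollary \ref{corollary:evi-forward-flow} (which is stated for $\mathcal{F}$ with $W_2^2(\mu_{t-\frac12},\pi)$), but the decomposition it describes and the $W_2^2(\mu_t,\pi)$ term in the target inequality make clear that Lemma \ref{lemma:evi-forward-flow-transport} is the intended reference, exactly as you use it.
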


\begin{proof}
    The result follows from Corollary \ref{corollary:evi-forward-flow} and Corollary \ref{corollary:g-step-descent-smooth},  based on the decomposition $\mathcal{E}(\mu_{t+1}) - \mathcal{E}(\pi) = [\mathcal{E}(\mu_{t+1}) - \mathcal{E}(\mu_t)]+[\mathcal{E}(\mu_{t}) - \mathcal{E}(\pi)]$.
\end{proof}

\begin{corollary}
\label{corollary:f-pi-bound-heat-flow-smooth}
Suppose that Assumptions \ref{assumption:lsc}, \ref{assumption:geo-convex}, and \ref{assumption:smooth-forward-flow} hold. Let $(\mu_t)_{t\geq 0}$ denote the sequence of measures defined by \eqref{eq:forward-flow-1} - \eqref{eq:forward-flow-2}. Suppose that $\eta_t = \eta$ for all $t\geq 0$, with $\eta\in(0,\frac{1}{L}]$. Then, for any $t\geq 1$, and for any $\pi\in\mathcal{P}_2(\mathbb{R}^d)$, we have
\begin{equation}
    \mathcal{F}(\mu_{t+1}) - \mathcal{F}(\pi) \leq \frac{W_2^2(\mu_t,\pi)- W_2^2(\mu_{t+1},\pi)}{2\eta}
-\frac{\eta}{2}\,(1-L\eta)\int_{\mathbb{R}^d}\!\|\zeta_t(x)\|^2\,\mathrm{d}\mu_t(x) + L\eta d. \label{eq:f-pi-bound-heat-flow-smooth-full}
\end{equation}
In fact, since $\eta\in(0,\frac{1}{L}]$, the second term is non-positive, and so
\begin{equation}
    \mathcal{F}(\mu_{t+1}) - \mathcal{F}(\pi) \leq \frac{W_2^2(\mu_t,\pi)- W_2^2(\mu_{t+1},\pi)}{2\eta}
 + L\eta d.
 \label{eq:f-pi-bound-heat-flow-smooth}
\end{equation}
\end{corollary}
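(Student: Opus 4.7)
The plan is to decompose $\mathcal{F}=\mathcal{E}+\mathcal{H}$ and bound each piece along the two half-steps that make up one full forward-flow iteration, then sum. The key observation is that Corollary \ref{corollary:evi-g} already delivers an EVI-style bound for $\mathcal{E}(\mu_{t+1})-\mathcal{E}(\pi)$ with a telescoping contribution involving $W_2^2(\mu_t,\pi)-W_2^2(\mu_{t+\frac12},\pi)$, together with the smoothness-driven $-\tfrac{\eta}{2}(1-L\eta)\int\|\zeta_t\|^2\,\mathrm d\mu_t$ term and the diffusion-step overhead $L\eta d$. What remains is to account for the jump $\mu_{t+\frac12}\to\mu_{t+1}$, which affects only the internal energy since $\zeta_t$ is a subgradient of $\mathcal{E}$ alone.

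Concretely, I would apply Lemma \ref{lemma:evi-forward-flow-heat} with the index $t$ replaced by $t+1$ (this is legitimate because the heat-flow EVI only requires that $\mu_{t+1}$ is obtained from $\mu_{t+\frac12}$ by convolution with $\mathcal{N}(0,2\eta\mathbf I_d)$, which is exactly what \eqref{eq:forward-flow-2} specifies). This yields
\begin{equation}
    \mathcal{H}(\mu_{t+1})-\mathcal{H}(\pi)\leq\frac{W_2^2(\mu_{t+\frac12},\pi)-W_2^2(\mu_{t+1},\pi)}{2\eta}.
\end{equation}
Adding this to the bound from Corollary \ref{corollary:evi-g}, the two $W_2^2(\mu_{t+\frac12},\pi)$ contributions cancel exactly, producing the single telescoping term $[W_2^2(\mu_t,\pi)-W_2^2(\mu_{t+1},\pi)]/(2\eta)$ that appears in the statement. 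This establishes \eqref{eq:f-pi-bound-heat-flow-smooth-full}.

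For the simplified bound \eqref{eq:f-pi-bound-heat-flow-smooth}, I would simply note that the hypothesis $\eta\in(0,1/L]$ gives $1-L\eta\geq 0$, so the gradient-norm term $-\tfrac{\eta}{2}(1-L\eta)\int\|\zeta_t(x)\|^2\,\mathrm d\mu_t(x)$ is non-positive and can be dropped without affecting validity.

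I do not foresee any real obstacle: the entire argument is a telescoping combination of two results already proved in the excerpt (Corollary \ref{corollary:evi-g} for the transport half-step and Lemma \ref{lemma:evi-forward-flow-heat} for the heat half-step), plus the trivial sign observation for the second inequality. The only point that deserves care is the index shift in applying Lemma \ref{lemma:evi-forward-flow-heat}, and verifying that the $\mathcal{E}$- and $\mathcal{H}$-bounds are measured at compatible points of the trajectory so that the half-step Wasserstein terms genuinely cancel.
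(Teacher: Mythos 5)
Your proof is correct and matches the paper's approach: the paper's one-line proof cites precisely Lemma \ref{lemma:evi-forward-flow-heat} and Corollary \ref{corollary:evi-g} with the decomposition $\mathcal{F}=\mathcal{E}+\mathcal{H}$, and the index shift $t\mapsto t+1$ in the heat-step lemma is indeed the small detail needed to make the half-step Wasserstein terms cancel. You have simply made explicit what the paper leaves implicit.
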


\begin{proof}
    The corollary is a direct consequence of Lemma \ref{lemma:evi-forward-flow-heat} and Corollary \ref{corollary:evi-g}, using the fact that $\mathcal{F}(\mu_{t+1})-\mathcal{F}(\pi) = [\mathcal{E}(\mu_{t+1}) - \mathcal{E}(\pi)] + [{\mathcal{H}(\mu_{t+1}) - \mathcal{H}(\pi)}]$.
\end{proof}

\begin{corollary}
\label{corollary:sum-f-pi-bound-heat-flow-smooth}
    Suppose that Assumptions \ref{assumption:lsc}, \ref{assumption:geo-convex}, and \ref{assumption:smooth-forward-flow} hold. Let $(\mu_t)_{t\geq 0}$ denote the sequence of measures defined by \eqref{eq:forward-flow-1} - \eqref{eq:forward-flow-2}. Suppose that $\eta_t = \eta$ for all $t\geq 0$, with $\eta\in(0,\frac{1}{L}]$. Let $\bar{\mu}_{T}$ be the average iterate defined in \eqref{eq:bar-mu-t-forward-flow-1-first-def} - \eqref{eq:bar-mu-t-forward-flow-2-first-def}. Then, for all $\pi\in\mathcal{P}_2(\mathbb{R}^d)$, 
    \begin{equation}
        \sum_{t=1}^T \mathcal{F}(\mu_{t}) - \sum_{t=1}^T  \mathcal{F}(\pi) \leq \frac{W_2^2(\mu_{0},\pi)- W_2^2(\mu_{T},\pi)}{2\eta} + LT\eta d.
        \label{eq:sum-f-pi-bound-heat-flow-smooth}
    \end{equation}
\end{corollary}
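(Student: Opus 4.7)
The plan is to apply the per-step bound from Corollary \ref{corollary:f-pi-bound-heat-flow-smooth} at each iteration and telescope. First I would invoke the inequality \eqref{eq:f-pi-bound-heat-flow-smooth} from Corollary \ref{corollary:f-pi-bound-heat-flow-smooth}, which holds under the stated assumptions (note in particular that $\eta \in (0, 1/L]$ makes the gradient-squared term $-\tfrac{\eta}{2}(1-L\eta)\int \|\zeta_t\|^2 \mathrm{d}\mu_t$ non-positive, so we can drop it), yielding for each $t \geq 0$
\begin{equation*}
\mathcal{F}(\mu_{t+1}) - \mathcal{F}(\pi) \leq \frac{W_2^2(\mu_t,\pi) - W_2^2(\mu_{t+1},\pi)}{2\eta} + L\eta d.
\end{equation*}

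Next I would sum this inequality over $t = 0, 1, \ldots, T-1$. On the left-hand side this produces $\sum_{t=1}^{T} [\mathcal{F}(\mu_t) - \mathcal{F}(\pi)]$, after re-indexing. On the right-hand side, the first term is a telescoping sum of $W_2^2$ values which collapses to $[W_2^2(\mu_0,\pi) - W_2^2(\mu_T,\pi)]/(2\eta)$; the second term contributes $T L \eta d$ since it is constant across $t$. Combining these yields exactly the claimed inequality \eqref{eq:sum-f-pi-bound-heat-flow-smooth}.

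There is no real obstacle here: the result is a direct telescoping consequence of the already-proved one-step bound in Corollary \ref{corollary:f-pi-bound-heat-flow-smooth}, and the averaging definition \eqref{eq:bar-mu-t-forward-flow-1-first-def} -- \eqref{eq:bar-mu-t-forward-flow-2-first-def} of $\bar{\mu}_T$ is only mentioned to foreshadow an application of Lemma \ref{lemma:average} in a subsequent bound on $\mathcal{F}(\bar{\mu}_T) - \mathcal{F}(\pi)$; it is not actually used in the proof of \eqref{eq:sum-f-pi-bound-heat-flow-smooth} itself. The only subtle point is to ensure that the one-step inequality from Corollary \ref{corollary:f-pi-bound-heat-flow-smooth}, stated for $t \geq 1$, applies equally at the initial step $t=0$ with $\mu_0$ in place of $\mu_t$; this is automatic since the bound depends only on the update rule \eqref{eq:forward-flow-1} -- \eqref{eq:forward-flow-2}, which is applied identically from any starting measure in $\mathcal{P}_2(\mathbb{R}^d)$.
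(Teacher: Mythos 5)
Your proof is correct and follows exactly the paper's own approach: invoke the per-step bound \eqref{eq:f-pi-bound-heat-flow-smooth} from Corollary \ref{corollary:f-pi-bound-heat-flow-smooth}, sum over $t=0,\dots,T-1$, telescope the Wasserstein distances, and re-index. Your additional remark about applicability at $t=0$ is a reasonable sanity check but not a new idea.
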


\begin{proof}
    The result follows immediately from Corollary \ref{corollary:f-pi-bound-heat-flow-smooth}, after summing \eqref{eq:f-pi-bound-heat-flow-smooth} over $t=0,\dots,T-1$; cancelling like terms in the telescoping sum; and changing indices. 
\end{proof}

\begin{corollary}
\label{corollary:average-bound-forward-flow-smooth}
    Suppose that Assumptions \ref{assumption:lsc}, \ref{assumption:geo-convex}, and \ref{assumption:smooth-forward-flow} hold. Let $(\mu_t)_{t\geq 0}$ denote the sequence of measures defined by \eqref{eq:forward-flow-1} - \eqref{eq:forward-flow-2}. Suppose that $\eta_t = \eta$ for all $t\geq 0$, with $\eta\in(0,\frac{1}{L}]$. Let $\bar{\mu}_{T}$ be the average iterate defined in \eqref{eq:bar-mu-t-forward-flow-1-first-def} - \eqref{eq:bar-mu-t-forward-flow-2-first-def}. Then, for all $\pi\in\mathcal{P}_2(\mathbb{R}^d)$, 
    \begin{equation}
    \label{eq:average-bound-forward-flow-smooth}
        \mathcal{F}(\bar{\mu}_T) - \mathcal{F}(\pi) \leq \frac{1}{T}\left[\frac{W_2^2(\mu_{0},\pi)}{2\eta} + LT\eta d\right].
    \end{equation}
\end{corollary}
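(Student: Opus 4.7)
The plan is to combine the geodesic-convexity Jensen-type inequality for the weighted average (Lemma \ref{lemma:average}) with the telescoping per-step descent bound in the smooth setting (Corollary \ref{corollary:sum-f-pi-bound-heat-flow-smooth}). Both of these ingredients are already at our disposal, so the proof should be little more than a short assembly.

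Concretely, I would first invoke Lemma \ref{lemma:average} to pass from the suboptimality of the average iterate to the average suboptimality of the individual iterates:
\begin{equation*}
\mathcal{F}(\bar{\mu}_T) - \mathcal{F}(\pi) \;\leq\; \frac{1}{T}\sum_{t=1}^{T}\bigl[\mathcal{F}(\mu_t)-\mathcal{F}(\pi)\bigr].
\end{equation*}
Then I would apply Corollary \ref{corollary:sum-f-pi-bound-heat-flow-smooth}, which, under Assumptions \ref{assumption:lsc}, \ref{assumption:geo-convex}, \ref{assumption:smooth-forward-flow} and $\eta\in(0,1/L]$, bounds the sum on the right by
\begin{equation*}
\sum_{t=1}^{T}\bigl[\mathcal{F}(\mu_t)-\mathcal{F}(\pi)\bigr]\;\leq\;\frac{W_2^2(\mu_0,\pi)-W_2^2(\mu_T,\pi)}{2\eta}+LT\eta d.
\end{equation*}

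Finally, since $W_2^2(\mu_T,\pi)\geq 0$, the negative term can simply be discarded, and dividing through by $T$ yields the stated bound \eqref{eq:average-bound-forward-flow-smooth}. There is no real obstacle here: the work has been done in Lemma \ref{lemma:g-descent-smooth}, Lemma \ref{lem:heat-step-tight}, and Corollary \ref{corollary:f-pi-bound-heat-flow-smooth}, where the $L\eta d$ heat-step penalty was isolated and the gradient-squared term was cancelled using $\eta\leq 1/L$; after that, telescoping and Jensen's inequality are entirely routine.
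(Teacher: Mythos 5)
Your proof is correct and follows exactly the paper's route: apply Lemma \ref{lemma:average} to reduce to the averaged suboptimality, invoke Corollary \ref{corollary:sum-f-pi-bound-heat-flow-smooth} for the telescoped bound, drop the non-positive $-W_2^2(\mu_T,\pi)$ term, and divide by $T$. Nothing is missing.
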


\begin{proof}
    The result follows trivially from Lemma \ref{lemma:average} and Corollary \ref{corollary:sum-f-pi-bound-heat-flow-smooth}.
\end{proof}

\begin{theorem}
\label{thm:constant-step-forward-flow-smooth}
    Suppose that Assumption \ref{assumption:lsc} and \ref{assumption:geo-convex}, and \ref{assumption:smooth-forward-flow} hold. Let $(\mu_t)_{t\geq 0}$ denote the sequence of measures defined by \eqref{eq:forward-flow-1} - \eqref{eq:forward-flow-2}. Suppose that $\eta_t=\eta$ for all $t\geq 0$, with $\eta\in(0,\frac{1}{L}]$. Let $\bar{\mu}_T$ be the average iterate defined in \eqref{eq:bar-mu-t-forward-flow-1-first-def} - \eqref{eq:bar-mu-t-forward-flow-2-first-def}. Then, for all $\pi\in\mathcal{P}_2(\mathbb{R}^d)$, the upper bound in \eqref{eq:average-bound-forward-flow-smooth} is minimized when
    \vspace{-3mm}
    \begin{equation}
        \eta = \frac{W_2(\mu_{0},\pi)}{\sqrt{2LTd}}, \label{eq:optimal-lr-forward-flow-smooth}
    \end{equation}
    Moreover, provided this choice of $\eta$ satisfies $\eta\leq \frac{1}{L}$, it guarantees
    \begin{equation}
        \mathcal{F}(\bar{\mu}_T) - \mathcal{F}(\pi) \leq \frac{1}{\sqrt{T}} W_2(\mu_{0},\pi)\sqrt{2Ld}. \label{eq:optimal-average-bound-forward-flow-smooth}
    \end{equation}
\end{theorem}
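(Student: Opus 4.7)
The proof is a direct optimization exercise applied to the bound established in Corollary \ref{corollary:average-bound-forward-flow-smooth}. The plan is to treat the right-hand side of \eqref{eq:average-bound-forward-flow-smooth} as a one-variable function of $\eta>0$, find its unique positive minimizer on $(0,\infty)$, then verify that this minimizer respects the constraint $\eta\in(0,\tfrac{1}{L}]$ required by the preceding lemmas, and finally substitute back to obtain the claimed rate.

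First, I would rewrite the upper bound as
\begin{equation}
    \phi(\eta) := \frac{W_2^2(\mu_0,\pi)}{2T\eta} + L\eta d,
\end{equation}
so that Corollary \ref{corollary:average-bound-forward-flow-smooth} reads $\mathcal{F}(\bar{\mu}_T) - \mathcal{F}(\pi) \leq \phi(\eta)$. Since $\phi$ is the sum of a strictly convex decreasing term and a strictly increasing linear term on $(0,\infty)$, it is strictly convex, so any critical point is the unique global minimizer. Differentiating and setting $\phi'(\eta)=0$ gives
\begin{equation}
    -\frac{W_2^2(\mu_0,\pi)}{2T\eta^2} + Ld = 0 \quad \Longleftrightarrow \quad \eta^2 = \frac{W_2^2(\mu_0,\pi)}{2LTd},
\end{equation}
whose positive root is exactly \eqref{eq:optimal-lr-forward-flow-smooth}.

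Next, I would substitute $\eta = W_2(\mu_0,\pi)/\sqrt{2LTd}$ back into $\phi(\eta)$. A short calculation shows that both summands coincide, each equalling $\tfrac{1}{2}W_2(\mu_0,\pi)\sqrt{2Ld/T}$, so that $\phi(\eta^{\ast}) = W_2(\mu_0,\pi)\sqrt{2Ld/T}$, which is precisely \eqref{eq:optimal-average-bound-forward-flow-smooth}.

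The only subtle point, and the reason the theorem carries the qualifier ``provided this choice of $\eta$ satisfies $\eta\leq \tfrac{1}{L}$,'' is the feasibility constraint inherited from Lemma \ref{lemma:g-descent-smooth}: the descent bound from which Corollary \ref{corollary:average-bound-forward-flow-smooth} is derived required $\eta\in(0,\tfrac{1}{L}]$. Thus the unconstrained minimizer $\eta^{\ast}$ above is the true optimizer of the bound only when $W_2(\mu_0,\pi)/\sqrt{2LTd}\leq 1/L$, i.e., when $T\geq LW_2^2(\mu_0,\pi)/(2d)$; this is the regime in which the rate \eqref{eq:optimal-average-bound-forward-flow-smooth} is achieved. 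Otherwise, $\phi$ is still decreasing on the entire admissible interval, and the best admissible choice is the boundary $\eta=1/L$, which yields a bound of order $W_2^2(\mu_0,\pi)L/(2T)+d$, dominated by the additive $d$ term. This case distinction is the only genuine issue; the calculus itself is routine, as already observed in the analogous derivation for Theorem \ref{thm:constant-step-forward-flow}.
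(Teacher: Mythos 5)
Your proof is correct and follows essentially the same approach as the paper: differentiate the upper bound from Corollary \ref{corollary:average-bound-forward-flow-smooth} with respect to $\eta$, set the derivative to zero, solve for $\eta$, and substitute back. Your added remark about the feasibility constraint $\eta \le \tfrac{1}{L}$ and the boundary-case behavior when $T$ is small is a sensible elaboration that the paper simply encodes in the ``provided this choice of $\eta$ satisfies $\eta \le \tfrac{1}{L}$'' qualifier.
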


\begin{proof}
   The result follows straightforwardly from Corollary \ref{corollary:average-bound-forward-flow-smooth}. Namely, differentiating the RHS of \eqref{eq:average-bound-forward-flow-smooth} w.r.t. $\eta$, and setting equal to zero, gives $\smash{0=-\frac{1}{2}\eta^{-2}W_2^2(\mu_{0},\pi) + LTd}$.
   Solving for $\eta$ gives \eqref{eq:optimal-lr-forward-flow-smooth}, and substituting back into \eqref{eq:average-bound-forward-flow-smooth} gives \eqref{eq:optimal-average-bound-forward-flow-smooth}.
\end{proof}

\begin{tcolorbox}[enhanced,
  colback=white,
  frame hidden,
  borderline north={0.5pt}{0pt}{black!70},
  borderline south={0.5pt}{0pt}{black!70},
  arc=2pt,             
  left=0pt,            
  right=0pt,           
  top=4pt,             
  bottom=4pt,           
  parbox=false,
  before skip=10pt, 
  after skip=15pt]
\begin{example1}[\textsc{Sampling from a Target Probability Distribution}]
Let us return once again to our running example: $\mathcal{F}(\mu) = \mathrm{KL}(\mu\|\pi)$, with $\pi\propto e^{-U}$. In this case, Corollary \ref{corollary:average-bound-forward-flow-smooth} yields the convergence rate
\begin{equation}
    \mathrm{KL}(\bar{\mu}_T\|\pi) \leq \frac{1}{T}\left[ \frac{W_2^2(\mu_{0},\pi)}{2\eta} + LT\eta d\right]. \label{eq:kl-constant-step-size-bound-smooth}
\end{equation}
From Theorem \ref{thm:constant-step-forward-flow-smooth}, the \textit{optimal} step size is given by $\eta = \frac{W_2(\mu_{0},\pi)}{\sqrt{2LTd}}$. This yields the optimal convergence rate for ULA in the smooth setting, now using the notation $D_{0,\pi} = W_2(\mu_0,\pi)$, as
\begin{equation}
    \mathrm{KL}(\bar{\mu}_T\|\pi) \leq \frac{D_{0,\pi} \sqrt{2Ld} }{\sqrt{T}} . \label{eq:optimal-average-bound-forward-flow-kl-smooth}
\end{equation}
It is interesting to compare this to the \textit{optimal} convergence rate of ULA in the nonsmooth setting, which we recall from \eqref{eq:optimal-average-bound-forward-flow-kl} was given by
\begin{equation}
    \mathrm{KL}(\bar{\mu}_T\|\pi) \leq \frac{GD_{\frac{1}{2},\pi}}{\sqrt{T}}. \label{eq:optimal-average-bound-forward-flow-kl-recall}
\end{equation}
In particular, the convergence rates in both the smooth setting and the nonsmooth setting have the same $\mathcal{O}(\frac{1}{\sqrt{T}})$ dependence on the number of iterations; only the constants differ.

\paragraph{Remark}
In the smooth case, the optimal step size and convergence rate for ULA are quantitatively different from those for (Euclidean) gradient descent. In the Euclidean case, the optimal choice of step size is $\eta=\frac{1}{L}$, which guarantees a convergence rate of $\mathcal{O}(\frac{1}{T})$ \citep[e.g.,][Theorem 2.1.4]{nesterov2018lectures}. Meanwhile, ULA can only attain a rate of $\mathcal{O}(\frac{1}{\sqrt{T}})$. This is due to the additional bias term $L\eta d$, which arises due to the heat step in the forward-flow discretization of the WGF.
\end{example1}
\end{tcolorbox}

\subsubsection{Deterministic Case: Adaptive Step Size} 
We now analyze convergence in the case where the step size $(\eta_t)_{t\geq 1}$ is defined adaptively by \eqref{eq:dog-lr-forward-flow}. On this occasion, we will analyze the convergence of the uniform average, as defined in \eqref{eq:bar-mu-t-forward-flow-1-first-def} - \eqref{eq:bar-mu-t-forward-flow-2-first-def}.

\begin{lemma}
\label{lemma:smooth-grad-bound-improved}
Suppose that Assumptions \ref{assumption:lsc}, \ref{assumption:geo-convex}, and \ref{assumption:smooth-forward-flow} hold.\footnote{Strictly speaking, we require a slightly stronger version of Assumption \ref{assumption:smooth-forward-flow}. In particular, we now assume that for all $\mu\in\mathcal{P}_2(\mathcal{X})$, the function $E_{\mu}:\mathbb{R}^d\rightarrow\mathbb{R}$, defined $E_{\mu}(x) = V(x) + \int_{\mathbb{R}^d} W(x-y)\mathrm{d}\mu(y)$, is $L$-smooth in the Euclidean sense. This is strictly stronger than our existing assumption, i.e., that $\mathcal{E}(\mu) = \int_{\mathbb{R}^d} E_\mu(x)\mathrm{d}\mu(x) $ is $L$-smooth in the Wasserstein sense.} Let $(\mu_t)_{t\geq 0}$ denote the sequence of measures defined by \eqref{eq:forward-flow-1} - \eqref{eq:forward-flow-2}, the forward-flow discretization of the WGF in \eqref{eq:wasserstein-grad-flow}.  
Then 
\begin{align}
    \sum_{t=1}^T \int_{\mathbb{R}^d}\|\zeta_t(x)\|^2\,\mathrm{d}\mu_t(x) &\leq \frac{8L}{3} \sum_{t=1}^T \left(\mathcal{F}(\mu_t) - \mathcal{F}(\pi)\right) + 2LdT. \label{eq:smooth-grad-bound-1-improved}
\end{align}
\end{lemma}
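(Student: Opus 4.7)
The plan is as follows. I would establish the bound one term at a time: for each $t$, I aim to show $\int \|\zeta_t\|^2\,\mathrm{d}\mu_t \leq \tfrac{8L}{3}[\mathcal{F}(\mu_t) - \mathcal{F}(\pi)] + 2Ld$, and then sum. The target constants $\tfrac{8L}{3}$ and $2Ld$ dictate the choice of \emph{virtual} step size $\eta_\star := \tfrac{1}{2L}$ used below, since at that value $\eta_\star(1 - L\eta_\star/2) = \tfrac{3}{8L}$ and $\tfrac{8L}{3}\cdot\tfrac{3d}{4} = 2Ld$, so the arithmetic lines up exactly.

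Fix $t$ and define the hypothetical forward-step measure $\tilde\mu_t := (\mathrm{id} - \eta_\star \zeta_t)_\# \mu_t$. Applying Lemma~\ref{lemma:g-descent-smooth} with $\eta = \eta_\star \in (0, 1/L]$ gives
\begin{equation*}
    \mathcal{E}(\tilde\mu_t) \leq \mathcal{E}(\mu_t) - \tfrac{3}{8L}\int_{\mathbb{R}^d}\|\zeta_t(x)\|^2\,\mathrm{d}\mu_t(x).
\end{equation*}
Adding $\mathcal{H}(\tilde\mu_t)$ to both sides, invoking the global minimality $\mathcal{F}(\tilde\mu_t) \geq \mathcal{F}(\pi)$, and rearranging produces the per-step inequality
\begin{equation*}
    \tfrac{3}{8L}\int\|\zeta_t\|^2\,\mathrm{d}\mu_t \leq \bigl[\mathcal{F}(\mu_t) - \mathcal{F}(\pi)\bigr] + \bigl[\mathcal{H}(\tilde\mu_t) - \mathcal{H}(\mu_t)\bigr].
\end{equation*}
The whole claim therefore reduces to showing $\mathcal{H}(\tilde\mu_t) - \mathcal{H}(\mu_t) \leq \tfrac{3d}{4}$.

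The heart of the argument is this entropy bound, which I would obtain via the standard pushforward change-of-variables identity
\begin{equation*}
    \mathcal{H}(\tilde\mu_t) - \mathcal{H}(\mu_t) = -\int \log\det\bigl(I - \eta_\star \nabla^2 E_{\mu_t}(x)\bigr)\,\mathrm{d}\mu_t(x).
\end{equation*}
This is precisely where the strengthened assumption in the footnote (Euclidean $L$-smoothness of $E_{\mu_t}$ for each $\mu$) becomes essential: it yields a well-defined Hessian with $-LI \preceq \nabla^2 E_{\mu_t} \preceq LI$, so that $\eta_\star \nabla^2 E_{\mu_t}$ has eigenvalues in $[-\tfrac12, \tfrac12]$ and $\mathrm{id} - \eta_\star \nabla E_{\mu_t}$ is a $C^1$ diffeomorphism with strictly positive Jacobian. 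Applying the elementary scalar bound $-\log(1-x) \leq x + x^2$ (valid on $[-\tfrac12, \tfrac12]$) eigenvalue-wise, together with the trace bounds $\mathrm{tr}(\nabla^2 E_{\mu_t}) \leq Ld$ and $\mathrm{tr}((\nabla^2 E_{\mu_t})^2) \leq L^2 d$, then gives
\begin{equation*}
    \mathcal{H}(\tilde\mu_t) - \mathcal{H}(\mu_t) \leq \eta_\star L d + \eta_\star^2 L^2 d = \tfrac{3d}{4},
\end{equation*}
as required. Summing the per-step inequality over $t \in [T]$ and multiplying through by $\tfrac{8L}{3}$ then yields the claim. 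The principal technical obstacle will be justifying the change-of-variables identity rigorously — in particular, verifying that $\tilde\mu_t$ is absolutely continuous with finite negative entropy — but this follows since $\mu_t$ has a density (being the output of a Gaussian convolution at the previous iteration) and $\mathrm{id} - \eta_\star \nabla E_{\mu_t}$ is a bi-Lipschitz diffeomorphism by the Hessian bound.
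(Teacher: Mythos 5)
Your proposal is correct and follows essentially the same route as the paper's proof: a virtual forward step with step size $\tfrac{1}{2L}$, the descent inequality for $\mathcal{E}$, a change-of-variables bound on the entropy increase of the pushforward, the global minimality $\mathcal{F}(\pi)\leq\mathcal{F}(\mu_\eta^{+})$, and summation over $t$. The only (cosmetic) difference is how the log-determinant is controlled — you apply the scalar bound $-\log(1-x)\leq x+x^2$ eigenvalue-wise to the symmetric Hessian, whereas the paper uses a matrix power-series estimate $-\log\det(I-A)\leq \mathrm{Tr}[A]+\tfrac{\|A\|_F^2}{2(1-\|A\|_{\mathrm{op}})}$ — and both yield the same constant $\tfrac{3d}{4}$.
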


\begin{proof}
    See Appendix \ref{app:additional-proofs-forward-flow-smooth-deterministic-adaptive}.
\end{proof}

\begin{proposition}
\label{prop:smooth-adaptive-convergence-rate-improved}
Suppose that Assumptions \ref{assumption:lsc}, \ref{assumption:geo-convex}, and \ref{assumption:smooth-forward-flow} hold. Let $(\mu_t)_{t\geq 0}$ denote the sequence of measures defined by \eqref{eq:forward-flow-1} - \eqref{eq:forward-flow-2}, 
with $(\eta_t)_{t\geq 1}$ defined as in \eqref{eq:dog-lr-forward-flow}. Let $(\bar{\mu}_t)_{t\geq 0}$ be the sequence of measures defined according to \eqref{eq:bar-mu-t-forward-flow-1-first-def} - \eqref{eq:bar-mu-t-forward-flow-2-first-def}. Then, for all $T\geq 1$, we have 
\begin{equation}
\mathcal{F}(\bar{\mu}_T) - \mathcal{F}(\pi) = \mathcal{O}\left(\frac{\frac{8}{3}L\left(d_1\log_{+}\frac{\bar{r}_{T+1}}{r_{\varepsilon}} + \bar{r}_{T+1}\right)^2}{T} + \frac{\left(d_1\log_{+}\frac{\bar{r}_{T+1}}{r_{\varepsilon}} + \bar{r}_{T+1}\right)\sqrt{2Ld}}{\sqrt{T}}\right).
\end{equation}
\end{proposition}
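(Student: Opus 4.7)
The plan is to combine the nonsmooth (regret-type) bound underlying Proposition~\ref{prop:dog-wgd-bound-1-forward-flow-uniform} with the smoothness-based gradient inequality of Lemma~\ref{lemma:smooth-grad-bound-improved}, and to resolve the resulting self-bounding inequality by solving a quadratic. This is the natural analog, on the Wasserstein side, of the smooth analysis of DoG in \citet{ivgi2023dog}.

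First, I would revisit the proof of Proposition~\ref{prop:dog-wgd-bound-1-forward-flow-uniform}. That argument is obtained by summing the EVI-type inequality of Corollary~\ref{corollary:evi-forward-flow} over $t \in [T]$ with $\eta_t$ given by \eqref{eq:dog-lr-forward-flow}, telescoping the $W_2^2$ terms while bookkeeping the logarithmic factor produced by the non-constant step sizes (this is exactly what contributes the $d_1\log_+\tfrac{\bar{r}_{T+1}}{r_\varepsilon}$ term via an Abel-summation-style estimate), and only then applying Jensen via Lemma~\ref{lemma:average}. Extracting the intermediate, pre-Jensen estimate gives
\begin{equation}
S_T := \sum_{t=1}^{T} \bigl[\mathcal{F}(\mu_t) - \mathcal{F}(\pi)\bigr] = \mathcal{O}\!\left( C_T \sqrt{G_T} \right), \qquad C_T := d_1\log_+\tfrac{\bar{r}_{T+1}}{r_\varepsilon} + \bar{r}_{T+1}.
\end{equation}
Note $S_T \geq 0$ since $\pi$ minimizes $\mathcal{F}$.

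Next, I would invoke Lemma~\ref{lemma:smooth-grad-bound-improved} to bound $G_T \leq \tfrac{8L}{3} S_T + 2LdT$. Substituting into the above gives the self-bounding inequality
\begin{equation}
S_T \leq \mathcal{O}(C_T)\, \sqrt{\tfrac{8L}{3} S_T + 2LdT}.
\end{equation}
Squaring both sides and rearranging (absorbing $\mathcal{O}(1)$ constants into $C_T$), this becomes $S_T^2 - \tfrac{8L C_T^2}{3} S_T - 2LdT C_T^2 \leq 0$. Solving this quadratic and applying the elementary inequality $\sqrt{a+b} \leq \sqrt{a} + \sqrt{b}$ to the discriminant, one obtains
\begin{equation}
S_T \leq \tfrac{8 L C_T^2}{3} + C_T \sqrt{2LdT}.
\end{equation}
Finally, Lemma~\ref{lemma:average} yields $\mathcal{F}(\bar{\mu}_T) - \mathcal{F}(\pi) \leq S_T/T$; dividing the previous display by $T$ produces the two terms in the claimed bound.

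The main obstacle is the first step: we need to bound $S_T$ itself, not merely $\mathcal{F}(\bar{\mu}_T) - \mathcal{F}(\pi)$, because the smoothness estimate in Lemma~\ref{lemma:smooth-grad-bound-improved} is phrased in terms of $\sum_{t=1}^T[\mathcal{F}(\mu_t) - \mathcal{F}(\pi)]$. This is slightly delicate because the adaptive step size $\eta_t$ varies with $t$, so the $W_2^2$-telescope in the EVI is not clean and the logarithmic factor $\log_+(\bar{r}_{T+1}/r_\varepsilon)$ must be tracked carefully. Once this pre-Jensen form is in hand, the remainder of the argument is a standard self-bounding manipulation.
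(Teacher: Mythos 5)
Your proposal is correct and follows essentially the same route as the paper: extract the pre-Jensen bound $\sum_{t=1}^T[\mathcal{F}(\mu_t)-\mathcal{F}(\pi)] = \mathcal{O}\bigl(K_T\sqrt{G_T}\bigr)$ from the proof of Proposition~\ref{prop:dog-wgd-bound-1-forward-flow-uniform}, substitute the smoothness bound $G_T \leq \tfrac{8L}{3}S_T + 2LdT$ from Lemma~\ref{lemma:smooth-grad-bound-improved}, solve the resulting self-bounding quadratic, and finish with Lemma~\ref{lemma:average}. Your observation that the argument needs the bound on the sum $S_T$ rather than on $\mathcal{F}(\bar{\mu}_T)-\mathcal{F}(\pi)$ is exactly right, and that intermediate form is indeed available from the cited proposition's proof.
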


\begin{proof}
    See Appendix \ref{app:additional-proofs-forward-flow-smooth-deterministic-adaptive}.
\end{proof}

\begin{corollary}
\label{corollary:smooth-dog-convergence}
    Suppose that Assumption 
    \ref{assumption:lsc}, \ref{assumption:geo-convex}, and \ref{assumption:smooth-forward-flow} hold. Let $(\mu_t)_{t\geq 0}$ denote the sequence of measures defined by \eqref{eq:forward-flow-1} - \eqref{eq:forward-flow-2}, 
    with $(\eta_t)_{t\geq 1}$ defined as in \eqref{eq:dog-lr-forward-flow}. Let $(\bar{\mu}_t)_{t\geq 0}$ be the sequence of measures defined according to \eqref{eq:bar-mu-t-forward-flow-1-first-def} - \eqref{eq:bar-mu-t-forward-flow-2-first-def}. Let $D\geq d_1$. Then, on the event $\{\bar{r}_{T+1}\leq D\}$ it holds that
    \begin{equation}
    \label{eq:smooth-dog-convergence-improved}
        \mathcal{F}(\bar{\mu}_{T}) - \mathcal{F}(\pi) =\mathcal{O}\left(\frac{D\sqrt{2Ld}}{\sqrt{T}}\log_{+}\left(\frac{D}{r_{\varepsilon}}\right)\right).
    \end{equation}
\end{corollary}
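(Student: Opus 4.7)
The plan is to derive Corollary \ref{corollary:smooth-dog-convergence} as a direct specialization of Proposition \ref{prop:smooth-adaptive-convergence-rate-improved} to the event $\{\bar{r}_{T+1}\leq D\}$. The Proposition already provides a data-dependent bound of the form
\[
\mathcal{F}(\bar{\mu}_T) - \mathcal{F}(\pi) = \mathcal{O}\!\left(\frac{L\bigl(d_1\log_+\tfrac{\bar{r}_{T+1}}{r_\varepsilon} + \bar{r}_{T+1}\bigr)^2}{T} + \frac{\bigl(d_1\log_+\tfrac{\bar{r}_{T+1}}{r_\varepsilon} + \bar{r}_{T+1}\bigr)\sqrt{2Ld}}{\sqrt{T}}\right),
\]
so the task reduces to replacing the data-dependent quantities $d_1$ and $\bar{r}_{T+1}$ by the uniform upper bound $D$, and then simplifying the resulting two-term expression into a single leading-order term.

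The first step is a direct monotonicity argument. Since $D\geq d_1$ and $\log_+$ is nondecreasing, on the event $\{\bar{r}_{T+1}\leq D\}$ we have
\[
d_1\log_+\!\tfrac{\bar{r}_{T+1}}{r_\varepsilon} + \bar{r}_{T+1} \;\leq\; D\log_+\!\tfrac{D}{r_\varepsilon} + D \;=\; \mathcal{O}\!\left(D\log_+\!\tfrac{D}{r_\varepsilon}\right),
\]
where in the last step the additive $D$ is absorbed into the $\log_+$ factor (using the convention, standard in this literature, that the $\log_+$ is understood as $\max\{1,\log(\cdot)\}$, or else writing the subleading $D$ as an additive contribution of the same big-$\mathcal{O}$ type). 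Substituting this bound back into the Proposition produces a two-term bound of order
\[
\frac{LD^{2}\log_+^{2}(D/r_\varepsilon)}{T} \;+\; \frac{D\sqrt{2Ld}\,\log_+(D/r_\varepsilon)}{\sqrt{T}}.
\]

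The remaining step, and the only one with any subtlety, is to absorb the $1/T$ term into the $1/\sqrt{T}$ term. Their ratio is of order $\sqrt{L/d}\cdot D\log_+(D/r_\varepsilon)/\sqrt{T}$, which tends to zero as $T\to\infty$, so in the asymptotic big-$\mathcal{O}$ interpretation the first term is dominated by the second and can be dropped. This yields exactly the stated bound. I expect this last absorption to be the main point deserving care in the writeup: unlike Euclidean smooth optimization, where smoothness upgrades the rate from $1/\sqrt{T}$ to $1/T$, here the forward-flow heat step introduces an unavoidable $L\eta d$ bias that preserves the $1/\sqrt{T}$ rate, so it is precisely the \emph{second} (nonsmooth-looking) term in Proposition \ref{prop:smooth-adaptive-convergence-rate-improved} that constitutes the leading order, with the $LD^{2}/T$ contribution from smoothness playing only a subleading role.
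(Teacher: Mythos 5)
Your proposal is correct and follows essentially the same route as the paper's proof: specialize Proposition \ref{prop:smooth-adaptive-convergence-rate-improved} to the event $\{\bar r_{T+1}\le D\}$, use $d_1\le D$ and $\bar r_{T+1}\le D$, absorb the additive $D$ into $D\log_+(D/r_\varepsilon)$, and discard the $\mathcal O(LD^2\log_+^2/T)$ term as asymptotically dominated by the $\mathcal O(D\sqrt{2Ld}\log_+/\sqrt T)$ term. The only cosmetic difference is that the paper drops the dominated term first and then substitutes $D$, whereas you substitute first and drop last; your closing remark about the heat-step bias is a faithful reading of the paper's own commentary following Theorem \ref{thm:constant-step-forward-flow-smooth}.
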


\begin{proof}
    See Appendix \ref{app:additional-proofs-forward-flow-smooth-deterministic-adaptive}. 
\end{proof}

\begin{tcolorbox}[enhanced,
  colback=white,
  frame hidden,
  borderline north={0.5pt}{0pt}{black!70},
  borderline south={0.5pt}{0pt}{black!70},
  arc=2pt,             
  left=0pt,            
  right=0pt,           
  top=4pt,             
  bottom=4pt,           
  parbox=false,
  before skip=10pt, 
  after skip=15pt]
\begin{example1}[\textsc{Sampling from a Target Probability Distribution}]
Recall once again our running example: $\mathcal{F}(\mu) = \mathrm{KL}(\mu\|\pi)$, with $\pi\propto e^{-U}$. 

Due to Proposition \ref{prop:smooth-adaptive-convergence-rate-improved} and Corollary \ref{corollary:smooth-dog-convergence}, we have the following guarantees for the forward-flow discretization of the WGF (i.e., ULA) with step size schedule $(\eta_t)_{t\geq 1}$, in the smooth setting. For any $\smash{D\geq W_2(\mu_{\frac{1}{2}},\pi)}$, it holds on the event $\smash{\{\bar{r}_{T+1} \leq D)}$ that
\begin{equation}
    \mathrm{KL}\left(\bar{\mu}_{T}|\pi\right)= \mathcal{O}\left(\frac{D \sqrt{2Ld}}{\sqrt{T}}\log_{+}\left(\frac{D}{r_{\varepsilon}}\right)\right),
\end{equation}
where $\bar{\mu}_T$ is the uniform average defined in \eqref{eq:bar-mu-t-forward-flow-1-first-def} - \eqref{eq:bar-mu-t-forward-flow-2-first-def}. Thus, also in the smooth setting, our algorithm essentially matches the convergence rate of optimally tuned ULA, cf. \eqref{eq:optimal-average-bound-forward-flow-kl-smooth}, up to an additional logartihmic factor. 
\end{example1}
\end{tcolorbox}

\section{Practical Algorithms}
\label{sec:algorithms}
In this section, we introduce practical, particle-based sampling algorithms based on the theoretical results obtained in Section \ref{sec:main-results}. For simplicity, our exposition focuses on the unconstrained setting. We emphasize, however, that our algorithms can easily be generalized to constrained setting according to the extension discussed in Section \ref{sec:constrained-domains}.

\subsection{The Forward-Flow Discretization}
We first present algorithms based on the forward-flow discretization of the WGF (see Sections \ref{sec:forward-flow-nonsmooth} - \ref{sec:forward-flow-smooth}). Recall, from \eqref{eq:forward-flow-1} - \eqref{eq:forward-flow-2}, the general form of the algorithm. Let $\smash{x_0\sim \mu_0}$, where $\smash{\mu_0\in\mathcal{P}_2(\mathbb{R}^d)}$. Then, for $t\geq 0$, writing $\smash{\mu_t = \mathrm{Law}(x_t)}$ and $\smash{\mu_{t+\frac{1}{2}} = \mathrm{Law}(x_{t+\frac{1}{2}})}$, update
\begin{alignat}{2}
    x_{t+\frac{1}{2}} &= x_t - \eta_t \nabla_{W_2}\mathcal{E}(\mu_t)(x_t) \label{eq:forward-flow-lagrangian-1-recall} \\
    x_{t+1} &= x_{t+\frac{1}{2}} + \sqrt{2\eta_{t}} z_t, \label{eq:forward-flow-lagrangian-2-recall}
    \intertext{
    where $z_t\stackrel{\mathrm{i.i.d.}}\sim \mathcal{N}(0,\mathbf{I}_d)$, and where, from \eqref{eq:dog-lr-forward-flow}, the sequence of step sizes $(\eta_t)_{t\geq 0}$ is defined according to
    }
    \eta_t &= \frac{\max\left[r_\varepsilon, \max_{1\leq s\leq t} W_2(\mu_{\frac{1}{2}},\mu_{s-\frac{1}{2}})\right]}{\sqrt{\sum_{s=1}^{t} \int_{\mathbb{R}^d}\|\nabla_{W_2}\mathcal{E}(\mu_s)(x)\|^2\,\mathrm{d}\mu_s(x)}}. \label{eq:dog-lr-forward-flow-recall}
\end{alignat}

\subsubsection{Particle-Based Approximation} In general, the measures $\smash{(\mu_t)_{t\geq 0}}$ and $\smash{(\mu_{t+\frac{1}{2}})_{t\geq 0}}$ are unknown, and thus we cannot implement these dynamics or compute the step size schedule directly. Instead, we will approximate the dynamics and the step size schedule using a system of interacting particles. Let $\smash{x_0^{i}\stackrel{\mathrm{i.i.d.}}{\sim}\mu_0}$, with $\mu_0\in\mathcal{P}_2(\mathbb{R}^d)$. For $t\geq 0$, let $
\smash{\hat{\mu}_t^n = \frac{1}{n}\sum_{j=1}^n \delta_{x_t^{j}}}$ and $\hat{\mu}_{t+\frac{1}{2}}^n = \frac{1}{n}\sum_{j=1}^n \delta_{x_{t+\frac{1}{2}}^j}$. We will then approximate the mean-field dynamics in \eqref{eq:forward-flow-lagrangian-1-recall} - \eqref{eq:forward-flow-lagrangian-2-recall} by
\begin{alignat}{2}
    x_{t+\frac{1}{2}}^{i} &= x_t^{i} - \hat{\eta}^n_t \nabla_{W_2}\mathcal{E}(\hat{\mu}^n_t)(x_t^i) \label{eq:forward-flow-lagrangian-1-particle} \\
    x_{t+1}^{i} &= x_{t+\frac{1}{2}}^{i} + \sqrt{2\hat{\eta}^n_{t}} z_t^{i}, \label{eq:forward-flow-lagrangian-2-particle}
    \intertext{
    where $z_t^{i}\stackrel{\mathrm{i.i.d.}}\sim \mathcal{N}(0,\mathbf{I}_d)$, and the step size schedule $(\hat{\eta}_t^n)_{t\geq 0}$ is now given by 
    }
    \hat{\eta}^n_t &= \frac{\max\left[r_\varepsilon,\max_{1\leq s\leq t} W_2(\hat{\mu}^n_{\frac{1}{2}},\hat{\mu}^n_{s-\frac{1}{2}})\right]}{\sqrt{\sum_{s=1}^{t} \int_{\mathbb{R}^d}\|\nabla_{W_2}\mathcal{E}(\hat{\mu}^n_s)(x)\|^2\mathrm{d}\hat{\mu}^n_s(x)}}. \label{eq:dog-lr-forward-flow-particle}
\end{alignat}
 This is the standard particle-based approximation of the mean-field Langevin algorithm \citep[e.g.,][]{suzuki2023convergence,suzuki2023uniform}, coupled with our dynamic step-size formula. In practice, the cost of computing the $W_2$ distance between two discrete distributions scales as $\mathcal{O}(n^3)$, which quickly becomes infeasible even for moderate numbers of particles. We will thus in fact use a more scalable alternative given by 
\begin{equation}
    \label{eq:practical-forward-flow-step}
\tcboxmath[colback=black!5,colframe=black!15,boxrule=0.4pt,arc=2pt,left=8pt, right=8pt, top=6pt, bottom=6pt]{
    \hat{\eta}_t^n = \frac{\max\Big[r_\varepsilon, \max_{1\leq s\leq t} \left(\frac{1}{n}\sum_{i=1}^n \|x_{\frac{1}{2}}^i - x_{s-\frac{1}{2}}^{i}\|^2\right)^{\frac{1}{2}}\Big]}{\sqrt{\sum_{s=1}^{t} \int_{\mathbb{R}^d}\|\nabla_{W_2}\mathcal{E}(\hat{\mu}^n_s) (x)\|^2\mathrm{d}\hat{\mu}^n_s(x)}}
},
\end{equation}
which replaces the $W_2$ distance in \eqref{eq:dog-lr-forward-flow-particle} by the identity-coupling proxy, and thus defines an upper bound for \eqref{eq:dog-lr-forward-flow-particle}. Together, \eqref{eq:forward-flow-lagrangian-1-particle} - \eqref{eq:forward-flow-lagrangian-2-particle} and \eqref{eq:practical-forward-flow-step} complete the specification of our tuning-free version of the forward-flow discretization of the WGF (i.e., the mean-field Langevin algorithm).

\subsubsection{Examples} For different choices of the objective functional, we can now obtain tuning-free variants of several existing algorithms.

\begin{example1}[\textsc{Sampling from a Target Probability Distribution}]
Suppose $\mathcal{F}(\mu) = \mathrm{KL}(\mu\|\pi)$, for some target probability distribution $\pi\propto e^{-U}$. In this case, the forward-flow discretization of the WGF, cf. \eqref{eq:forward-flow-lagrangian-1-particle} - \eqref{eq:forward-flow-lagrangian-2-particle}, corresponds to ULA (see Section \ref{sec:time-discretization}). Meanwhile, our step size schedule, cf. \eqref{eq:practical-forward-flow-step}, is given by 
\begin{equation}
    \hat{\eta}^n_t = \frac{\max\big[r_{\varepsilon},\max_{1\leq s\leq t} \big[\big(\frac{1}{n}\sum_{i=1}^n \|x_\frac{1}{2}^{i} - x_{s-\frac{1}{2}}^{i}\|^2\big)^{\frac{1}{2}} \big]\big]}{\left[\sum_{s=1}^t\left(\frac{1}{n}\sum_{i=1}^n  \|\nabla U(x_s^i)\|^2\right)\right]^{\frac{1}{2}}}.
\end{equation}
where $(x_t^i)_{t\geq 0}^{i\in [n]}$ denote a collection of (weakly) interacting ULA chains. We will refer to this approach, summarized in Algorithm \ref{alg:ula}, as ULA $\times$ \textsc{Fuse}.
\end{example1}

\begin{algorithm}[t]
\caption{ULA x \textsc{Fuse}}
\label{alg:ula}
\begin{algorithmic}[1]
\Require Target $\pi \propto e^{-U}$; particle number $n$; small initialization parameter $r_{\varepsilon}$,
\State Initialize parameters $x_0^{1},\dots,x_0^{n}$; step size $\eta_0 = r_{\varepsilon}$.
\For{$t=0,1,\dots,T-1$}
  \If{$t\geq 1$}
  \State Update step size:
  \begin{equation}
  \hat{\eta}^n_t = \frac{\max\big[r_{\varepsilon},\max_{1\leq s\leq t} \big[\big(\frac{1}{n}\sum_{i=1}^n \|x_{\frac{1}{2}}^{i} - x_{s-\frac{1}{2}}^{i}\|^2\big)^{\frac{1}{2}} \big]\big]}{\left[\sum_{s=1}^t\left(\frac{1}{n}\sum_{i=1}^n  \|\nabla U(x_s^i)\|^2\right)\right]^{\frac{1}{2}}}.
  \end{equation}
  \EndIf
  \State Update parameters:
  \begin{align}
    x_{t+\frac{1}{2}}^{i}&=x_t^{i} - \hat{\eta}^n_t\, \nabla U(x_t^{i}) \\
    x_{t+1}^{i} &= x_{t+\frac{1}{2}}^{i} + \sqrt{2\,\hat{\eta}^n_t}z_t^{i}, \quad \quad z_t^{i}\sim \mathcal{N}(0,\mathbf{I}_d).
  \end{align}
\EndFor
\State \textbf{Output:} final iterates $(x_T^{i})_{i=1}^n$ or averaged iterates $(\bar{x}_T^{i})_{i=1}^n = \big(\frac{1}{T}\sum_{t=1}^{T}x_t^{i}\big)_{i=1}^n$.
\end{algorithmic}
\vspace{0.25em}
\end{algorithm}

\begin{example1a}[\textsc{Sampling from a Target Probability Distribution via Entropy-Regularized KSD Minimization}] Suppose now that  $\mathcal{F}(\mu) = \frac{1}{2}\mathrm{KSD}^2(\mu|\pi) + \mathrm{Ent}(\mu)$, where $\mathrm{KSD}(\mu|\pi)$ denotes the kernel Stein discrepancy (KSD) \citep[e.g.,][]{liu2016kernelized,chwialkowski2016kernel}, 
 w.r.t. some target probability distribution $\pi\propto e^{-U}$. The squared KSD can be written as an interaction energy, cf. \eqref{eq:interaction}, and thus its Wasserstein gradient can be computed using standard results (see Section \ref{sec:objective-functions}). In particular, we have that \citep[see, e.g.,][]{korba2021kernel}
\begin{equation}
    \frac{1}{2}\mathrm{KSD}^2(\mu|\pi) = \frac{1}{2}\int_{\mathbb{R}^d}\int_{\mathbb{R}^d}  k_{\pi}(x,y) \,\mathrm{d}\mu(x)\,\mathrm{d}\mu(y),
\end{equation}
from which it follows that 
\begin{equation}
    \nabla_{W_2}\left[\frac{1}{2}\mathrm{KSD}^2(\mu|\pi)\right] = \int_{\mathbb{R}^d} \nabla_{2}k_{\pi}(x,\cdot)\,\mathrm{d}\mu(x), \label{eq:ksd-grad}
\end{equation}
where $k_{\pi}:\mathbb{R}^d\times\mathbb{R}^d\rightarrow\mathbb{R}$ denotes the Stein kernel, defined in terms of the target score $s(x) = \nabla \log \pi(x)$ and a base kernel $k(x,y)$ according to $k_{\pi}(x,y)  = s(x)^{\top}s(y) k(x,y) + s(x)^{\top} \nabla_{2}k(x,y) + \nabla_{1}k(x,y)^{\top} s(y) + \nabla_{2}^{\top}\nabla_{1} k(x,y)$. In this case, the forward-flow discretization of the WGF, cf. \eqref{eq:forward-flow-lagrangian-1-particle} - \eqref{eq:forward-flow-lagrangian-2-particle}, yields a Langevin version of KSD descent \citep{korba2021kernel}, namely, 
\begin{alignat}{2}
    x_{t+\frac{1}{2}}^{i} &= x_t^{i} - \hat{\eta}^n_t \frac{1}{n}\sum_{j=1}^n \nabla_{2} k_{\pi}(x_t^j, x_t^{i}) \label{eq:ksd-1} \\
    x_{t+1}^{i} &= x_{t+\frac{1}{2}}^{i} + \sqrt{2\hat{\eta}^n_{t}} z_t^{i}, \label{eq:ksd-2}
\end{alignat}
where our adaptive step size, cf. \eqref{eq:practical-forward-flow-step}, is now given by
\begin{equation}
    \hat{\eta}^n_t = \frac{\max\big[r_{\varepsilon},\max_{1\leq s\leq t} \big[\big(\frac{1}{n}\sum_{i=1}^n \|x_{\frac{1}{2}}^{i} - x_{s-\frac{1}{2}}^{i}\|^2\big)^{\frac{1}{2}} \big]\big]}{[\sum_{s=1}^{t} \frac{1}{n}\sum_{i=1}^n\|\frac{1}{n}\sum_{j=1}^n \nabla_{2} k_{\pi}(x_s^j, x_s^{i})\|^2]^{\frac{1}{2}}}.
\end{equation}
\end{example1a}

\begin{example2}[\textsc{Training a Mean-Field Neural Network}] Suppose $\mathcal{F}(\mu) = \mathcal{E}(\mu) + \mathrm{Ent}(\mu)$, where $\mathcal{E}(\mu) = \mathcal{E}_0(\mu) + \int r(x) \mathrm{d}\mu(x)$; $\mathcal{E}_0(\mu) = \frac{\lambda_1}{n}\sum_{k=1}^n \ell(y_k,h_{\mu}(z_k))$ is the (scaled) empirical risk of the mean-field neural network $h_{\mu}(z) = \int h_{x}(z)\mathrm{d}\mu(x)$ defined by a neuron $h_{x}(z)$ with parameter $x\in\mathbb{R}^d$, given training data $(z_i,y_i)_{i=1}^n\in\mathbb{R}^{d-1}\times\mathbb{R}$ and convex loss function $\ell:\mathbb{R}\times\mathbb{R}\rightarrow\mathbb{R}$; and $r:\mathbb{R}^d\rightarrow\mathbb{R}$ is a regularizer. For simplicity, let us suppose that $\ell(y,y') = \frac{1}{2}(y-y')^2$ and that $r(x) = 0$. Putting everything together, we thus have 
\begin{equation}
    \mathcal{E}(\mu) = \frac{\lambda_1}{2n}\sum_{k=1}^n (y_k - h_{\mu}(z_k))^2
\end{equation}
which implies, following the calculations in  \citet[][Section 3.1]{hu2021meanfield}, that 
\begin{equation}
    \nabla_{W_2}\mathcal{E}(\mu)(x) = -\frac{\lambda_1}{n}\sum_{k=1}^n \Big(y_k - \int h_{x}(z_k)\mathrm{d}\mu(x)\Big)\nabla_{x}h_x(z_k).
\end{equation}
In this case, the forward-flow discretization of the WGF, cf. \eqref{eq:forward-flow-lagrangian-1-particle} - \eqref{eq:forward-flow-lagrangian-2-particle}, corresponds to an example of the mean-field Langevin algorithm \citep[e.g.,][]{suzuki2023convergence}. In particular, we have 
\begin{align}
    x_{t+\frac{1}{2}}^{i} &= x_t^{i} - \hat{\eta}^n_t \Big[-\frac{\lambda_1}{n}\sum_{k=1}^n \Big(y_k - \frac{1}{n}\sum_{j=1}^n h_{x_t^{j}}(z_k)\Big)\nabla_{x}h_{x_t^{i}}(z_k) \Big]  
    \label{eq:forward-flow-lagrangian-1-particle-mfnn} \\
    x_{t+1}^{i} &= x_{t+\frac{1}{2}}^{i} + \sqrt{2\hat{\eta}^n_{t}} z_t^{i}. \label{eq:forward-flow-lagrangian-2-particle-mfnn}
\end{align}
Our adaptive step size rule, cf. \eqref{eq:practical-forward-flow-step}, results in a tuning-free version of this algorithm, and in this case is given by 
\begin{equation}
\hat{\eta}_t^n = \frac{\max[r_{\varepsilon},\max_{1\leq s\leq t} [(\frac{1}{n}\sum_{i=1}^n \|x_{\frac{1}{2}}^i - x_{s-\frac{1}{2}}^{i}\|^2)^{\frac{1}{2}}]]}{[\sum_{s=1}^{t} \frac{1}{n}\sum_{i=1}^n\|\frac{\lambda_1}{n}\sum_{k=1}^n (y_k - \frac{1}{n}\sum_{j=1}^n h_{x_s^{j}}(z_k))\nabla_{x}h_{x_s^{i}}(z_k) \|^2]^{\frac{1}{2}}}. 
\end{equation}
\end{example2}

\begin{algorithm}[b]
\caption{SGLD x \textsc{Fuse}}
\label{alg:sgld}
\begin{algorithmic}[1]
\Require
Dataset $\{y_i\}_{i=1}^N$; log-likelihood $\log p(y\mid x)$; log-prior $\log p(x)$; particle number $n$; small initialization parameter $r_{\varepsilon}$,
\State Initialize parameters $x_0^{i},\dots,x_0^{n}$; step size $\eta_0 = r_{\varepsilon}$.
\For{$t=0,1,\dots,T-1$}
  \State Sample mini-batch $\Omega_t \subset \{1,\dots,N\}$.
  \For{$i=1,\dots,n$}
  \State Compute gradient estimates:
  \[
    \hat{\zeta}_t^{i} = \frac{N}{|\Omega_t|}\sum_{j \in \Omega_t} \nabla_{x_t^{i}} \log p\!\left(y_j \mid x_t^{i}\right) + \nabla_{x_t^{i}} \log p(x_t^{i}).
  \]
  \EndFor
  \If{$t\geq 1$}
  \State Update step size:
  \begin{equation}
  \hat{\eta}^n_t = \frac{\max\big[r_{\varepsilon},\max_{1\leq s\leq t} \big[\big(\frac{1}{n}\sum_{i=1}^n \|x_{\frac{1}{2}}^{i} - x_{s-\frac{1}{2}}^{i}\|^2\big)^{\frac{1}{2}} \big]\big]}{\left[\sum_{s=1}^t\left(\frac{1}{n}\sum_{i=1}^n  \| \hat{\zeta}_s^i\|^2\right)\right]^{\frac{1}{2}}}.
  \end{equation}
  \EndIf
  \State Update parameters:
  \begin{align}
    x_{t+\frac{1}{2}}^{i}&=x_t^{i} - \hat{\eta}^n_t\, \hat{\zeta}_t^i \\
    x_{t+1}^{i} &= x_{t+\frac{1}{2}}^{i} + \sqrt{2\,\hat{\eta}^n_t}z_t^{i}, \quad \quad z_t^{i}\sim \mathcal{N}(0,\mathbf{I}_d).
  \end{align}
\EndFor
\State \textbf{Output:} final iterate $(x_T^{i})_{i=1}^n$ or averaged iterates $(\bar{x}_T^{i})_{i=1}^n = \big(\frac{1}{T}\sum_{t=1}^{T}x_t^{i}\big)_{i=1}^n$.
\end{algorithmic}
\vspace{0.25em}
\end{algorithm}

\subsubsection{Discussion} We conclude this section with some brief remarks regarding the algorithms introduced thus far.

\paragraph{Extensions} Naturally, one can also consider stochastic gradient (i.e., mini-batch) variants of these algorithms, in which gradients are replaced by (unbiased) stochastic estimates (see Section \ref{sec:stochastic-gradients}). Indeed, this is justified by our theoretical analysis in Section \ref{sec:stochastic-theory}. In \textsc{Example 1}, this leads to a step-size-free version of stochastic-gradient Langevin dynamics (SGLD) \citep{welling2011bayesian}. We will refer to this algorithm, summarized in Algorithm \ref{alg:sgld}, as SGLD x \textsc{Fuse}.

\paragraph{Computational Considerations} It is worth noting that ULA x \textsc{Fuse} (and SGLD x \textsc{Fuse}) differ from their classical formulation \citep[e.g.,][]{neal1992bayesian,roberts1996exponential,welling2011bayesian} in the sense that they involve a collection of (interacting) particles, rather than just a single particle. To be specific, in the classical setting, one would simulate a single trajectory $(x_t)_{t\geq 0}$, and then estimate expectations with respect to the target measure as $\smash{\mathbb{E}_{x\sim\pi}[f(x)] \approx \frac{1}{T}\sum_{t=1}^{T} f(x_t)}$. On the other hand, we simulate multiple trajectories $(x_t^{i})_{t\geq 0}^{i=1:n}$, and approximate expectations as $\smash{\mathbb{E}_{x\sim \pi}[f(x)] \approx \frac{1}{N}\sum_{i=1}^N f(x_t^{i})}$. This is sometimes referred to in the literature as \textit{parallel-chain ULA} or \textit{parallel-chain SGLD} \citep[e.g.,][]{futami2020accelerating,futami2021accelerated}. 

One can, of course, consider hybrids of these approaches, in which expectations with respect to the target measure are approximated by averaging over the trajectories of multiple particles, viz, $\smash{\mathbb{E}_{x\sim \pi}[f(x)] \approx \frac{1}{N}\sum_{i=1}^N[\frac{1}{T}\sum_{t=1}^T f(x_t^{i})]}$. This approach is particularly useful for ULA x \textsc{Fuse} since one may require a large number of samples to obtain accurate estimates of expectations with respect to the target measure, but only a (relatively) small number of particles to obtain a performant estimate of the \textsc{Fuse} step size schedule $(\hat{\eta}_t^n)_{t\geq 0}$. In this situation, one can also leverage parallel computing environments (e.g., GPUs or arrays of CPUs) for more efficient computation, as independent sets of trajectories can be computed in parallel.

\paragraph{Limitations} It is worth emphasizing that our theoretical results are only \emph{directly} applicable to \textsc{Example 1}. In particular, geodesic convexity does not hold either for the squared KSD \citep[][Section 4]{korba2021kernel}, or the objective functionals encountered in the training of mean-field neural networks \citep[e.g.,][Example 3.3]{lascu2024linear}. Nonetheless, our adaptive step size schedule is still applicable in these cases, with strong empirical performance (see Section \ref{sec:numerical-results}).

\subsection{The Forward Euler Discretization}
\label{sec:practical-forward-euler}
We now turn our attention to the forward Euler discretization of the WGF (see Appendices \ref{sec:forward-euler-nonsmooth} - \ref{sec:forward-euler-smooth}). In the interest of space, and given its relevance to the sampling problem, we will now focus only on the case where the internal energy in the objective functional coincides with the (negative) entropy: $\mathcal{F}(\mu) = \mathcal{E}(\mu) + \mathrm{Ent}(\mu)$. In this case, the forward Euler discretization of the WGF is defined as follows, cf. \eqref{eq:wasserstein-sub-grad-descent-lagrange}. Let $x_0\sim \mu_0$, for some $\mu_0\in\mathcal{P}_2(\mathbb{R}^d)$. Then, for $t\geq 0$, writing $\mu_t = \mathrm{Law}(x_t)$, update
\begin{align}
    x_{t+1} = x_t - \eta_t 
    \left(\nabla_{W_2} \mathcal{E}(\mu_t)(x_t)  + \nabla 
    \log \mu_t(x_t)\right)
    \label{eq:forward-euler-recall}
\intertext{where, from \eqref{eq:dog-lr}, the \textsc{Fuse} step size schedule is given by}
    \eta_t = \frac{\max\left[r_{\varepsilon},\max_{0\leq s\leq t} W_2(\mu_0,\mu_s)\right]}{\sqrt{\sum_{s=0}^{t} \int_{\mathbb{R}^d}\|\nabla_{W_2}\mathcal{E}(\mu_s)(x) + \nabla \log \mu_s(x)\|^2\,\mathrm{d}\mu_s(x)}}.
    \label{eq:dog-lr-forward-euler-recall}
\end{align}

\subsubsection{Particle-Based Approximation} In general, we cannot implement these dynamics directly, since both $(\mu_t)_{t\geq 0}$ and $(\nabla \log \mu_t)_{t\geq 0}$, are unknown. We would like, as before, to approximate the dynamics using a system of interacting particles. In this case, however, we have the additional difficulty that $\nabla \log \mu$ is only well defined when $\mu\in\mathcal{P}_{2,\mathrm{ac}}(\mathbb{R}^d)$. In particular, it is not well defined when $\mu\in\mathcal{P}_2(\mathbb{R}^d)\setminus\mathcal{P}_{2,\mathrm{ac}}(\mathbb{R}^d)$ is a discrete measure.

\paragraph{Kernel-Based Approximation} To circumvent this, we will use a kernel-based approximation, following the approach popularised in \citet{liu2016stein}, and recently extended to the mean-field setting by \citet{chazal2025computable}. In particular, we will replace the Wasserstein gradient $\nabla_{W_2}\mathcal{F}(\mu):=[\nabla_{W_2}\mathcal{E}(\mu) + \nabla \log \mu]$ in \eqref{eq:forward-euler-recall} - \eqref{eq:dog-lr-forward-euler-recall} by its kernel smoothed approximation, viz 
\begin{alignat}{2}
    \mathcal{S}_{\mu}\nabla_{W_2}\mathcal{F}(\mu) 
    &= \mathcal{S}_{\mu} \Big[\nabla_{W_2} \mathcal{E}(\mu) + \nabla \log \mu \Big] \\
    &=\int_{\mathbb{R}^d} \Big[k(x,\cdot) \left(\nabla_{W_2}\mathcal{E}(\mu)(x) + \nabla \log \mu(x)\right)\Big]\mathrm{d}\mu(x) \\
    &=\int_{\mathbb{R}^d} \Big[k(x,\cdot) \nabla_{W_2}\mathcal{E}(\mu)(x) - \nabla_{1}k(x,\cdot)\Big]\mathrm{d}\mu(x).
\end{alignat} 
$\smash{\mathcal{S}_{\mu}:L^2(\mu)\rightarrow\mathcal{H}_k^d}$ is the integral operator defined in Section \ref{sec:alternative-gradient-flows}, and where the final line follows via integration by parts (assuming mild boundary conditions).

\paragraph{Particle-Based Approximation} We can now construct a particle-based approximation for the algorithm defined in \eqref{eq:forward-euler-recall} - \eqref{eq:dog-lr-forward-euler-recall}. Similar to before, let $\smash{x_0^{i}\stackrel{\mathrm{i.i.d.}}{\sim} \mu_0}$, with $\mu_0\in\mathcal{P}_2(\mathbb{R}^d)$. Let $\hat{\mu}_t^n = \frac{1}{n}\sum_{j=1}^n \delta_{x_t^j}$. Then, for $t\geq 0$, we update
\begin{align}
    x_{t+1}^i = x_t^i - \hat{\eta}_t^n \int_{\mathbb{R}^d} \bigg(k(x,x_t^i) \nabla_{W_2}\mathcal{E}(\hat{\mu}_t^n)(x) - \nabla_{1}  k(x,x_t^i)\bigg)\,\mathrm{d}\hat{\mu}_t^n(x),
    \label{eq:forward-euler-particle}
\end{align}
where the sequence of step sizes $(\hat{\eta}_t^n)_{t\geq 0}$ is given by 
\begin{align}
\tcboxmath[colback=black!5,colframe=black!15,boxrule=0.4pt,arc=2pt,left=8pt, right=8pt, top=6pt, bottom=6pt]{
    \hat{\eta}^n_t = \frac{\max\left[r_\varepsilon, \max_{0\leq s\leq t} \left(\frac{1}{n}\sum_{i=1}^n \|x_{0}^i - x_{s}^{i}\|^2\right)^{\frac{1}{2}}\right]}{\sqrt{\sum_{s=0}^{t} \int_{\mathbb{R}^d}\| \int \big(k(y,x) \nabla_{W_2}\mathcal{E}(\hat{\mu}_s^n)(y)  - \nabla_{1}k(y,x)\big) \mathrm{d}\hat{\mu}^n_s(y)\|^2\,\mathrm{d}\hat{\mu}^n_s(x)}}
}.
    \label{eq:dog-lr-forward-euler-particle}
\end{align}
Together, \eqref{eq:forward-euler-particle} and \eqref{eq:dog-lr-forward-euler-particle} define the general form of our tuning-free version of the (kernelized) forward Euler discretization of the WGF. This corresponds to the recently introduced \emph{variational gradient descent} algorithm \citep[][Section 3.2.2.2]{chazal2025computable}, combined with our adaptive step size schedule. 

\paragraph{Remark} \emph{It is worth noting that other particle-based, kernel-based approximations of the forward Euler dynamics (and our step size schedule) are also possible. For example, the so-called \textit{blob method}, originating in \citet{carrillo2019blob}, approximates the entropy $\mathcal{H}(\mu) = \int \log[\mu(x)]\mu(x)\mathrm{d}x$ as $\smash{\mathcal{H}_{k}(\mu) = \int \log [k\star\mu(x)] \mu(x)\mathrm{d}x}$, for some kernel $k$. This method is more generally applicable than the one we have described above; in particular, it can be applied for other internal energies satisfying the standard assumptions, e.g. $\smash{\mathcal{H}(\mu) = \int \frac{\mu^m(x)}{m-1} \mathrm{d}x}$, not just the (negative) entropy. We refer to \citet{chen2018unified,liu2019understanding} for a further discussion of this approach in the context of Bayesian posterior sampling.}

\subsubsection{Examples} Similar to before, for different choices of the objective functional, we can now obtain tuning-free analogues of several existing algorithms.

\begin{example1}[\textsc{Sampling from a Target Probability Distribution}] Let $\mathcal{F}(\mu) = \mathrm{KL}(\mu\|\pi)$, for some target distribution $\pi\propto e^{-U}$.  In this case, the kernelized forward Euler discretization of the WGF, cf. \eqref{eq:forward-euler-particle}, corresponds to the popular SVGD algorithm, viz 
    \begin{equation}
    x_{t+1}^i = x_t^i - \hat{\eta}_t^n\frac{1}{n} \sum_{j=1}^n  \bigg(k(x_t^j,x_t^i) \nabla U(x_t^j) - \nabla_{1}  k(x_t^j,x_t^i)\bigg).
\end{equation}
Meanwhile, our adaptive step size rule, cf. \eqref{eq:dog-lr-forward-euler-particle}, yields a tuning-free variant of this algorithm (see Algorithm \ref{alg:svgd}). Specifically, we have 
\begin{equation}
    \hat{\eta}^n_t = \frac{\max\big[r_{\varepsilon},\max_{0\leq s\leq t} \big[\left(\frac{1}{n}\sum_{i=1}^n \|x_{0}^{i} - x_{s}^{i}\|^2\right)^{\frac{1}{2}} \big]\big]}{\big(\sum_{s=0}^t\big[\frac{1}{n}\sum_{i=1}^n  \big\|\frac{1}{n}\sum_{j=1}^n \big[k(x_s^{j},x_s^{i}) \nabla U(x_s^{j}) - \nabla_{1}k(x_s^j, x_s^{i})\big]\big\|^2\big]\big)^{\frac{1}{2}}}.
\end{equation}
\end{example1}

\begin{algorithm}[t]
\caption{SVGD x \textsc{Fuse}}
\label{alg:svgd}
\begin{algorithmic}[1]
\Require Target $\pi\propto e^{-U}$; 
particle number $n$; small initialization parameter $r_{\varepsilon}$, kernel $k:\mathbb{R}^d\times\mathbb{R}^d\rightarrow\mathbb{R}$.
\State Initialize parameters $x_0^{i},\dots,x_0^{n}$; step size $\eta_0 = r_{\varepsilon}$.
\For{$t=0,1,\dots,T-1$}
  \State Update step size:
  \[
  \hat{\eta}_t^n = \frac{\max\big[r_{\varepsilon},\max_{0\leq s\leq t} \big[\left(\frac{1}{n}\sum_{i=1}^n \|x_{0}^{i} - x_{s}^{i}\|^2\right)^{\frac{1}{2}} \big]\big]}{\big(\sum_{s=0}^t\big[\frac{1}{n}\sum_{i=1}^n  \big\|\frac{1}{n}\sum_{j=1}^n \big[k(x_s^{j},x_s^{i}) \nabla U(x_s^{j}) - \nabla_{1}k(x_s^j, x_s^{i})\big]\big\|^2\big]\big)^{\frac{1}{2}}}
  \]
  \State Update parameters:
  \begin{align}
    x_{t+1}^{i} &= x_{t}^{i} - \hat{\eta}^n_t \frac{1}{n}\sum_{j=1}^n \left[k(x_t^{j},x_t^{i}) \nabla U(x_t^j) - \nabla_{x_t^{j}}k(x_t^j, x_t^{i})\right]
  \end{align}
\EndFor
\State \textbf{Output:} final iterate $(x_T^{i})_{i=1}^n$ or averaged iterates $(\bar{x}_T^{i})_{i=1}^n = \left(\frac{1}{T}\sum_{t=1}^{T}x_t^{i}\right)_{i=1}^n$
\end{algorithmic}

\vspace{0.25em}
\end{algorithm}

\begin{example1a}[\textsc{Sampling from a Target Probability Distribution via KSD Minimization}] Suppose now that $\mathcal{F}(\mu) = \frac{1}{2}\mathrm{KSD}^2(\mu|\pi)$, for some target probability distribution $\pi\propto e^{-U}$. In this case, the objective functional is well-defined for discrete measures, and a kernel approximation is not required. We can thus directly approximate the forward Euler discretization of the WGF, cf. \eqref{eq:forward-euler-recall}, using a system of interacting particles. In particular, substituting the Wasserstein gradient from \eqref{eq:ksd-grad}, we have the update equation 
\begin{align}
    x_{t+1}^{i} &= x_t^{i} - \hat{\eta}^n_t \frac{1}{n}\sum_{j=1}^n \nabla_{2} k_{\pi}(x_t^j, x_t^{i})
\end{align}
where now the \textsc{Fuse} step size schedule is given by
\begin{align}
\hat{\eta}^n_t = \frac{\max\big[r_{\varepsilon},\max_{0\leq s\leq t} \big[\left(\frac{1}{n}\sum_{i=1}^n \|x_{0}^{i} - x_{s}^{i}\|^2\right)^{\frac{1}{2}} \big]\big]}{\big(\sum_{s=0}^t\big[\frac{1}{n}\sum_{i=1}^n  \big\| \frac{1}{n}\sum_{j=1}^n \nabla_{2} k_{\pi}(x_s^j, x_s^{i})\big\|^2\big]\big)^{\frac{1}{2}}}. \label{eq:ksd}
\end{align}
This algorithm can be viewed as a step-size-free version of KSD descent \citep{korba2021kernel}.\footnote{Strictly speaking, this is actually a tuning-free version of \emph{a variant of} KSD descent. In particular, the original KSD descent update equation reads $\smash{x_{t+1}^{i} = x_t^{i} - \hat{\eta}_t^n \frac{1}{n^2}\sum_{j=1}^n \nabla_{2} k_{\pi}(x_t^{j},x_t^{i})}$, i.e., the gradient is scaled by $\smash{\frac{1}{n^2}}$ rather than $\smash{\frac{1}{n}}$, as in \eqref{eq:ksd}. This difference arises due to the fact that the original update equation was derived by substituting an empirical measure into the objective functional itself, rather than its Wasserstein gradient. This defines an objective function $\smash{F:(\mathbb{R}^d)^n\rightarrow\mathbb{R}}$, given by $\smash{F([x^i]_{i=1}^n) = \frac{1}{2N^2}\sum_{i,j=1}^nk_{\pi}(x^i,x^j)}$, over the finite-dimensional Euclidean space $\smash{(\mathbb{R}^d)^n}$. This objective function can be optimized directly via (Euclidean) gradient descent, which yields the original update equation. We refer to \citet[][Section 3.2]{korba2021kernel} for further details.} 
\end{example1a}

\begin{example2}[\textsc{Training a Mean-Field Neural Network}]
Let $\mathcal{F}(\mu) = \mathcal{E}(\mu) + \mathrm{Ent}(\mu)$, where $\mathcal{E}(\mu) = \mathcal{E}_0(\mu) + \int r(x) \mathrm{d}\mu(x)$, with all terms as defined in the previous section. 
In this case, the kernelized forward Euler discretization of the WGF, cf. \eqref{eq:forward-euler-particle}, corresponds to a particular instance of the recently introduced \textit{variational gradient descent} \citep[][Section 3.2.2.2]{chazal2025computable}, given by 
\begin{equation}
    x_{t+1}^i = x_t^i - \frac{1}{n}\sum_{j=1}^n \hat{\eta}_t^n \bigg(k(x_t^j,x_t^i) \left[-\frac{\lambda_1}{n}\sum_{k=1}^n \left(y_k - \frac{1}{n}\sum_{l=1}^n h_{x_t^l}(z_k)\right)\nabla_{x}h_{x_t^j}(z_k)\right]  - \nabla_{1}  k(x_t^j,x_t^i)\bigg)
\end{equation}
Once again, our adaptive step size rule, cf. \eqref{eq:dog-lr-forward-euler-particle}, leads to a tuning-free version of this algorithm. In this case, it reads as
{\small
\begin{equation}
    \hat{\eta}^n_t = \frac{\max\big[r_{\varepsilon},\max_{0\leq s\leq t} \big[\left(\frac{1}{n}\sum_{i=1}^n \|x_{0}^{i} - x_{s}^{i}\|^2\right)^{\frac{1}{2}} \big]\big]}{\big(\sum_{s=0}^t\big[\frac{1}{n}\sum_{i=1}^n  \big\|\frac{1}{n}\sum_{j=1}^n \big[k(x_s^{j},x_s^{i}) [\frac{\lambda_1}{n}\sum_{k=1}^n (y_k - \frac{1}{n}\sum_{l=1}^n h_{x_s^l}(z_k))\nabla_{x}h_{x_s^j}(z_k)] + \nabla_{1}k(x_s^j, x_s^{i})\big]\big\|^2\big]\big)^{\frac{1}{2}}}.
\end{equation}
}
\end{example2}

\section{Numerical Results}
\label{sec:numerical-results}
In this section, we benchmark the performance of \textsc{Fuse} across several experiments involving both synthetic and real data. We consider examples which satisfy the assumptions of the previous sections, as well as examples which violate one or more of these assumptions (e.g., geodesic convexity). We performed all experiments using a MacBook Pro 16" (2021) laptop with Apple M1 Pro chip and 16GB of RAM. 

\subsection{Example 1: Sampling via KL Divergence Minimization}
We first consider the task of minimizing $\mathcal{F}(\mu) = \mathrm{KL}(\mu\|\pi)$, for some target probability distribution $\pi\propto e^{-U}$.

\subsubsection{Gaussian Target} 
\label{sec:gaussian}
We first consider an example where the target $\pi$ is given by a (unimodal) Gaussian distribution, namely, 
\begin{equation}
    \pi(x)=\mathcal{N}(x|m_{\pi},\Sigma_{\pi}) \propto 
    \exp\Big(-\frac{1}{2}(x-m_{\pi})^{\top}\Sigma_{\pi}^{-1}(x-m_{\pi})\Big).
\end{equation}
In this case, it is possible to generate true samples from the target, and thus compute a number of different performance metrics. This target is also (strongly) log-concave, meaning the KL divergence is (strongly) geodesically convex, and our assumptions are satisfied. 

\paragraph{Baseline Comparison} In Figure \ref{fig:1}, we compare the performance of ULA and ULA x \textsc{Fuse} for a 10-dimensional Gaussian target distribution $\pi=\mathcal{N}(m_{\pi},\Sigma_{\pi})$, with $\Sigma_{\pi} = \mathrm{diag}(\boldsymbol{\sigma}^2_{\pi})$. We measure performance using a particle-based approximation of the KL divergence.\footnote{Our approximation is computed as follows. First, we compute the empirical mean $\hat{m}_t^n$ and covariance $\hat{\Sigma}_t^n$ of the particles $(x_t^{i})_{i=1}^n$. We then approximate $\mu_t\approx \bar{\mu}_t^n$, where $\bar{\mu}_t^n = \mathcal{N}(\hat{m}_t^n,\hat{\Sigma}_t^n)$ is a Gaussian distribution with the mean and covariance just computed. This is justified by the fact that, if $\mu_0=\mathcal{N}(m_0,\Sigma_0)$ and $\pi = \mathcal{N}(m_{\pi},\Sigma_{\pi})$ are both Gaussian, then $\mu_t$ remains Gaussian for all $t\geq 0$ (see above). Finally, we approximate $\mathrm{KL}(\mu_t\|\pi)\approx \mathrm{KL}(\bar{\mu}_t^n\|\pi)$, which we can compute using the standard formula for the KL divergence between two Gaussians \citep[e.g.,][]{pardo2018statistical}.} We report results averaged over 10 random seeds: for each experiment, we draw $m_{\pi}$ uniformly at random in $[-2,2]$, and $\boldsymbol{\sigma}^2_{\pi}$ uniformly at random in $[1,5]$. In each case, we initialize using a standard Gaussian: $\mu_0 = \mathcal{N}(\mathbf{0},\mathbf{I}_d)$. We run ULA for a grid of uniformly logarithmically spaced step sizes $\eta\in[10^{-6}, 10^{0}]$, and ULA x \textsc{Fuse} for the same grid of values of $r_{\varepsilon}$. For ULA x \textsc{Fuse}, we report results for both the practical, particle-based step size schedule (orange) and also the ideal, mean-field step size schedule (green).\footnote{Under the assumption that the initial distribution $\mu_0 = \mathcal{N}(m_0,\Sigma_0)$ is also Gaussian, the WGF $(\mu_t)_{t\geq 0}$ is Gaussian for all $t\geq 0$ \citep[e.g.,][]{wibisono2018sampling}. It is thus possible to compute the theoretically optimal step size schedule, cf., \eqref{eq:dog-lr-forward-flow-recall}, exactly, since all quantities are available in closed form (see Section \ref{sec:adaptive-deterministic-step-size}). In particular, we do not have to resort to a particle approximation, cf. \eqref{eq:dog-lr-forward-flow-particle}, or a coarse upper bound for the Wasserstein distance, cf. \eqref{eq:practical-forward-flow-step}.
}

\begin{figure}[t]
  \centering
  \begin{subfigure}{0.485\linewidth}
    \includegraphics[width=\linewidth]{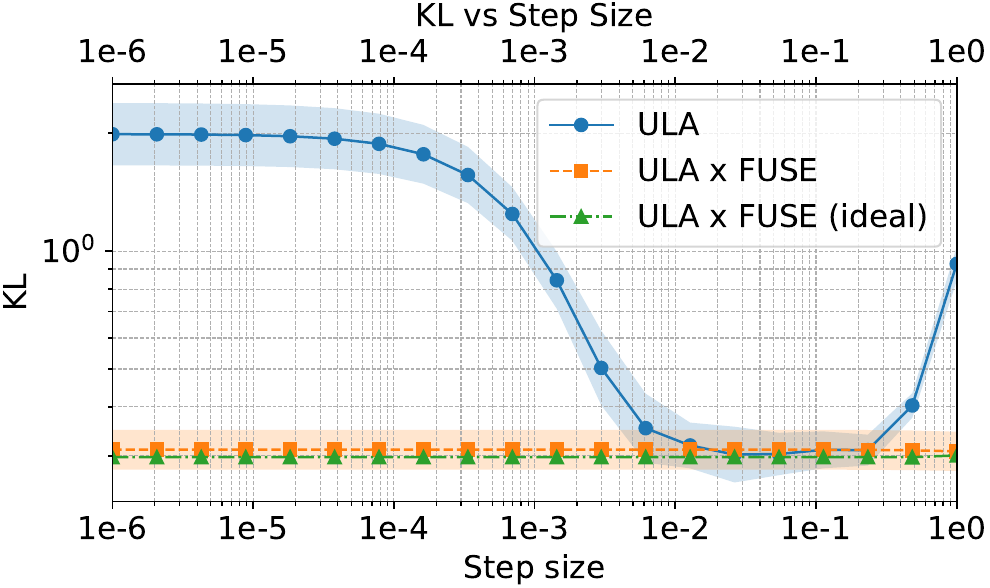}
    \caption{\textbf{KL Divergence vs Step Size}: $\mathrm{KL}(\mu_T\|\pi)$ as a function of the fixed step size $\eta$ (ULA) or the parameter $r_{\varepsilon}$ (ULA x \textsc{Fuse}), after $T=500$ iterations.}
    \label{fig:1a}
  \end{subfigure}\hfill
  \begin{subfigure}{0.485\linewidth}
    \includegraphics[width=\linewidth]{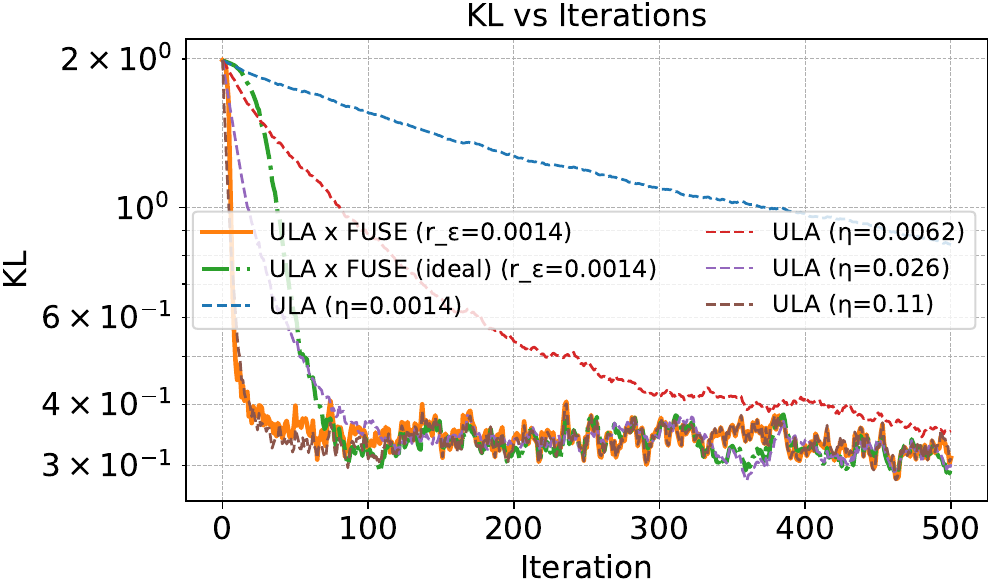}
    \caption{\textbf{KL Divergence vs Iterations}: $\mathrm{KL}({\mu}_T\|\pi)$ as a function of $t\in[0,500]$, for different values of the step size $\eta$ (ULA) and a single value of $r_{\varepsilon}$ (ULA x \textsc{Fuse}).}
    \label{fig:1b}
  \end{subfigure}
  \caption{\textbf{A comparison of ULA and ULA x \textsc{Fuse} for a 10-dimensional Gaussian target.}}
  \label{fig:1}
\end{figure}

The empirical results closely reflect our theoretical findings. In particular, the performance of ULA x \textsc{Fuse} essentially matches the performance of optimally tuned constant step size ULA. Our results also illustrate the robustness of ULA x \textsc{Fuse} to the choice of the initial movement parameter $r_{\varepsilon}$. Indeed, for any choice of $r_{\varepsilon}$ within the reported range of $[ 10^{-6}, 10^{0}]$, the performance of ULA x \textsc{Fuse} is essentially identical. 

It is also noteworthy that the ideal, {mean-field} ULA x \textsc{Fuse} step size schedule (green) slightly outperforms the practical, {particle-based} ULA x \textsc{Fuse} step size schedule (orange). This should not be surprising: recall that the ideal ULA x \textsc{Fuse} step size was originally derived as a proxy for an optimal fixed ULA step size, as measured by an upper bound for $\mathrm{KL}(\bar{\mu}_T\|\pi)$ (see Section \ref{sec:adaptive-deterministic-step-size}). In a similar way, the practical, particle-based ULA x \textsc{Fuse} step size can be viewed as a proxy for an optimal ULA fixed step size, but now as measured by a looser upper bound for $\smash{\mathrm{KL}(\bar{\mu}_T\|\pi)}$, in which the $W_2$ distance is replaced by the corresponding $L^2$ distance. In principle, one could replicate all of the results in Sections \ref{sec:forward-flow-nonsmooth} - \ref{sec:forward-flow-smooth} for this step size, but now with any $W_2$ distance replaced by the corresponding $L^2$ distance. Thus, in particular, this step size schedule attains the same $\smash{\mathcal{O}(\frac{1}{\sqrt{T}})}$ convergence rate as the \textit{ideal} step size schedule (in terms of the iteration number), but now with a slightly worse dependence on the problem diameter. 

\begin{figure}[b]
  \centering
  \includegraphics[width=\linewidth]{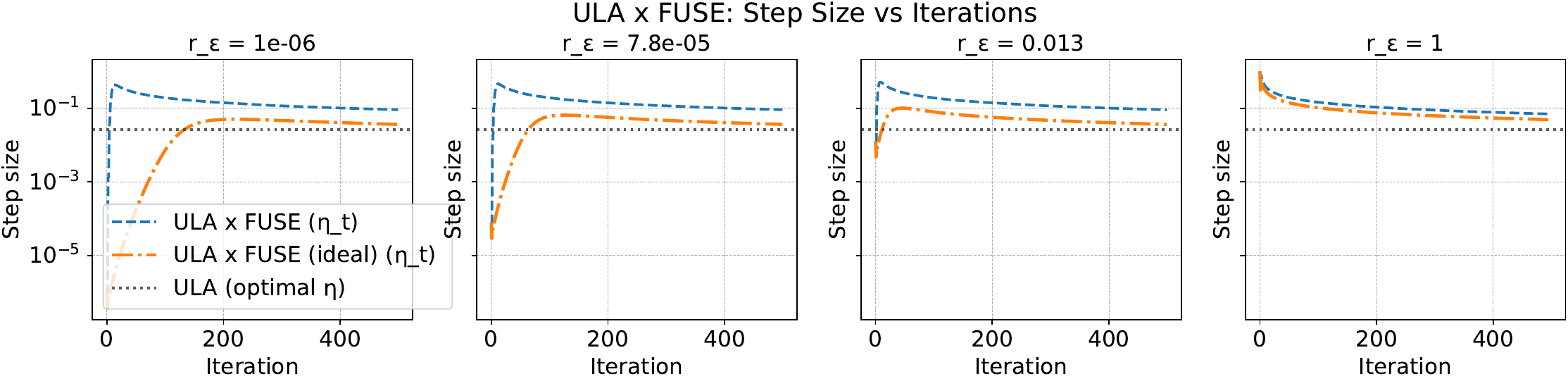}
  \caption{\textbf{A comparison of the {ideal}, mean-field and the {practical}, particle-based ULA x \textsc{Fuse} step size schedule for the 10-dimensional Gaussian target.} We plot the ideal, {mean-field} step size schedule $(\eta_t)_{t\geq0}$ (orange) and the practical, {particle-based} step size schedule $(\hat{\eta}_t^n)_{t\geq 0}$ (blue) for $t\in[0,500]$, for four values of the initial movement parameter $r_{\varepsilon}\in[1\times 10^{-6}, 1\times 10^{0}]$. We also plot the \textit{optimal} fixed ULA step size, defined here as the step size for which ULA achieves the lowest value of $\mathrm{KL}(\mu_{T}\|\pi)$ at $T=500$.}
  \label{fig:2}
\end{figure}

\paragraph{Particle Approximation} In Figure \ref{fig:2}, we further investigate the behavior of the {ideal}, mean-field ULA x \textsc{Fuse} step size schedule $(\eta_t)_{t\geq 0}$, as well as the practical, particle-based step size schedule $(\hat{\eta}_t^n)_{t\geq 0}$, for several different values of the initial movement parameter $r_{\varepsilon}$. For comparison, we also plot the {optimal} fixed ULA step size, with optimality here determined (retrospectively) as the step size for which ULA achieves the lowest value of $\mathrm{KL}(\mu_{T}\|\pi)$ after $T=500$ iterations. 

We make several observations. First, regardless of the value of $r_{\varepsilon}$, the ULA x \textsc{Fuse} step sizes converge (from above) to the fixed optimal step size. This should not be surprising: the ULA x \textsc{Fuse} step size is an upper bound for the theoretically optimal fixed step size, as discussed in Section \ref{sec:adaptive-deterministic-step-size}. Second, after an initial decrease, the ULA x \textsc{Fuse} step size decreases as a function of the iteration number. This is encouraging: as indicated by the theory, the larger the iteration budget, the smaller one should take the step size. Finally, the practical, particle-based step size schedule (uniformly) upper bounds the {ideal}, mean-field step size schedule. Once again, this is to be expected: the approximate step size schedule relies on substituting an upper bound for the Wasserstein distance in the numerator of the ideal step size schedule, cf. Section \ref{sec:algorithms}. Interestingly, the disparity between the two step size schedules is significantly diminished for larger values of $r_{\varepsilon}$, since in this case $r_{\varepsilon}$ dominates the numerator (particularly for smaller values of $t$). 

\paragraph{Robustness to the Initial Distance} In Figure \ref{fig:3}, we continue to benchmark the performance of ULA x \textsc{Fuse}. In particular, we compare the performance of ULA and ULA x \textsc{Fuse} as we vary the distance from the initial distribution $\mu_0 = \mathcal{N}(0,\mathbf{I}_d)$ to the target distribution $\pi = \mathcal{N}(m_{\pi},\mathbf{I}_d)$, for $m_{\pi}\in[0.2,0.5,1.0,2.0]$. This figure illustrates the robustness of ULA x \textsc{Fuse} to the initial distance, a hyper-parameter which can have a significant impact on the optimal (range of) fixed ULA step size(s). We should emphasize that this effect is particularly acute for smaller iteration budgets, and may diminish as the iteration budget increases. 

\begin{figure}[t]
  \centering
  \includegraphics[width=\linewidth]{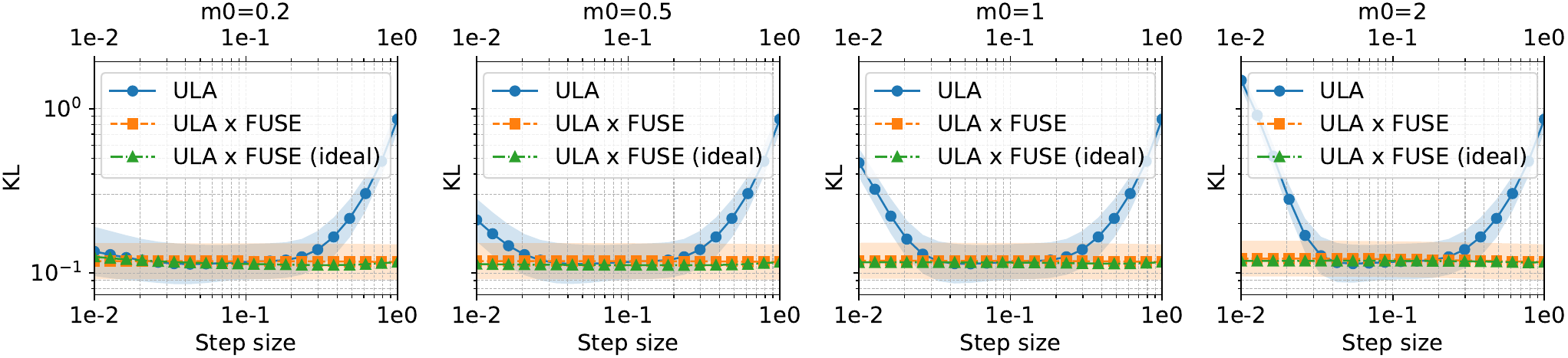}
  \caption{\textbf{A comparison of the ULA and ULA x \textsc{Fuse} for different distances between the initial distribution and the 10-dimensional Gaussian target distribution.} We plot $\mathrm{KL}(\mu_{T}\|\pi)$ as function of the step size $\eta$ (ULA) or the initial movement parameter $r_{\varepsilon}$ (ULA x \textsc{Fuse}) after $T=100$ iterations. In all panels, the initial distribution is $\mu_0 = \mathcal{N}(0,\mathbf{I}_d)$, while the target distribution is given by $\pi = \mathcal{N}(m_{\pi},\mathbf{I}_d)$, for $m_{\pi}\in[0.2,0.5,1.0,2.0]$.}
  \label{fig:3}
\end{figure}

\paragraph{Robustness to Anisotropy} In Figure \ref{fig:4}, we compare the performance of ULA and ULA x \textsc{Fuse} for Gaussian targets with increasing levels of anisotropy, i.e., increasing values of the condition number $\kappa = \frac{\lambda_{\mathrm{max}}}{\lambda_{\mathrm{min}}}$, where $\lambda_{\mathrm{max}}$ and $\lambda_{\mathrm{min}}$ denote the maximum and minimum eigenvalues of the covariance $\Sigma_{\pi}$ of the target distribution. As this figure makes evident, the optimal fixed ULA step size is highly dependent on the condition number. Meanwhile, ULA x \textsc{Fuse} is robust to the level of anisotropy. In all cases, it essentially matches the performance of an optimally tuned ULA, for any reasonable value of the initial movement parameter $r_{\varepsilon}$. Interestingly, these results also indicate that ULA is increasingly sensitive to the choice of step size as the anisotropy increases. That is, the range of step sizes for which ULA achieves near-optimal performance tightens as the anisotropy increases (compare, e.g., the ``U-shape'' of the left-hand panel in Figure \ref{fig:4} to the ``V-shape'' of the right-hand panel). This provides further motivation for the use of an automated step size scheduler like ULA x \textsc{Fuse}, which is agnostic to the level of anisotropy.

\begin{figure}[t]
  \centering
  \includegraphics[width=\linewidth]{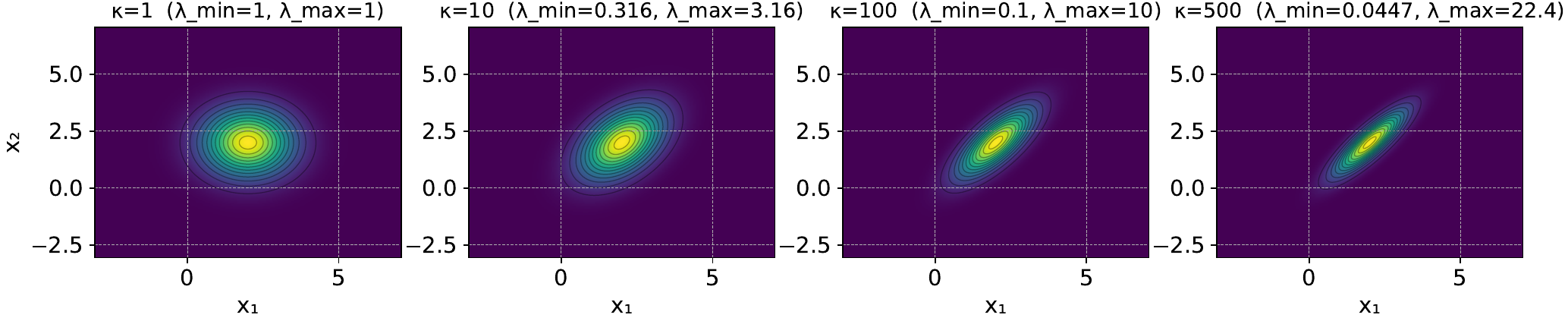}
  \vspace{3mm}
  \includegraphics[width=\linewidth]{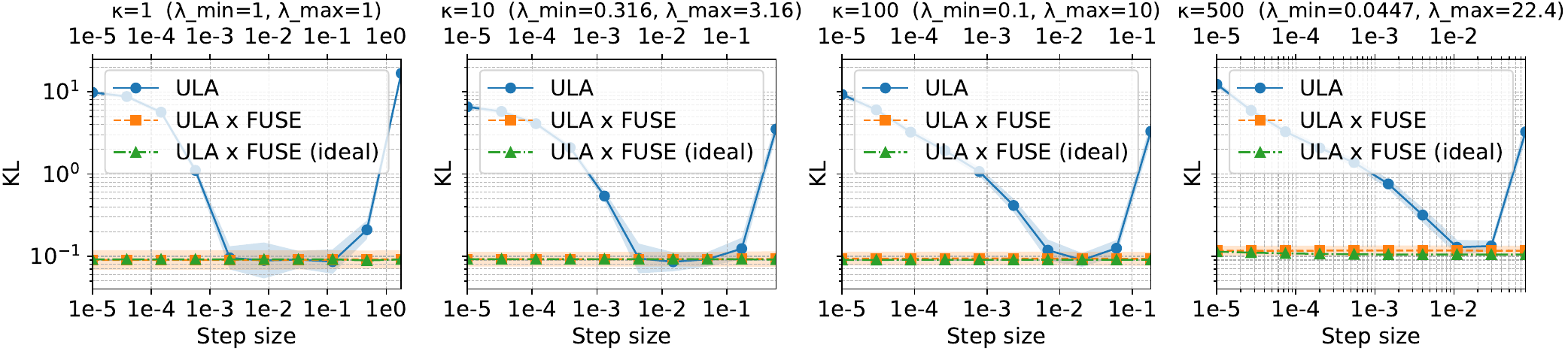}
  \caption{\textbf{A comparison of ULA and ULA x \textsc{Fuse} for 5-dimensional Gaussian target distributions with different levels of anisotropy.} We plot $\mathrm{KL}(\mu_{T}\|\pi)$ as a function of the step size $\eta$ (ULA) or the initial movement parameter $r_{\varepsilon}$ (ULA x \textsc{Fuse}) after $T=1000$ iterations. The initial distribution is $\mu_0 = \mathcal{N}(0,\mathbf{I}_d)$. The target distribution is given by $\pi = \mathcal{N}(m_{\pi},\mathbf{I}_d)$, with $m_{\pi} = (2,\dots,2)^{\top}$, and $\Sigma_{\pi}$ having condition numbers $\kappa:=\frac{\lambda_{\mathrm{max}}}{\lambda_{\mathrm{min}}}\in[1,10,100,500]$.}
  \label{fig:4}
\end{figure}

\paragraph{Remark}
\emph{
    Another effective method for dealing with anisotropic target distributions is to use a preconditioned version of the overdamped Langevin dynamics
    \citep[e.g.,][]{garbuno2020affine,carrillo2021wasserstein,burger2024covariance}. 
    For a particular choice of preconditioning, i.e., the covariance of the current distribution, the resulting dynamical system corresponds to the gradient flow of the KL divergence with respect to the so-called \emph{Kalman-Wasserstein} metric (see Section \ref{sec:alternative-gradient-flows}). 
    This contrasts with the standard overdamped Langevin dynamics, which represents the gradient flow of the KL divergence with respect to the Wasserstein metric \citep{jordan1998variational}. 
    After space-time discretization, the preconditioned variant yields an algorithm known as \emph{affine invariant Langevin dynamics} (ALDI) \cite{garbuno2020affine}.
    In principle, one could use a version of our step size schedule within this algorithm, with Wasserstein gradients replaced by Kalman-Wasserstein gradients. Moreover, replacing our existing assumptions (see Section \ref{sec:assumptions}) with the appropriate analogues (e.g., geodesic convexity in the Kalman-Wasserstein space), one could replicate our earlier theoretical results (see Section \ref{sec:main-results}).
}

\paragraph{SVGD vs SVGD x \textsc{Fuse}} We now turn our attention to SVGD and SVGD x \textsc{Fuse}. In the interest of space, we keep this comparison brief, noting that the observations made in the previous section (e.g., robustness to initial distance, robustness to anisotropy) also hold here. In Figure \ref{fig:5}, we compare the performance of these two algorithms, using the same experimental configuration as in Figure \ref{fig:1}. We now use the KSD (with IMQ kernel) as a performance metric, which only requires samples from the current distribution and the score of the target distribution \citep[e.g.,][]{gorham2017measuring}. The results are similar to those obtained for ULA x \textsc{Fuse}: the performance of SVGD x \textsc{Fuse} is similar to the performance of optimally tuned SVGD, for any choice of $r_{\varepsilon}$.

\begin{figure}[b]
  \centering
  \begin{subfigure}{0.485\linewidth}
    \includegraphics[width=\linewidth]{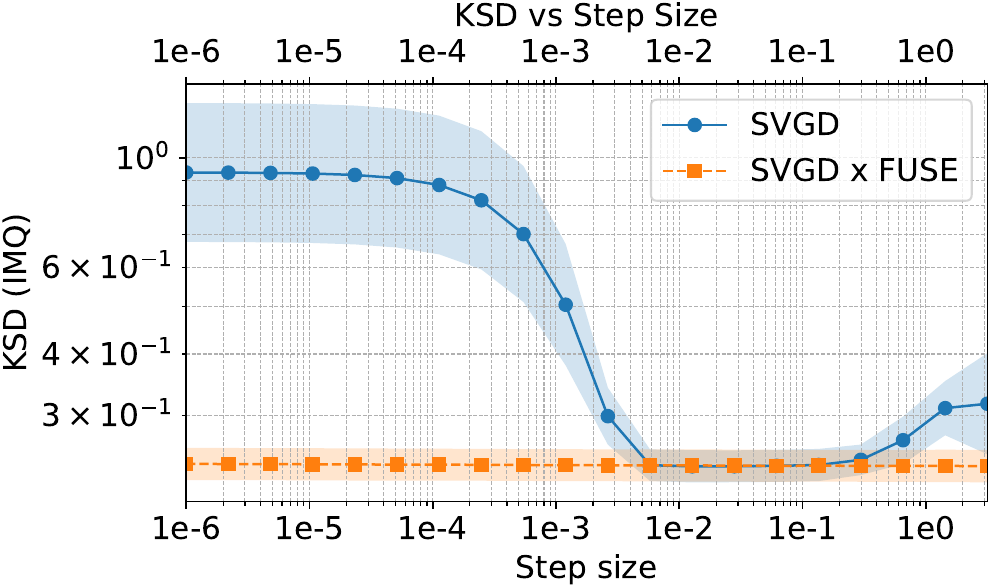}
    \caption{\textbf{KSD vs Step Size}: $\mathrm{KSD}(\mu_T\mid\pi)$ as a function of the fixed step size $\eta$ (SVGD) or the parameter $r_{\varepsilon}$ (SVGD x \textsc{Fuse}), after $T=500$ iterations.}
    \label{fig:5a}
  \end{subfigure}\hfill
  \begin{subfigure}{0.485\linewidth}
    \includegraphics[width=\linewidth]{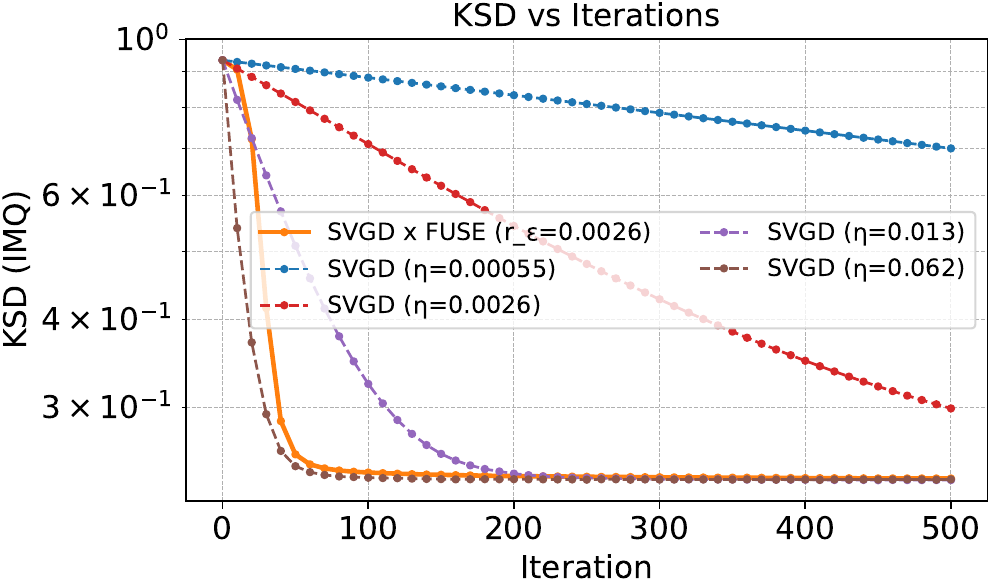}
    \caption{\textbf{KSD vs Iterations}: $\mathrm{KSD}(\mu_t\mid\pi)$ as a function of $t\in[0,500]$, for different values of the step size $\eta$ (SVGD) and a single value of $r_{\varepsilon}$ (SVGD x \textsc{Fuse}).}
    \label{fig:5b}
  \end{subfigure}
  \caption{\textbf{A comparison of SVGD and SVGD x \textsc{Fuse} for a 10-dimensional Gaussian target.}}
  \label{fig:5}
\end{figure}

\subsubsection{Bayesian Logistic Regression}
We next consider a Bayesian logistic regression problem \citep[e.g.,][]{held2006bayesian}. Consider i.i.d. observations $\mathcal{D}=(x_i,y_i)_{i=1}^N$, where $(y_i)_{i=1}^N$ are binary response variables, and $(x_i)_{i=1}^N$ are $\mathbb{R}^d$-valued features. The likelihood is then given by
\begin{equation}
    p(y_i\mid x_i,\beta) = p_i^{y_i}(1-p_i)^{1-y_i}, \quad \quad p_i = \frac{e^{\beta^{\top}x_i}}{1+e^{\beta^{\top}x_i}}. \label{eq:blr-likelihood}
\end{equation}

\paragraph{Gaussian Prior, Synthetic Data} In our first example, we consider a Gaussian prior for the regression coefficients $\boldsymbol{\beta}=(\beta_0,\beta_1,\dots,\beta_p)^{\top}$, where $\beta_0$ denotes the intercept. In particular, we assume that $\beta_j \sim \mathcal{N}(0,\lambda^{-1})$ for $j=1,\dots,p$, fixing $\lambda=5.0$ in our experiments. Meanwhile, the intercept is left unpenalized; formally, $\beta_0\sim \mathcal{N}(0,\lambda_0^{-1})$, with $\lambda_0=0$. 

We generate synthetic data as follows. We generate a ``true'' regression coefficient $\smash{\boldsymbol{\beta}=(\beta_0,\dots,\beta_p)^{\top}}$, setting $\beta_0=0$ and $\smash{\beta_j = s \frac{v_j}{||v||_2}}$ for $j=2,\dots,p$, where $\smash{v_j\sim \mathcal{N}(0,1)}$ and $s>0$ is a parameter which determines the {signal strength}. This normalization ensures that $||\beta||_{2} = s$. We generate feature vectors according to $\smash{x_i \sim \mathcal{N}(0,\Sigma_{\rho})}$ for $i=1,\dots,N$, where $\smash{(\Sigma_{\rho})_{jk}=\rho^{|j-k|}}$, and $\smash{\rho\geq 0}$ controls the correlation. We compute scaled logits $\smash{z_i = \tau^{-1}\boldsymbol{\beta}^{\top}x_i}$, where $\tau>0$ is a temperature parameter which controls class separability, with class probabilities given by $\smash{p_i = {e^{z_i}}/({1+e^{z_i}})}$. We then sample observations $y_i\sim \mathrm{Bernoulli}(p_i)$, for $i=1,\dots,N$. Finally, to introduce mild stochasticity, we independently flip each label with (small) probability $\pi_{\mathrm{flip}}$. In our experiments, we set $d=8$ and $N=500$. We set the other hyperparameters as $s=3.0$, $\rho=0.2$, $\tau=0.6$, $\pi_{\mathrm{flip}}=0.01$. The resulting model is non-separable but moderately well-conditioned, with a controlled signal-to-noise ratio determined $s=3.0$ and $\tau=0.6$, and a weak correlation structure $\rho=0.2$ between features.

We run ULA and ULA x \textsc{Fuse} using $n=100$ particles and for $T=500$ iterations. The step size $\eta$ (ULA) and the initial movement parameter $r_{\varepsilon}$ (ULA x \textsc{Fuse}) are varied over a logarithmic grid $[10^{-5}, 10^{-1}]$. We repeat all experiments over 10 random seeds. In order to assess the performance of each sampler, we approximate the posterior mean of four functionals $f_i:\mathbb{R}^d\rightarrow\mathbb{R}$, using the particles output at the final iteration. In particular, we consider the functionals
\begin{equation}
f_1:\boldsymbol{\beta}\mapsto\beta_0,\quad\quad f_2:\boldsymbol{\beta}\rightarrow \beta_1, \quad\quad f_3:\boldsymbol{\beta}\mapsto ||\boldsymbol{\beta}||_1, \quad \quad f_4:\boldsymbol{\beta}\rightarrow ||\boldsymbol{\beta}||_2^2.
\end{equation}
We also compute empirical 95\% credible intervals for each of the regression coefficients. We compare these values to ``ground-truth'' values, which we obtain via a long-run ($T=10^6$) ULA reference chain with small fixed step size ($\eta=10^{-4}$).

Our results are displayed in Figure \ref{fig:6} and Figure \ref{fig:7}. In Figure \ref{fig:6} we compare the posterior mean estimates obtained via ULA and ULA x \textsc{Fuse}, with posterior mean estimates obtained via a long-run ULA chain, which we treat as a proxy for the ground truth. In each box plot, we summarize the distribution of posterior mean estimates obtained over 10 independent seeds, corresponding to 10 distinct synthetic datasets. For the purpose of illustration, we report results for ULA using two sub-optimal choices of the fixed step size: $\eta=10^{-5}$ and $\eta=10^{-1}$. To ensure a fair comparison, we report results for ULA x \textsc{Fuse} using the \emph{same values} of the initial movement parameter: $r_{\varepsilon}=10^{-5}$ and $r_{\varepsilon}=10^{-1}$. Across all of the considered functionals, ULA x \textsc{Fuse} produces posterior mean estimates that are substantially closer to the ground truth. While standard ULA can perform well in this example, given a well-tuned step size, its performance significantly deteriorates when the step size is either overly conservative (top) or overly aggressive (bottom). In contrast, ULA x \textsc{Fuse} achieves comparable performance across a wide range of $r_{\varepsilon}$ values, indicating that there is no need to manually tune this parameter.

These observations are further supported by the results in Figure \ref{fig:7}, in which we plot the empirical 95\% credible intervals obtained using ULA and ULA x \textsc{Fuse}. In this case, rather than considering a single fixed step size $\eta$ (or initial movement parameter $r_0$), we aggregate results across multiple logarithmically spaced step sizes (or initial movement parameters) in the range $[10^{-5}, 10^{-1}]$. The result is the same: across all coefficients, the ULA x \textsc{Fuse} intervals much more closely resemble those obtained using the true posterior.

\begin{figure}[H]
  \centering
    \begin{subfigure}{0.239\linewidth}
    \includegraphics[width=\linewidth]{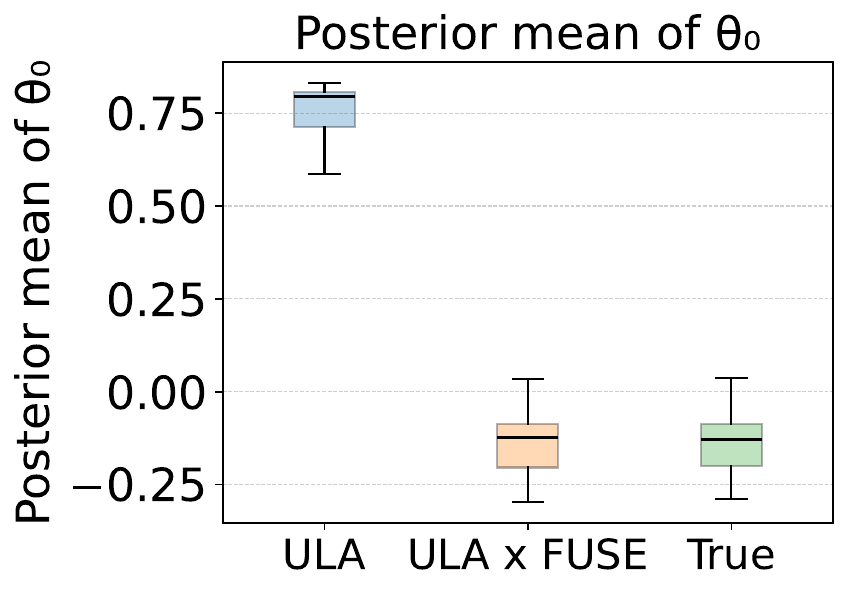}
    \caption{$\mathbb{E}[\beta_0]$, $\eta=10^{-5}$.}
    \label{fig:6a}
  \end{subfigure}
  \hfill
  \begin{subfigure}{0.24\linewidth}
    \includegraphics[width=\linewidth]{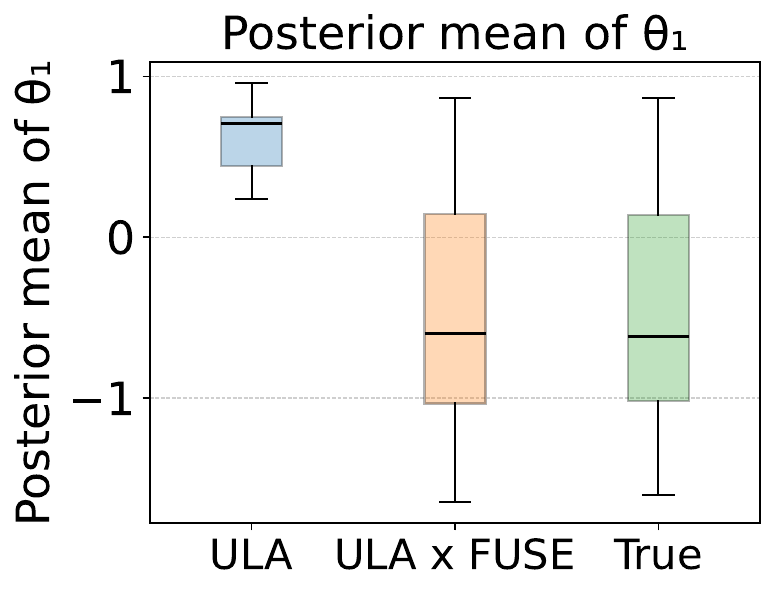}
    \caption{$\mathbb{E}[\beta_1]$, $\eta=10^{-5}$.}
    \label{fig:6b}
  \end{subfigure}
  \hfill
  \begin{subfigure}{0.223\linewidth}
    \includegraphics[width=\linewidth]{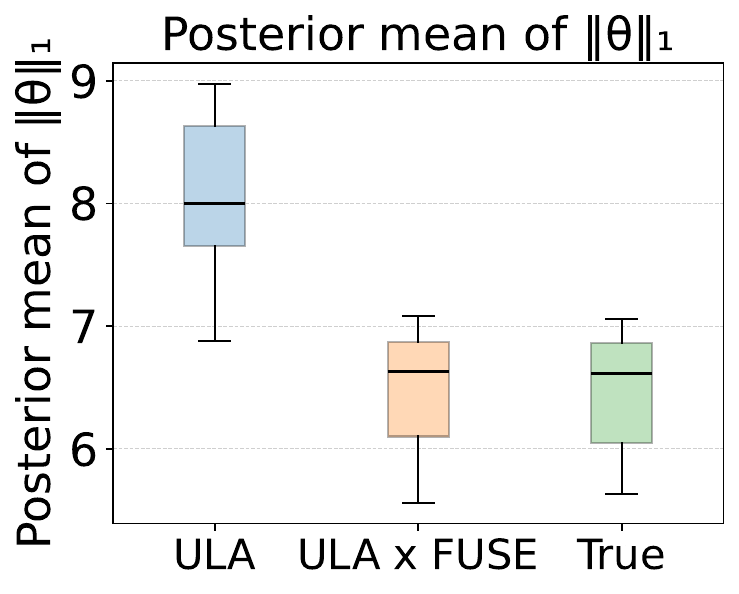}
    \caption{$\mathbb{E}[||\boldsymbol{\beta}||_1]$, $\eta=10^{-5}$.}
    \label{fig:6c}
  \end{subfigure}
  \hfill
  \begin{subfigure}{0.23\linewidth}
    \includegraphics[width=\linewidth]{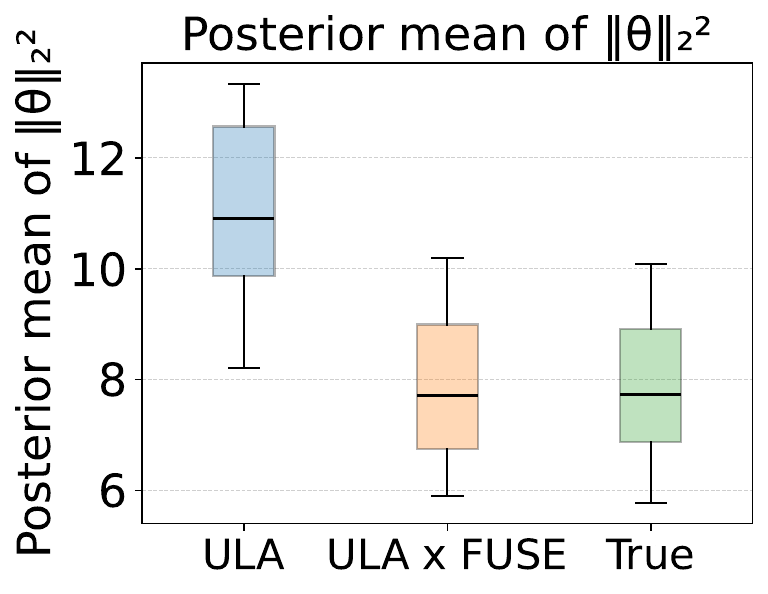}
    \caption{$\mathbb{E}[||\boldsymbol{\beta}||^2_2]$, $\eta=10^{-5}$.}
    \label{fig:6d}
  \end{subfigure} 
  \\[2mm]
  \begin{subfigure}{0.24\linewidth}
    \includegraphics[width=\linewidth]{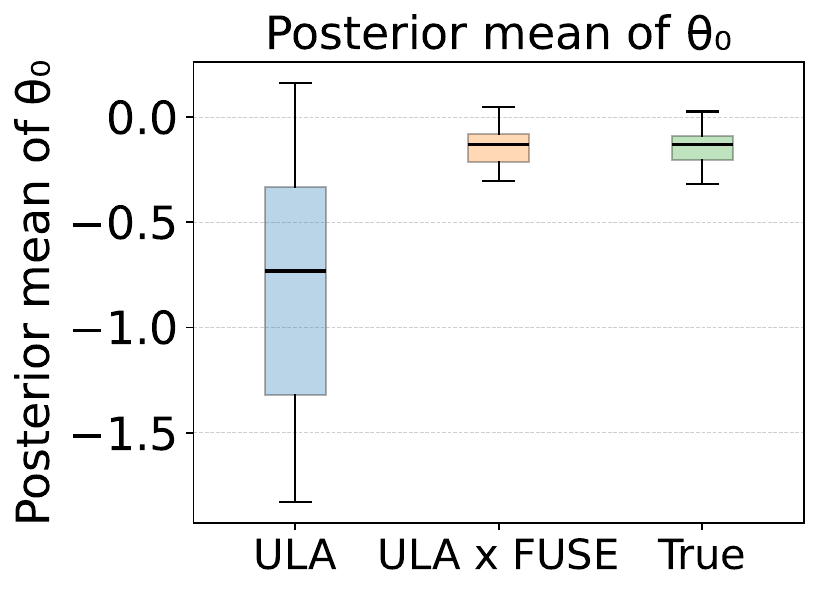}
    \caption{$\mathbb{E}[\beta_0]$, $\eta=10^{-1}$.}
    \label{fig:6a_}
  \end{subfigure}
  \hfill
  \begin{subfigure}{0.23\linewidth}
    \includegraphics[width=\linewidth]{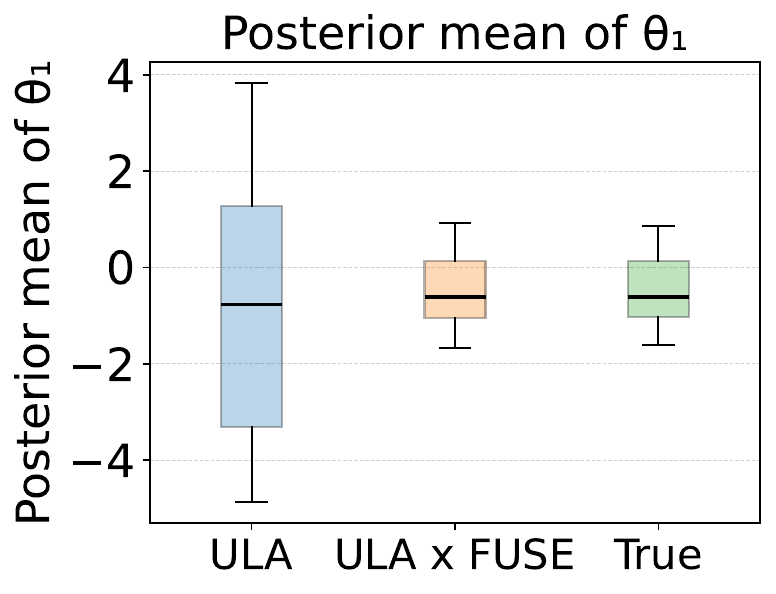}
    \caption{$\mathbb{E}[\beta_1]$, $\eta=10^{-1}$.}
    \label{fig:6b_}
  \end{subfigure}
  \hfill
  \begin{subfigure}{0.229\linewidth}
    \includegraphics[width=\linewidth]{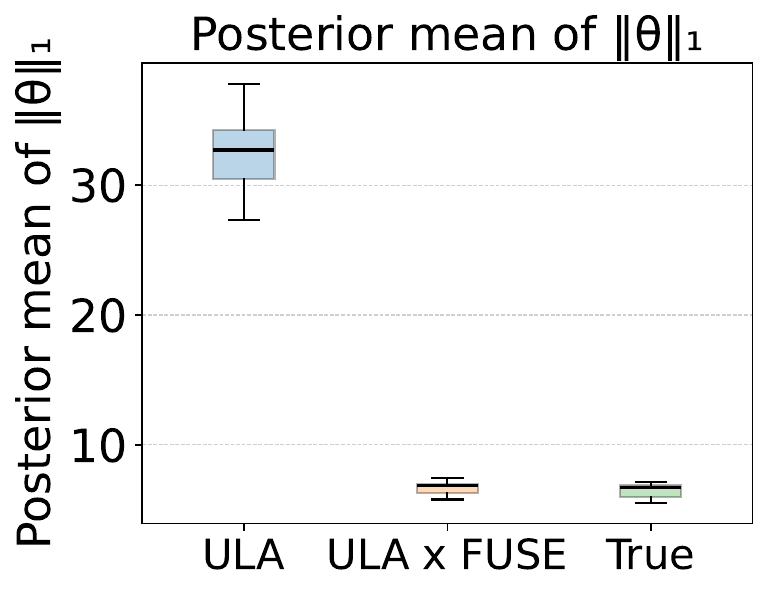}
    \caption{$\mathbb{E}[||\boldsymbol{\beta}||_1]$, $\eta=10^{-1}$.}
    \label{fig:6c_}
  \end{subfigure}
  \hfill
  \begin{subfigure}{0.232\linewidth}
    \includegraphics[width=\linewidth]{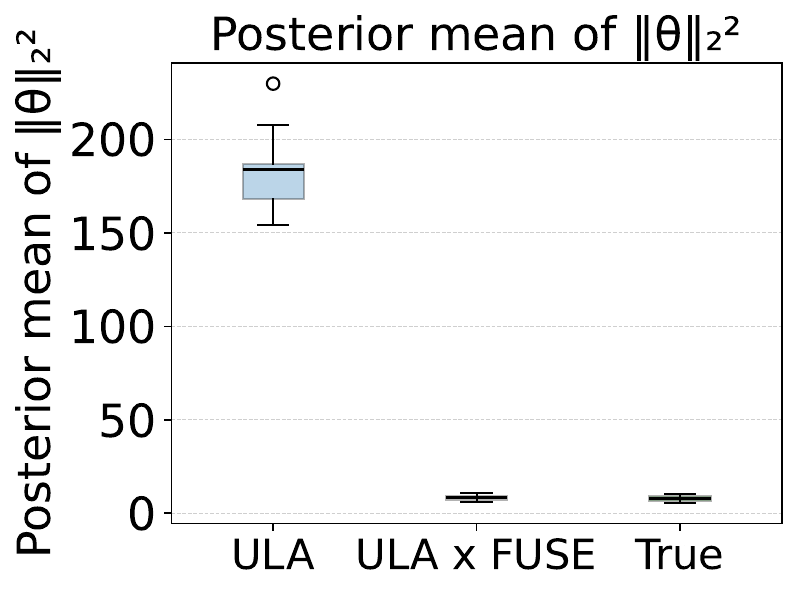}
    \caption{$\mathbb{E}[||\boldsymbol{\beta}||^2_2]$, $\eta=10^{-1}$.}
    \label{fig:6d_}
  \end{subfigure}
  \caption{\textbf{A comparison of ULA and ULA x \textsc{Fuse} for Bayesian logistic regression on a synthetic dataset.} We compare estimates of the posterior mean of four different functionals using the samples generated by ULA (with a sub-optimal choice of step size) and ULA x \textsc{Fuse} (with the same value of the initial movement parameter), after $T=500$ iterations. We also plot the values of these posterior expectations computed using samples from a very long-run ULA chain ($T=1\times 10^6$) with a small fixed step size ($\eta=1\times 10^{-4}$), which we treat as a proxy for the ground truth.}
  \label{fig:6}
\end{figure}

\begin{figure}[H]
  \centering
  \begin{subfigure}{0.46\linewidth}
    \includegraphics[width=\linewidth]{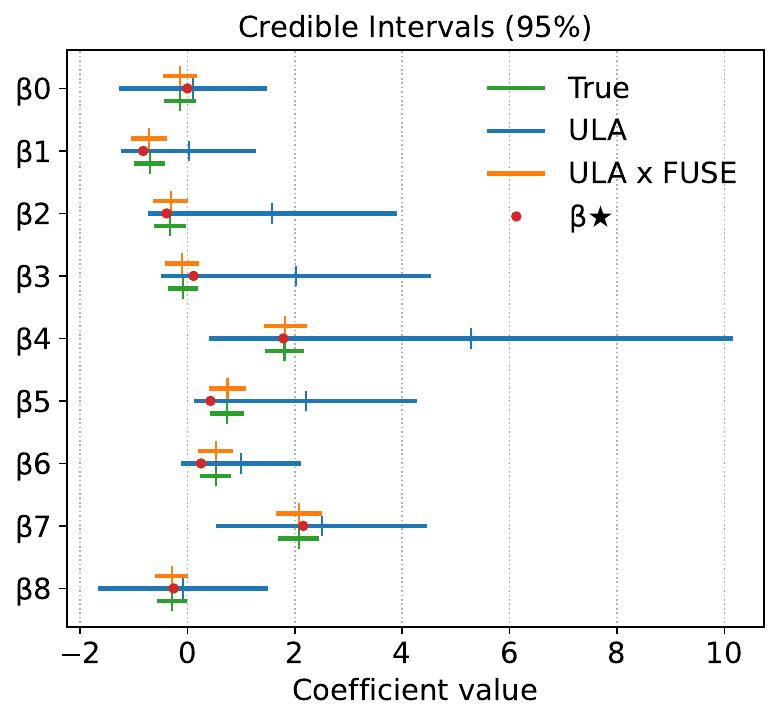}
    \caption{\textbf{Credible Intervals}.}
    \label{fig:7a}
  \end{subfigure}\hfill
  \begin{subfigure}{0.48\linewidth}
    \includegraphics[width=\linewidth]{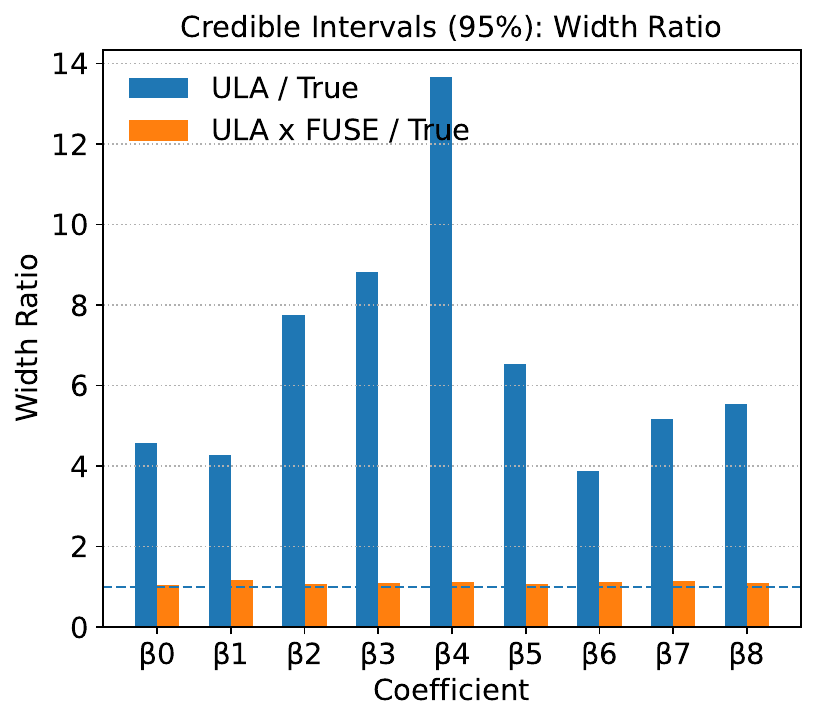}
    \caption{\textbf{Credible Intervals: Width Ratio}.}
    \label{fig:7b}
  \end{subfigure}
  \caption{\textbf{A comparison of ULA and ULA x \textsc{Fuse} for Bayesian logistic regression on a synthetic dataset.} We compare the 95\% credible intervals generated by ULA (using a range of step sizes) and ULA x \textsc{Fuse} (using the same range of initial movement parameters) after $T=500$ iterations. We also plot the 95\% credible intervals computed using samples from a very long-run ULA chain ($T=1\times 10^6$) with a small fixed step size ($\eta=1\times 10^{-4}$), which we treat as a proxy for the ground truth.}
  \label{fig:7}
\end{figure}

\paragraph{Hierarchical Prior, Real Data} We next consider a hierarchical specification for the prior over the regression coefficients $\beta = (\beta_1,\dots,\beta_d)^{\top}$, following \citet{mackay1995probable,gershman2012nonparametric,liu2016stein}. In particular, we assume that $p(\beta|\alpha) = \mathcal{N}(\beta|0,\alpha^{-1}\mathbf{I}_d)$, $p(\alpha) = \Gamma(\alpha |a,b)$, where $a,b\in\mathbb{R}_{+}$ are hyper-parameters. We will assume throughout that $a=1,b=0.01$. We are then interested in sampling from the posterior distribution $\pi({\theta}):=p({\theta}| \mathcal{D})$, where ${\theta}=(\alpha,\beta)^{\top}$. We now test our algorithms on the Covertype dataset \citep{covertype1998}, which consists of 581,012 data points and 54 features. Due to the size of the dataset, it is necessary to use stochastic gradients, which we compute using mini-batches of size 100. To evaluate the performance of each algorithm, we randomly partition the data into a training dataset (80\%) and a testing dataset (20\%). For each method, we use $n=20$ particles, $T=1000$ iterations, and report results averaged over 10 random runs.

The results for SGLD x \textsc{Fuse} and SVGD x \textsc{Fuse} are shown in Figures \ref{fig:8} and \ref{fig:9}, respectively. In particular, we plot the predictive (test) accuracy as a function of the step size (left-hand panel) and the iteration number (right-hand panel). The results are similar to before. In particular, the predictive performance of SGLD x \textsc{Fuse} and SVGD x \textsc{Fuse} is highly robust to the choice of the initial movement parameter $r_{\varepsilon}$, and essentially matches the optimal predictive performance of SGLD and SVGD with a fixed step size.

\begin{figure}[H]
\vspace{-1mm}
  \centering
  \begin{subfigure}{0.475\linewidth}
    \includegraphics[width=\linewidth]{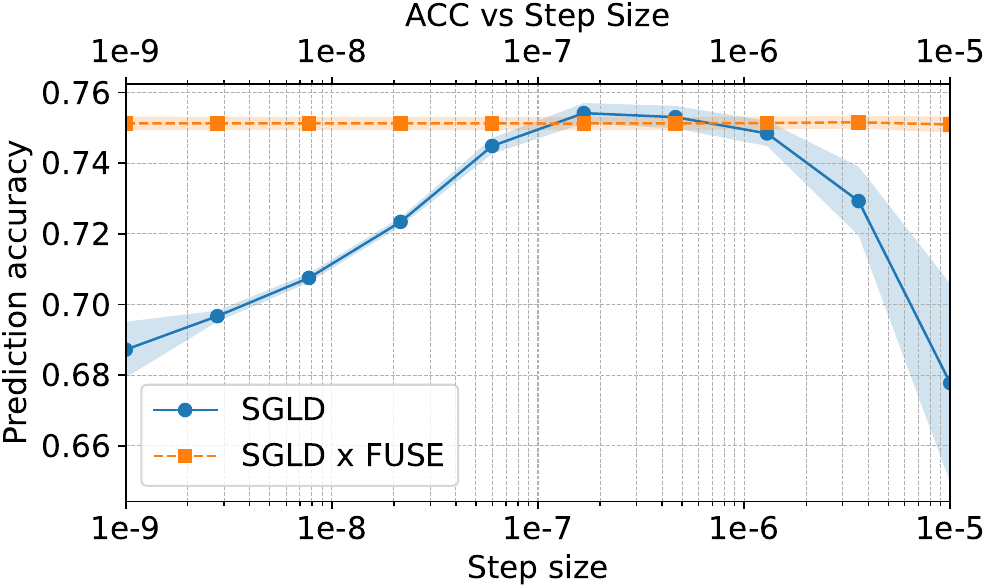}
    \caption{\textbf{Predictive Accuracy vs Step Size}: Predictive accuracy as a function of the fixed step size $\eta$ (SGLD) or the parameter $r_{\varepsilon}$ (SGLD x \textsc{Fuse}), after $T=500$ iterations.}
    \label{fig:8a}
  \end{subfigure}\hfill
  \begin{subfigure}{0.475\linewidth}
    \includegraphics[width=\linewidth]{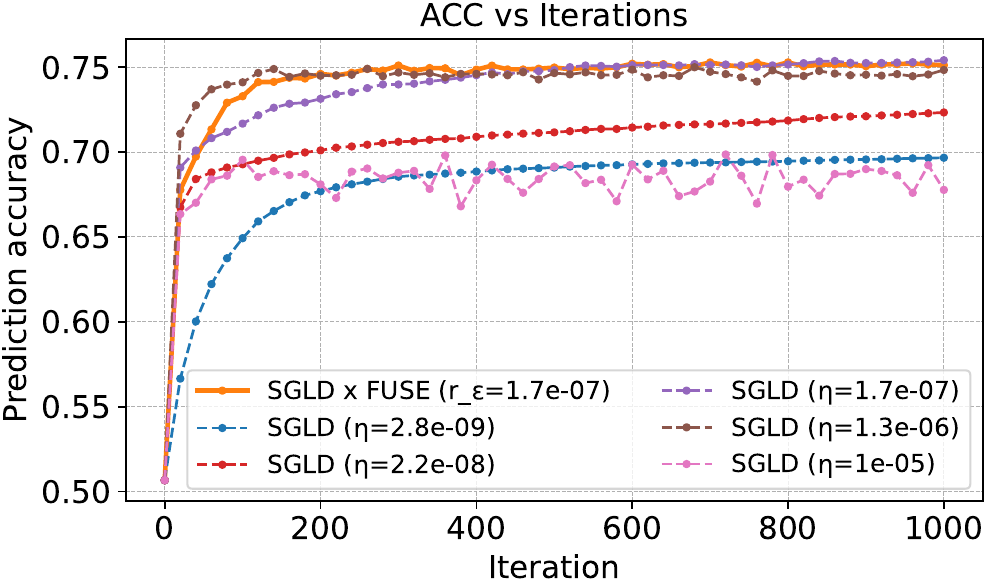}
    \caption{\textbf{Predictive Accuracy vs Iterations}: Predictive accuracy as a function of $t\in[0,500]$, for different values of the step size $\eta$ (SGLD) and a single value of $r_{\varepsilon}$ (SGLD x \textsc{Fuse}).}
    \label{fig:8b}
  \end{subfigure}
  \caption{\textbf{A comparison of SGLD and SGLD x \textsc{Fuse} for Bayesian logistic regression on the Covertype dataset.}}
  \label{fig:8}
\end{figure}

\begin{figure}[H]
\vspace{-5mm}
  \centering
  \begin{subfigure}{0.475\linewidth}
    \includegraphics[width=\linewidth]{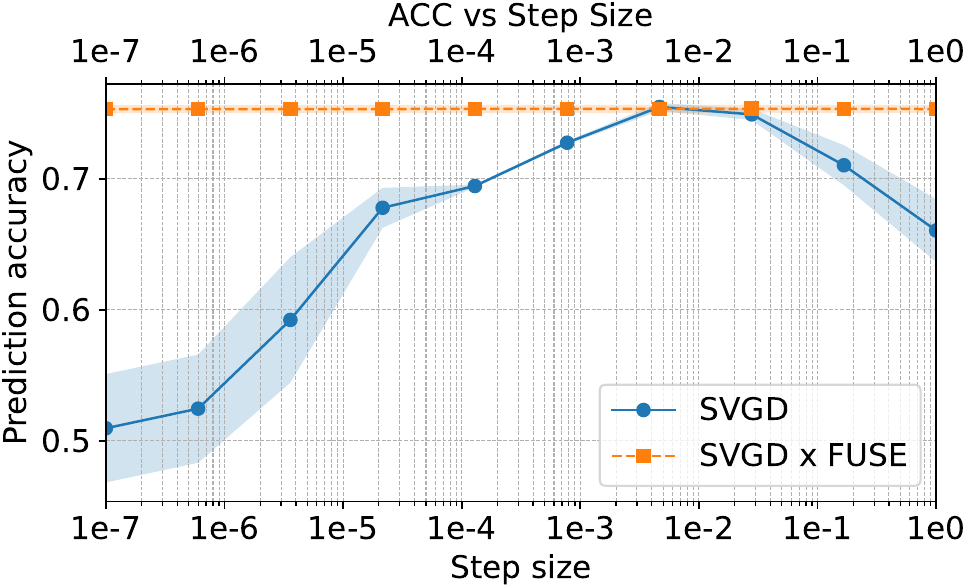}
    \caption{\textbf{Predictive Accuracy vs Step Size}: Predictive accuracy as a function of the fixed step size $\eta$ (SVGD) or the parameter $r_{\varepsilon}$ (SVGD x \textsc{Fuse}), after $T=500$ iterations.}
    \label{fig:9a}
  \end{subfigure}\hfill
  \begin{subfigure}{0.475\linewidth}
    \includegraphics[width=\linewidth]{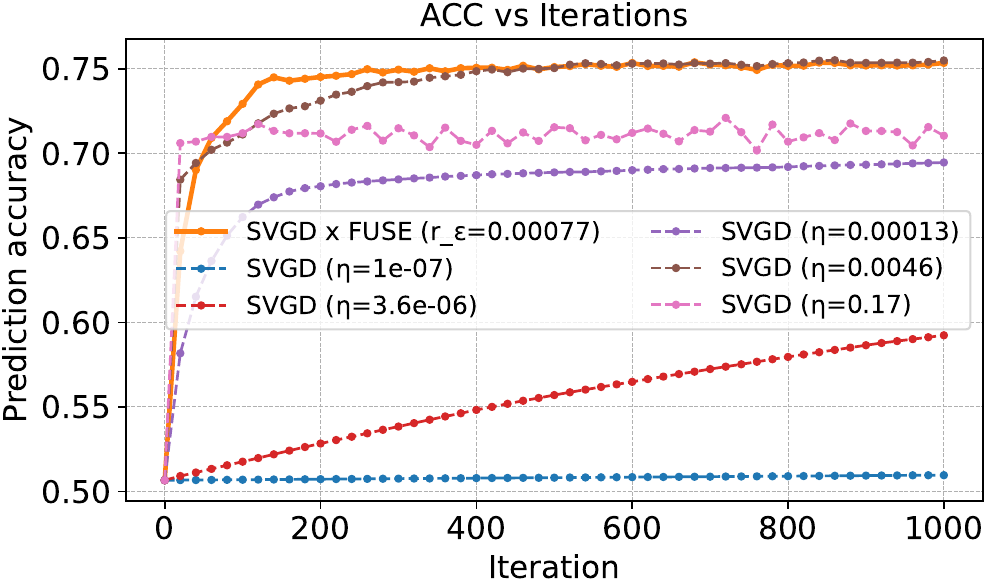}
    \caption{\textbf{Predictive Accuracy vs Iterations}: Predictive accuracy as a function of $t\in[0,500]$, for different values of the step size $\eta$ (SVGD) and a single value of $r_{\varepsilon}$ (SVGD x \textsc{Fuse}).}
    \label{fig:9b}
  \end{subfigure}
  \caption{\textbf{A comparison of SVGD and SVGD x \textsc{Fuse} for Bayesian logistic regression on the Covertype dataset.}}
  \label{fig:9}
\vspace{-7mm}
\end{figure}

We continue to investigate the performance of SGLD x \textsc{Fuse} in Figure \ref{fig:10}, where we perform a sensitivity analysis with respect to the mini-batch size used to compute the stochastic gradients. To reduce the computational burden, we consider a randomly selected subset of the Covertype dataset with $20,000$ data points. We then run each algorithm using $n=20$ particles, $T=1000$ iterations, and mini-batch sizes $[10,100,1000]$. Across all mini-batch sizes, SGLD exhibits the characteristic U-shaped performance profile: small step sizes lead to slow exploration and under-dispersed samples, while large step sizes cause numerical instability and divergence of the chain. In contrast, SGLD x \textsc{Fuse} maintains near-optimal predictive accuracy across several orders of magnitude in the initial movement parameter.

Comparing the three panels in Figure \ref{fig:10}, it is clear that the mini-batch size has a relatively pronounced effect on the performance of SGLD \citep{dubey2016variance,baker2019control,putcha2023preferential}. For small mini-batches (left-hand panel), the stochastic gradient noise dominates, and the performance of SGLD degrades sharply except in a narrow range of step sizes. SGLD x \textsc{Fuse}, however, maintains stable accuracy despite this increased gradient variance, suggesting that the adaptive update balances the stochasticity by automatically reducing the effective step size when gradients become noisy. As the mini-batch size increases (center panel, right-hand panel), both methods improve in stability, but SGLD x \textsc{Fuse} consistently matches or exceeds the best performance of SGLD, with no need for tuning. Notably, for the largest batch size, SGLD x \textsc{Fuse} achieves near-optimal accuracy across the full range of tested step sizes, whereas SGLD still requires careful step size selection. More broadly, our results indicate that the optimal fixed step size for SGLD is dependent on the mini-batch size. SGLD x \textsc{Fuse} removes the need to (re)-tune this parameter across different mini-batch configurations.

\begin{figure}[t]
  \centering
  \includegraphics[width=\linewidth]{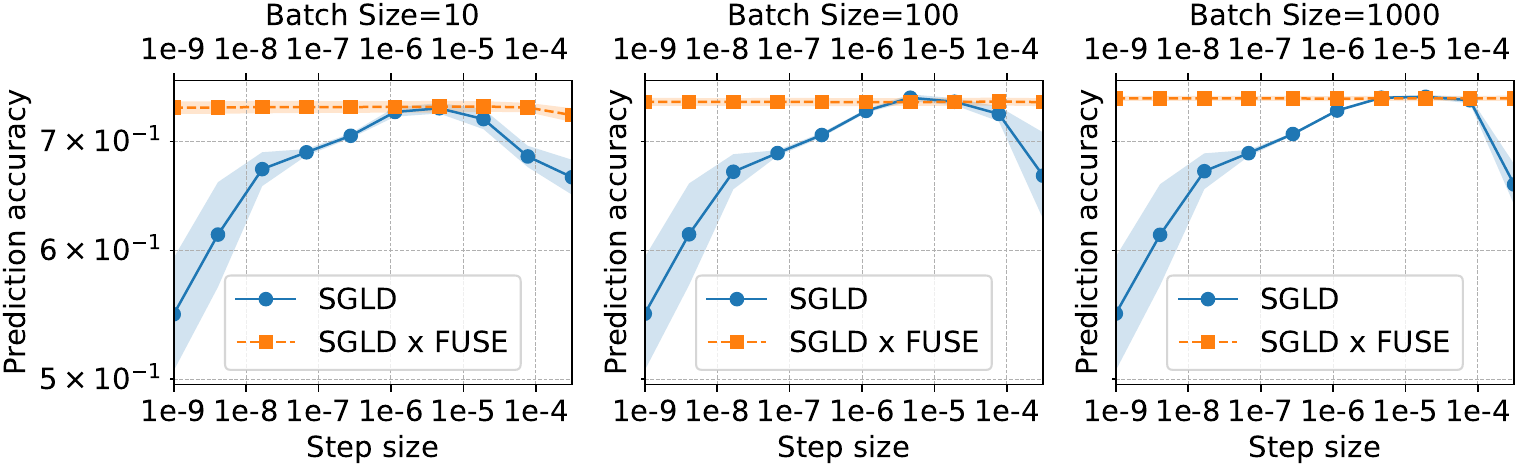}
  \caption{\textbf{A comparison of SGLD and SGLD x \textsc{Fuse} for different batch sizes, for Bayesian logistic regression on a subset of the Covertype dataset.} We plot the predictive accuracy as a function of the step size $\eta$ (SGLD) or the initial movement parameter $r_{\varepsilon}$ (SGLD x \textsc{Fuse}) after $T=1000$ iterations. We use mini-batch gradients with mini-batch sizes $[10,100,1000]$.}
  \label{fig:10}
\end{figure}

\subsection{Example 2: Training a Mean-Field Neural Network} We now consider the task of training a mean-field neural network, i.e., minimizing a functional of the form  $\mathcal{F}(\mu) = \mathcal{E}(\mu) + \mathrm{Ent}(\mu)$, where $\mathcal{E}(\mu) = \mathcal{E}_0(\mu) + \int r(x) \mathrm{d}\mu(x)$; $\mathcal{E}_0(\mu) = \frac{\lambda_1}{n}\sum_{k=1}^n \ell(y_k,h_{\mu}(z_k))$ is the (scaled) empirical risk of the mean-field neural network $h_{\mu}(z) = \int h_{x}(z)\mathrm{d}\mu(x)$ defined by a neural network $h_{x}(z)$ with parameter $x\in\mathbb{R}^d$, given training data $(z_i,y_i)_{i=1}^n\in\mathbb{R}^{d-1}\times\mathbb{R}$ and convex loss function $\ell:\mathbb{R}\times\mathbb{R}\rightarrow\mathbb{R}$; and $r:\mathbb{R}^d\rightarrow\mathbb{R}$ is a regularizer. 

We test our algorithms using an example similar to the one described in \citet[][Section 3.1]{chazal2025computable}. To be specific, we consider a univariate regression task defined as follows. We first generate synthetic data $\smash{\{(z_i,y_i)\}_{i=1}^N}$ of size $N=300$ by sampling each covariate $\smash{z_i\stackrel{\mathrm{i.i.d.}}{\sim}\mathcal{U}(0,1)}$, and then sampling $y_i|z_i \sim \mathcal{N}(\cdot|f(z_i), \sigma^2)$, with mean $f(z) =3\tanh(3z+\frac{1}{2})$ and $\sigma=0.1$. We use the squared error loss $\ell(y,y') = (y-y')^2$, ignore the regularizer so that $r(x) = 0$, and let $h_x(z) = w_2 \cdot \tanh(w_1 z + b_1) + b_2$ be a two-layer neural network with parameter $x=(w_1,b_1,w_2,b_2)^{\top} \in \mathbb{R}^4$. Finally, we set $\lambda_1 = 300$. We run each algorithm using $n=100$ particles for $T=1000$ iterations, initialized from $\mu_0=\mathcal{N}(0,\mathbf{I}_d)$. We assess the performance of each method using the test mean squared error (MSE), which we compute using an independent data set of size $N=300$, generated in the same way as the training data. We report results averaged over 10 random seeds.

The results, displayed in Figures \ref{fig:11} - \ref{fig:12}, are consistent with those observed in the previous experiments. MFLD x \textsc{Fuse} and VGD x \textsc{Fuse} maintain a consistently low test error across a broad range of step sizes, while the performance of MFLD and VGD deteriorates rapidly as the step size increases or decreases outside a narrow range. In particular, if the step size is chosen too small, MFLD and VGD do not converge within the given number of iterations, and if the step size is chosen too large, the bias incurred is substantial. These results highlight that our proposed tuning-free framework generalizes to models where our assumptions do not necessarily hold (e.g., non-geodesically convex), maintaining their efficacy in complex targets with pronounced multimodality.

\begin{figure}[t]
  \centering
  \begin{subfigure}{0.485\linewidth}
    \includegraphics[width=\linewidth]{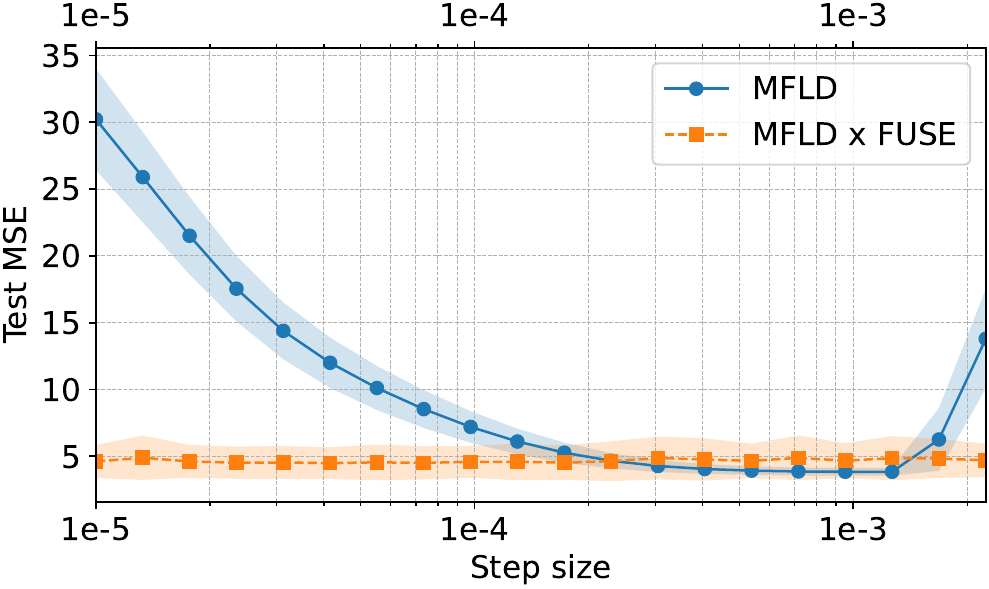}
    \caption{\textbf{MFLD vs MFLD x \textsc{Fuse}}.}
    \label{fig:11a}
  \end{subfigure}\hfill
  \begin{subfigure}{0.485\linewidth}
    \includegraphics[width=\linewidth]{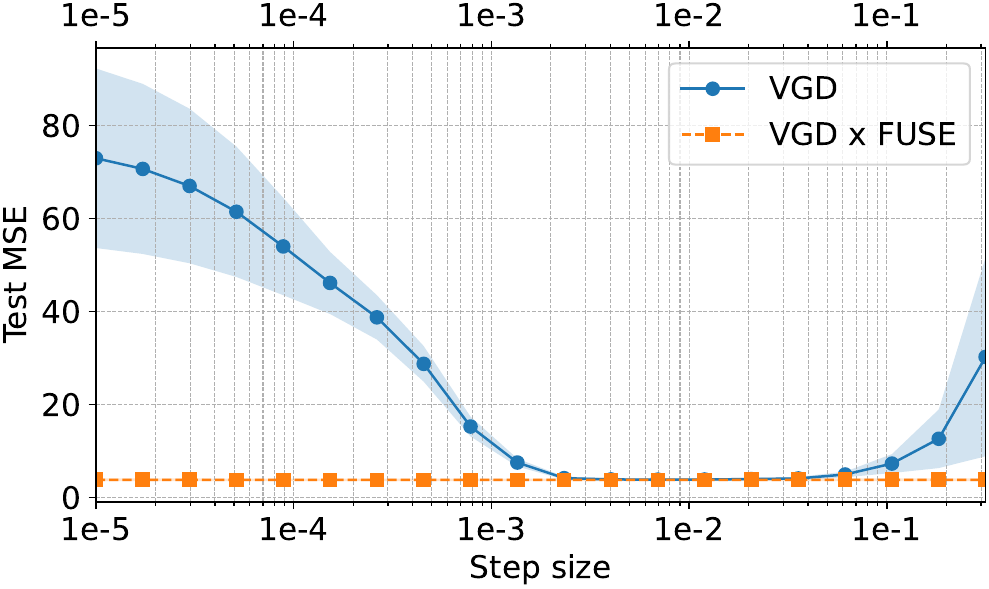}
    \caption{\textbf{VGD vs VGD x \textsc{Fuse}}.}
    \label{fig:11b}
  \end{subfigure}
  \caption{\textbf{A comparison of MFLD, MFLD x \textsc{Fuse} (left) and VGD, VGD x \textsc{Fuse} (right) for a mean-field neural network.} We plot the prediction accuracy vs the step size (MFLD, VGD) or the initial movement parameter (MFLD x \textsc{Fuse}, VGD x \textsc{Fuse}) after $T=1000$ iterations.}
  \label{fig:11}
\vspace{-2mm}
\end{figure}

\begin{figure}[b]
\vspace{-2mm}
  \centering
  \includegraphics[width=.9\linewidth]{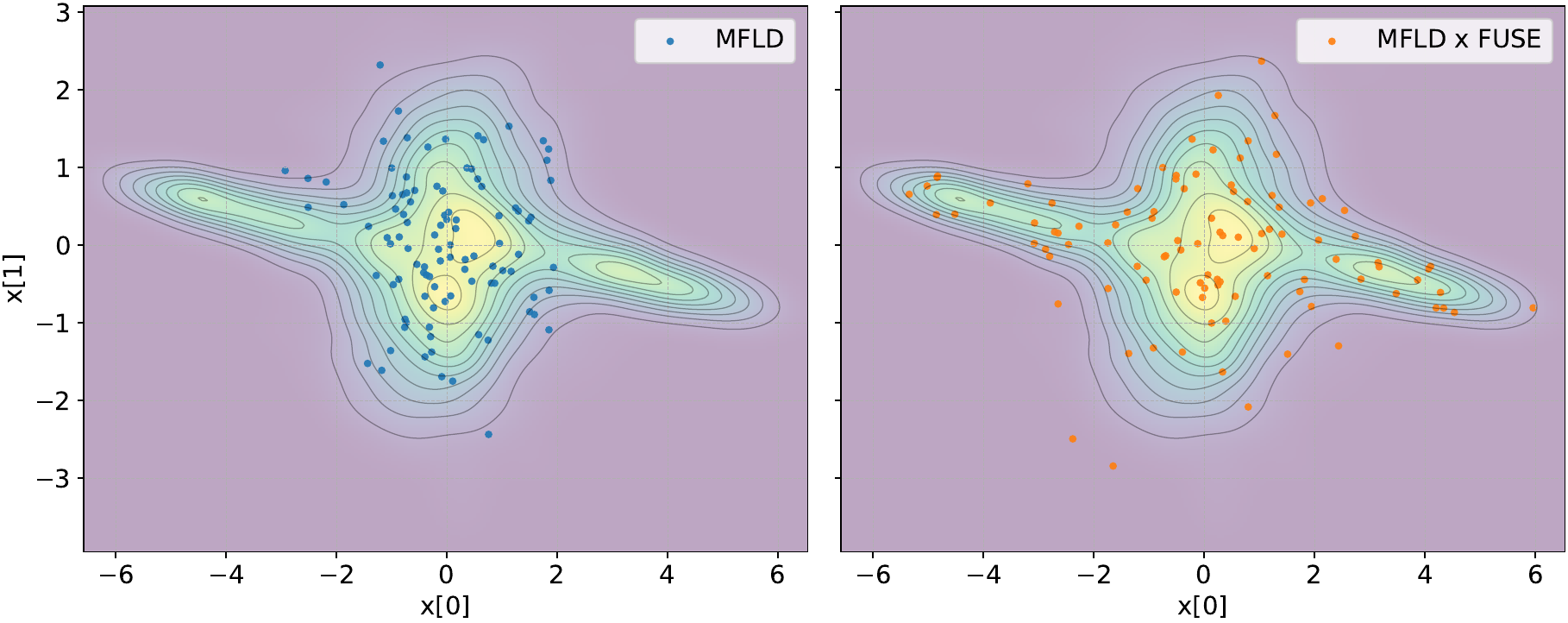}
  \caption{\textbf{A comparison of MFLD and MFLD x \textsc{Fuse} for the mean-field neural network.} We plot a set of 100 representative particles obtained via MFLD with $\eta=1\times 10^{-5}$ (left) and MFLD x \textsc{Fuse} with $r_{\varepsilon} = 1\times 10^{-5}$ (right) after $T=1000$ iterations.}
  \label{fig:12}
\vspace{-2mm}
\end{figure}

\section{Conclusions}
\label{sec:conclusions}

\subsection{Summary} In this paper, we have introduced \textsc{Fuse}: an adaptive, tuning-free step size schedule for (stochastic) optimization algorithms obtained as time-discretizations of Wasserstein gradient flows. We considered algorithms based on two popular time-discretizations -- namely, the forward Euler discretization and the forward-flow discretization -- and obtained rigorous theoretical guarantees in both cases. In particular, assuming geodesic convexity of the objective functional, we established that \textsc{Fuse} recovers the convergence rate of the optimal fixed step size up to a logarithmic factor, in both nonsmooth and smooth settings. Based on our theoretical results, we then introduced a number of step-size-free, gradient-based sampling algorithms, including variants of ULA, MFLD, SVGD, and VGD, as well as their stochastic (mini-batch) counterparts. 

In principle, the approach proposed in this paper could be used to obtain step-size-free algorithms for a broad class of optimization problems over the space of probability measures. Our primary focus in this paper, however, was on the development of tuning-free sampling algorithms. This was largely motivated by the growing demand for sampling methods which not only avoid the need for time-consuming manual tuning but are also applicable in modern ``big-data'' settings, where computing the likelihood (or its gradient) over the full dataset is often infeasible. 

There is, of course, a long history of work focused on the development of automated or semi-automated sampling methods. For example, adaptive MCMC, which iteratively tunes the parameter(s) of an MCMC proposal distribution based on the history of the chain, dates back several decades \citep{haario1999,haario2005componentwise,andrieu2006,roberts2007coupling,andrieu2008tutorial,pasarica2010adaptively,atchade2011adaptive}, and now encompasses a large family of different algorithms \citep[e.g.,][]{marshall2012adaptive,vihola2012robust,latuszynski2013adaptive,miasojedow2013adaptive,mohamed2013adaptive,leimkuhler2018ensemble,pompe2020framework,hoffman2021adaptive,karppinen2021conditional,gabrie2022adaptive,titsias2023optimal,cabezas2024markovian,tian2024adaptive,wang2024reinforcement,wang2025harnessing}.

Such approaches, however, are typically designed for exact MCMC algorithms which involve a Metropolis-Hastings acceptance step (e.g., RMW, MALA, HMC).  In particular, classical adaptive methods typically target an {optimal} acceptance rate derived from high-dimensional scaling limits \citep{gelman1997weak,roberts1998optimal,roberts2001optimal,bedard2008efficient,bedard2008optimal,beskos2013optimal}, maximize the expected squared jump distance \citep{pasarica2010adaptively,titsias2023optimal}, or consider some variant of these objectives \citep{levy2017generalizing,titsias2019gradient,campbell2021gradient,cannella2022semi,dharamshi2023sampling}. These methods are thus ill-suited to modern ``big-data'' settings, owing to the prohibitive cost of computing the Metropolis-Hastings acceptance ratio over the full dataset \citep{bardenet2017markov,prado2024metropolis}. Unlike these approaches, our methodology can be applied to automate the tuning of approximate, unadjusted MCMC algorithms (e.g., ULA), which, when combined with subsampling, yield stochastic gradient MCMC methods (e.g., SGLD) which are highly scalable to modern, large-scale datasets.

\subsection{Future Work} We conclude by briefly highlighting a number of interesting directions for future work. From a theoretical perspective, our results for both time-discretizations were stated in terms of the ideal, {mean-field} \textsc{Fuse} step size schedules, defined in \eqref{eq:dog-lr-forward-flow-recall} and \eqref{eq:dog-lr-forward-euler-recall}, respectively. However, this sequence of step sizes can only be computed directly when both the initial distribution and the target distribution are Gaussian. Naturally, it would be of interest to analyze the theoretical properties of the finite-particle step size schedule which we use in practice, cf. \eqref{eq:dog-lr-forward-flow-particle} and \eqref{eq:dog-lr-forward-euler-particle}. For this, the results obtained in \citet{fournier2015rate} may be a good starting point.

Our theoretical results also concerned the mean-field limit of the time-discretizations themselves. In the sampling context, whereby the objective functional is the KL divergence, the forward-flow discretization corresponds to ULA, and can be implemented directly. On the other hand, the forward Euler discretization does not result in an implementable algorithm, since it depends on the gradient of the log density of the current distribution. In fact, the forward Euler discretization cannot even be (directly) approximated using a collection of interacting particles, since the entropy is ill-defined for discrete measures. Thus, in practice, we form a finite-particle approximation to a kernelized version of both the original update equation and our proposed step size schedule (see Section \ref{sec:practical-forward-euler}). This is another gap between theory and practice which it would be of interest to close.

Finally, all of our theoretical results relied on the assumption that the objective functional (e.g., the KL divergence) was geodesically convex. In the sampling context, this is equivalent to the assumption that the target distribution is log-concave. While this assumption is by no means uncommon in the sampling literature \citep[e.g.,][]{durmus2019analysis}, it does exclude many targets encountered in practice. Given this, it would be of great interest to extend our theoretical analysis to more general settings, perhaps leveraging other ideas from the optimization literature \citep[e.g.,][]{ahn2025general}.

Regarding methodology, there are also several interesting extensions of our approach. One is to consider variants of our step size schedule (e.g., a sum of weighted gradients), following similar developments in the optimization literature \citep[e.g.,][]{defazio2023learning,khaled2023dowg,mishchenko2024prodigy}. In another direction, it would be interesting to develop tuning-free variants of momentum-enriched sampling algorithms such as (stochastic gradient) Hamiltonian Monte Carlo \citep[e.g.,][]{chen2014stochastic,nemeth2021stochastic}, which are well known to improve the convergence of first-order methods such as (stochastic gradient) Langevin dynamics. Here, once again, it may be possible to take inspiration from the optimization literature \citep[e.g.,][]{kreisler2024accelerated}.


\section*{Acknowledgements}
CN was supported by the UK Research and Innovation (UKRI) Engineering and Physical Sciences Research Council (EPSRC), grant number EP/V022636/1. CN acknowledges further support from the EPSRC ProbAI Hub, grant number EP/Y028783/1.


\bibliographystyle{apalike}
\bibliography{references}

\appendix 

\section{Additional Results: Forward Euler Discretization}
\label{app:additional-results}

In this appendix, we analyze the convergence of \eqref{eq:wasserstein-sub-grad-descent}, the forward Euler discretization of the WGF in \eqref{eq:wasserstein-grad-flow}. Our results mirror those given in the main text for the forward-flow discretization. 

\subsection{Forward Euler Discretization, Nonsmooth Setting}
\label{sec:forward-euler-nonsmooth}
We begin, as before, by considering the nonsmooth case. 

\subsubsection{Deterministic Case: Constant Step Size}
We first consider the case where the step size in \eqref{eq:wasserstein-sub-grad-descent} is constant, that is, $\eta_t=\eta$ for all $t\geq 0$. We will analyze the convergence of the average iterate defined according to the following recursion:
\begin{align}
    \bar{\mu}_1 &= \mu_0 \label{eq:bar-mu-t-0} \\
    \bar{\mu}_{t+1} &= \left[\left(1-\frac{1}{t+1}\right) \mathrm{id} + \frac{1}{t+1} \boldsymbol{t}_{\bar{\mu}_{t}}^{\mu_{t}}\right]_{\#}\bar{\mu}_{t},\quad t\geq 1. \label{eq:bar-mu-t}
\end{align}

\begin{lemma}
\label{lemma:average-bound}
    Suppose that Assumption \ref{assumption:lsc} and \ref{assumption:geo-convex} hold. Let $\bar{\mu}_T$ be the average iterate defined by \eqref{eq:bar-mu-t-0} - \eqref{eq:bar-mu-t}. Then, for all $T\geq 1$, it holds that 
    \begin{equation}
        \mathcal{F}(\bar{\mu}_{T}) - \mathcal{F}(\pi) \leq \frac{1}{T}\left[\sum_{t=0}^{T-1}\mathcal{F}(\mu_t) - \sum_{t=0}^{T-1}\mathcal{F}(\pi)\right]. \label{eq:average-bound-lemma}
    \end{equation}
\end{lemma}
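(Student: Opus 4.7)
The plan is to exploit the fact that the recursion in \eqref{eq:bar-mu-t-0}--\eqref{eq:bar-mu-t} writes each $\bar{\mu}_{t+1}$ as a point on the McCann interpolant between $\bar{\mu}_t$ and $\mu_t$, and then invoke geodesic convexity of $\mathcal{F}$ inductively.

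First, I would note that, by definition, the curve $(\lambda_s^{\bar{\mu}_t\to\mu_t})_{s\in[0,1]}$ given by
\begin{equation*}
    \lambda_s^{\bar{\mu}_t\to\mu_t} = \bigl((1-s)\,\mathrm{id} + s\,\boldsymbol{t}_{\bar{\mu}_t}^{\mu_t}\bigr)_{\#}\bar{\mu}_t
\end{equation*}
is a constant speed geodesic connecting $\bar{\mu}_t$ to $\mu_t$ (see Section \ref{sec:geo-convex}). In particular, the recursion \eqref{eq:bar-mu-t} says precisely that $\bar{\mu}_{t+1} = \lambda_{1/(t+1)}^{\bar{\mu}_t\to\mu_t}$. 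Applying geodesic convexity (Assumption \ref{assumption:geo-convex}) to this interpolant then yields the one-step bound
\begin{equation*}
    \mathcal{F}(\bar{\mu}_{t+1}) \leq \frac{t}{t+1}\,\mathcal{F}(\bar{\mu}_t) + \frac{1}{t+1}\,\mathcal{F}(\mu_t), \quad t\geq 1.
\end{equation*}
Properness and lower semicontinuity (Assumption \ref{assumption:lsc}) ensure the quantities on both sides are well defined.

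The next step is a straightforward induction on $T$ to show that
\begin{equation*}
    \mathcal{F}(\bar{\mu}_T) \leq \frac{1}{T}\sum_{t=0}^{T-1}\mathcal{F}(\mu_t),
\end{equation*}
with base case $T=1$ handled by $\bar{\mu}_1 = \mu_0$. For the inductive step, multiplying the one-step bound through by $t+1$ gives $(t+1)\mathcal{F}(\bar{\mu}_{t+1}) \leq t\,\mathcal{F}(\bar{\mu}_t) + \mathcal{F}(\mu_t)$, and the inductive hypothesis $t\,\mathcal{F}(\bar{\mu}_t)\leq \sum_{s=0}^{t-1}\mathcal{F}(\mu_s)$ closes the induction cleanly. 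Subtracting $\mathcal{F}(\pi)$ from both sides and rewriting $\mathcal{F}(\pi) = \tfrac{1}{T}\sum_{t=0}^{T-1}\mathcal{F}(\pi)$ yields \eqref{eq:average-bound-lemma}.

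There is no real obstacle here: the proof is essentially a measure-theoretic Jensen-type inequality driven by geodesic convexity, and mirrors the argument for Lemma \ref{lemma:average} in the forward-flow case (only the indexing and the target of the transport in the recursion change). The only detail worth flagging is the need to verify that $\bar{\mu}_t\in\mathcal{D}(\mathcal{F})$ for each $t$ so that the optimal transport map $\boldsymbol{t}_{\bar{\mu}_t}^{\mu_t}$ used in the recursion is well defined; this follows because $\bar{\mu}_1=\mu_0\in\mathcal{D}(\mathcal{F})$ by assumption and each subsequent $\bar{\mu}_{t+1}$ lies on a geodesic between elements of $\mathcal{D}(\mathcal{F})$, on which $\mathcal{F}$ is finite by the one-step convexity bound above.
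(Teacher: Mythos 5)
Your proof is correct and takes essentially the same route as the paper's: both identify $\bar{\mu}_{t+1}$ as the $\tfrac{1}{t+1}$-point of the constant speed geodesic from $\bar{\mu}_t$ to $\mu_t$, apply geodesic convexity for the one-step bound, and close by induction starting from $\bar{\mu}_1=\mu_0$. The only minor addition is your remark about $\bar{\mu}_t\in\mathcal{D}(\mathcal{F})$, which the paper leaves implicit but which is a reasonable detail to flag.
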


\begin{proof}
    It is sufficient to show that $\mathcal{F}(\bar{\mu}_{T})\leq \frac{1}{T}\sum_{t=0}^{T-1} \mathcal{F}(\mu_t)$. We will prove this result by induction over $T\in\mathbb{N}$. First consider $T=1$. By direct computation, we have $\bar{\mu}_1 = \mu_0$, and so $\smash{\mathcal{F}(\bar{\mu}_1) = \mathcal{F}(\mu_0)  = }$ $\smash{\frac{1}{T}\sum_{t=0}^{0}\mathcal{F}(\mu_t)}$. This completes the base case. Suppose, now, that the result holds for $T$. Let $\mu\in\mathcal{P}_2(\mathbb{R}^d)$ and $\nu\in\mathcal{P}_2(\mathbb{R}^d)$. In addition, let $\xi_{\nu} = \boldsymbol{t}_{\mu}^{\nu} -\mathrm{id}\in\mathcal{T}_{\mu}\mathcal{P}_2(\mathbb{R}^d)$. Define $\smash{\lambda_{\mu}^{\nu}(\cdot):[0,1]\rightarrow\mathcal{P}_2(\mathbb{R}^d)}$ as a constant speed geodesic with $\lambda_{\mu}^{\nu}(0)= \mu$ and $\smash{\dot{\lambda_{\mu}^{\nu}}(0)= \xi_{\nu}}$. That is,
    \begin{equation}
        \lambda_{\mu}^{\nu}(s) = \left[\mathrm{id} + s \xi_{\nu} \right]_{\#}\mu = \left[\mathrm{id} + s\left(\boldsymbol{t}_{\mu}^{\nu} - \mathrm{id}\right)\right]_{\#}\mu.
    \end{equation} 
     It follows, working from the definition in  \eqref{eq:bar-mu-t-0} - \eqref{eq:bar-mu-t}, that
    \begin{align}
        \bar{\mu}_{T+1} &= \left[\left(1-\frac{1}{T+1}\right)\mathrm{id} + \frac{1}{T+1}\boldsymbol{t}_{\bar{\mu}_{T}}^{\mu_{T}}\right]\bar{\mu}_{T} = \left[\mathrm{id} + \frac{1}{T+1}\left(\boldsymbol{t}_{\bar{\mu}_{T}}^{\mu_{T}} - \mathrm{id}\right)\right]_{\#}\bar{\mu}_{T} =\lambda_{\bar{\mu}_{T}}^{\mu_{T}}\left(\frac{1}{T+1}\right).
    \end{align}
    Using the geodesic convexity of $\mathcal{F}$, we then have
    \begin{align}
        \mathcal{F}(\bar{\mu}_{T+1})&=\mathcal{F}\left(\lambda_{\bar{\mu}_{T}}^{\mu_{T}}\left(\frac{1}{T+1}\right)\right) \\
        &\leq \left(1-\frac{1}{T+1}\right) \mathcal{F}\left(\lambda_{\bar{\mu}_{T}}^{\mu_{T}}\left(0\right)\right) + \frac{1}{T+1}\mathcal{F}\left(\lambda_{\bar{\mu}_{T}}^{\mu_{T}}\left(1\right)\right) \\
        &=\left(1-\frac{1}{T+1}\right) \mathcal{F}\left(\bar{\mu}_{T}\right) + \frac{1}{T+1}\mathcal{F}\left(\mu_{T}\right).
    \end{align}
    Finally, using also the inductive assumption, we have that
    \begin{align}
        \mathcal{F}(\bar{\mu}_{T+1}) &\leq \left(1-\frac{1}{T+1}\right)\mathcal{F}(\bar{\mu}_{T}) + \frac{1}{T+1}\mathcal{F}(\mu_{T}) \\
        &\leq \left(1-\frac{1}{T+1}\right) \frac{1}{T}\displaystyle\sum_{t=0}^{T-1}\mathcal{F}(\mu_t) +\frac{1}{T+1}\mathcal{F}(\mu_{T}) 
        \\
        &= \frac{1}{T+1}\displaystyle\sum_{t=0}^{T}\mathcal{F}(\mu_t).
    \end{align}
\end{proof}

\begin{lemma}
\label{lemma:evi}
    Suppose that Assumption \ref{assumption:lsc} and \ref{assumption:geo-convex} hold. Let $(\mu_t)_{t\geq 0}$ denote the sequence of measures defined by \eqref{eq:wasserstein-sub-grad-descent}, the forward Euler discretization of the WGF in \eqref{eq:wasserstein-grad-flow}. Suppose that $\eta_t = \eta>0$ for all $t\geq 0$. Then, for any $t\geq 0$, and for any $\pi\in\mathcal{P}_2(\mathbb{R}^d)$, we have 
    \begin{equation}
        \mathcal{F}(\mu_t) - \mathcal{F}(\pi) \leq \frac{W^2_2(\mu_t,\pi) - W_2^2(\mu_{t+1},\pi)}{2\eta} + \frac{\eta}{2}\int_{\mathbb{R}^d}\|\xi_t(x)\|^2\,\mathrm{d}\mu_t(x).
        \label{eq:evi}
    \end{equation}
\end{lemma}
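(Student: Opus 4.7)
The plan is to mimic the classical three-line proof of the descent lemma for (sub)gradient descent, lifted to the Wasserstein space. The two ingredients will be (i) the subdifferential characterization of geodesic convexity, and (ii) an explicit coupling that relates $W_2^2(\mu_{t+1},\pi)$ to $W_2^2(\mu_t,\pi)$ via the update rule $\mu_{t+1} = (\mathrm{id}-\eta\xi_t)_{\#}\mu_t$.

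First, I would invoke the characterization of the Fr\'echet subdifferential of a geodesically convex functional recalled just before Section~\ref{sec:geo-convex} (with $m=0$): since $\xi_t = \nabla_{W_2}\mathcal{F}(\mu_t)\in\partial\mathcal{F}(\mu_t)$, we get
\begin{equation*}
    \mathcal{F}(\pi) - \mathcal{F}(\mu_t) \;\geq\; \int_{\mathbb{R}^d} \bigl\langle \xi_t(x),\, \boldsymbol{t}_{\mu_t}^{\pi}(x) - x\bigr\rangle\, \mathrm{d}\mu_t(x),
\end{equation*}
or equivalently $\mathcal{F}(\mu_t) - \mathcal{F}(\pi) \leq \int \langle \xi_t(x), x - \boldsymbol{t}_{\mu_t}^{\pi}(x)\rangle\, \mathrm{d}\mu_t(x)$.

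Second, to bound the right-hand side, I would construct a sub-optimal coupling between $\mu_{t+1}$ and $\pi$. Since $\mu_{t+1} = (\mathrm{id}-\eta\xi_t)_{\#}\mu_t$, the pushforward $\bigl((\mathrm{id}-\eta\xi_t),\,\boldsymbol{t}_{\mu_t}^{\pi}\bigr)_{\#}\mu_t$ lies in $\Gamma(\mu_{t+1},\pi)$, so by definition of $W_2$,
\begin{equation*}
    W_2^2(\mu_{t+1},\pi) \;\leq\; \int_{\mathbb{R}^d} \bigl\| x - \eta\xi_t(x) - \boldsymbol{t}_{\mu_t}^{\pi}(x)\bigr\|^2\, \mathrm{d}\mu_t(x).
\end{equation*}
Expanding the square and recognising $\int\|x-\boldsymbol{t}_{\mu_t}^{\pi}(x)\|^2\,\mathrm{d}\mu_t(x) = W_2^2(\mu_t,\pi)$ (here I use absolute continuity of $\mu_t$ so that $\boldsymbol{t}_{\mu_t}^{\pi}$ is the optimal transport map), I obtain
\begin{equation*}
    W_2^2(\mu_{t+1},\pi) \;\leq\; W_2^2(\mu_t,\pi) - 2\eta\!\int\!\langle \xi_t(x), x-\boldsymbol{t}_{\mu_t}^{\pi}(x)\rangle\,\mathrm{d}\mu_t(x) + \eta^2\!\int\!\|\xi_t(x)\|^2\,\mathrm{d}\mu_t(x).
\end{equation*}
Solving this for the linear inner-product term and substituting into the convexity inequality from the first step gives \eqref{eq:evi} directly.

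The main obstacle, such as it is, is purely bookkeeping: verifying that $\boldsymbol{t}_{\mu_t}^{\pi}$ is a well-defined measurable map so that the candidate coupling makes sense, and that the convexity inequality is written with respect to the same transport map. Both are handled by the absolute-continuity convention on $\mathcal{D}(\mathcal{F})$ introduced in the Subdifferential Calculus subsection (via Brenier/McCann), so no real difficulty arises. Everything else is a single algebraic rearrangement.
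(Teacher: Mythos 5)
Your proposal is correct and follows essentially the same argument as the paper's proof: both use the subdifferential characterization of geodesic convexity together with the sub-optimal coupling $\bigl((\mathrm{id}-\eta\xi_t),\,\boldsymbol{t}_{\mu_t}^{\pi}\bigr)_{\#}\mu_t$ to bound $W_2^2(\mu_{t+1},\pi)$, then expand the square. The only difference is cosmetic — the paper first manipulates the Wasserstein distances and applies convexity in the final step, whereas you invoke convexity first — but this is a trivial reordering of the same algebra.
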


\begin{proof}
    From the definition of the Wasserstein-2 distance, we have that 
    \begin{align}
        W_2^2(\mu_{t},\pi) &=\inf_{\gamma \in \Gamma(\mu_{t},\pi)} \int_{\mathbb{R}^d\times\mathbb{R}^d}\|x - y\|^2\mathrm{d}\gamma(x,y)= \int_{\mathbb{R}^d} \| x - \boldsymbol{t}_{\mu_t}^{\pi}(x)\|^2 \,\mathrm{d}\mu_t(x) \\
        W_2^2(\mu_{t+1},\pi) &= \inf_{\gamma \in \Gamma(\mu_{t+1},\pi)} \int_{\mathbb{R}^d\times\mathbb{R}^d}\|x - y\|^2\mathrm{d}\gamma(x,y)\leq \int_{\mathbb{R}^d}\|(x-\eta \xi_t(x)) - \boldsymbol{t}_{\mu_{t}}^{\pi}(x)\|^2\mathrm{d}\mu_{t}(x),
    \end{align}
    Putting these two together, and expanding, it follows straightforwardly that
    \begin{align}
        \frac{W^2_2(\mu_t,\pi) - W_2^2(\mu_{t+1},\pi)}{2\eta} &\geq \frac{1}{2\eta} \int_{\mathbb{R}^d} \left[\| x - \boldsymbol{t}_{\mu_t}^{\pi}(x)\|^2  - \|(x-\eta \xi_t(x)) - \boldsymbol{t}_{\mu_{t}}^{\pi}(x)\|^2\right]\mathrm{d}\mu_{t}(x) \\
        & =  \int_{\mathbb{R}^d} \left\langle \xi_{t}(x), x-\boldsymbol{t}_{\mu_t}^{\pi}(x) \right\rangle \,\mathrm{d}\mu_t(x) - \frac{\eta}{2}\int_{\mathbb{R}^d} \|\xi_t(x)\|^2\,\mathrm{d}\mu_t(x) \\
        &\geq \mathcal{F}(\mu_t) - \mathcal{F}(\pi) - \frac{\eta}{2}\int_{\mathbb{R}^d} \|\xi_t(x)\|^2\,\mathrm{d}\mu_t(x)
    \end{align}
    where in the final line we have used the definition of geodesic convexity. Rearranging gives precisely the required result.
\end{proof}

\begin{proposition}
\label{prop:regret-bound}
    Suppose that Assumption \ref{assumption:lsc} and \ref{assumption:geo-convex} hold. Let $(\mu_t)_{t\geq 0}$ denote the sequence of measures defined by \eqref{eq:wasserstein-sub-grad-descent}, the forward Euler discretization of the WGF in \eqref{eq:wasserstein-grad-flow}. Suppose that $\eta_t=\eta>0$ for all $t\geq 0$. Then, for any $\pi\in\mathcal{P}_2(\mathbb{R}^d)$, 
    \begin{equation}
        \sum_{t=0}^{T-1} \mathcal{F}(\mu_t) - \sum_{t=0}^{T-1} \mathcal{F}(\pi) \leq \frac{W_2^2(\mu_0,\pi) - W_2^2(\mu_{T},\pi)}{2\eta} + \frac{\eta}{2}\sum_{t=0}^{T-1} \int_{\mathbb{R}^d}\|\xi_t(x)\|^2\,\mathrm{d}\mu_t(x).
    \end{equation}
\end{proposition}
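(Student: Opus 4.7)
The plan is to derive this as a direct corollary of Lemma \ref{lemma:evi}, which already gives the one-step energy dissipation inequality. Under the hypotheses of the proposition, Lemma \ref{lemma:evi} applies at every index $t \geq 0$ with the same constant $\eta$, so I can simply sum the inequality
\begin{equation*}
\mathcal{F}(\mu_t) - \mathcal{F}(\pi) \;\leq\; \frac{W_2^2(\mu_t,\pi) - W_2^2(\mu_{t+1},\pi)}{2\eta} + \frac{\eta}{2}\int_{\mathbb{R}^d}\|\xi_t(x)\|^2\,\mathrm{d}\mu_t(x)
\end{equation*}
over $t = 0, 1, \dots, T-1$.

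The left-hand side sums directly into $\sum_{t=0}^{T-1} \mathcal{F}(\mu_t) - \sum_{t=0}^{T-1} \mathcal{F}(\pi)$, matching the claimed expression. On the right-hand side, the first term is a telescoping sum, collapsing to $(W_2^2(\mu_0,\pi) - W_2^2(\mu_T,\pi))/(2\eta)$ after cancellation of all intermediate Wasserstein distances. The gradient-norm term accumulates as a straight sum $\tfrac{\eta}{2}\sum_{t=0}^{T-1}\int \|\xi_t\|^2 \,\mathrm{d}\mu_t$. Combining these immediately gives the stated bound.

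There is essentially no obstacle: the substantive geometric and convexity content has already been absorbed into Lemma \ref{lemma:evi}, which handles the interplay between the Wasserstein metric and the subdifferential via the optimal transport map $\boldsymbol{t}_{\mu_t}^{\pi}$ and the geodesic convexity inequality. The proof of the proposition itself is thus just a one-line telescoping argument, and I would present it as such.
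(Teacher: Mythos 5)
Your proof is correct and coincides with the paper's own argument: both sum the one-step inequality of Lemma \ref{lemma:evi} over $t=0,\dots,T-1$, telescope the Wasserstein-distance terms, and accumulate the gradient-norm terms. Nothing further is needed.
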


\begin{proof}
    The result is an immediate consequence of Lemma \ref{lemma:evi}. In particular, summing \eqref{eq:evi} over $t\in 0,\dots,T-1$, we have 
    \begin{align}
        \sum_{t=0}^{T-1} \mathcal{F}(\mu_t) - \sum_{t=0}^{T-1} \mathcal{F}(\pi) &\leq \sum_{t=0}^{T-1}\left[\frac{W_2^2(\mu_{t},\pi) - W_2^2(\mu_{t+1},\pi)}{2\eta} + \frac{\eta}{2} \int_{\mathbb{R}^d} \|\xi_t(x)\|^2\,\mathrm{d}\mu_t(x)\right] \\
        &= \frac{\sum_{t=0}^{T-1} [W_2^2(\mu_t,\pi) - W_2^2(\mu_{t+1},\pi)]}{2\eta} + \frac{\eta}{2}\sum_{t=0}^{T-1} \int_{\mathbb{R}^d} \|\xi_t(x)\|^2\,\mathrm{d}\mu_t(x)\\
        &=\frac{W_2^2(\mu_0,\pi) - W_2^2(\mu_{T},\pi)}{2\eta} + \frac{\eta}{2}\sum_{t=0}^{T-1} \int_{\mathbb{R}^d}\|\xi_t(x)\|^2\,\mathrm{d}\mu_t(x).
    \end{align}
\end{proof}

\begin{corollary}
\label{corollary:average-bound}
    Suppose that Assumption \ref{assumption:lsc} and \ref{assumption:geo-convex} hold. Let $(\mu_t)_{t\geq 0}$ denote the sequence of measures defined by \eqref{eq:wasserstein-sub-grad-descent}, the forward Euler discretization of the WGF in \eqref{eq:wasserstein-grad-flow}. Suppose that $\eta_t=\eta>0$ for all $t\geq 0$. Then, for all $\pi\in\mathcal{P}_2(\mathbb{R}^d)$, 
    \begin{equation}
        \mathcal{F}(\bar{\mu}_{T}) - \mathcal{F}(\pi) \leq \frac{1}{T}\left[\frac{W_2^2(\mu_0,\pi)}{2\eta} + \frac{\eta}{2}\sum_{t=0}^{T-1} \int_{\mathbb{R}^d}\|\xi_t(x)\|^2\,\mathrm{d}\mu_t(x)\right], \label{eq:average-bound}
    \end{equation}
\end{corollary}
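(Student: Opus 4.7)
The plan is straightforward: this corollary is a direct chaining of the two preceding results, Lemma \ref{lemma:average-bound} and Proposition \ref{prop:regret-bound}, so no new ideas are needed. First, I would invoke Lemma \ref{lemma:average-bound}, which, via the geodesic convexity of $\mathcal{F}$ applied inductively along the recursive construction of $\bar{\mu}_T$ in \eqref{eq:bar-mu-t-0}--\eqref{eq:bar-mu-t}, delivers the Jensen-type bound
\begin{equation*}
\mathcal{F}(\bar{\mu}_T) - \mathcal{F}(\pi) \leq \frac{1}{T}\left[\sum_{t=0}^{T-1}\mathcal{F}(\mu_t) - \sum_{t=0}^{T-1}\mathcal{F}(\pi)\right].
\end{equation*}

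Next, I would apply Proposition \ref{prop:regret-bound} directly to the right-hand side. This proposition, in turn, is obtained from the one-step EVI-style inequality of Lemma \ref{lemma:evi} by summing over $t=0,\dots,T-1$ and exploiting the telescoping structure, yielding
\begin{equation*}
\sum_{t=0}^{T-1} \mathcal{F}(\mu_t) - \sum_{t=0}^{T-1} \mathcal{F}(\pi) \leq \frac{W_2^2(\mu_0,\pi) - W_2^2(\mu_T,\pi)}{2\eta} + \frac{\eta}{2}\sum_{t=0}^{T-1} \int_{\mathbb{R}^d}\|\xi_t(x)\|^2\,\mathrm{d}\mu_t(x).
\end{equation*}
Substituting this into the Jensen bound and dividing by $T$ produces nearly the claimed expression, but with an extra non-positive term $-W_2^2(\mu_T,\pi)/(2\eta T)$ in the bracket.

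The final step is simply to drop the negative $W_2^2(\mu_T,\pi)$ contribution, since $W_2^2(\mu_T,\pi) \geq 0$ and removing it only weakens the bound. This yields exactly \eqref{eq:average-bound}. There is no real obstacle: the substantive work has already been carried out in Lemma \ref{lemma:average-bound} (the inductive geodesic-convexity argument) and Proposition \ref{prop:regret-bound} (the telescoping descent lemma); the corollary itself is a one-line assembly of these two facts.
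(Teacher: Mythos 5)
Your proposal is correct and matches the paper's proof exactly: the paper also obtains the corollary by chaining Lemma \ref{lemma:average-bound} with Proposition \ref{prop:regret-bound}, discarding the non-positive $-W_2^2(\mu_T,\pi)/(2\eta T)$ term. Nothing further is needed.
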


\begin{proof}
    The result follows immediately from Lemma \ref{lemma:average-bound} and Proposition \ref{prop:regret-bound}. 
\end{proof}


\begin{theorem}
    Suppose that Assumption \ref{assumption:lsc} and \ref{assumption:geo-convex} hold. Let $(\mu_t)_{t\geq 0}$ denote the sequence of measures defined by \eqref{eq:wasserstein-sub-grad-descent}, the forward Euler discretization of the WGF in \eqref{eq:wasserstein-grad-flow}. Then, for all $\pi\in\mathcal{P}_2(\mathbb{R}^d)$, the upper bound in \eqref{eq:average-bound} is minimized when
    \begin{equation}
        \eta = \frac{W_2(\mu_0,\pi)}{\sqrt{\sum_{t=0}^{T-1} \int_{\mathbb{R}^d}\|\xi_t(x)\|^2\,\mathrm{d}\mu_t(x)}}, \label{eq:optimal-lr}
    \end{equation}
    Moreover, this choice of $\eta$ guarantees
    \begin{equation}
        \mathcal{F}(\bar{\mu}_T) - \mathcal{F}(\pi) \leq \frac{1}{T} W_2(\mu_0,\pi)\sqrt{\sum_{t=0}^{T-1} \int_{\mathbb{R}^d}\|\xi_t(x)\|^2\,\mathrm{d}\mu_t(x)}. \label{eq:optimal-average-bound}
    \end{equation}
\end{theorem}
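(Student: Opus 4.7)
The plan is essentially to mirror the proof of Theorem \ref{thm:constant-step-forward-flow}, since the structure of the upper bound in Corollary \ref{corollary:average-bound} is identical to that of Corollary \ref{corollary:average-bound-forward-flow}: both are of the form $\tfrac{1}{T}(A/\eta + B\eta)$ for suitable constants $A, B \geq 0$ (independent of $\eta$, treating the trajectory-dependent quantities as fixed quantities for the purpose of optimization, as is standard in this line of work). The result is therefore a straightforward one-variable calculus exercise applied to the bound we have already established.

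Concretely, I would set $A = W_2^2(\mu_0,\pi)/2$ and $B = \tfrac{1}{2}\sum_{t=0}^{T-1}\int_{\mathbb{R}^d}\|\xi_t(x)\|^2\,\mathrm{d}\mu_t(x)$, so that Corollary \ref{corollary:average-bound} reads $\mathcal{F}(\bar{\mu}_T)-\mathcal{F}(\pi)\leq (A/\eta + B\eta)/T$. Differentiating the RHS with respect to $\eta$ and setting equal to zero gives $-A/\eta^2 + B = 0$, whence $\eta = \sqrt{A/B}$. Unpacking the constants yields exactly the expression in \eqref{eq:optimal-lr}. Substituting back, and using $A/\eta^{*} + B\eta^{*} = 2\sqrt{AB}$, one immediately obtains
\[
\mathcal{F}(\bar{\mu}_T)-\mathcal{F}(\pi) \leq \frac{2\sqrt{AB}}{T} = \frac{W_2(\mu_0,\pi)}{T}\sqrt{\sum_{t=0}^{T-1}\int_{\mathbb{R}^d}\|\xi_t(x)\|^2\,\mathrm{d}\mu_t(x)},
\]
which is \eqref{eq:optimal-average-bound}. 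One can also verify the second-order condition $2A/\eta^3 > 0$, confirming that the critical point is indeed a minimum (this is also immediate from AM--GM, which gives the tightness of the $2\sqrt{AB}$ bound).

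Since this is purely a calculus step applied to an already-proven inequality, there is no genuine obstacle. The only conceptual remark worth making is that the ``optimality'' here is oracle-type: the minimizing $\eta$ depends on both $W_2(\mu_0,\pi)$ and the cumulative squared norms $\int \|\xi_t\|^2\,\mathrm{d}\mu_t$, neither of which is accessible to the practitioner. This is precisely the intractability that motivates the adaptive \textsc{Fuse} step size schedule developed in the subsequent sections, and so the theorem is best understood as a benchmark rate against which the adaptive schedule's guarantees (in particular, Corollary \ref{corr:dog-wgd-bound-2-forward-flow-uniform} in the forward-flow case, and its forward-Euler analogue) are to be compared.
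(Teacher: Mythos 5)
Your proof is correct and takes essentially the same approach as the paper: differentiate the RHS of Corollary \ref{corollary:average-bound} in $\eta$, set to zero, solve, and substitute back. The $A,B$ shorthand, the AM--GM/second-order remark, and the commentary on the oracle nature of the bound are all fine but add nothing beyond the paper's one-line calculus argument.
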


\begin{proof}
   This result is an immediate consequence of Corollary \ref{corollary:average-bound}. In particular, differentiating the RHS of \eqref{eq:average-bound} with respect to $\eta$, and setting equal to zero, we have 
   \begin{equation}
       0 = -\frac{W_2^2(\mu_0,\pi)}{\eta^2} + \sum_{t=0}^{T-1} \int_{\mathbb{R}^d} \|\xi_t(x)\|^2\,\mathrm{d}\mu_t(x).
   \end{equation}
   Solving for $\eta$ gives \eqref{eq:optimal-lr}, and substituting back into \eqref{eq:average-bound} gives \eqref{eq:optimal-average-bound}.
\end{proof}

\subsubsection{Deterministic Case: Adaptive Step Size}
\label{sec:adaptive-forward-euler}
Similar to before, motivated by the intractability of the ideal constant step size in \eqref{eq:optimal-lr}, we now introduce a sequence of step sizes $(\eta_t)_{t\geq 0}$ defined according to
\begin{equation}
\tcboxmath[colback=black!5,colframe=black!15,boxrule=0.4pt, arc=2pt, left=8pt, right=8pt, top=6pt, bottom=6pt]{
    \eta_t = \frac{\max_{0\leq s\leq t} \left[r_{\varepsilon},W_2(\mu_0,\mu_s)\right]}{\sqrt{\sum_{s=0}^{t} \int_{\mathbb{R}^d}\|\xi_s(x)\|^2\mathrm{d}\mu_s(x)}}
}, \label{eq:dog-lr}
\end{equation}
where $r_\varepsilon>0$ is some small initial value. Thus, in particular, the step size in the $t^{\text{th}}$ iteration is equal to the maximum of the Wasserstein distances between the initial law $\mu_0$ and the subsequent laws generated by the Wasserstein subgradient descent algorithm $(\mu_t)_{t\in\mathbb{N}}$, scaled by the square-root of the cumulative sum of the squared $L_2$ Wasserstein gradient norms. This defines the \textsc{Fuse} step size schedule for the forward Euler discretization of the WGF.

\paragraph{Convergence Analysis Assuming Bounded Iterates}
We now study the convergence of the Wasserstein subgradient descent algorithm in \eqref{eq:wasserstein-sub-grad-descent} when using the adaptive step size schedule in \eqref{eq:dog-lr}. Similar to before, it will be convenient to define the quantities
\begin{alignat}{3}
    &r_t = W_2(\mu_0,\mu_t),\quad &&\bar{r}_t = \max_{0\leq s \leq t}\left[r_s,r_\varepsilon\right],\quad &&G_t = \sum_{s=0}^t \int_{\mathbb{R}^d} \|\xi_s(x)\|^2\mathrm{d}\mu_s(x) \label{eq:defs1} \\
    &d_t = W_2(\mu_t,\pi),\quad &&\bar{d}_t = \max_{0\leq s \leq t} d_s
    ,\quad &&\bar{g}_t = \max_{0\leq s \leq t} \|\xi_s\|_{L^2(\mu_s)}. \label{eq:defs2}
\end{alignat}
We will analyze the convergence of the weighted average iterate defined according to the following recursion:
\begin{align}
    \tilde{\mu}_1 &= \mu_0 \label{eq:tilde-mu-t-1} \\
     \tilde{\mu}_{t+1} &= \left[\left(1-\frac{\bar{r}_t}{\sum_{s=0}^{t} \bar{r}_s }\right) \mathrm{id} + \frac{\bar{r}_t}{\sum_{s=0}^{t} \bar{r}_s } \boldsymbol{t}_{\tilde{\mu}_{t}}^{\mu_t}\right]_{\#}\tilde{\mu}_{t},\quad t\geq 1. \label{eq:tilde-mu-t-2}
\end{align}

\begin{lemma}
\label{lemma:tilde-mu-t-bound-1}
    Suppose that Assumption 
    \ref{assumption:lsc} and \ref{assumption:geo-convex} hold. Let $\tilde{\mu}_T$ be the weighted average iterate defined in \eqref{eq:tilde-mu-t-1} - \eqref{eq:tilde-mu-t-2}. Then, for all $T\geq 1$, it holds that 
    \begin{align}
        \mathcal{F}(\tilde{\mu}_{T})  - \mathcal{F}(\pi) &\leq \frac{1}{\sum_{t=0}^{T-1}\bar{r}_t} \sum_{t=0}^{T-1}\bar{r}_t \left[\mathcal{F}(\mu_t) - \mathcal{F}(\pi) \right]. \label{eq:average-ineq} 
    \end{align}
\end{lemma}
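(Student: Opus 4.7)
The plan is to prove the stronger statement $\mathcal{F}(\tilde{\mu}_T) \leq \frac{1}{\sum_{t=0}^{T-1}\bar{r}_t}\sum_{t=0}^{T-1}\bar{r}_t \mathcal{F}(\mu_t)$ by induction on $T \geq 1$; the lemma then follows by subtracting $\mathcal{F}(\pi) = \frac{1}{\sum_{t=0}^{T-1}\bar{r}_t}\sum_{t=0}^{T-1}\bar{r}_t \mathcal{F}(\pi)$ from both sides. This mirrors the strategy used in Lemma \ref{lemma:average-bound} for the uniform average, and the analogous Lemma \ref{lemma:tilde-mu-t-bound-1-forward-flow} in the forward-flow setting, with the weights $\frac{1}{t+1}$ now replaced by the adaptive weights $\frac{\bar{r}_t}{\sum_{s=0}^{t}\bar{r}_s}$.

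For the base case $T=1$, I would simply observe that $\tilde{\mu}_1 = \mu_0$ by definition, so $\mathcal{F}(\tilde{\mu}_1) = \mathcal{F}(\mu_0) = \frac{1}{\bar{r}_0}\cdot \bar{r}_0 \mathcal{F}(\mu_0)$. For the inductive step, I would introduce the shorthand $\lambda_T = \frac{\bar{r}_T}{\sum_{s=0}^{T}\bar{r}_s}$, and recognize from \eqref{eq:tilde-mu-t-2} that $\tilde{\mu}_{T+1}$ is precisely the McCann interpolant between $\tilde{\mu}_T$ and $\mu_T$ at parameter $\lambda_T$, i.e., $\tilde{\mu}_{T+1} = \bigl[(1-\lambda_T)\mathrm{id} + \lambda_T \boldsymbol{t}_{\tilde{\mu}_T}^{\mu_T}\bigr]_{\#}\tilde{\mu}_T$. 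Applying the geodesic convexity of $\mathcal{F}$ (Assumption \ref{assumption:geo-convex}) along this constant speed geodesic yields
\begin{equation}
\mathcal{F}(\tilde{\mu}_{T+1}) \leq (1-\lambda_T)\,\mathcal{F}(\tilde{\mu}_T) + \lambda_T\,\mathcal{F}(\mu_T) = \frac{\sum_{s=0}^{T-1}\bar{r}_s}{\sum_{s=0}^{T}\bar{r}_s}\,\mathcal{F}(\tilde{\mu}_T) + \frac{\bar{r}_T}{\sum_{s=0}^{T}\bar{r}_s}\,\mathcal{F}(\mu_T).
\end{equation}

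Inserting the inductive hypothesis for $\mathcal{F}(\tilde{\mu}_T)$ then cancels the factor $\sum_{s=0}^{T-1}\bar{r}_s$ in the numerator with the one in the denominator of the inductive bound, leaving
\begin{equation}
\mathcal{F}(\tilde{\mu}_{T+1}) \leq \frac{1}{\sum_{s=0}^{T}\bar{r}_s}\sum_{t=0}^{T-1}\bar{r}_t \mathcal{F}(\mu_t) + \frac{\bar{r}_T}{\sum_{s=0}^{T}\bar{r}_s}\mathcal{F}(\mu_T) = \frac{1}{\sum_{s=0}^{T}\bar{r}_s}\sum_{t=0}^{T}\bar{r}_t \mathcal{F}(\mu_t),
\end{equation}
which closes the induction.

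I do not anticipate any significant obstacle: the argument is essentially the same telescoping/convex combination trick as in Lemma \ref{lemma:average-bound}, and the recursion \eqref{eq:tilde-mu-t-1}--\eqref{eq:tilde-mu-t-2} has been explicitly designed so that the interpolation weight at step $T+1$ is exactly $\bar{r}_T / \sum_{s=0}^{T}\bar{r}_s$, which is precisely what is needed for the inductive step to close cleanly. The only small bookkeeping point will be ensuring that $\bar{r}_0 \geq r_\varepsilon > 0$ so that the weights are well-defined and strictly positive, which is immediate from the definition of $\bar{r}_t$ in \eqref{eq:defs1}. No smoothness, stochasticity, or step-size-specific assumption is required—only Assumptions \ref{assumption:lsc} and \ref{assumption:geo-convex}—which is why this lemma is a clean structural statement about the averaging scheme itself, independent of the update rule generating the iterates $(\mu_t)_{t\geq 0}$.
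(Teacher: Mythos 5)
Your proposal is correct and follows essentially the same argument as the paper: induction on $T$, identifying $\tilde{\mu}_{T+1}$ as the McCann interpolant between $\tilde{\mu}_T$ and $\mu_T$ at parameter $\bar{r}_T/\sum_{s=0}^{T}\bar{r}_s$, applying geodesic convexity along that geodesic, and then invoking the inductive hypothesis to close the recursion. The base case and the final cancellation are handled identically.
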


\begin{proof}
    Similar to the proof of Corollary \ref{corollary:average-bound}, we will prove this result by induction over $T\in\mathbb{N}$. First consider $T=1$. By direct computation, we have $\tilde{\mu}_1 = \mu_0$, and so $\smash{\mathcal{F}(\tilde{\mu}_1) = \mathcal{F}(\mu_0)  = }$ $\smash{\frac{1}{T}\sum_{t=0}^{0}\mathcal{F}(\mu_t)}$. This proves the base case. Suppose, now, that the result holds for all $t\in\{0,\dots,T\}$. Let $\mu\in\mathcal{P}_2(\mathbb{R}^d)$ and $\nu\in\mathcal{P}_2(\mathbb{R}^d)$. In addition, let $\xi_{\nu} = \boldsymbol{t}_{\mu}^{\nu} -\mathrm{id}\in\mathcal{T}_{\mu}\mathcal{P}_2(\mathbb{R}^d)$. Define $\lambda_{\mu}^{\nu}(\cdot):[0,1]\rightarrow\mathcal{P}_2(\mathbb{R}^d)$ as a constant speed geodesic with $\lambda_{\mu}^{\nu}(0) = \mu$ and $\dot{\lambda_{\mu}^{\nu}}(0) = \xi_{\nu}$. That is,
    \begin{equation}
        \lambda_{\mu}^{\nu}(s) = \left[\mathrm{id} + s \xi_{\nu} \right]_{\#}\mu = \left[\mathrm{id} + s\left(\boldsymbol{t}_{\mu}^{\nu} - \mathrm{id}\right)\right]_{\#}\mu.
    \end{equation} 
     It follows, working from the definition in \eqref{eq:tilde-mu-t-1} - \eqref{eq:tilde-mu-t-2}, that
    \begin{align}
        \tilde{\mu}_{T+1} &= \left[\left(1-\frac{\bar{r}_T}{\sum_{t=0}^{T} \bar{r}_t}\right)\mathrm{id} + \frac{\bar{r}_T}{\sum_{t=0}^{T} \bar{r}_t}\boldsymbol{t}_{\tilde{\mu}_{T}}^{\mu_{T}}\right]\tilde{\mu}_{T} \\
        &= \left[\mathrm{id} + \frac{\bar{r}_T}{\sum_{t=0}^{T} \bar{r}_t}\left(\boldsymbol{t}_{\tilde{\mu}_{T}}^{\mu_{T}} - \mathrm{id}\right)\right]_{\#}\tilde{\mu}_{T} =\lambda_{\tilde{\mu}_{T}}^{\mu_{T}}\left(\frac{\bar{r}_T}{\sum_{t=0}^{T} \bar{r}_t}\right).
    \end{align}
    Using the geodesic convexity of $\mathcal{F}$, we then have
    \begin{align}
        \mathcal{F}(\tilde{\mu}_{T+1})&=\mathcal{F}\left(\lambda_{\tilde{\mu}_{T}}^{\mu_{T}}\left(\frac{\bar{r}_T}{\sum_{t=0}^{T} \bar{r}_t}\right)\right) \\
        &\leq \left(1-\frac{\bar{r}_T}{\sum_{t=0}^{T} \bar{r}_t}\right) \mathcal{F}\left(\lambda_{\tilde{\mu}_{T}}^{\mu_{T}}\left(0\right)\right) + \frac{\bar{r}_T}{\sum_{t=0}^{T} \bar{r}_t}\mathcal{F}\left(\lambda_{\tilde{\mu}_{T}}^{\mu_{T}}\left(1\right)\right) \\
        &=\left(1-\frac{\bar{r}_T}{\sum_{t=0}^{T} \bar{r}_t}\right) \mathcal{F}\left(\tilde{\mu}_{T}\right) + \frac{\bar{r}_T}{\sum_{t=0}^{T} \bar{r}_t}\mathcal{F}\left(\mu_{T}\right).
    \intertext{Finally, using also the inductive assumption, it follows that}
        \mathcal{F}(\tilde{\mu}_{T+1}) &\leq \left(1-\frac{\bar{r}_T}{\sum_{t=0}^{T} \bar{r}_t}\right)\mathcal{F}(\tilde{\mu}_{T}) + \frac{\bar{r}_T}{\sum_{t=0}^{T} \bar{r}_t}\mathcal{F}(\mu_{T}) \\
        &\leq \left(1-\frac{\bar{r}_T}{\sum_{t=0}^{T} \bar{r}_t}\right) \frac{1}{\sum_{t=0}^{T-1}\bar{r}_t} \sum_{t=0}^{T-1}\bar{r}_t\mathcal{F}(\mu_t) +\frac{\bar{r}_T}{\sum_{t=0}^{T} \bar{r}_t}\mathcal{F}(\mu_{T}) \\
        &= \frac{1}{\sum_{t=0}^{T}\bar{r}_t}\displaystyle\sum_{t=0}^{T-1}\bar{r}_{t}\mathcal{F}(\mu_t) + \frac{1}{\sum_{t=0}^{T} \bar{r}_t}\bar{r}_T\mathcal{F}(\mu_{T}) \\
        &=\frac{1}{\sum_{t=0}^{T}\bar{r}_t}\displaystyle\sum_{t=0}^{T}\bar{r}_{t}\mathcal{F}(\mu_t).
    \end{align}
    which completes the inductive step, and hence the proof.
\end{proof}

\begin{lemma}
\label{lemma:tilde-mu-t-bound-1-1}
    Suppose that Assumption 
    \ref{assumption:lsc} and \ref{assumption:geo-convex} hold. Let $(\tilde{\mu}_t)_{t\geq 0}$ be the sequence of measures defined according to \eqref{eq:tilde-mu-t-1} - \eqref{eq:tilde-mu-t-2}, where $(\mu_t)_{t\geq 0}$ denote the sequence of measures defined by \eqref{eq:wasserstein-sub-grad-descent}, the forward Euler discretization of the WGF in \eqref{eq:wasserstein-grad-flow}. Then, for all $T\geq 1$, it holds that 
    \begin{align}
        \sum_{t=0}^{T-1}\bar{r}_t \left[\mathcal{F}(\mu_t) - \mathcal{F}(\pi) \right]
        &\leq \sum_{t=0}^{T-1}\bar{r}_t\int_{\mathbb{R}^d}\langle \xi_t(x), x- \boldsymbol{t}_{\mu_t}^{\pi}(x) \rangle \,\mathrm{d}\mu_t(x). \label{eq:convex-ineq}
    \end{align}
\end{lemma}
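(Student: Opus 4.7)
The plan is to deduce the inequality pointwise in $t$ from the standard subgradient inequality for geodesically convex functionals on the Wasserstein space, then take a nonnegative weighted sum over $t$.

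First I would recall the characterization of the Fréchet subdifferential for a geodesically convex functional, which was stated in the preliminaries: under Assumptions \ref{assumption:lsc} and \ref{assumption:geo-convex}, if $\xi \in \partial \mathcal{F}(\mu)$, then for any $\nu \in \mathcal{D}(\mathcal{F})$,
\begin{equation}
\mathcal{F}(\nu) - \mathcal{F}(\mu) \geq \int_{\mathbb{R}^d}\langle \xi(x), \boldsymbol{t}_{\mu}^{\nu}(x) - x\rangle\,\mathrm{d}\mu(x).
\end{equation}
Applying this at $\mu = \mu_t$, $\nu = \pi$, with $\xi = \xi_t = \nabla_{W_2}\mathcal{F}(\mu_t) \in \partial \mathcal{F}(\mu_t)$, and rearranging, yields
\begin{equation}
\mathcal{F}(\mu_t) - \mathcal{F}(\pi) \leq \int_{\mathbb{R}^d}\langle \xi_t(x), x - \boldsymbol{t}_{\mu_t}^{\pi}(x)\rangle\,\mathrm{d}\mu_t(x).
\end{equation}

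Next, since $\bar{r}_t \geq r_\varepsilon > 0$ by construction (see \eqref{eq:defs1}), I can multiply both sides by $\bar{r}_t$ without flipping the inequality, and sum over $t = 0, \ldots, T-1$ to obtain the stated bound. Note that $\pi \in \mathcal{D}(\mathcal{F})$ is implicitly needed so that the subgradient inequality applies at $\nu = \pi$; this is justified because $\pi$ is the minimizer of $\mathcal{F}$ over $\mathcal{P}_2(\mathbb{R}^d)$, so in particular $\mathcal{F}(\pi) < \infty$, and the optimal transport map $\boldsymbol{t}_{\mu_t}^{\pi}$ exists whenever $\mu_t \in \mathcal{P}_{2,\mathrm{ac}}(\mathbb{R}^d)$.

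There is no real obstacle here: the lemma is a direct weighted-sum restatement of the defining subgradient inequality for geodesically convex functionals, and the construction of $(\tilde{\mu}_t)_{t\geq 0}$ plays no role in this particular bound (it will enter in the next lemma, via Jensen-type telescoping of the left-hand side). The only mildly subtle point is ensuring that the subgradient inequality is applicable, which follows from the assumptions already imposed.
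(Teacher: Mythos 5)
Your proposal is correct and matches the paper's argument, which simply says the result "follows immediately from the geodesic convexity of $\mathcal{F}$" — you have spelled out exactly the subgradient inequality that underlies that one-liner and the trivial nonnegative weighted summation. Your observation that $(\tilde{\mu}_t)_{t\geq 0}$ plays no role in this particular lemma is also accurate; it only enters in the companion Jensen-type bound.
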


\begin{proof}
    The result follows immediately from the geodesic convexity of $\mathcal{F}$.
\end{proof}

\begin{lemma}
\label{lemma:tilde-mu-t-bound-2}
    Suppose that Assumption \ref{assumption:lsc} and \ref{assumption:geo-convex} hold. Let $(\mu_t)_{t\geq 0}$ denote the sequence of measures defined by \eqref{eq:wasserstein-sub-grad-descent}, the forward Euler discretization of the WGF in \eqref{eq:wasserstein-grad-flow}, with $(\eta_t)_{t\geq 0}$ defined as in \eqref{eq:dog-lr}. Then, for all $T\geq 1$,
    \begin{equation}
        \sum_{t=0}^{T-1}\bar{r}_t\int_{\mathbb{R}^d}\langle \xi_t(x), x- \boldsymbol{t}_{\mu_t}^{\pi}(x) \rangle \,\mathrm{d}\mu_t(x) \leq \bar{r}_T (2\bar{d}_T+ \bar{r}_T) \sqrt{G_{T-1}}.
    \end{equation}
\end{lemma}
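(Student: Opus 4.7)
The plan is to obtain a per-step regret inequality, reweight it by $\bar{r}_t$, substitute the definition $\eta_t = \bar{r}_t/\sqrt{G_t}$, sum across $t$, and bound the two resulting telescoping terms separately. The key per-step inequality is already essentially contained in the proof of Lemma \ref{lemma:evi}: from the computation there, and \emph{before} invoking geodesic convexity, we obtain
\begin{equation*}
\int_{\mathbb{R}^d}\langle \xi_t(x), x - \boldsymbol{t}_{\mu_t}^{\pi}(x)\rangle\, \mathrm{d}\mu_t(x) \;\leq\; \frac{b_t - b_{t+1}}{2\eta_t} + \frac{\eta_t}{2}\int_{\mathbb{R}^d}\|\xi_t(x)\|^2\,\mathrm{d}\mu_t(x),
\end{equation*}
where $b_t := W_2^2(\mu_t,\pi)$. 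Multiplying through by $\bar{r}_t$ and using $\eta_t = \bar{r}_t/\sqrt{G_t}$ cancels one factor of $\bar{r}_t$, and summing over $t = 0,\dots,T-1$ gives the decomposition
\begin{equation*}
\sum_{t=0}^{T-1}\bar{r}_t\!\int\!\langle \xi_t, x - \boldsymbol{t}_{\mu_t}^{\pi}\rangle\,\mathrm{d}\mu_t \;\leq\; \underbrace{\tfrac{1}{2}\sum_{t=0}^{T-1}\sqrt{G_t}\,(b_t-b_{t+1})}_{(\mathrm{I})} \;+\; \underbrace{\tfrac{1}{2}\sum_{t=0}^{T-1}\tfrac{\bar{r}_t^2}{\sqrt{G_t}}\|\xi_t\|_{L^2(\mu_t)}^2}_{(\mathrm{II})}.
\end{equation*}

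Term $(\mathrm{II})$ is handled by the standard adaptive-gradient lemma: using $\bar{r}_t \leq \bar{r}_T$ and the elementary inequality $(G_t - G_{t-1})/\sqrt{G_t} \leq 2(\sqrt{G_t}-\sqrt{G_{t-1}})$, the sum telescopes to at most $2\bar{r}_T^2\sqrt{G_{T-1}}$, so $(\mathrm{II})\leq \bar{r}_T^2\sqrt{G_{T-1}}$. For term $(\mathrm{I})$, I would apply Abel summation with the convention $\sqrt{G_{-1}} := 0$, obtaining
\begin{equation*}
\sum_{t=0}^{T-1}\sqrt{G_t}(b_t - b_{t+1}) \;=\; \sum_{t=0}^{T-1}(\sqrt{G_t}-\sqrt{G_{t-1}})\,b_t \;-\; \sqrt{G_{T-1}}\,b_T,
\end{equation*}
then drop the non-positive boundary term, bound $b_t \leq \bar{d}_T^2$, and telescope to arrive at $\bar{d}_T^2 \sqrt{G_{T-1}}$. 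Together these produce a bound of the form $\bigl(\tfrac12\bar{d}_T^2 + \bar{r}_T^2\bigr)\sqrt{G_{T-1}}$.

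The main technical obstacle is sharpening the $\bar{d}_T^2$ term above into the linear-in-$\bar{d}_T$ form $2\bar{r}_T \bar{d}_T$ that appears in the statement. The natural route, mirroring the Euclidean DoG analysis of \citet{ivgi2023dog} (and the proof of its forward-flow analogue, Lemma \ref{lemma:tilde-mu-t-bound-2-forward-flow-v2}), is to exploit the ``difference of squares'' identity $b_t - b_{t+1} = (\sqrt{b_t}+\sqrt{b_{t+1}})(\sqrt{b_t}-\sqrt{b_{t+1}})$, control the first factor by $2\bar{d}_T$, and the second by $W_2(\mu_t,\mu_{t+1}) \leq \eta_t \|\xi_t\|_{L^2(\mu_t)}$ via the reverse triangle inequality; the factors of $\sqrt{G_t}\cdot \eta_t$ then combine to $\bar{r}_t$, and a careful rearrangement (rather than a blunt Cauchy--Schwarz, which spuriously introduces $\sqrt{T}$) is required to close the argument and produce the stated bound $\bar{r}_T(2\bar{d}_T+\bar{r}_T)\sqrt{G_{T-1}}$ without extraneous factors. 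I expect this rearrangement, together with the bookkeeping needed because the maximum in $\bar{r}_T$ ranges up to index $T$ while the summation only goes to $T-1$, to be the principal source of technical care in the proof.
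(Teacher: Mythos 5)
Your setup, the reweighting, the handling of term $(\mathrm{II})$, and the Abel summation for term $(\mathrm{I})$ are all exactly right and match the paper. The gap is entirely in what you do after summation by parts, and your proposed fix points in the wrong direction. The paper does \emph{not} drop the boundary term $-\sqrt{G_{T-1}}\,b_T = -d_T^2\sqrt{G_{T-1}}$. Keeping it, bounding $b_t \leq \bar d_T^2$ for $t=0,\dots,T-1$, and telescoping gives exactly $\sqrt{G_{T-1}}\big(\bar d_T^2 - d_T^2\big)$. The ``difference of squares'' is then applied \emph{once}, to this single expression, not per step: writing $S\in\argmax_{0\leq t\leq T}d_t$, one has $\bar d_T^2 - d_T^2 = (d_S-d_T)(d_S+d_T)$, controls $d_S - d_T \leq W_2(\mu_S,\mu_T)\leq r_S+r_T\leq 2\bar r_T$ by the reverse triangle inequality and the triangle inequality through $\mu_0$, bounds $d_S+d_T\leq 2\bar d_T$, and concludes $(\mathrm I) \leq 4\bar r_T\bar d_T\sqrt{G_{T-1}}$. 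Combining with your (correct) $(\mathrm{II})\leq 2\bar r_T^2\sqrt{G_{T-1}}$ and halving recovers $\bar r_T(2\bar d_T+\bar r_T)\sqrt{G_{T-1}}$.

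Your alternative sketch --- applying the difference of squares per step, $b_t - b_{t+1} = (\sqrt{b_t}+\sqrt{b_{t+1}})(\sqrt{b_t}-\sqrt{b_{t+1}})$, together with $W_2(\mu_t,\mu_{t+1})\leq\eta_t\|\xi_t\|_{L^2(\mu_t)}$ --- is a dead end: it leaves you with $\sum_{t}\bar r_t\|\xi_t\|_{L^2(\mu_t)}$, which genuinely cannot be bounded by $\bar r_T\sqrt{G_{T-1}}$ without the spurious $\sqrt T$ you correctly flag; no ``careful rearrangement'' rescues it, because each summand only contributes one (not two) factors of $\|\xi_t\|_{L^2(\mu_t)}$. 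Note also that the quadratic bound $\tfrac12\bar d_T^2\sqrt{G_{T-1}}$ you arrived at is not merely cosmetically weaker: $\bar r_T$ can be of order $r_\varepsilon$ while $\bar d_T\geq d_0$ is bounded below, so the linear-in-$\bar r_T$ form is essential for the downstream $\mathcal{O}(DG\log_+(D/r_\varepsilon)/\sqrt T)$ rate in Corollary~\ref{corr:dog-wgd-bound-2}; your version would incur an extra factor of $D/r_\varepsilon$.
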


\begin{proof}
    Arguing as in the proof of Lemma \ref{lemma:evi}, we have 
    \begin{align}
        \frac{W^2_2(\mu_t,\pi) - W_2^2(\mu_{t+1},\pi)}{2\eta} &\geq \frac{1}{2\eta_t} \int_{\mathbb{R}^d} \left[\| x - \boldsymbol{t}_{\mu_t}^{\pi}(x)\|^2  - \|(x-\eta \xi_t(x)) - \boldsymbol{t}_{\mu_{t}}^{\pi}(x)\|^2\right]\mathrm{d}\mu_{t}(x) \\
        & =  \int_{\mathbb{R}^d} \left\langle \xi_{t}(x), x-\boldsymbol{t}_{\mu_t}^{\pi}(x) \right\rangle \,\mathrm{d}\mu_t(x) - \frac{\eta_t}{2}\int_{\mathbb{R}^d} \|\xi_t(x)\|^2\,\mathrm{d}\mu_t(x).
    \end{align}
    Rearranging and taking a weighted sum, we obtain
    \begin{align}
        \sum_{t=0}^{T-1} \bar{r}_t \int_{\mathbb{R}^d} \left\langle \xi_{t}(x), x-\boldsymbol{t}_{\mu_t}^{\pi}(x) \right\rangle \,\mathrm{d}\mu_t(x) &\leq \frac{1}{2}\underbrace{\sum_{t=0}^{T-1} \frac{\bar{r}_t}{\eta_t}\left[W^2_2(\mu_t,\pi) - W_2^2(\mu_{t+1},\pi)\right]}_{\mathrm{(I)}} \\[-2.5mm]
        &\hspace{10mm}+ \frac{1}{2} \underbrace{\sum_{t=0}^{T-1} \bar{r}_t\eta_t\int_{\mathbb{R}^d} \|\xi_t(x)\|^2\,\mathrm{d}\mu_t(x)}_{\mathrm{(II)}}.
    \end{align}
    It remains to bound (I) and (II), which we will do in turn. For (I), arguing as in \cite[][Lemma 1]{ivgi2023dog}, we have
    \begin{align}
        \mathrm{(I)}
        &=\sum_{t=0}^{T-1} \sqrt{G_t}\left[d_t^2 - d_{t+1}^2\right] \\
        &=d_0^2\sqrt{G_0} - d_{T}^2\sqrt{G_{T-1}} + \sum_{t=1}^{T-1} d_t^2\left[\sqrt{G_t} - \sqrt{G_{t-1}}\right] \\
        &\leq\bar{d}_{T}^2\sqrt{G_0} - d_{T}^2 \sqrt{G_{T-1}} + \bar{d}_{T}^2\sum_{t=1}^{T-1} \left[\sqrt{G_t} - \sqrt{G_{t-1}}\right] 
        \\[1mm]
        &= \sqrt{G_{T-1}}\left[\bar{d}_T^2 - d_{T}^2\right] \\[2mm]
        &\leq 4\bar{r}_T\bar{d}_T\sqrt{G_{T-1}}.
    \end{align}
    In the first inequality, we have used (a) $\bar{d}_t\leq \bar{d}_T$ for all $t\leq T$, and (b) $G_t$ is non-decreasing, which follows from \eqref{eq:defs1} - \eqref{eq:defs2}. In the second inequality, writing $S\in\argmax_{0\leq t\leq T}d_t$, we have used that $\smash{\bar{d}_T^2 - d_T^2= ({d}_S - d_T)({d}_S + d_T) \leq W_2(\mu_S,\mu_T) (d_S + d_T) \leq (\bar{r}_S+\bar{r}_T)(d_s+d_T)\leq  4\bar{r}_T\bar{d}_T}$. We now turn our attention to (II). In this case, we have that 
    \begin{align}
        \mathrm{(II)} 
        =\sum_{t=0}^{T-1}\frac{\bar{r}_t^2 \int_{\mathbb{R}^d} \|\xi_t(x)\|^2\,\mathrm{d}\mu_t(x)}{\sqrt{G}_t} \leq \bar{r}_T^2 \sum_{t=0}^{T-1}\frac{\int_{\mathbb{R}^d} \|\xi_t(x)\|^2\,\mathrm{d}\mu_t(x)}{\sqrt{G}_t} \leq 2\bar{r}_{T}^2 \sqrt{G_{T-1}},
    \end{align}
    where the first inequality follows from the fact that $\bar{r}_t$ is non-decreasing, and the second inequality is a consequence of \cite[][Lemma 4]{ivgi2023dog}, with $\smash{a_t = \int_{\mathbb{R}^d}\|\xi_t(x)\|^2\,\mathrm{d}\mu_t(x)}$, $\smash{G_t=\sum_{s=0}^{t} \int_{\mathbb{R}^d}\|\xi_s(x)\|^2\,\mathrm{d}\mu_s(x)}$. Combining our bounds for (I) and (II), we have the desired result.
\end{proof}

\begin{proposition}
\label{prop:dog-wgd-bound-1}
    Suppose that Assumption 
    \ref{assumption:lsc} and \ref{assumption:geo-convex} hold. Let $(\mu_t)_{t\geq 0}$ be the sequence of measures defined according to \eqref{eq:wasserstein-sub-grad-descent}, the forward Euler discretization of the WGF in \eqref{eq:wasserstein-grad-flow}, with $(\eta_t)_{t\geq 0}$ defined as in \eqref{eq:dog-lr}. Let $(\tilde{\mu}_t)_{t\geq 0}$ be the sequence of measures defined according to \eqref{eq:tilde-mu-t-1} - \eqref{eq:tilde-mu-t-2}. Then, for all $t\leq T$, we have 
    \begin{equation}
    \label{eq:dog-wgd-bound-1}
        \mathcal{F}(\tilde{\mu}_t) - \mathcal{F}(\pi) =  \mathcal{O}\left(\frac{(d_0 + \bar{r}_t)\sqrt{G_{t-1}}}{\sum_{s=0}^{t-1}\bar{r}_s/\bar{r}_t}\right).
    \end{equation}
\end{proposition}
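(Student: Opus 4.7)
The plan is to chain together the three lemmas preceding the proposition and then simplify the resulting bound using the triangle inequality in the Wasserstein space.

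First, I would apply Lemma \ref{lemma:tilde-mu-t-bound-1}, which reduces bounding $\mathcal{F}(\tilde{\mu}_t) - \mathcal{F}(\pi)$ to bounding the $\bar{r}_s$-weighted regret $\sum_{s=0}^{t-1}\bar{r}_s[\mathcal{F}(\mu_s) - \mathcal{F}(\pi)]$, divided by $\sum_{s=0}^{t-1}\bar{r}_s$. Next, Lemma \ref{lemma:tilde-mu-t-bound-1-1} upper bounds this weighted regret by the weighted inner-product term $\sum_{s=0}^{t-1}\bar{r}_s \int \langle \xi_s(x), x - \boldsymbol{t}_{\mu_s}^{\pi}(x)\rangle\mathrm{d}\mu_s(x)$ via geodesic convexity of $\mathcal{F}$. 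Finally, Lemma \ref{lemma:tilde-mu-t-bound-2} bounds this last quantity by $\bar{r}_t(2\bar{d}_t + \bar{r}_t)\sqrt{G_{t-1}}$, using the definition of the adaptive step size \eqref{eq:dog-lr} and the stability arguments therein.

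Putting these three ingredients together gives the intermediate estimate
\begin{equation}
    \mathcal{F}(\tilde{\mu}_t) - \mathcal{F}(\pi) \leq \frac{\bar{r}_t(2\bar{d}_t + \bar{r}_t)\sqrt{G_{t-1}}}{\sum_{s=0}^{t-1}\bar{r}_s}.
\end{equation}
To reach the stated bound, I would then replace $\bar{d}_t$ using the triangle inequality $\bar{d}_t = \max_{0\leq s\leq t}W_2(\mu_s,\pi) \leq W_2(\mu_s,\mu_0)+W_2(\mu_0,\pi) \leq \bar{r}_t + d_0$ for every $s \leq t$, so that $2\bar{d}_t + \bar{r}_t = \mathcal{O}(d_0 + \bar{r}_t)$. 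Factoring one $\bar{r}_t$ into the denominator turns $\sum_{s=0}^{t-1}\bar{r}_s$ into $\sum_{s=0}^{t-1}\bar{r}_s/\bar{r}_t$, yielding precisely \eqref{eq:dog-wgd-bound-1}.

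There is no real obstacle here beyond bookkeeping: each of the three preceding lemmas does the genuinely hard work (Lemma \ref{lemma:tilde-mu-t-bound-1} the geodesic-convexity averaging, Lemma \ref{lemma:tilde-mu-t-bound-1-1} the one-step regret inequality, and Lemma \ref{lemma:tilde-mu-t-bound-2} the telescoping/summation-by-parts argument that is the adaptive-step analogue of the DoG analysis of \citet{ivgi2023dog}). The only nontrivial step in the present proof is recognizing that $\bar{d}_t$ can be controlled by $d_0 + \bar{r}_t$, which is exactly the Wasserstein triangle inequality together with the monotonicity of $\bar{r}_t$.
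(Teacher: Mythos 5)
Your proof is correct and follows exactly the same route as the paper's: chain Lemmas \ref{lemma:tilde-mu-t-bound-1}, \ref{lemma:tilde-mu-t-bound-1-1}, and \ref{lemma:tilde-mu-t-bound-2}, then simplify using $\bar{d}_t \leq d_0 + \bar{r}_t$ (which the paper invokes in identical form). The paper's version is a one-line citation of the three lemmas plus the triangle-inequality fact, so yours is simply a faithful expansion of the same argument.
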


\begin{proof}
    The result follows as an immediate consequence of Lemma \ref{lemma:tilde-mu-t-bound-1}, Lemma \ref{lemma:tilde-mu-t-bound-1-1}, and Lemma \ref{lemma:tilde-mu-t-bound-2}, using also the fact that $\bar{d}_t \leq d_0 + \bar{r}_t$.
\end{proof}

\begin{corollary}
\label{corr:dog-wgd-bound-2}
    Suppose that Assumption 
    \ref{assumption:lsc} and \ref{assumption:geo-convex} hold. Let $(\mu_t)_{t\geq 0}$ be the sequence of measures defined according to \eqref{eq:wasserstein-sub-grad-descent}, the forward Euler discretization of the WGF in \eqref{eq:wasserstein-grad-flow}, with $(\eta_t)_{t\geq 0}$ defined as in \eqref{eq:dog-lr}. Let $(\tilde{\mu}_t)_{t\geq 0}$ be the sequence of measures defined according to \eqref{eq:tilde-mu-t-1} - \eqref{eq:tilde-mu-t-2}. Let $D\geq d_0$, and define $G = \sup_{0\leq t\leq T} \|\xi_t\|_{L^2(\mu_t)}$. In addition, let $\smash{\tau\in\argmax_{0\leq t\leq T} \sum_{s=0}^{t-1} \bar{r}_s /\bar{r}_t}$. Then, on the event $\{\bar{r}_T \leq D\}$, it holds that
    \begin{equation}
        \mathcal{F}(\tilde{\mu}_{\tau}) - \mathcal{F}(\pi) = \mathcal{O}\left(\frac{DG}{\sqrt{T}}\log_{+}\left(\frac{D}{r_{\varepsilon}}\right)\right).
    \end{equation}
\end{corollary}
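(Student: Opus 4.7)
The plan is to invoke Proposition \ref{prop:dog-wgd-bound-1} at the stopping index $t=\tau$ and then bound each of its three ingredients on the good event $\{\bar r_T\le D\}$. From \eqref{eq:dog-wgd-bound-1} applied at $t=\tau$,
\begin{equation*}
\mathcal{F}(\tilde{\mu}_\tau)-\mathcal{F}(\pi)=\mathcal{O}\!\left(\frac{(d_0+\bar r_\tau)\sqrt{G_{\tau-1}}}{\sum_{s=0}^{\tau-1}\bar r_s/\bar r_\tau}\right).
\end{equation*}

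First, on $\{\bar r_T\le D\}$, since $\bar r_\tau\le\bar r_T\le D$ and by hypothesis $d_0\le D$, the numerator factor $d_0+\bar r_\tau$ is at most $2D$. Next, using the uniform bound $\|\xi_s\|_{L^2(\mu_s)}\le G$ for $s\le T$, we have $G_{\tau-1}=\sum_{s=0}^{\tau-1}\|\xi_s\|_{L^2(\mu_s)}^2\le \tau G^2\le TG^2$, which gives $\sqrt{G_{\tau-1}}\le G\sqrt T$. Thus the numerator is $O(DG\sqrt T)$, and the whole task reduces to showing that, at the maximizing index $\tau$,
\begin{equation*}
\frac{\sum_{s=0}^{\tau-1}\bar r_s}{\bar r_\tau}\;\gtrsim\;\frac{T}{\log_{+}(D/r_\varepsilon)}.
\end{equation*}

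This lower bound is the technical heart of the proof and is a direct adaptation of the bucketing argument used in \citet{ivgi2023dog}. The idea is to exploit two facts: (i) $\bar r_s$ is non-decreasing with $\bar r_0=r_\varepsilon$, and (ii) $\bar r_s\in[r_\varepsilon,D]$ for all $s\le T$. Partition the interval $[r_\varepsilon,D]$ into $K=\lceil\log_2(D/r_\varepsilon)\rceil$ dyadic buckets of the form $[2^{k-1}r_\varepsilon,2^k r_\varepsilon]$. By pigeonhole, at least one bucket $B$ contains the value $\bar r_s$ for at least $T/K$ indices $s\in\{0,\dots,T\}$. Let $\tau^\star$ be the largest such index. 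Since $\bar r_s$ is monotone, all $T/K$ indices in that bucket are $\le\tau^\star$ and satisfy $\bar r_s\ge\bar r_{\tau^\star}/2$. Consequently
\begin{equation*}
\frac{\sum_{s=0}^{\tau^\star-1}\bar r_s}{\bar r_{\tau^\star}}\;\ge\;\frac{(T/K)\cdot(\bar r_{\tau^\star}/2)}{\bar r_{\tau^\star}}\;=\;\frac{T}{2K}\;=\;\Omega\!\left(\frac{T}{\log_{+}(D/r_\varepsilon)}\right).
\end{equation*}
By the maximality of $\tau$, the same lower bound holds with $\tau$ in place of $\tau^\star$.

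Combining the three bounds gives
\begin{equation*}
\mathcal{F}(\tilde{\mu}_\tau)-\mathcal{F}(\pi)=\mathcal{O}\!\left(\frac{DG\sqrt T}{T/\log_{+}(D/r_\varepsilon)}\right)=\mathcal{O}\!\left(\frac{DG}{\sqrt T}\log_{+}(D/r_\varepsilon)\right),
\end{equation*}
as required. The main obstacle is the bucketing/pigeonhole step above: naively lower bounding $\sum_{s=0}^{\tau-1}\bar r_s/\bar r_\tau$ by, say, $r_\varepsilon T/D$ would give a rate that is polynomial rather than logarithmic in $D/r_\varepsilon$, which is why the dyadic partition (and the monotonicity of $\bar r_s$) is essential. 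Everything else is a direct plug-in to Proposition \ref{prop:dog-wgd-bound-1}.
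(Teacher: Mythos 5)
Your proposal is correct, and the overall structure matches the paper's proof exactly: apply Proposition \ref{prop:dog-wgd-bound-1} at $t=\tau$, bound the numerator by $\mathcal{O}(DG\sqrt T)$ on the event $\{\bar r_T\le D\}$, and lower-bound the denominator $\sum_{s=0}^{\tau-1}\bar r_s/\bar r_\tau$ by $\Omega(T/\log_{+}(D/r_\varepsilon))$. The one place you diverge is in how you establish that last lower bound: the paper invokes Lemma~3 of \citet{ivgi2023dog} as a black box, whereas you re-derive the bound from scratch via a dyadic-bucketing pigeonhole argument, exploiting the monotonicity of $\bar r_s$ and the range constraint $\bar r_s\in[r_\varepsilon,D]$. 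Your re-derivation is sound (modulo a benign off-by-one in the count, since $\tau^\star$ itself should be excluded from the sum, giving $(T/K-1)/2$ rather than $T/(2K)$, which does not affect the asymptotics), and it arguably has pedagogical value since it makes visible why the dependence on $D/r_\varepsilon$ is logarithmic rather than polynomial. The paper's route is shorter but opaque; yours is longer but self-contained. As a side remark, the second inequality in the paper's display \eqref{eq:ivgi-lemma-3} is stated a little loosely (it drops a subtracted constant that properly appears, cf. the parallel display in Corollary~\ref{corr:dog-wgd-bound-2-forward-flow}); your derivation correctly retains the $-1/2$ and is arguably cleaner on this point.
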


\begin{proof}
    First, using \citep[][Lemma 3]{ivgi2023dog}, we have that
    \begin{equation}
    \label{eq:ivgi-lemma-3}
        \max_{t\leq T} \sum_{s=0}^{t-1} \frac{\bar{r}_s}{\bar{r}_t} \geq \frac{1}{e}\left(\frac{T - \log_{+}(\bar{r}_T/\bar{r}_0)}{ \log_{+}(\bar{r}_T/\bar{r}_0)}\right)  \geq \frac{1}{e} \left( \frac{T}{\log_{+}(\bar{r}_T/\bar{r}_0)}\right) \geq \frac{1}{e} \left( \frac{T}{\log_{+}(D/r_\varepsilon)}\right),
    \end{equation}
    where in the final inequality we just consider the event $\{\bar{r}_T \leq D\}$. Combining \eqref{eq:ivgi-lemma-3} with \eqref{eq:dog-wgd-bound-1} in Proposition \ref{prop:dog-wgd-bound-1}, using the fact that $\smash{d_0\leq D}$ and that $\smash{\bar{r}_t \leq \bar{r}_T \leq D}$ on the event $\smash{\{\bar{r}_T \leq D\}}$, we have
    \begin{equation}
        \mathcal{F}(\tilde{\mu}_t) - \mathcal{F}(\pi) =\mathcal{O}\left(\frac{(d_0 + \bar{r}_t)\sqrt{G_{t-1}}}{\sum_{s=0}^{t-1}\bar{r}_s/\bar{r}_t}\right) =  \mathcal{O}\left(\frac{D \sqrt{G_{t-1}}}{T}\log_{+}\left(\frac{D}{r_{\varepsilon}}\right)\right).
    \end{equation}
    Finally, using that $\smash{\bar{g}_t \leq G}$ for all $t\in[T]$, which in turn implies that $G_{t-1} \leq G_{T-1} \leq T{G}^2$, we arrive at
    \begin{equation}
        \mathcal{F}(\tilde{\mu}_t) - \mathcal{F}(\pi)= \mathcal{O}\left(\frac{DG}{\sqrt{T}}\log_{+}\left(\frac{D}{r_{\varepsilon}}\right)\right).
    \end{equation}
\end{proof}

\paragraph{Uniform Averaging} As before, it is also possible to derive guarantees similar to those obtained in Proposition \ref{prop:dog-wgd-bound-1} and Corollary \ref{corr:dog-wgd-bound-2} for the unweighted average $\bar{\mu}_T$ defined in \eqref{eq:bar-mu-t-0} - \eqref{eq:bar-mu-t}. 

\begin{proposition}
\label{prop:dog-wgd-bound-1-uniform}
    Suppose that Assumption 
    \ref{assumption:lsc} and \ref{assumption:geo-convex} hold. Let $(\mu_t)_{t\geq 0}$ be the sequence of measures defined according to \eqref{eq:wasserstein-sub-grad-descent}, the forward Euler discretization of the WGF in \eqref{eq:wasserstein-grad-flow},  with $(\eta_t)_{t\geq 0}$ defined as in \eqref{eq:dog-lr}. Let $(\bar{\mu}_t)_{t\geq 0}$ be the sequence of measures defined according to \eqref{eq:bar-mu-t-0} - \eqref{eq:bar-mu-t}.  Then, for all $T\geq 1$, we have
    \begin{equation}
    \label{eq:dog-wgd-bound-1-uniform}
        \mathcal{F}(\bar{\mu}_T) - \mathcal{F}(\pi) =  \mathcal{O}\left(\frac{(d_0\log_{+}\frac{\bar{r}_{T}}{r_{\varepsilon}} + \bar{r}_{T})\sqrt{G_{T-1}}}{T}\right).
    \end{equation}
\end{proposition}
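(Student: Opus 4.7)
The plan is to extend the argument underlying Proposition \ref{prop:dog-wgd-bound-1} (which analyses the weighted average $\tilde\mu_T$) to the uniform average $\bar\mu_T$, paying a logarithmic factor for the reweighting. First, I would invoke Lemma \ref{lemma:average-bound} (a consequence of the geodesic convexity of $\mathcal{F}$) to reduce the claim to bounding the unweighted partial sum, namely showing
\[
    \sum_{t=0}^{T-1}[\mathcal{F}(\mu_t)-\mathcal{F}(\pi)] \;=\; \mathcal{O}\bigl((d_0\log_{+}(\bar r_T/r_\varepsilon)+\bar r_T)\sqrt{G_{T-1}}\bigr).
\]
Setting $f_t := \mathcal{F}(\mu_t)-\mathcal{F}(\pi)\geq 0$ and $S_t := \sum_{s=0}^{t-1}\bar r_s f_s$, I would then rewrite $f_t = (1/\bar r_t)(\bar r_t f_t)$ and apply summation by parts, using the non-decreasing nature of $\bar r_t$, to obtain
\[
    \sum_{t=0}^{T-1}f_t \;=\; \frac{S_T}{\bar r_{T-1}} \;+\; \sum_{t=0}^{T-2}S_{t+1}\Bigl(\frac{1}{\bar r_t}-\frac{1}{\bar r_{t+1}}\Bigr).
\]

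The next ingredient is a uniform estimate for the partial sums $S_t$. Applying Lemmas \ref{lemma:tilde-mu-t-bound-1-1} and \ref{lemma:tilde-mu-t-bound-2} at every intermediate horizon yields
\[
    S_{t+1} \;\leq\; \bar r_{t+1}(2\bar d_{t+1}+\bar r_{t+1})\sqrt{G_t} \;\leq\; \bar r_{t+1}(2\bar d_T+\bar r_T)\sqrt{G_{T-1}}.
\]
Substituting this into the Abel identity and factoring out $(2\bar d_T+\bar r_T)\sqrt{G_{T-1}}$ reduces the problem to bounding $\sum_{t=0}^{T-2}(\bar r_{t+1}/\bar r_t - 1)$, together with the boundary factor $\bar r_T/\bar r_{T-1}$.

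Finally, I would extract the logarithmic factor via an integral-comparison argument. Provided consecutive values of $\bar r_t$ do not differ by more than a constant factor, one has $\bar r_{t+1}/\bar r_t\leq C$; combined with the elementary inequality $1-1/x \leq \log x$ for $x\geq 1$ this yields $\bar r_{t+1}/\bar r_t - 1 \leq C\log(\bar r_{t+1}/\bar r_t)$, whose right-hand side telescopes to $C\log_{+}(\bar r_{T-1}/r_\varepsilon)\leq C\log_{+}(\bar r_T/r_\varepsilon)$. Combining with $\bar d_T \leq d_0+\bar r_T$ and collecting terms then yields the stated bound.

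The main obstacle is precisely the step that converts $\sum_{t}(\bar r_{t+1}-\bar r_t)/\bar r_t$ into a logarithm: absent any control on the single-step growth of $\bar r_t$, one iteration could inflate $\bar r_t$ by an arbitrarily large factor, producing a ratio sum of order $\bar r_T/r_\varepsilon$ rather than $\log_{+}(\bar r_T/r_\varepsilon)$. This is why the analogous Proposition \ref{prop:dog-wgd-bound-1-forward-flow-uniform} explicitly assumes a uniform upper bound $\eta_t \leq \eta_{\max}$; the same hypothesis should be understood to apply here, as it ensures $\bar r_{t+1}/\bar r_t$ stays bounded and converts the ratio sum into its logarithmic counterpart.
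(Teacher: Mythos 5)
Your Abel-summation route is a genuinely different and, up to the final step, viable argument. The paper instead uses a doubling-blocks construction: it defines stopping times $\tau_{k+1}=\min\{t>\tau_k:\bar r_t\geq 2\bar r_{\tau_k}\}$, applies the weighted-regret bound of Lemma \ref{lemma:tilde-mu-t-bound-2} on each block, divides by $\bar r_{\tau_{k-1}}$, and sums over $K=\mathcal{O}(\log_+(\bar r_T/r_\varepsilon))$ blocks. Both approaches ultimately need the same geometric ingredient, namely control of the single-step ratio $\bar r_{t+1}/\bar r_t$, and this is where your proposal goes wrong.

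You assert that the hypothesis $\eta_t\leq\eta_{\max}$ should ``be understood to apply here'' by analogy with Proposition \ref{prop:dog-wgd-bound-1-forward-flow-uniform}. It does not, and it is not needed: for the forward Euler iteration \eqref{eq:wasserstein-sub-grad-descent} with \textsc{Fuse} step size \eqref{eq:dog-lr}, the bound $\bar r_{t+1}\leq 2\bar r_t$ is \emph{automatic}. Indeed,
\begin{equation}
\bar r_{t+1}\ \leq\ \bar r_t + W_2(\mu_t,\mu_{t+1})\ \leq\ \bar r_t + \eta_t\,\|\xi_t\|_{L^2(\mu_t)}
\ =\ \bar r_t + \frac{\bar r_t}{\sqrt{G_t}}\sqrt{G_t-G_{t-1}}
\ =\ \bar r_t\Big(1+\sqrt{1-\tfrac{G_{t-1}}{G_t}}\Big)\ \leq\ 2\bar r_t,
\end{equation}
which uses only the transport-map estimate $W_2(\mu_t,\mu_{t+1})\leq\|\eta_t\xi_t\|_{L^2(\mu_t)}$ and the definition of $\eta_t$; this is exactly \eqref{eq:r_t1_bound-forward-euler} in the paper's proof. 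The hypothesis $\eta_t\leq\eta_{\max}$ appears \emph{only} in the forward-flow analogue because there the heat step contributes an extra increment $\sqrt{2\eta_{t-1}d}$ to $W_2(\mu_{t-\frac12},\mu_{t+\frac12})$ that is not controlled by $\bar r_t$ alone. Your argument is therefore complete once you replace the imported hypothesis with this automatic estimate; as written, it proves a weaker (conditional) statement than Proposition \ref{prop:dog-wgd-bound-1-uniform} and misdiagnoses the two forward-flow/forward-Euler propositions as needing identical assumptions when the asymmetry is in fact deliberate.
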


\begin{proof}
    Our proof follows the proof of \citet[][Proposition 3]{ivgi2023dog}. Define the times
    \begin{align}
        \tau_0&=0 \\
        \tau_{k+1} &= \min\left\{\left\{t\in\{\tau_{k}+1,\dots,T\}\,:\,\bar{r}_t\geq 2\bar{r}_{\tau_{k}}\right\}\cup\{T\}\right\}, \quad \quad k\geq 0.
    \end{align}
    Let $\smash{K= \min\{k:\tau_k=T\}}$. Then, for all $k\leq K-1$, it holds that $\bar{r}_{\tau_{k}}\geq 2\bar{r}_{\tau_{k-1}} \implies \bar{r}_{\tau_{k}} \geq 2^{k} \bar{r}_{\tau_0} = 2^{k}\bar{r}_{\varepsilon} \implies$ $\smash{\log_{2} \frac{\bar{r}_{\tau_k}}{\bar{r}_{\varepsilon}}\geq k}$. In particular, setting $k=K-1$, we have $\smash{K\leq 1+ \log_{2}\frac{\bar{r}_{\tau_{K-1}}}{\bar{r}_{\varepsilon}}}$. By construction, we also have that $\bar{r}_{\tau_{K-1}}\leq \bar{r}_{\tau_K}$ and $\bar{r}_{\tau_{K}} = \bar{r}_{T}$. Combining this with the previous inequality, we thus have $\smash{K \leq 1+ \log_{2}\frac{\bar{r}_{T}}{\bar{r}_{\varepsilon}}}$. Arguing as in the proof of Lemma \ref{lemma:tilde-mu-t-bound-2}, we have that
    \begin{align}
    \sum_{t={\tau_{k-1}}}^{\tau_{k}-1}\bar{r}_t \int_{\mathbb{R}^d}\langle \xi_t(x), x- \boldsymbol{t}_{\mu_t}^{\pi}(x) \rangle \,\mathrm{d}\mu_t(x) &\leq \bar{r}_T (2\bar{d}_{\tau_k}+ \bar{r}_{\tau_k}) \sqrt{G_{{\tau_k}-1}} =\mathcal{O}\left(\bar{r}_{\tau_k}(d_0 + \bar{r}_{\tau_k})\sqrt{G_{T-1}}\right) \label{eq:tilde-mu-t-bound-2-recall}
    \end{align}
    for any $k\in\{0,\dots,K\}$, where in the second line we have used the fact that $G_{\tau_k-1} \leq G_{(T)-1} = G_{T-1}$ as in the proof of Proposition \ref{prop:dog-wgd-bound-1}, and the fact that $\bar{d}_{\tau_k}\leq d_0 + \bar{r}_{\tau_k}$. It follows that 
    \begin{align}
        \sum_{t=\tau_{k-1}}^{\tau_k-1} \mathcal{F}(\mu_t) - \mathcal{F}(\pi) 
        &=\sum_{t=\tau_{k-1}}^{\tau_k-1} \frac{\bar{r}_t}{\bar{r}_t}\left[\mathcal{F}(\mu_t) - \mathcal{F}(\pi)\right] \leq \frac{1}{\bar{r}_{\tau_{k-1}}} \sum_{t=\tau_{k-1}}^{\tau_k-1} \bar{r}_t \left[\mathcal{F}(\mu_t) - \mathcal{F}(\pi)\right] \hspace{-5mm}  \\
        &\leq \frac{1}{\bar{r}_{\tau_{k-1}}}  \sum_{t={\tau_{k-1}}}^{\tau_{k}-1}\bar{r}_t \int_{\mathbb{R}^d}\langle \xi_t(x), x- \boldsymbol{t}_{\mu_t}^{\pi}(x) \rangle \,\mathrm{d}\mu_t(x) \hspace{-5mm}  \\
        &\leq \mathcal{O}\left(\frac{\bar{r}_{\tau_k}}{\bar{r}_{\tau_{k}-1}}(d_0 + \bar{r}_{\tau_k})\sqrt{G_{T-1}}\right) \hspace{-5mm} \label{eq:non-weight-bound-forward-euler}
    \end{align}
    where in the final line we have used the fact that $\smash{\bar{r}_{\tau_k-1} \leq 2\bar{r}_{\tau_{k-1}} \iff \frac{1}{\bar{r}_{\tau_{k-1}}}\leq \frac{2}{\bar{r}_{\tau_k-1}}}$. We require a bound for the ratio $\smash{\frac{\bar{r}_{\tau_{k}}}{\bar{r}_{\tau_k-1}}}$. For any $t\geq 1$, observe that
    \begin{align}
        \bar{r}_{t+1} & \leq \bar{r}_t + W_2(\mu_{t},\mu_{t+1}) 
        \leq \bar{r}_t + \eta_t\|\xi_t\|_{L^2(\mu_t)} \\
        &= \bar{r}_t + \frac{\bar{r}_t}{\sqrt{G_t}}\sqrt{G_t - G_{t-1}} = \bar{r}_t\Big(1+\sqrt{1-\frac{G_{t-1}}{G_t}}\Big) \leq 2\bar{r}_t. \label{eq:r_t1_bound-forward-euler}
    \end{align}Substituting \eqref{eq:r_t1_bound-forward-euler} into \eqref{eq:non-weight-bound-forward-euler}, we thus have that 
    \begin{equation}
    \sum_{t=\tau_{k-1}}^{\tau_k-1} \mathcal{F}(\mu_t) - \mathcal{F}(\pi)  = \mathcal{O}\left((d_0 + \bar{r}_{\tau_k})\sqrt{G_{T-1}}\right).
    \end{equation}
    Summing this display over $k=1,\dots,K$, it follows that 
    \begin{equation}
    \sum_{t=0}^{T-1} \Big[\mathcal{F}(\mu_t) - \mathcal{F}(\pi)\Big] 
    = \sum_{k=1}^{K}\Big[\sum_{t=\tau_{k-1}}^{\tau_k-1} \mathcal{F}(\mu_t) - \mathcal{F}(\pi)\Big]  = \mathcal{O}\Big(\Big(d_0K + \sum_{k=1}^{K}\bar{r}_{\tau_k}\Big)\sqrt{G_{T-1}}\Big). \label{eq:unweighted-penultimate-forward-euler}
    \end{equation}
    From above, we have that $K=\mathcal{O}(\log_{+}\frac{\bar{r}_{T}}{r_{\varepsilon}})$. In addition, from the definition, we have that ${\bar{r}_{\tau_{K-1}}}\geq 2^{K-1-k} \bar{r}_{\tau_{K-(K-k)}} \implies \bar{r}_{\tau_k} \leq 2^{-K+k+1}\bar{r}_{\tau_{K-1}} \leq 2^{-K+k+1} \bar{r}_{\tau_{K}}= 2^{-K+k+1} \bar{r}_{T}$ for all $0\leq k\leq K-1$. Thus, in particular,  it holds that $\sum_{k=1}^{K} \bar{r}_{\tau_k}\leq \bar{r}_{T} \sum_{k=1}^{K} 2^{-K+k+1} = \mathcal{O}(\bar{r}_{T})$. Substituting back into \eqref{eq:unweighted-penultimate-forward-euler}, and using Lemma \ref{lemma:average-bound}, we have that
    \begin{equation}
    \mathcal{F}(\bar{\mu}_T) - \mathcal{F}(\pi) \leq \frac{1}{T} \sum_{t=1}^{T} \left[\mathcal{F}(\mu_t) - \mathcal{F}(\pi)\right] 
    = \mathcal{O}\left(\frac{\left(d_0\log_{+}\frac{\bar{r}_{T}}{r_{\varepsilon}} + \bar{r}_{T}\right)\sqrt{G_{T-1}}}{T}\right). 
    \end{equation}
\end{proof}

\begin{corollary}
\label{corr:dog-wgd-bound-2-uniform}
    Suppose that Assumption 
    \ref{assumption:lsc} and \ref{assumption:geo-convex} hold. Let $(\mu_t)_{t\geq 0}$ be the sequence of measures defined according to \eqref{eq:wasserstein-sub-grad-descent}, the forward Euler discretization of the WGF in \eqref{eq:wasserstein-grad-flow},  with $(\eta_t)_{t\geq 0}$ defined as in \eqref{eq:dog-lr}. Let $(\bar{\mu}_t)_{t\geq 0}$ be the sequence of measures defined according to \eqref{eq:bar-mu-t-0} - \eqref{eq:bar-mu-t}. 
    Let $D\geq d_0$, and define $G = \sup_{0\leq t\leq T} \|\xi_t\|_{L^2(\mu_t)}$. Then, on the event $\{\bar{r}_{T} \leq D\}$, it holds that
    \begin{equation}
        \mathcal{F}(\bar{\mu}_{T}) - \mathcal{F}(\pi) = \mathcal{O}\left(\frac{DG}{\sqrt{T}}\log_{+}\left(\frac{D}{r_{\varepsilon}}\right)\right)
    \end{equation}
\end{corollary}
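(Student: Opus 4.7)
The plan is to derive this corollary as a direct consequence of Proposition \ref{prop:dog-wgd-bound-1-uniform}, in precisely the same way that Corollary \ref{corr:dog-wgd-bound-2} is obtained from Proposition \ref{prop:dog-wgd-bound-1}. The starting point is the bound
\begin{equation}
    \mathcal{F}(\bar{\mu}_T) - \mathcal{F}(\pi) = \mathcal{O}\left(\frac{\left(d_0\log_{+}\tfrac{\bar{r}_{T}}{r_{\varepsilon}} + \bar{r}_{T}\right)\sqrt{G_{T-1}}}{T}\right),
\end{equation}
which has already been established without any additional hypotheses on the iterates or the gradients.

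The key step is then to simplify each of the problem-dependent quantities appearing in the numerator under the hypotheses of the corollary. On the event $\{\bar{r}_T \leq D\}$, using the assumption $d_0 \leq D$ and the monotonicity of $\log_{+}(\cdot)$, one immediately obtains $d_0 \log_{+}(\bar{r}_T/r_{\varepsilon}) + \bar{r}_T \leq D\log_{+}(D/r_{\varepsilon}) + D = \mathcal{O}(D\log_{+}(D/r_{\varepsilon}))$. Meanwhile, since $G_{T-1} = \sum_{s=0}^{T-1}\|\xi_s\|_{L^2(\mu_s)}^2$ and the uniform bound $\|\xi_t\|_{L^2(\mu_t)} \leq G$ holds for all $t\leq T$ by the definition of $G$, we have $G_{T-1} \leq TG^2$, so that $\sqrt{G_{T-1}} \leq G\sqrt{T}$.

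Substituting both estimates back into the bound from Proposition \ref{prop:dog-wgd-bound-1-uniform} yields
\begin{equation}
    \mathcal{F}(\bar{\mu}_T) - \mathcal{F}(\pi) = \mathcal{O}\left(\frac{D\log_{+}(D/r_{\varepsilon})\cdot G\sqrt{T}}{T}\right) = \mathcal{O}\left(\frac{DG}{\sqrt{T}}\log_{+}\left(\frac{D}{r_{\varepsilon}}\right)\right),
\end{equation}
which is exactly the claimed rate. There is no real obstacle here: all the technical work has already been absorbed into the preceding proposition. The only thing worth flagging is a minor bookkeeping point regarding the logarithm, namely that when $\bar{r}_T$ is small (comparable to $r_\varepsilon$) the additive $\bar{r}_T$ term dominates rather than the logarithmic term, but in either case the big-$\mathcal{O}$ expression $D\log_{+}(D/r_\varepsilon)$ is a valid upper bound once one absorbs constants, so the final rate is preserved.
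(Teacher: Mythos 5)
Your proof is correct and follows exactly the same route as the paper's: start from the bound in Proposition \ref{prop:dog-wgd-bound-1-uniform}, replace $d_0$ and $\bar{r}_T$ by $D$ on the event $\{\bar{r}_T \leq D\}$, and bound $G_{T-1}$ by $TG^2$ using the uniform gradient bound. Your closing remark about absorbing the additive $\bar{r}_T$ into the $D\log_{+}(D/r_\varepsilon)$ term is sound (since $\log_+ \geq 1$ by convention), and if anything your indexing of $G_{T-1}$ and the sum $\sum_{s=0}^{T-1}$ is slightly more careful than the paper's own write-up, which writes $G_T$.
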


\begin{proof}
    On the event $\smash{\{\bar{r}_{T} \leq D\}}$, we have $\smash{\bar{r}_{T}\leq D}$, $\smash{d_0\leq D}$. We also have $\smash{\bar{g}_t \leq G}$ for all $1\leq t\leq T$, which implies that $G_{T} \leq \sum_{t=1}^{T}\bar{g}_t^2 \leq T{G}^2$. It follows that 
    \begin{equation}
        \mathcal{F}(\bar{\mu}_T) - \mathcal{F}(\pi) =\mathcal{O}\left(\frac{\left(D\log_{+}\frac{D}{r_{\varepsilon}} + D\right)\sqrt{TG^2}}{T}\right) = \mathcal{O}\left(\frac{DG}{\sqrt{T}}\log_{+}\left(\frac{D}{r_{\varepsilon}}\right)\right).
    \end{equation}
\end{proof}

\paragraph{Iterate Stability Bound.} Once again, following \citet{ivgi2023dog}, we can also define a variant of the dynamic step size formula $(\eta_t)_{t\geq 0}$ in \eqref{eq:dog-lr} which ensures that the iterates $(\mu_t)_{t\geq 0}$ remain bounded with high probability. In particular, we now define
\begin{equation}
    \eta_t = \frac{\max\left[r_\varepsilon,\max_{0\leq s\leq t} W_2(\mu_0,\mu_s)\right]}{\sqrt{8^4\log_{+}^2 (1+ t\bar{g}_t^2/\bar{g}_0^2)(\sum_{s=0}^{t-1} \int_{\mathbb{R}^d}\|\xi_s(x)\|^2\mathrm{d}\mu_s(x) + 16\bar{g}_t^2)}}:= \frac{\bar{r}_t}{\sqrt{G'_{t}}} \label{eq:tamed-dog-lr}
\end{equation}
where, similar to before, we have adopted the convention that $\sum_{s=0}^{-1}[\cdot] = 0$; we recall that $\bar{g}_t = \max_{0\leq s \leq t} \|\xi_t\|_{L^2(\mu_t)}$, and we have defined 
\begin{equation}
    G'_t = 8^4\log_{+}^2 \left(1+ \frac{t\bar{g}_t^2}{\bar{g}_0^2}\right)\left(G_{t-1} + 16\bar{g}_t^2\right).
\end{equation}
In this case, we have the following iterate stability guarantee.

\begin{proposition}
    Suppose that Assumption 
    \ref{assumption:lsc} and \ref{assumption:geo-convex} hold. Let $(\mu_t)_{t\geq 0}$ be the sequence of measures defined according to \eqref{eq:wasserstein-sub-grad-descent}, the forward Euler discretization of the WGF in \eqref{eq:wasserstein-grad-flow},  with $(\eta_t)_{t\geq 0}$ defined as in \eqref{eq:tamed-dog-lr}. Let $(\tilde{\mu}_t)_{t\geq 0}$ be the sequence of measures defined according to \eqref{eq:tilde-mu-t-1} - \eqref{eq:tilde-mu-t-2}. In addition, let $r_{\varepsilon}\leq 3d_0$. Then, for all $T\geq 1$, $\bar{r}_T\leq 3d_0$.
\end{proposition}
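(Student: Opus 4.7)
The plan is to adapt the tamed-DoG iterate stability argument of \citet{ivgi2023dog} to the Wasserstein setting, proving the bound $\bar{r}_T \leq 3d_0$ by strong induction on $T$. For the base case $T=1$, we have $r_0 = W_2(\mu_0,\mu_0) = 0$, so $\bar{r}_0 = r_\varepsilon \leq 3d_0$ by hypothesis; the one-step bound $r_1 = W_2(\mu_0,\mu_1) \leq \eta_0 \|\xi_0\|_{L^2(\mu_0)}$ follows directly from the update \eqref{eq:wasserstein-sub-grad-descent-lagrange}, and inserting the tamed step size \eqref{eq:tamed-dog-lr} together with the convention $G_{-1}=0$ yields $r_1 \leq r_\varepsilon$, which is again $\leq 3d_0$.

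For the inductive step, assume $\bar{r}_s \leq 3d_0$ for all $s \leq T-1$. Since $\bar{r}_T = \max[\bar{r}_{T-1}, r_T]$, it suffices to control $r_T = W_2(\mu_0,\mu_T)$. I would apply the triangle inequality along the discrete trajectory together with the fact that for each step $W_2(\mu_t,\mu_{t+1}) \leq \eta_t \|\xi_t\|_{L^2(\mu_t)}$ (which follows from the Monge coupling induced by $\mathrm{id} - \eta_t \xi_t$), to obtain
\begin{equation}
r_T \;\leq\; \sum_{t=0}^{T-1} \eta_t\, \|\xi_t\|_{L^2(\mu_t)} \;=\; \sum_{t=0}^{T-1} \frac{\bar{r}_t\, \|\xi_t\|_{L^2(\mu_t)}}{\sqrt{G'_t}} \;\leq\; 3d_0 \cdot \sum_{t=0}^{T-1} \frac{\|\xi_t\|_{L^2(\mu_t)}}{\sqrt{G'_t}},
\end{equation}
where in the equality I inserted \eqref{eq:tamed-dog-lr} and in the final inequality I used the inductive hypothesis $\bar{r}_t \leq 3d_0$ for $t\leq T-1$.

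The proof then reduces to the purely numerical claim
\begin{equation}
\sum_{t=0}^{T-1} \frac{\|\xi_t\|_{L^2(\mu_t)}}{\sqrt{G'_t}} \;\leq\; 1,
\end{equation}
which is precisely the statement one obtains from an almost verbatim translation of the tamed-sum lemma of \citet[Lemma~5]{ivgi2023dog}, once we identify $a_t = \|\xi_t\|^2_{L^2(\mu_t)} = G_t - G_{t-1}$. Here the $8^4 \log_+^2(1 + t \bar g_t^2/\bar g_0^2)$ prefactor in $G'_t$ is what purchases the constant, by splitting the range of $t$ into an early regime where the additive term $16\bar g_t^2$ dominates the denominator (yielding a geometric sum controlled by $\bar g_t/\bar g_0$) and a late regime where $G_{t-1}$ dominates and the log factor absorbs the standard $\sum \sqrt{a_t}/\sqrt{A_t}$ growth.

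The main obstacle is this last technical lemma, i.e., verifying that the log-squared inflation in $G'_t$ is exactly enough to make the sum universally bounded by $1/3$ (so that multiplication by $3d_0$ reproduces $d_0$), with room to absorb also the initial step. Everything else is a direct translation from the Euclidean setting: the only place the Wasserstein geometry enters is through the inequality $W_2(\mu_t,\mu_{t+1}) \leq \eta_t \|\xi_t\|_{L^2(\mu_t)}$ provided by the pushforward update, so the remainder of the argument is purely numerical and does not require any further structure from Assumptions \ref{assumption:lsc} and \ref{assumption:geo-convex} beyond what is already used to make the iteration well-defined.
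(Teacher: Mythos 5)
There is a genuine gap, and it sits exactly where you locate the ``main obstacle'': the purely numerical claim $\sum_{t=0}^{T-1}\|\xi_t\|_{L^2(\mu_t)}/\sqrt{G'_t}\leq 1$ is false. Take constant gradient norms, $\|\xi_t\|_{L^2(\mu_t)}=g$ for all $t$, so $G_{t-1}=tg^2$ and $\bar g_t=\bar g_0=g$. Then each term equals $1/\bigl(64\,\log_{+}(1+t)\sqrt{t+16}\,\bigr)$, and the partial sums grow like $\sqrt{T}/\log T$. The point is that the $\log_{+}^2$ taming sits \emph{inside} the square root of $\eta_t$, so it contributes only a single logarithm to each term of the un-squared sum $\sum_t \eta_t\|\xi_t\|_{L^2(\mu_t)}$, and $\sum_t 1/(\sqrt{t}\log t)$ diverges. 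The tamed-sum lemma of \citet[Lemma~6]{ivgi2023dog} controls $\sum_t a_t/G'_t$ (the sum of \emph{squares} of the normalized increments, which is an integral-test-convergent series of the form $\sum \Delta G_t/(G_t\log_{+}^2 G_t)$), not $\sum_t\sqrt{a_t}/\sqrt{G'_t}$. More conceptually: the path length of a (sub)gradient trajectory is generically unbounded even when the iterates stay in a bounded set, so no argument that bounds $r_T$ by the triangle inequality along the trajectory can succeed.

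The paper's proof avoids the trajectory entirely and instead tracks the squared distance to the minimizer. Geodesic convexity (via Lemma \ref{lemma:evi}) gives the one-step quasi-contraction $d_{s+1}^2-d_s^2\leq 2\eta_s\bigl(\mathcal{F}(\pi)-\mathcal{F}(\mu_s)\bigr)+\eta_s^2\|\xi_s\|^2_{L^2(\mu_s)}\leq \eta_s^2\|\xi_s\|^2_{L^2(\mu_s)}$, where the cross term is discarded because it is nonpositive at the minimizer $\pi$. Telescoping and inserting $\eta_s^2=\bar r_s^2/G'_s$ together with the inductive hypothesis $\bar r_t\leq 3d_0$ reduces the problem to the \emph{squared} sum $\sum_{s\leq t}(G_s-G_{s-1})/G'_s\leq 1$, which the Ivgi et al.\ lemma does deliver; this yields $d_{t+1}^2\leq(1+9/8^4)d_0^2\leq 4d_0^2$, and a single triangle inequality through $\pi$ gives $r_{t+1}\leq d_0+d_{t+1}\leq 3d_0$. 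So your closing remark is also incorrect: Assumption \ref{assumption:geo-convex} is used essentially (it supplies the discarded nonpositive term), and the Wasserstein geometry enters not only through $W_2(\mu_t,\mu_{t+1})\leq\eta_t\|\xi_t\|_{L^2(\mu_t)}$ but through the Fej\'er-type monotonicity toward $\pi$. Your base case is fine; the inductive step needs to be rebuilt along these lines.
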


\begin{proof}
    From Lemma \ref{lemma:evi}, we have that
    \begin{align}
         {d_{s+1}^2 - d_s^2} &\leq 2\eta_s \left(\mathcal{F}(\pi) - \mathcal{F}(\mu_s)\right)  + \eta_s^2\int_{\mathbb{R}^d}\|\xi_s(x)\|^2\mathrm{d}\mu_s(x) \leq  \eta_s^2\int_{\mathbb{R}^d}\|\xi_s(x)\|^2\mathrm{d}\mu_s(x).
        \label{eq:evi-time-dep}
    \end{align}
    We now proceed by induction. The base case holds by assumption: $\bar{r}_0 = r_{\varepsilon} \leq 3d_0$. Suppose now that $\bar{r}_t\leq 3d_0$. We then have, via a telescoping sum, that 
    \begin{align}
        d_{t+1}^2  - d_0^2 = \sum_{s=0}^{t} \left(d_{s+1}^2 - d_{s}^2 \right) &\leq \sum_{s=0}^t \eta_s^2 \int_{\mathbb{R}^d}\|\xi_s(x)\|^2\mathrm{d}\mu_s(x) \\
         & =  \sum_{s=0}^t\frac{\bar{r}_s^2\int_{\mathbb{R}^d}\|\xi_s(x)\|^2\mathrm{d}\mu_s(x)}{G'_s} \\
         &= \sum_{s=0}^t\frac{\bar{r}_s^2\left(\sum_{i=0}^{s}\int_{\mathbb{R}^d}\|\xi_i(x)\|^2\mathrm{d}\mu_i(x) - \sum_{i=0}^{s-1}\int_{\mathbb{R}^d}\|\xi_i(x)\|^2\mathrm{d}\mu_i(x)\right)}{8^4\log_{+}^2 \left(1+ \frac{s\bar{g}_s^2}{\bar{g}_0^2}\right)\left(G_{s-1} + 16\bar{g}_s^2\right)} \\
         &\leq \frac{\bar{r}_t^2}{8^4}\sum_{s=0}^t\frac{G_{s} - G_{s-1}}{\log_{+}^2 \left(1+ \frac{s\bar{g}_s^2}{\bar{g}_0^2}\right)\left(G_{s-1} + 16\bar{g}_s^2\right)} \\
         &\leq \frac{\bar{r}_t^2}{8^4}\sum_{s=0}^t\frac{G_{s} - G_{s-1}}{\log_{+}^2 \left(\frac{G_s + \bar{g}_s^2}{\bar{g}_0^2}\right)\left(G_{s} + \bar{g}_s^2\right)} \label{eq:penultimate} \\
         &\leq \frac{\bar{r}_t^2}{8^4}\leq \frac{9d_0^2}{8^4} \label{eq:final}
    \end{align}
    where in the penultimate line \eqref{eq:penultimate} we have used the fact that, since $\bar{g}_s\geq \|\xi_s\|_{L^2(\mu_s)}$ for all $s$, it holds that
    \begin{align}
        \log_{+}^2 \left(1+ \frac{s\bar{g}_s^2}{\bar{g}_0^2}\right)\left(G_{s-1} + 16\bar{g}_s^2\right) &\geq \log_{+}^2 \left(\frac{\bar{g}_0^2 + \sum_{i=1}^s \bar{g}_s^2}{\bar{g}_0^2}\right)\left(G_{s-1} + \bar{g}_s^2 + \bar{g}_s^2\right) \\
        &\geq \log_{+}^2 \left(\frac{\sum_{i=0}^s \bar{g}_i^2}{\bar{g}_0^2}\right)\left(G_{s-1} + \|\xi_s\|_{L^2(\mu_k)}^2 + \bar{g}_s^2\right) \\
        &\geq \log_{+}^2 \left( \frac{\sum_{i=0}^s \|\xi_i\|_{L^2(\mu_i)}^2}{\bar{g}_0^2}\right)\left(G_{s} + \bar{g}_s^2\right) \\
        &= \log_{+}^2 \left( \frac{G_s}{\bar{g}_0^2}\right)\left(G_{s} + \bar{g}_s^2\right) 
    \end{align}
    and in the final line \eqref{eq:final} we have used \cite[][Lemma 6]{ivgi2023dog} for the first inequality; and the inductive assumption for the second inequality. Rearranging \eqref{eq:final}, we have that 
    \begin{equation}
        d_{t+1}^2 \leq \left(1 + \frac{9}{8^4}\right)d_0^2 \leq 2^2 d_0^2 
    \end{equation}
    which implies, in particular, that $d_{t+1}\leq 2d_0$. Thus, using the triangle inequality for the Wasserstein distance \citep[e.g.,][]{clement2008elementary}, we conclude, as required, that 
    \begin{equation}
        r_{t+1} := W_2(\mu_0,\mu_{t+1}) \leq W_2(\mu_0,\pi) + W_2(\mu_{t+1},\pi) := d_0 + d_{t+1} \leq 3d_0.
    \end{equation}
\end{proof}

\subsubsection{Stochastic Case: Adaptive Step Size}
We now turn our attention to the stochastic Wasserstein subgradient descent algorithm in \eqref{eq:wasserstein-stoc-sub-grad-descent}, \eqref{eq:WGF-stoc}, with an adaptive step size now given by
\begin{equation}
    \eta_t = \frac{\max\left[r_\varepsilon,\max_{0\leq s\leq t} W_2(\mu_0,\mu_s)\right]}{\sqrt{\sum_{s=0}^{t} \int_{\mathbb{R}^d}\|\hat{\xi}_s(x)\|^2\,\mathrm{d}\mu_s(x)}}. \label{eq:dog-lr-stoc}
\end{equation} 
Similar to before, cf. \eqref{eq:defs1} - \eqref{eq:defs2}, we will frequently make use of
\begin{alignat}{3}
    &r_t = W_2(\mu_0,\mu_t),\quad &&\bar{r}_t = \max_{0\leq s \leq t}\left[r_s,r_\varepsilon\right],\quad&&G_t= \sum_{s=0}^t \int_{\mathbb{R}^d} \|\hat{\xi}_s(x)\|^2\mathrm{d}\mu_s(x),\label{eq:defs1-stoc} \\
    &d_t = W_2(\mu_t,\pi),\quad &&\bar{d}_t = \max_{0\leq s \leq t} d_s
    ,\quad &&\bar{g}_t = \max_{0\leq s \leq t} \|\hat{\xi}_s\|_{L^2(\mu_s)}. \label{eq:defs2-stoc}
\end{alignat}
In addition, for our stochastic analysis we will require an additional definition:
\begin{equation}
    \theta_{t,\delta}:= \log \left(\frac{60\log (6t)}{\delta}\right).
\end{equation}
Once again, we will analyze the convergence of a weighted average $(\tilde{\mu}_t)_{t\geq 1}$ of the measures $(\mu_t)_{t\geq 0}$ output via the stochastic Wasserstein subgradient descent algorithm, as defined in \eqref{eq:tilde-mu-t-1} - \eqref{eq:tilde-mu-t-2}. 

\begin{lemma}
\label{lemma:tilde-mu-t-bound-1-stoch}
    Suppose that Assumption 
    \ref{assumption:lsc} and \ref{assumption:geo-convex} hold. Let $(\mu_t)_{t\geq 0}$ be the sequence of measures defined according to \eqref{eq:wasserstein-stoc-sub-grad-descent}, the stochastic-gradient forward Euler discretization of the WGF in \eqref{eq:wasserstein-grad-flow}, with $(\eta_t)_{t\geq 0}$ defined as in \eqref{eq:dog-lr-stoc}. Let $(\tilde{\mu}_t)_{t\geq 0}$ be the sequence of measures defined according to \eqref{eq:tilde-mu-t-1} - \eqref{eq:tilde-mu-t-2}. In addition, let  $\delta_t(x):=\hat{\xi}_t(x) - {\xi}_t(x)$. Then, for all $T\geq 1$, 
    \begin{align}
        \mathcal{F}(\tilde{\mu}_{T}) - \mathcal{F}(\pi) 
        &\leq \frac{1}{\sum_{t=0}^{T-1}\bar{r}_t} \bigg[\underbrace{\sum_{t=0}^{T-1}\bar{r}_t\int_{\mathbb{R}^d}\langle \hat{\xi}_t(x), x- \boldsymbol{t}_{\mu_t}^{\pi}(x) \rangle \,\mathrm{d}\mu_t(x)}_{\text{weighted regret}} 
 - \underbrace{\sum_{t=0}^{T-1}\bar{r}_t\int_{\mathbb{R}^d}\langle {\delta}_t(x), x- \boldsymbol{t}_{\mu_t}^{\pi}(x) \rangle \,\mathrm{d}\mu_t(x)}_{\text{noise}}\bigg]. \label{eq:decomp-stoc} \vspace{-8mm}
    \end{align}
\end{lemma}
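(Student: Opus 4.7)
The plan is to reduce this to a direct algebraic rewriting of the corresponding deterministic decomposition, exploiting the fact that the lemmas established earlier in the excerpt are agnostic to how the iterates $(\mu_t)_{t\geq 0}$ are produced. Indeed, the statements of Lemma \ref{lemma:tilde-mu-t-bound-1} and Lemma \ref{lemma:tilde-mu-t-bound-1-1} only invoke (i) the recursive definition of $(\tilde{\mu}_t)_{t\geq 1}$ via geodesic interpolation with weights $\bar{r}_t$, and (ii) the geodesic convexity of $\mathcal{F}$. Neither of these ingredients depends on whether the update from $\mu_t$ to $\mu_{t+1}$ is driven by the true Wasserstein (sub)gradient $\xi_t$ or by a stochastic estimate $\hat{\xi}_t$. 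Consequently, both lemmas carry over verbatim to the stochastic setting considered here.

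Concretely, I would chain the two inequalities as follows. First, applying Lemma \ref{lemma:tilde-mu-t-bound-1} to the stochastic iterates gives
\begin{equation*}
\mathcal{F}(\tilde{\mu}_T) - \mathcal{F}(\pi) \leq \frac{1}{\sum_{t=0}^{T-1}\bar{r}_t}\sum_{t=0}^{T-1}\bar{r}_t\bigl[\mathcal{F}(\mu_t) - \mathcal{F}(\pi)\bigr].
\end{equation*}
Then, applying Lemma \ref{lemma:tilde-mu-t-bound-1-1} (equivalently, the defining subgradient inequality for a geodesically convex functional) to each $\mu_t$ with any $\xi_t \in \partial \mathcal{F}(\mu_t)$ yields
\begin{equation*}
\sum_{t=0}^{T-1}\bar{r}_t\bigl[\mathcal{F}(\mu_t) - \mathcal{F}(\pi)\bigr] \leq \sum_{t=0}^{T-1}\bar{r}_t\int_{\mathbb{R}^d}\langle \xi_t(x),\, x - \boldsymbol{t}_{\mu_t}^{\pi}(x)\rangle\,\mathrm{d}\mu_t(x).
\end{equation*}

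The final step is the algebraic decomposition $\xi_t(x) = \hat{\xi}_t(x) - \delta_t(x)$, by the very definition of $\delta_t$. Substituting this identity under the inner product, using bilinearity, and distributing the weighted sum immediately splits the right-hand side into the stochastic-gradient ``weighted regret'' term and the signed ``noise'' term appearing in \eqref{eq:decomp-stoc}. Combining with the bound from Lemma \ref{lemma:tilde-mu-t-bound-1} delivers the claim.

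There is no substantive obstacle in this lemma itself: it is a bookkeeping step that isolates a martingale-like noise contribution from the rest of the bound. The genuine difficulty lies downstream, where one must (a) control the stochastic weighted regret by adapting the telescoping argument of Lemma \ref{lemma:tilde-mu-t-bound-2} to $\hat{\xi}_t$, which matches the structure of the adaptive step size in \eqref{eq:dog-lr-stoc}, and (b) derive a high-probability concentration bound for the noise sum, for which a time-uniform martingale inequality with the log-log factor $\theta_{t,\delta}$ will be needed (analogous to the forward-flow result in Lemma \ref{lemma:tilde-mu-t-bound-3-stoch-forward-flow}). The present decomposition lemma is what enables those two pieces to be treated separately.
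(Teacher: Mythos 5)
Your proposal is correct and matches the paper's proof, which simply chains Lemma \ref{lemma:tilde-mu-t-bound-1} (weighted Jensen-type bound from geodesic convexity of the averaging recursion) with Lemma \ref{lemma:tilde-mu-t-bound-1-1} (the subgradient inequality), then splits $\xi_t = \hat{\xi}_t - \delta_t$ by linearity of the inner product. Your observation that both lemmas are agnostic to how the iterates $(\mu_t)$ are generated is exactly the implicit justification the paper relies on.
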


\begin{proof}
    The result follows immediately from Lemma \ref{lemma:tilde-mu-t-bound-1} and Lemma \ref{lemma:tilde-mu-t-bound-1-1}.
\end{proof}

\begin{lemma}
\label{lemma:tilde-mu-t-bound-2-stoch}
    Suppose that Assumption 
    \ref{assumption:lsc} and \ref{assumption:geo-convex} hold. Let $(\mu_t)_{t\geq 0}$ be the sequence of measures defined according to \eqref{eq:wasserstein-stoc-sub-grad-descent}, the stochastic-gradient forward Euler discretization of the WGF in \eqref{eq:wasserstein-grad-flow}, with $(\eta_t)_{t\geq 0}$ defined as in \eqref{eq:dog-lr-stoc}. Let $(\tilde{\mu}_t)_{t\geq 0}$ be the sequence of measures defined according to \eqref{eq:tilde-mu-t-1} - \eqref{eq:tilde-mu-t-2}. Then, for all $T\geq 1$,
    \begin{equation}
        \sum_{t=0}^{T-1}\bar{r}_t\int_{\mathbb{R}^d}\langle \hat{\xi}_t(x), x- \boldsymbol{t}_{\mu_t}^{\pi}(x) \rangle \,\mathrm{d}\mu_t(x) \leq \bar{r}_T (2\bar{d}_T+ \bar{r}_T) \sqrt{G_{T-1}}\quad \text{a.s.}
    \end{equation}
\end{lemma}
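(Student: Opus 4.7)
The plan is to mirror the proof of Lemma \ref{lemma:tilde-mu-t-bound-2}, substituting the stochastic subgradient $\hat{\xi}_t$ for $\xi_t$ throughout. The key observation is that the crucial inequality used in that proof — namely, the one derived in Lemma \ref{lemma:evi} relating the one-step decrease $W_2^2(\mu_t,\pi)-W_2^2(\mu_{t+1},\pi)$ to the linear term $\int\langle\xi_t(x),x-\boldsymbol{t}_{\mu_t}^{\pi}(x)\rangle\,\mathrm{d}\mu_t(x)$ — is purely \emph{geometric}: it uses only the suboptimal coupling induced by the update map $x\mapsto x-\eta_t\hat{\xi}_t(x)$ transported to $\boldsymbol{t}_{\mu_t}^{\pi}(x)$, together with the expansion of the squared Euclidean norm. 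Geodesic convexity is \emph{not} invoked at this step (it only enters afterwards to replace the inner product by $\mathcal{F}(\mu_t)-\mathcal{F}(\pi)$, which we are not doing here). Consequently, the same manipulation delivers, almost surely and for each $t\geq 0$,
\begin{equation}
\int_{\mathbb{R}^d}\langle\hat{\xi}_t(x),x-\boldsymbol{t}_{\mu_t}^{\pi}(x)\rangle\,\mathrm{d}\mu_t(x)\leq\frac{W_2^2(\mu_t,\pi)-W_2^2(\mu_{t+1},\pi)}{2\eta_t}+\frac{\eta_t}{2}\int_{\mathbb{R}^d}\|\hat{\xi}_t(x)\|^2\,\mathrm{d}\mu_t(x).
\end{equation}

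Multiplying both sides by $\bar{r}_t$ and summing over $t=0,\dots,T-1$, the right-hand side splits into a telescoping piece (I) and a gradient-norm piece (II). Plugging in $\eta_t=\bar{r}_t/\sqrt{G_t}$ from \eqref{eq:dog-lr-stoc}, piece (I) becomes $\sum_{t=0}^{T-1}\sqrt{G_t}(d_t^2-d_{t+1}^2)$, which — via Abel summation, together with the monotonicity $\bar{d}_t\leq\bar{d}_T$ and $G_{t-1}\leq G_t$ — collapses to $\sqrt{G_{T-1}}(\bar{d}_T^2-d_T^2)$. A standard triangle-inequality bound $\bar{d}_T^2-d_T^2\leq(\bar{r}_S+\bar{r}_T)(d_S+d_T)\leq 4\bar{r}_T\bar{d}_T$ (with $S\in\arg\max_{t\leq T}d_t$) then yields (I) $\leq 4\bar{r}_T\bar{d}_T\sqrt{G_{T-1}}$, exactly as in the deterministic case. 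For piece (II), the same substitution gives $\sum_{t=0}^{T-1}\bar{r}_t^2(G_t-G_{t-1})/\sqrt{G_t}\leq\bar{r}_T^2\sum_{t=0}^{T-1}(G_t-G_{t-1})/\sqrt{G_t}\leq 2\bar{r}_T^2\sqrt{G_{T-1}}$, invoking the monotonicity of $\bar{r}_t$ and \citep[Lemma~4]{ivgi2023dog}.

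Combining, the factor of $1/2$ in front of (I) and (II) yields the advertised bound $\bar{r}_T(2\bar{d}_T+\bar{r}_T)\sqrt{G_{T-1}}$, holding almost surely. There is no substantive new difficulty here: everything reduces to pointwise-in-$\omega$ estimates, because the only randomness enters through the iterates $(\mu_t)$ and the stochastic quantities $G_t$, $\bar{r}_t$, $\bar{d}_t$, and the inequalities hold trajectorywise. The real stochastic analysis — namely, controlling the noise term $\sum_{t=0}^{T-1}\bar{r}_t\int\langle\hat{\xi}_t-\xi_t,x-\boldsymbol{t}_{\mu_t}^{\pi}\rangle\,\mathrm{d}\mu_t$ that was separated out in the decomposition of Lemma \ref{lemma:tilde-mu-t-bound-1-stoch} — is deferred to the next lemma (the analogue of Lemma \ref{lemma:tilde-mu-t-bound-3-stoch-forward-flow}), where a time-uniform martingale concentration argument under Assumption \ref{assumption:bounded-grads-forward-flow} will be required. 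The present lemma is the easy, purely deterministic half of that decomposition.
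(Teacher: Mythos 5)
Your proof is correct and matches the paper's, which (being a one-line remark) simply refers back to the deterministic Lemma \ref{lemma:tilde-mu-t-bound-2} with $\hat{\xi}_t$ in place of $\xi_t$. You have correctly unpacked why that reference works: the one-step inequality is a pure coupling-and-quadratic-expansion argument (the map $x\mapsto x-\eta_t\hat{\xi}_t(x)$ gives a sub-optimal coupling of $\mu_{t+1}$ with $\pi$, and nonexpansiveness of the Wasserstein projection preserves the bound), so it holds trajectorywise regardless of whether $\hat{\xi}_t$ is a true Wasserstein subgradient, and the telescoping and gradient-norm sums go through verbatim with the stochastic $G_t$, $\bar r_t$, $\bar d_t$. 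Your aside that geodesic convexity is not actually invoked at this stage is also accurate: the hypothesis appears in the lemma statement but is not used in its proof (it only enters in the companion Lemma \ref{lemma:tilde-mu-t-bound-1-stoch} to obtain the decomposition this lemma controls).
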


\begin{proof}
    The proof is identical to the proof of Lemma \ref{lemma:tilde-mu-t-bound-2}, using the definitions in \eqref{eq:defs1-stoc} - \eqref{eq:defs2-stoc} in place of the definitions in \eqref{eq:defs1} - \eqref{eq:defs2}.
\end{proof}

\begin{lemma}
\label{lemma:tilde-mu-t-bound-3-stoch}
    Suppose that Assumption \ref{assumption:bounded-grads} holds. Let $(\mu_t)_{t\geq 0}$ be the sequence of measures defined according to \eqref{eq:wasserstein-stoc-sub-grad-descent}, the stochastic-gradient forward Euler discretization of the WGF in \eqref{eq:wasserstein-grad-flow}, with $(\eta_t)_{t\geq 0}$ defined as in \eqref{eq:dog-lr-stoc}. Let $(\tilde{\mu}_t)_{t\geq 0}$ be the sequence of measures defined according to \eqref{eq:tilde-mu-t-1} - \eqref{eq:tilde-mu-t-2}. Then, for all $0<\delta<1$, $G>0$, and $T\geq 1$, 
    \begin{align}
        &\mathbb{P}\left(\exists t \leq T : \left|\sum_{k=0}^{t-1} \bar{r}_k \int_{\mathbb{R}^d} \langle \delta_k(x),x - \boldsymbol{t}_{\mu_k}^{\pi}(x)\rangle\mathrm{d}\mu_k(x) \right| \geq 8 \bar{r}_{t-1}\bar{d}_{t-1}\sqrt{\theta_{t,\delta} G_{t-1} + \theta_{t,\delta}^2G^2}\right) \leq \delta + \mathbb{P}\big(\bar{g}_{T}> G\big).
    \end{align}
\end{lemma}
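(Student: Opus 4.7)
The statement is a time-uniform high-probability tail bound on a weighted martingale sum built from the stochastic-gradient errors, so the plan is to treat it as an application of a Freedman-type concentration inequality on the event where gradients stay bounded, mirroring Lemma 2 of \citet{ivgi2023dog} with Euclidean inner products replaced by $L^2(\mu)$ pairings.

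First, I would set up the martingale structure. Let $\mathcal{F}_t$ denote the filtration generated by $(\mu_0,b_0,\dots,b_{t-1})$, so that $\mu_t$, $\bar r_t$, $\bar d_t$, $\boldsymbol{t}_{\mu_t}^\pi$ and $\xi_t$ are all $\mathcal{F}_t$-measurable, while $\hat\xi_t$ (and hence $\delta_t = \hat\xi_t - \xi_t$) has conditional mean zero given $\mathcal{F}_t$ by the oracle definition in \eqref{eq:stochastic-gradient-forward-euler-def}. Define
\begin{equation*}
M_k := \bar r_k \int_{\mathbb{R}^d} \langle \delta_k(x),\, x - \boldsymbol{t}_{\mu_k}^\pi(x)\rangle\,\mathrm d\mu_k(x),
\end{equation*}
which is then a martingale difference sequence with respect to $(\mathcal{F}_{k+1})$. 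The quantity we need to control uniformly in $t$ is $S_t := \sum_{k=0}^{t-1} M_k$.

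Next, I would estimate the individual increments and their conditional variances on the event $\mathcal{E}_G := \{\bar g_T \le G\}$. Cauchy--Schwarz in $L^2(\mu_k)$ gives
\begin{equation*}
|M_k| \le \bar r_k\, \|\delta_k\|_{L^2(\mu_k)}\, \|x - \boldsymbol{t}_{\mu_k}^\pi\|_{L^2(\mu_k)} = \bar r_k\, \|\delta_k\|_{L^2(\mu_k)}\, d_k \le 2G\,\bar r_{t-1}\, \bar d_{t-1}
\end{equation*}
on $\mathcal{E}_G$, since $\|\delta_k\|_{L^2(\mu_k)} \le \|\hat\xi_k\|_{L^2(\mu_k)} + \|\xi_k\|_{L^2(\mu_k)} \le 2\bar g_k \le 2G$ (note $\xi_k$ is the conditional mean, hence bounded by the same $L^2$ envelope). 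The conditional second moment satisfies
\begin{equation*}
\mathbb{E}[M_k^2 \mid \mathcal{F}_k] \le \bar r_k^2\, d_k^2\, \mathbb{E}[\|\delta_k\|_{L^2(\mu_k)}^2 \mid \mathcal{F}_k] \le \bar r_k^2\, \bar d_k^2\, \|\hat\xi_k\|_{L^2(\mu_k)}^2 \cdot C,
\end{equation*}
after a standard $(a-b)^2 \le 2a^2+2b^2$ bound on $\delta_k$ combined with the observation that $\|\xi_k\|^2_{L^2(\mu_k)} \le \mathbb{E}[\|\hat\xi_k\|^2_{L^2(\mu_k)} \mid \mathcal{F}_k]$ by Jensen. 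Summing over $k<t$, the cumulative conditional variance is thus dominated by $\bar r_{t-1}^2 \bar d_{t-1}^2 G_{t-1}$.

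The main step is then a time-uniform Freedman inequality (in the form used in the proof of Lemma 2 of \citet{ivgi2023dog}, see also Lemma 3 of \citet{harvey2019tight}) applied to $(M_k)$ with the variance envelope $V_t \le \bar r_{t-1}^2 \bar d_{t-1}^2 G_{t-1}$ and the uniform magnitude bound $2G \bar r_{t-1}\bar d_{t-1}$ valid on $\mathcal{E}_G$. Because $V_t$ itself is random, I would apply the standard peeling/geometric grid argument on $G_{t-1}$ across dyadic scales, absorbing the resulting logarithmic factor into $\theta_{t,\delta} = \log(60\log(6t)/\delta)$; this gives, with probability at least $1-\delta$ on $\mathcal{E}_G$ and simultaneously for all $t \le T$,
\begin{equation*}
|S_t| \le 8\, \bar r_{t-1}\, \bar d_{t-1}\, \sqrt{\theta_{t,\delta} G_{t-1} + \theta_{t,\delta}^2 G^2}.
\end{equation*}
Finally, a union bound over $\mathcal{E}_G^c$ replaces conditioning on $\mathcal{E}_G$ by the additional probability charge $\mathbb{P}(\bar g_T > G)$, yielding the stated inequality.

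The main obstacle will be the peeling step: $V_t$ and the magnitude bound both depend on the running quantities $\bar r_{t-1}, \bar d_{t-1}, G_{t-1}$, so a direct Freedman application does not suffice. The cleanest route is the dyadic-scale argument of \citet{ivgi2023dog}, where one fixes a geometric grid in $G_{t-1}$ (and possibly in $\bar r_{t-1} \bar d_{t-1}$), applies Freedman on each scale with confidence level $\delta/(\text{log factor})$, and unions; the extra $\log\log$ factor accumulated is exactly what $\theta_{t,\delta}$ is engineered to absorb. Once this peeling is set up, the computation reduces to the bookkeeping already performed in \citet{ivgi2023dog}, transported to the Wasserstein setting without further difficulty.
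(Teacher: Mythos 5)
Your overall strategy is correct and close to the paper's, but you propose to do substantially more work than is actually done. The paper defines
\[
Z_t = \bar r_t\bar d_t,\qquad
X_t = \tfrac1{\bar d_t}\!\int\!\langle \delta_t(x),\,x - \boldsymbol{t}_{\mu_t}^\pi(x)\rangle\,\mathrm d\mu_t(x),\qquad
\hat X_t = -\tfrac1{\bar d_t}\!\int\!\langle \xi_t(x),\,x - \boldsymbol{t}_{\mu_t}^\pi(x)\rangle\,\mathrm d\mu_t(x),
\]
verifies by Cauchy--Schwarz that $\sum_{k<t}(X_k-\hat X_k)^2\le G_{t-1}$ and $|\hat X_t|\le\bar g_t$ almost surely, and then invokes Lemma~7 of \citet{ivgi2023dog} as a black box. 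That lemma is already a time-uniform empirical-Bernstein-type result with adapted random weights $Z_t$ and a \emph{realized} variance proxy, so all the Freedman and dyadic-peeling bookkeeping is packaged there. Your proposal proposes to re-derive that concentration result from a time-uniform Freedman inequality plus a geometric-grid argument; this is exactly what the DoG lemma does internally, so the route is sound, but it is not the paper's argument and it is longer.

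One detail in your variance step would need care if you carried it through: you write $\mathbb E[M_k^2\mid\mathcal F_k]\le \bar r_k^2\bar d_k^2\,\|\hat\xi_k\|^2_{L^2(\mu_k)}\cdot C$, but $\|\hat\xi_k\|_{L^2(\mu_k)}$ is $\mathcal F_{k+1}$-measurable, not $\mathcal F_k$-measurable, so it cannot appear on the right of a bound on the $\mathcal F_k$-conditional variance. The clean alternatives are either to bound the conditional variance deterministically by $G^2$ on the event $\{\bar g_T\le G\}$ (losing the $G_{t-1}$ dependence unless you peel), or to work, as the paper does, with the \emph{realized} quadratic variation $\sum_{k<t}(X_k-\hat X_k)^2\le G_{t-1}$ and use a lemma that accepts that proxy directly. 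The second is precisely why Lemma~7 of \citet{ivgi2023dog} is the natural tool here, and why the paper's proof is short.
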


\begin{proof}
    For $0\leq t\leq T$, define the random variables
    \begin{align}
        Z_t &= \bar{r}_t \bar{d}_t, \quad X_t = \frac{1}{\bar{d}_t} \int_{\mathbb{R}^d} \langle \delta_t(x), x - \boldsymbol{t}_{\mu_t}^{\pi}(x) \rangle \,\mathrm{d}\mu_t(x), \quad \hat{X}_t = -\frac{1}{\bar{d}_t} \int_{\mathbb{R}^d} \langle \xi_t(x), x - \boldsymbol{t}_{\mu_t}^{\pi}(x) \rangle \,\mathrm{d}\mu_t(x).
    \end{align}
    Using these definitions, we have that
    \begin{align}
        \sum_{k=0}^{t-1} Z_k X_k &= \sum_{t=0}^{k-1}\bar{r}_k\int_{\mathbb{R}^d}\langle {\delta}_k(x), x- \boldsymbol{t}_{\mu_k}^{\pi}(x) \rangle \mathrm{d}\mu_k(x)
    \intertext{
    and also that 
    }
        \sum_{k=0}^{t-1} (X_k - \hat{X}_k)^2 &= \sum_{t=0}^{t-1} \left[\frac{1}{\bar{d}_k}\int_{\mathbb{R}^d}\langle \hat{\xi}_k(x), x- \boldsymbol{t}_{\mu_k}^{\pi}(x) \rangle \mathrm{d}\mu_k(x)\right]^2 \\
        &\leq \sum_{k=0}^{t-1} \frac{1}{\bar{d}_k^2}\int_{\mathbb{R}^d} \|\hat{\xi}_k(x)\|^2\mathrm{d}\mu_k(x) \int_{\mathbb{R}^d} \|x- \boldsymbol{t}_{\mu_k}^{\pi}(x)\|^2 \mathrm{d}\mu_k(x) \\
        &=\sum_{k=0}^{t-1} \frac{d_k^2}{\bar{d}_k^2}\int_{\mathbb{R}^d} \|\hat{\xi}_k(x)\|^2\mathrm{d}\mu_k(x)  \leq \sum_{k=0}^{t-1} \int_{\mathbb{R}^d} \|\hat{\xi}_k(x)\|^2\mathrm{d}\mu_k(x)= G_{t-1}
    \end{align}
    where the second line follows from the Cauchy-Schwarz inequality, and the third line from the definition of $\bar{d}_k$. It follows, substituting these (in)equalities, that
    \begin{align}
        &\mathbb{P}\left(\exists t \leq T : \left|\sum_{k=0}^{t-1} \bar{r}_k \int_{\mathbb{R}^d} \langle \delta_k(x),x - \boldsymbol{t}_{\mu_k}^{\pi}(x)\rangle\,\mathrm{d}\mu_k(x) \right| \geq 8 \bar{r}_{t-1}\bar{d}_{t-1}\sqrt{\theta_{t,\delta} G_{t-1} + \theta_{t,\delta}^2G^2}\right) \\
        =\hspace{.5mm}&\mathbb{P}\left(\exists t \leq T : \left|\sum_{k=0}^{t-1} Z_kX_k \right| \geq 8 Z_{t-1}\sqrt{\theta_{t,\delta} G_{t-1} + \theta_{t,\delta}^2G^2}\right) \\
        \leq\hspace{.5mm}&\mathbb{P}\left(\exists t \leq T : \left|\sum_{k=0}^{t-1} Z_kX_k \right| \geq 8 Z_{t-1}\sqrt{\theta_{t,\delta} \sum_{k=0}^{t-1}(X_k - \hat{X}_k)^2 + \theta_{t,\delta}^2G^2}\right) \\[1mm]
        \leq &\hspace{.5mm}\delta + \mathbb{P}(\exists t\leq T: g_t>G) \\[3mm]
        = \hspace{.5mm}&\delta + \mathbb{P}(\bar{g}_T > G)
    \end{align}
    where in penultimate line we have used \citep[][Lemma 7]{ivgi2023dog}, which we can apply since, with probability one, we can bound $|X_t|$ and $|\hat{X}_t|$, e.g.,
    \begin{align}
        |\hat{X}_t| 
        \leq \frac{1}{\bar{d}_t} \left[\int_{\mathbb{R}^d} \|\xi_t(x)\|^2 \,\mathrm{d}\mu_t(x)\right]^{\frac{1}{2}} \left[\int_{\mathbb{R}^d} \|x-t_{\mu_t}^{\pi}(x)\|^2 \,\mathrm{d}\mu_t(x)\right]^{\frac{1}{2}} 
        \leq \frac{d_t}{\bar{d}_t} \bar{g}_t \leq \bar{g}_t, 
    \end{align}
    where in the first inequality line we have once again made use of the Cauchy-Schwarz inequality, in the second inequality the definition of ${d}_t$, and in the final inequality the definition of $\bar{d}_t$ and $\bar{g}_t$.
\end{proof}

\begin{proposition}
\label{prop:dog-wgd-bound-1-stoc}
    Suppose that Assumptions 
    \ref{assumption:lsc}, \ref{assumption:geo-convex}, and \ref{assumption:bounded-grads} hold. Let $(\mu_t)_{t\geq 0}$ be the sequence of measures defined in \eqref{eq:wasserstein-stoc-sub-grad-descent}, the stochastic-gradient forward Euler discretization of the WGF in \eqref{eq:wasserstein-grad-flow}, with $(\eta_t)_{t\geq 0}$ defined as in \eqref{eq:dog-lr-stoc}. Let $(\tilde{\mu}_t)_{t\geq 0}$ be the sequence of measures defined according to \eqref{eq:tilde-mu-t-1} - \eqref{eq:tilde-mu-t-2}. Then, for all $0<\delta <1$, $G>0$, and for all $t\leq T$, we have with probability at least $1 - \delta - \mathbb{P}(\bar{g}_T>G)$, 
    \begin{equation}
    \label{eq:dog-wgd-bound-1-stoc}
        \mathcal{F}(\tilde{\mu}_t) - \mathcal{F}(\pi) =  \mathcal{O}\left(\frac{(d_0 + \bar{r}_t)\sqrt{G_{t-1} + G_{t-1}\theta_{t,\delta} + \theta_{t,\delta}^2 G^2}}{\sum_{s=0}^{t-1}\bar{r}_s/\bar{r}_t}\right).
    \end{equation}
\end{proposition}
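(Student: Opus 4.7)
The plan is to combine the three preceding lemmas (Lemma \ref{lemma:tilde-mu-t-bound-1-stoch}, Lemma \ref{lemma:tilde-mu-t-bound-2-stoch}, and Lemma \ref{lemma:tilde-mu-t-bound-3-stoch}) in exactly the same way that Proposition \ref{prop:dog-wgd-bound-1} was obtained from its deterministic counterparts, with the additional noise term handled via the high-probability bound in Lemma \ref{lemma:tilde-mu-t-bound-3-stoch}. This directly mirrors the proof strategy of the analogous forward-flow result, Proposition \ref{prop:dog-wgd-bound-1-stoc-forward-flow}.

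First, I would apply Lemma \ref{lemma:tilde-mu-t-bound-1-stoch} to decompose $\mathcal{F}(\tilde{\mu}_t) - \mathcal{F}(\pi)$ as $(\sum_{s=0}^{t-1}\bar{r}_s)^{-1}$ times the sum of a \emph{weighted regret} term and a \emph{noise} term. Second, Lemma \ref{lemma:tilde-mu-t-bound-2-stoch} bounds the weighted regret by $\bar{r}_t(2\bar{d}_t + \bar{r}_t)\sqrt{G_{t-1}}$ almost surely. Third, for any fixed $0<\delta<1$ and $G>0$, Lemma \ref{lemma:tilde-mu-t-bound-3-stoch} bounds the noise term by $8\bar{r}_{t-1}\bar{d}_{t-1}\sqrt{\theta_{t,\delta}G_{t-1} + \theta_{t,\delta}^2 G^2}$ uniformly in $t\leq T$, except on an event of probability at most $\delta + \mathbb{P}(\bar{g}_T > G)$. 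Combining the two bounds on the good event gives
\begin{equation*}
\mathcal{F}(\tilde{\mu}_t) - \mathcal{F}(\pi) \;\leq\; \frac{\bar{r}_t(2\bar{d}_t + \bar{r}_t)\sqrt{G_{t-1}} + 8\bar{r}_{t-1}\bar{d}_{t-1}\sqrt{\theta_{t,\delta}G_{t-1} + \theta_{t,\delta}^2 G^2}}{\sum_{s=0}^{t-1}\bar{r}_s}.
\end{equation*}

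Finally, using the monotonicity properties $\bar{r}_{t-1}\leq \bar{r}_t$ and $\bar{d}_{t-1}\leq \bar{d}_t$ together with the triangle-inequality bound $\bar{d}_t \leq d_0 + \bar{r}_t$ (which follows from $d_s \leq d_0 + W_2(\mu_0,\mu_s) \leq d_0 + \bar{r}_s$ and monotonicity of $\bar{r}_s$ in $s$), the right-hand side simplifies to the claimed $\mathcal{O}\bigl((d_0+\bar{r}_t)\sqrt{G_{t-1} + G_{t-1}\theta_{t,\delta} + \theta_{t,\delta}^2 G^2}\bigr)/(\sum_{s=0}^{t-1}\bar{r}_s/\bar{r}_t)$, and the event on which this holds has probability at least $1-\delta-\mathbb{P}(\bar{g}_T > G)$.

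Since the three lemmas are already in hand, I do not anticipate any substantive obstacle; the only care needed is bookkeeping the off-by-one shift in indices between the deterministic (Lemma \ref{lemma:tilde-mu-t-bound-2}) and stochastic (Lemma \ref{lemma:tilde-mu-t-bound-3-stoch}) arguments, and verifying that the event on which the noise bound holds simultaneously for all $t\leq T$ is indeed the one used when evaluating at a particular $t$. This is immediate from the uniform-in-$t$ formulation of Lemma \ref{lemma:tilde-mu-t-bound-3-stoch}.
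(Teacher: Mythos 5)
Your proposal is correct and follows exactly the same route as the paper: combine Lemmas \ref{lemma:tilde-mu-t-bound-1-stoch}, \ref{lemma:tilde-mu-t-bound-2-stoch}, and \ref{lemma:tilde-mu-t-bound-3-stoch}, then absorb constants using $\bar{d}_t \leq d_0 + \bar{r}_t$ and the monotonicity of $\bar{r}_t$, $\bar{d}_t$. Your extra bookkeeping on the index shift and the uniform-in-$t$ event is sound and merely spells out what the paper leaves implicit.
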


\begin{proof}
    The result follows as an immediate consequence of Lemma \ref{lemma:tilde-mu-t-bound-1-stoch}, Lemma \ref{lemma:tilde-mu-t-bound-2-stoch}, and Lemma \ref{lemma:tilde-mu-t-bound-3-stoch}, using also the fact that $\bar{d}_t \leq d_0 + \bar{r}_t$.
\end{proof}

\begin{corollary}
    Suppose that Assumption 
    \ref{assumption:lsc}, \ref{assumption:geo-convex} and \ref{assumption:bounded-grads} hold. Let $(\mu_t)_{t\geq 0}$ be the sequence of measures defined in \eqref{eq:WGF-stoc}, with $(\eta_t)_{t\geq 0}$ defined as in \eqref{eq:dog-lr-stoc}. Let $(\tilde{\mu}_t)_{t\geq 0}$ be the sequence of measures defined according to \eqref{eq:tilde-mu-t-1} - \eqref{eq:tilde-mu-t-2}. Let $D\geq d_0$, and define $\smash{G = \sup_{0\leq t\leq T} \|\hat{\xi}_t\|}$. In addition, let $\smash{\tau\in\argmax_{t\leq T} \sum_{i=0}^{t-1} \bar{r}_i /\bar{r}_t}$. Then, with probability  at least $1 - \delta - \mathbb{P}(\bar{r}_T > D)$, it holds that
    \begin{equation}
        \mathcal{F}(\tilde{\mu}_{\tau}) - \mathcal{F}(\pi) = \mathcal{O}\left(\frac{D\sqrt{G_{t-1}\theta_{t,\delta} + G^2 \theta_{t,\delta}^2}}{T}\log_{+}\left(\frac{D}{r_{\varepsilon}}\right)\right)=\mathcal{O}\left(\frac{DG}{\sqrt{T}}\theta_{t,\delta}\log_{+}\left(\frac{D}{r_{\varepsilon}}\right)\right)
    \end{equation}
\end{corollary}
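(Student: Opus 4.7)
The plan is to mirror the derivation of Corollary \ref{corr:dog-wgd-bound-2} from Proposition \ref{prop:dog-wgd-bound-1}, but now starting from the stochastic bound in Proposition \ref{prop:dog-wgd-bound-1-stoc}. The key observation is that $G$ is defined here as an (almost sure) upper bound $G = \sup_{0\leq t\leq T}\|\hat{\xi}_t\|$, so the term $\mathbb{P}(\bar{g}_T > G)$ appearing in Proposition \ref{prop:dog-wgd-bound-1-stoc} vanishes, and only the probability $\mathbb{P}(\bar{r}_T > D)$ survives.

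First, I would apply Proposition \ref{prop:dog-wgd-bound-1-stoc} at $t=\tau$: with probability at least $1-\delta-\mathbb{P}(\bar{g}_T>G)=1-\delta$,
\begin{equation}
\mathcal{F}(\tilde{\mu}_\tau) - \mathcal{F}(\pi) = \mathcal{O}\!\left(\frac{(d_0 + \bar{r}_\tau)\sqrt{G_{\tau-1} + G_{\tau-1}\theta_{\tau,\delta} + \theta_{\tau,\delta}^2 G^2}}{\sum_{s=0}^{\tau-1} \bar{r}_s/\bar{r}_\tau}\right).
\end{equation}
Restricting to the event $\{\bar{r}_T \leq D\}$ (and using $d_0 \leq D$ by hypothesis) yields $d_0 + \bar{r}_\tau \leq 2D$, reducing the numerator to $\mathcal{O}(D\sqrt{G_{\tau-1}\theta_{\tau,\delta} + \theta_{\tau,\delta}^2 G^2})$ (absorbing the $G_{\tau-1}$ term into the $G_{\tau-1}\theta_{\tau,\delta}$ term since $\theta_{\tau,\delta}\geq 1$).

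Next, I would lower-bound the denominator exactly as in the proof of Corollary \ref{corr:dog-wgd-bound-2}: by the definition of $\tau$ and the estimate of \citet[][Lemma 3]{ivgi2023dog},
\begin{equation}
\sum_{s=0}^{\tau-1} \frac{\bar{r}_s}{\bar{r}_\tau} = \max_{t\leq T}\sum_{s=0}^{t-1}\frac{\bar{r}_s}{\bar{r}_t} \geq \frac{1}{e}\cdot \frac{T}{\log_+(\bar{r}_T/r_\varepsilon)} \geq \frac{1}{e}\cdot\frac{T}{\log_+(D/r_\varepsilon)},
\end{equation}
valid on $\{\bar{r}_T\leq D\}$. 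Substituting this into the above bound gives the first asymptotic expression. For the second expression, I would finally note that $\bar{g}_T \leq G$ deterministically, so $G_{\tau-1} \leq \tau G^2 \leq T G^2$, and hence $\sqrt{G_{\tau-1}\theta_{\tau,\delta} + \theta_{\tau,\delta}^2 G^2} = \mathcal{O}(G\sqrt{T}\,\theta_{\tau,\delta})$, producing the final $\frac{DG}{\sqrt{T}}\theta_{\tau,\delta}\log_+(D/r_\varepsilon)$ rate.

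The overall probability of the event on which all three of these bounds hold is at least $1-\delta - \mathbb{P}(\bar{r}_T>D)$, matching the claim. No step here is conceptually hard: all the probabilistic work has already been absorbed into Proposition \ref{prop:dog-wgd-bound-1-stoc}, and the remaining arguments are the same deterministic manipulations as in the non-stochastic Corollary \ref{corr:dog-wgd-bound-2}. The only point requiring care is bookkeeping of the $\theta_{\tau,\delta}$ factor and making sure the two displayed rates match up through the elementary inequality $\sqrt{a+b}\leq \sqrt{a}+\sqrt{b}$ applied to $G_{\tau-1}\theta_{\tau,\delta}+\theta_{\tau,\delta}^2 G^2$.
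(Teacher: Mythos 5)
Your proposal is correct and follows exactly the paper's intended route: the paper's proof is the single sentence that the result follows from Proposition \ref{prop:dog-wgd-bound-1-stoc} in the same way that Corollary \ref{corr:dog-wgd-bound-2} follows from Proposition \ref{prop:dog-wgd-bound-1}, which is precisely what you reproduced step by step (noting that $\mathbb{P}(\bar{g}_T>G)=0$ by the definition of $G$, restricting to $\{\bar{r}_T\leq D\}$ for the deterministic part, lower-bounding the denominator via \citet[Lemma 3]{ivgi2023dog}, and bounding $G_{\tau-1}\leq TG^2$). The only cosmetic note is that the $G_{t-1},\theta_{t,\delta}$ appearing in the corollary statement should read $G_{\tau-1},\theta_{\tau,\delta}$, and you correctly evaluated at $\tau$.
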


\begin{proof}
    The result follows from Proposition \ref{prop:dog-wgd-bound-1-stoc} in the same way that Corollary \ref{corr:dog-wgd-bound-2} follows from Proposition \ref{prop:dog-wgd-bound-1}.
\end{proof}

\subsection{Forward Euler Discretization, Smooth Setting}
\label{sec:forward-euler-smooth}
We now turn our attention to the smooth setting. Similar to Euclidean gradient descent -- i.e., the forward Euler discretization of the Euclidean gradient flow -- in this case we can improve the convergence rate from $\smash{\mathcal{O}(T^{-\frac{1}{2}})}$ to $\smash{\mathcal{O}(T^{-1})}$.

\subsubsection{Deterministic Case: Constant Step Size}
Similar to before, we begin our analysis in the case where the step size in \eqref{eq:wasserstein-sub-grad-descent} is constant, that is, $\eta_t=\eta$ for all $t\geq 0$. 

\begin{lemma}
\label{lemma:f-descent-smooth}
Suppose that Assumption \ref{assumption:lsc}
and \ref{assumption:smooth} hold. Let $(\mu_t)_{t\geq 0}$ denote the sequence of measures defined by \eqref{eq:wasserstein-sub-grad-descent}, the forward Euler discretization of the WGF in \eqref{eq:wasserstein-grad-flow}. Suppose that $\eta_t=\eta$ for all $t\geq 0$. Then, for any $t\geq 0$, we have 
\begin{equation}
    \mathcal{F}(\mu_{t+1}) \leq \mathcal{F}(\mu_t) - \eta\left(1-\frac{L\eta}{2}\right) \int_{\mathbb{R}^d}\|\xi_t(x)\|^2\,\mathrm{d}\mu_t(x).  \label{eq:f-descent-smooth}
\end{equation}
In particular, for $\eta\in(0,\frac{1}{L}]$, we have that 
\begin{equation}
    \mathcal{F}(\mu_{t+1}) \leq \mathcal{F}(\mu_t) - \frac{\eta}{2}\int_{\mathbb{R}^d}\|\xi_t(x)\|^2\,\mathrm{d}\mu_t(x).  \label{eq:f-descent-smooth-2}
\end{equation}
\end{lemma}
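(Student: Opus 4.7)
My plan is to establish this as a Wasserstein-space analogue of the classical Euclidean descent lemma, where $L$-smoothness of a function $f$ yields $f(y) \leq f(x) + \langle \nabla f(x), y - x\rangle + \frac{L}{2}\|y-x\|^2$. The starting point will be to upgrade Assumption \ref{assumption:smooth} (a pointwise bound on the Wasserstein Hessian) to a global quadratic upper bound: for every $\mu \in \mathcal{D}(\mathcal{F})$ and $\nu \in \mathcal{P}_2(\mathcal{X})$,
\begin{equation}
    \mathcal{F}(\nu) \leq \mathcal{F}(\mu) + \int_{\mathbb{R}^d} \langle \nabla_{W_2}\mathcal{F}(\mu)(x), \boldsymbol{t}_\mu^\nu(x) - x\rangle\,\mathrm{d}\mu(x) + \frac{L}{2} W_2^2(\mu,\nu). \label{eq:proof-plan-descent}
\end{equation}
To prove \eqref{eq:proof-plan-descent}, I would take the constant speed geodesic $(\lambda_s^{\mu\to\nu})_{s\in[0,1]}$ joining $\mu$ and $\nu$, define $\varphi(s) = \mathcal{F}(\lambda_s^{\mu\to\nu})$, and apply Taylor's theorem in integral form: $\varphi(1) = \varphi(0) + \varphi'(0) + \int_0^1 (1-s)\varphi''(s)\,\mathrm{d}s$. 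The first-order term $\varphi'(0)$ gives the linear term above by the definition of the Wasserstein gradient and the formula $\dot{\lambda}_0^{\mu\to\nu} = \boldsymbol{t}_\mu^\nu - \mathrm{id}$; the integral is bounded by $\frac{L}{2}\|\boldsymbol{t}_\mu^\nu - \mathrm{id}\|_{L^2(\mu)}^2 = \frac{L}{2} W_2^2(\mu,\nu)$ using Assumption \ref{assumption:smooth} and the fact that geodesics have constant speed.

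Next, I would apply \eqref{eq:proof-plan-descent} with $\mu = \mu_t$ and $\nu = \mu_{t+1}$. By construction \eqref{eq:wasserstein-sub-grad-descent}, the map $T_t := \mathrm{id} - \eta \xi_t$ pushes $\mu_t$ to $\mu_{t+1}$. Since $\xi_t = \nabla_{W_2}\mathcal{F}(\mu_t)$ is itself a gradient field (it is the gradient of the first variation, cf. \eqref{eq:wasserstein-first-var}), we have $T_t = \nabla \phi_t$ where $\phi_t(x) = \tfrac{1}{2}\|x\|^2 - \eta \, \delta\mathcal{F}(\mu_t)/\delta\mu(x)$. A second-variation argument, combined with the $L$-smoothness of $\mathcal{F}$ translated into $L$-smoothness of $\delta\mathcal{F}(\mu_t)/\delta\mu$ in the Euclidean sense, shows $\phi_t$ is convex whenever $\eta \leq 1/L$, so by Brenier's theorem $T_t$ is the optimal transport map $\boldsymbol{t}_{\mu_t}^{\mu_{t+1}}$. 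Consequently
\begin{equation}
    \int_{\mathbb{R}^d} \langle \xi_t(x), \boldsymbol{t}_{\mu_t}^{\mu_{t+1}}(x) - x\rangle\,\mathrm{d}\mu_t(x) = -\eta \int_{\mathbb{R}^d} \|\xi_t(x)\|^2\,\mathrm{d}\mu_t(x),
\end{equation}
and $W_2^2(\mu_t,\mu_{t+1}) = \eta^2 \int \|\xi_t(x)\|^2\,\mathrm{d}\mu_t(x)$. Substituting into \eqref{eq:proof-plan-descent} and rearranging gives \eqref{eq:f-descent-smooth}; restricting to $\eta \in (0, 1/L]$ (so that $1 - L\eta/2 \geq 1/2$) yields \eqref{eq:f-descent-smooth-2}.

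The main obstacle is the first step, namely turning the bilinear Hessian bound of Assumption \ref{assumption:smooth} into the global descent inequality \eqref{eq:proof-plan-descent} cleanly, and identifying $T_t$ as the optimal transport map. The Hessian-to-quadratic-bound step requires sufficient regularity of $s \mapsto \mathcal{F}(\lambda_s^{\mu\to\nu})$ (second differentiability along the geodesic), which is standard under our running assumptions but deserves care; alternatively, one can invoke the characterization of $L$-smoothness via \eqref{eq:l-smoothness-grad} and integrate, bypassing the Hessian entirely. The optimality of $T_t$ follows once $\eta \leq 1/L$ ensures convexity of $\phi_t$, which is precisely where the smoothness constant $L$ enters the threshold on $\eta$.
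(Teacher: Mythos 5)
Your proposal takes a genuinely different route from the paper's, and the detour introduces a gap that breaks the statement for part of the claimed range of $\eta$. The paper does \emph{not} establish a geodesic descent lemma and then plug in the OT map; instead it integrates $\frac{\mathrm{d}}{\mathrm{d}s}\mathcal{F}(\nu_s)$ along the \emph{transport interpolation} $\nu_s=\bigl((1-s)\,\mathrm{id}+s\,T_\eta\bigr)_{\#}\mu_t$ with $T_\eta=\mathrm{id}-\eta\xi_t$ (which is not a Wasserstein geodesic in general), adds and subtracts $\nabla_{W_2}\mathcal{F}(\mu_t)$, and bounds the remainder via Cauchy--Schwarz and the Lipschitz form of smoothness, cf.\ \eqref{eq:l-smoothness-grad} (applied with the given transport map $S$, not necessarily optimal). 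That argument is agnostic to whether $T_\eta$ is optimal, and delivers \eqref{eq:f-descent-smooth} for \emph{every} $\eta>0$.

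The gap in your approach is the reliance on Brenier's theorem to identify $T_t=\mathrm{id}-\eta\xi_t$ with the optimal transport map $\boldsymbol{t}_{\mu_t}^{\mu_{t+1}}$. Convexity of $\phi_t(x)=\tfrac12\|x\|^2-\eta\,\tfrac{\delta\mathcal{F}(\mu_t)}{\delta\mu}(x)$ demands $\nabla^2\phi_t\succeq 0$, i.e.\ $\eta\,\nabla^2\tfrac{\delta\mathcal{F}(\mu_t)}{\delta\mu}\preceq \mathbf{I}$, which you only get for $\eta\le 1/L$ (and even then, only under an assumption on the \emph{Euclidean} Hessian of the first variation, which is strictly stronger than the operator bound on $\mathrm{Hess}_{W_2}\mathcal{F}$ in Assumption \ref{assumption:smooth}). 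Consequently, your argument proves the first inequality only for $\eta\in(0,1/L]$, whereas \eqref{eq:f-descent-smooth} is stated without that restriction and is genuinely invoked with $\eta\in(0,2/L)$ in Proposition \ref{prop:f-descent-smooth-nesterov}. For $\eta>1/L$, the map $T_\eta$ may fail to be a gradient of a convex function, $W_2^2(\mu_t,\mu_{t+1})$ may be strictly smaller than $\eta^2\int\|\xi_t\|^2\,\mathrm{d}\mu_t$, and the term $\int\langle\xi_t,\,\boldsymbol{t}_{\mu_t}^{\mu_{t+1}}-\mathrm{id}\rangle\,\mathrm{d}\mu_t$ is no longer equal to $-\eta\int\|\xi_t\|^2\,\mathrm{d}\mu_t$; bounding it by Cauchy--Schwarz loses the sign structure and does not recover \eqref{eq:f-descent-smooth}.

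There is a clean fix that keeps the spirit of your intermediate lemma: state the quadratic upper bound with an \emph{arbitrary} transport map $S$ pushing $\mu$ to $\nu$ rather than with $\boldsymbol{t}_\mu^\nu$,
\begin{equation}
\mathcal{F}(\nu)\;\le\;\mathcal{F}(\mu)+\int_{\mathbb{R}^d}\bigl\langle \nabla_{W_2}\mathcal{F}(\mu)(x),\,S(x)-x\bigr\rangle\,\mathrm{d}\mu(x)+\frac{L}{2}\,\|S-\mathrm{id}\|_{L^2(\mu)}^2,
\end{equation}
established by integrating along $\nu_s=\bigl((1-s)\mathrm{id}+sS\bigr)_{\#}\mu$ exactly as in the paper. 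Taking $S=T_\eta$ and noting $\|T_\eta-\mathrm{id}\|_{L^2(\mu_t)}^2=\eta^2\int\|\xi_t\|^2\,\mathrm{d}\mu_t$ then gives \eqref{eq:f-descent-smooth} for all $\eta$ without Brenier, which is precisely the paper's argument. Your geodesic-only version buys a tighter bound (since $W_2^2(\mu_t,\mu_{t+1})\le\eta^2\int\|\xi_t\|^2\,\mathrm{d}\mu_t$), but that extra tightness is not needed here and comes at the cost of the unwanted restriction on $\eta$ and a stronger smoothness hypothesis.
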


\begin{proof}
    The proof proceeds in a similar fashion to the proof of Lemma \ref{lemma:g-descent-smooth}. For each $s\in[0,1]$, define the family of transport maps $\Phi_{s}:\mathbb{R}^d\rightarrow\mathbb{R}^d$ according to 
    \begin{equation}
        \Phi_s(x) = x + s\left(T_{\eta}(x) - x\right), \quad \quad T_{\eta}(x) = x-\eta \xi_t(x).
    \end{equation}
    We clearly then have $\Phi_{0}(x) = x$ and $\Phi_1(x) = x - \eta \xi_t(x)$. In addition, define the curve $(\nu_s)_{s\in[0,1]}$ as the pushforward of $\mu_t$ under this transport map. That is, $\nu_s = (\Phi_s)_{\#}\mu_t$, with $\nu_0 = \mu_t$ and $\nu_1 = \mu_{t+1}$. This curve solves the continuity equation $\partial_s \nu_s + \nabla \cdot (\nu_s v_s) = 0$, with the velocity field $v_s:\mathbb{R}^d\rightarrow\mathbb{R}^d$ given by
    \begin{equation}
        v_s(x) = \partial_{s}\Phi_s\left(\Phi_s^{-1}(x)\right) = \left(T_{\eta} - \mathrm{id}\right) \circ \Phi_{s}^{-1}(x)
    \end{equation}
    Using the chain rule on the Wasserstein space \citep[e.g.,][Section 10.1.2]{ambrosio2008gradient}, we can then compute 
    \begin{equation}
        \frac{\mathrm{d}}{\mathrm{d}s}\mathcal{F}(\nu_s) = \int_{\mathbb{R}^d} \left\langle \nabla_{W_2}\mathcal{F}(\nu_s)(x), v_s(x) \right\rangle \nu_s(\mathrm{d}x)
    \end{equation}
    Thus, integrating over $s\in[0,1]$, before pulling back to the reference measure $\nu_0 = \mu_t$, we have that 
    \begin{align}
        \mathcal{F}(\nu_1) - \mathcal{F}(\nu_0)&=\int_{0}^{1} \left[ \int_{\mathbb{R}^d} \left\langle \nabla_{W_2}\mathcal{F}(\nu_s)(x), v_s(x)\right\rangle \nu_s(\mathrm{d}x)\right]\mathrm{d}s \\
        &=\int_0^1 \left[\int_{\mathbb{R}^d}\left\langle \nabla_{W_2}\mathcal{F}(\nu_s) \circ \Phi_s(x), v_s\circ \Phi_s(x) \right\rangle \, (\Phi_s^{-1})_{\#} \nu_s(\mathrm{d}x)\right]\mathrm{d}s \\
        &= \int_0^1 \left[\int_{\mathbb{R}^d} \left\langle \nabla_{W_2}\mathcal{F}(\nu_s) \circ \Phi_s(x), T_{\eta}(x) -x\right\rangle \nu_0(\mathrm{d}x) \right]\mathrm{d}s.
    \end{align}
    Adding and subtracting the Wasserstein gradient of $\mathcal{F}$ evaluated at the reference measure, viz $\nabla_{W_2}\mathcal{F}(\nu_0):=\nabla_{W_2}\mathcal{F}(\mu_t)$, inside the inner product, the previous display rewrites as 
    \begin{align}
        \mathcal{F}(\nu_1) - \mathcal{F}(\nu_0)&=\int_0^1 \left[\int_{\mathbb{R}^d} \left\langle \nabla_{W_2}\mathcal{F}(\nu_0) , T_{\eta}(x) - x\right\rangle \nu_0(\mathrm{d}x) \right]\mathrm{d}s \\
        &+\int_0^1 \left[\int_{\mathbb{R}^d} \left\langle \left(\nabla_{W_2}\mathcal{F}(\nu_s) \circ \Phi_s - \nabla_{W_2}\mathcal{F}(\nu_0)\right)(x), T_{\eta}(x) - x\right\rangle \nu_0(\mathrm{d}x) \right]\mathrm{d}s \nonumber \\
        &=-\eta \int_{\mathbb{R}^d} \| \xi_t(x)\|^2 \nu_0(\mathrm{d}x)  \\
        &+\int_0^1 \left[\int_{\mathbb{R}^d} \left\langle \left(\nabla_{W_2}\mathcal{F}(\nu_s) \circ \Phi_s - \nabla_{W_2}\mathcal{F}(\nu_0)\right)(x), T_{\eta}(x) - x\right\rangle \nu_0(\mathrm{d}x) \right]\mathrm{d}s \nonumber
    \end{align}
    where in the second line we have used the fact that $T_{\eta} - \mathrm{id} = -\eta \xi_t$, and $\nabla_{W_2}\mathcal{F}(\nu_0)=\nabla_{W_2}\mathcal{F}(\mu_t) = \xi_t$. For the second term, using Cauchy-Schwarz and the smoothness assumption (Assumption \ref{assumption:smooth}), we have 
    \begin{align}
        \mathrm{(II)} &\leq \int_0^1 \|\nabla_{W_2}\mathcal{F}(\nu_s) \circ \Phi_s - \nabla_{W_2}\mathcal{F}(\nu_0)\|_{L^2(\nu_0)} \|T_{\eta} - \mathrm{id}\|_{L^2(\nu_0)} \mathrm{d}s \\
        &\leq \int_0^1 L\|\Phi_s - \mathrm{id}\|_{L^2(\nu_0)} \|T_{\eta} - \mathrm{id}\|_{L^2(\nu_0)}\mathrm{d}s = L\int_0^1 s\|T_{\eta} - \mathrm{id}\|^2_{L^2(\nu_0)}\mathrm{d}s \\
        &=\frac{L}{2}\|T_{\eta} - \mathrm{id}\|^2_{L^2(\nu_0)} = \frac{L\eta^2}{2} \int_{\mathbb{R}^d}\|\xi_t(x)\|^2\,\mathrm{d}\mu_t(x),
    \end{align}
    where in the second line we have used the fact that $\|\Phi_{s} - \mathrm{id}\|_{L^2(\nu_0)} = s\|T_{\eta} - \mathrm{id}\|_{L^2(\nu_0)}$, and in the final line the fact that $\|T_{\eta} - \mathrm{id}\|^2_{L^2(\nu_0)} = \|\eta \xi_t\|_{L^2(\mu_t)}^2 = \eta^2 \|\xi_t\|_{L^2(\mu_t)}^2$. 
    Substituting this into the previous inequality, and using the fact that $\nu_0:=\mu_{t}$, $\nu_1:=\mu_{t+1}$, we arrive at precisely the required bound:
    \begin{align}
        \mathcal{F}(\mu_{t+1}) - \mathcal{F}(\mu_t) &\leq -\eta \int_{\mathbb{R}^d} \| \xi_t(x)\|^2 \,\mathrm{d}\mu_t(x) + \frac{L\eta^2}{2} \int \|\xi_t(x)\|^2\,\mathrm{d}\mu_t(x) \\
        &= -\eta\left(1-\frac{L\eta}{2}\right)\int_{\mathbb{R}^d}\|\xi_t(x)\|^2\,\mathrm{d}\mu_t(x).
    \end{align}
    The second bound follows straightforwardly, noting that $-(1-\frac{L\eta}{2})\leq-(1-\frac{L\frac{1}{L}}{2})=-(1-\frac{1}{2}) = -\frac{1}{2}$ for all $\eta\in(0,\frac{1}{L}]$.
\end{proof}

\begin{proposition}
\label{prop:f-descent-smooth}
   Suppose that Assumptions \ref{assumption:lsc}, \ref{assumption:geo-convex}, and \ref{assumption:smooth} hold. Let $(\mu_t)_{t\geq 0}$ denote the sequence of measures defined by \eqref{eq:wasserstein-sub-grad-descent}, the forward Euler discretization of the WGF in \eqref{eq:wasserstein-grad-flow}. Suppose that $\eta_t=\eta$ for all $t\geq 0$, with $\eta\in(0,\frac{1}{L}]$. Then, for all $\pi\in\mathcal{P}_2(\mathbb{R}^d)$, we have that 
    \begin{align}
    \label{eq:f-descent-smooth-prop}
        \mathcal{F}(\mu_{t+1}) - \mathcal{F}(\pi) \leq \frac{W^2_2(\mu_t,\pi) - W_2^2(\mu_{t+1},\pi)}{2\eta}. 
    \end{align}
\end{proposition}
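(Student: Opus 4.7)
The plan is to combine the smooth descent inequality from Lemma \ref{lemma:f-descent-smooth} with the one-step evolution variational inequality from Lemma \ref{lemma:evi}, and then observe that for the step size range $\eta\in(0,1/L]$ the gradient terms cancel to a non-positive quantity that can be dropped.

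More concretely, I would first recall from Lemma \ref{lemma:evi} (which relies only on lower semicontinuity and geodesic convexity of $\mathcal{F}$) that
\begin{equation*}
\mathcal{F}(\mu_t) - \mathcal{F}(\pi) \leq \frac{W_2^2(\mu_t,\pi) - W_2^2(\mu_{t+1},\pi)}{2\eta} + \frac{\eta}{2}\int_{\mathbb{R}^d}\|\xi_t(x)\|^2\,\mathrm{d}\mu_t(x).
\end{equation*}
Next, I would invoke Lemma \ref{lemma:f-descent-smooth}, which uses the $L$-smoothness of $\mathcal{F}$ to give the descent inequality
\begin{equation*}
\mathcal{F}(\mu_{t+1}) - \mathcal{F}(\mu_t) \leq -\eta\Big(1-\tfrac{L\eta}{2}\Big)\int_{\mathbb{R}^d}\|\xi_t(x)\|^2\,\mathrm{d}\mu_t(x).
\end{equation*}

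Adding the two displayed inequalities, the left-hand side telescopes to $\mathcal{F}(\mu_{t+1}) - \mathcal{F}(\pi)$, and the coefficient of $\int \|\xi_t\|^2\,\mathrm{d}\mu_t$ on the right-hand side becomes
\begin{equation*}
\frac{\eta}{2} - \eta\Big(1-\tfrac{L\eta}{2}\Big) = -\frac{\eta}{2}(1-L\eta).
\end{equation*}
Since $\eta\in(0,1/L]$, we have $1-L\eta\geq 0$, so this coefficient is non-positive. Dropping this term yields \eqref{eq:f-descent-smooth-prop}.

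The proof is essentially a one-line manipulation once the two lemmas are in hand; there is no serious obstacle. The only point that requires a moment's attention is checking the sign of the combined coefficient, which is exactly where the restriction $\eta\leq 1/L$ is used — precisely analogous to the Euclidean setting, where $L$-smoothness plus a small-enough step size converts a descent-lemma bound into a one-step contraction of $\mathcal{F}(\mu_{t+1}) - \mathcal{F}(\pi)$ against $W_2^2$. The fact that no additive $L\eta d$ bias term appears here (in contrast to Corollary \ref{corollary:f-pi-bound-heat-flow-smooth} for the forward-flow case) reflects the absence of a heat-flow half-step in the pure forward Euler scheme.
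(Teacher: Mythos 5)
Your proof is correct and follows essentially the same route as the paper: both combine Lemma \ref{lemma:evi} with the descent inequality of Lemma \ref{lemma:f-descent-smooth} and use $\eta\le 1/L$ to discard the resulting non-positive gradient term (the paper uses the simplified form $\mathcal{F}(\mu_{t+1})\le\mathcal{F}(\mu_t)-\tfrac{\eta}{2}\int\|\xi_t\|^2\,\mathrm{d}\mu_t$, so the coefficient cancels exactly, whereas you keep the sharper $-\tfrac{\eta}{2}(1-L\eta)$ and drop it at the end — a trivial difference). No gaps.
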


\begin{proof}
From Lemma \ref{lemma:evi}, cf. \eqref{eq:evi}, we have that 
    \begin{equation}
        \mathcal{F}(\mu_t) - \mathcal{F}(\pi) \leq \frac{W^2_2(\mu_t,\pi) - W_2^2(\mu_{t+1},\pi)}{2\eta} + \frac{\eta}{2}\int_{\mathbb{R}^d}\|\xi_t(x)\|^2\,\mathrm{d}\mu_t(x).
        \label{eq:evi-recall}
    \end{equation}
This implies, in particular, that 
    \begin{equation}
        -\frac{\eta}{2} \int_{\mathbb{R}^d}\|\xi_t(x)\|^2\,\mathrm{d}\mu_t(x)\leq \frac{W^2_2(\mu_t,\pi) - W_2^2(\mu_{t+1},\pi)}{2\eta} - \left(\mathcal{F}(\mu_t) - \mathcal{F}(\pi)\right). \label{eq:squared-grad-bound}
    \end{equation}
Meanwhile, from the first inequality from Lemma \ref{lemma:f-descent-smooth}, we have that
   \begin{align}
       \mathcal{F}(\mu_{t+1}) \leq \mathcal{F}(\mu_t) - \frac{\eta}{2}\int_{\mathbb{R}^d}\|\xi_t(x)\|^2\,\mathrm{d}\mu_t(x).  \label{eq:f-descent-smooth-2-recall}
   \end{align}
Substituting \eqref{eq:squared-grad-bound} into \eqref{eq:f-descent-smooth-2-recall}, it follows that 
    \begin{align}
        \mathcal{F}(\mu_{t+1}) 
        &\leq\mathcal{F}(\mu_t) + \left[\frac{W^2_2(\mu_t,\pi) - W_2^2(\mu_{t+1},\pi)}{2\eta} - \left(\mathcal{F}(\mu_t) - \mathcal{F}(\pi)\right)\right] \\
        &=\frac{W^2_2(\mu_t,\pi) - W_2^2(\mu_{t+1},\pi)}{2\eta} + \mathcal{F}(\pi).
    \end{align}
    That is, after rearrangement, precisely the required inequality.
    \begin{equation}
       \mathcal{F}(\mu_{t+1}) - \mathcal{F}(\pi) \leq \frac{W^2_2(\mu_t,\pi) - W_2^2(\mu_{t+1},\pi)}{2\eta}.
    \end{equation}
\end{proof}

\begin{corollary}
\label{corollary:f-sum-bound-smooth}
    Suppose that Assumptions \ref{assumption:lsc}, \ref{assumption:geo-convex}, and \ref{assumption:smooth} hold. Let $(\mu_t)_{t\geq 0}$ denote the sequence of measures defined by \eqref{eq:wasserstein-sub-grad-descent}, the forward Euler discretization of the WGF in \eqref{eq:wasserstein-grad-flow}. Suppose that $\eta_t=\eta$ for all $t\geq 0$, with $\eta\in(0,\frac{1}{L}]$. Then, for all $\pi\in\mathcal{P}_2(\mathbb{R}^d)$, we have that
    \begin{equation}
        \sum_{t=1}^T \mathcal{F}(\mu_t) - \sum_{t=1}^T \mathcal{F}(\pi) \leq \frac{W_2^2(\mu_0,\pi) - W_2^2(\mu_{T},\pi)}{2\eta}
    \end{equation}
\end{corollary}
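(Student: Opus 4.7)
The plan is simply to apply Proposition \ref{prop:f-descent-smooth} and telescope. Since the assumptions of Corollary \ref{corollary:f-sum-bound-smooth} (lsc, geodesic convexity, smoothness, and constant step size $\eta \in (0, 1/L]$) exactly match those of Proposition \ref{prop:f-descent-smooth}, I can invoke \eqref{eq:f-descent-smooth-prop} directly at each time step $t$.

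First, I would apply Proposition \ref{prop:f-descent-smooth} at each $t \in \{0, 1, \ldots, T-1\}$, yielding
\[
\mathcal{F}(\mu_{t+1}) - \mathcal{F}(\pi) \leq \frac{W_2^2(\mu_t,\pi) - W_2^2(\mu_{t+1},\pi)}{2\eta}.
\]
Then I would sum both sides over $t = 0, 1, \ldots, T-1$. On the left, reindexing $s = t+1$ gives $\sum_{t=0}^{T-1}\mathcal{F}(\mu_{t+1}) = \sum_{s=1}^T \mathcal{F}(\mu_s)$, and similarly $\sum_{t=0}^{T-1}\mathcal{F}(\pi) = \sum_{s=1}^T \mathcal{F}(\pi)$. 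On the right, the sum telescopes to
\[
\sum_{t=0}^{T-1}\frac{W_2^2(\mu_t,\pi) - W_2^2(\mu_{t+1},\pi)}{2\eta} = \frac{W_2^2(\mu_0,\pi) - W_2^2(\mu_T,\pi)}{2\eta},
\]
which delivers the claimed bound.

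There is no real obstacle here: the work has already been done in Proposition \ref{prop:f-descent-smooth}, where the per-step descent inequality absorbs the $\frac{\eta}{2}\|\xi_t\|^2_{L^2(\mu_t)}$ term via the $L$-smoothness argument. The only thing to be careful about is the indexing convention (the bound in Proposition \ref{prop:f-descent-smooth} is on $\mathcal{F}(\mu_{t+1})$, not $\mathcal{F}(\mu_t)$), but this is handled cleanly by the reindexing step above.
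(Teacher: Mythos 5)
Your proof is correct and takes exactly the same route as the paper: invoke Proposition \ref{prop:f-descent-smooth} at each step, reindex, and telescope. The only cosmetic difference is that the paper reindexes the left-hand sum before applying the proposition while you reindex after summing; this is immaterial.
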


\begin{proof}
    The result is a straightforward consequence of the inequality obtained in Proposition \ref{prop:f-descent-smooth}, cf. \eqref{eq:f-descent-smooth}. In particular, summing \eqref{eq:f-descent-smooth}, and cancelling like terms in the telescoping sum, we have that 
    \begin{align}
        \sum_{t=1}^T \mathcal{F}(\mu_t) - \sum_{t=1}^T \mathcal{F}(\pi) &= \sum_{t=0}^{T-1} \mathcal{F}(\mu_{t+1}) - \sum_{t=0}^{T-1} \mathcal{F}(\pi) \\
        &\leq \sum_{t=0}^{T-1}\left[\frac{W^2_2(\mu_t,\pi) - W_2^2(\mu_{t+1},\pi)}{2\eta}\right] =\frac{W_2^2(\mu_0,\pi) - W_2^2(\mu_{T},\pi)}{2\eta}. 
    \end{align}
\end{proof}

\begin{proposition}
\label{prop:smooth-convergence-rate}
    Suppose that Assumptions \ref{assumption:lsc}, \ref{assumption:geo-convex}, and \ref{assumption:smooth} hold. Let $(\mu_t)_{t\geq 0}$ denote the sequence of measures defined by \eqref{eq:wasserstein-sub-grad-descent}, the forward Euler discretization of the WGF in \eqref{eq:wasserstein-grad-flow}. Suppose that $\eta_t=\eta$ for all $t\geq 0$, with $\eta\in(0,\frac{1}{L}]$. Let $\bar{\mu}_T$ be the average iterate defined according to \eqref{eq:bar-mu-t-0} - \eqref{eq:bar-mu-t}. Then, for all $\pi\in\mathcal{P}_2(\mathbb{R}^d)$, we have that
    \begin{equation}
        \mathcal{F}(\mu_T) - \mathcal{F}(\pi)\leq \frac{W_2^2(\mu_0,\pi)}{2\eta T}, \quad \quad \mathcal{F}(\bar{\mu}_T) - \mathcal{F}(\pi)\leq \frac{W_2^2(\mu_0,\pi)}{2\eta T}. \label{eq:average-bound-smooth}
    \end{equation}
\end{proposition}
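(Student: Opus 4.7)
The plan is to derive both bounds from Corollary \ref{corollary:f-sum-bound-smooth}, which, for $\eta\in(0,1/L]$, furnishes the aggregate estimate
\[
\sum_{t=1}^{T}\bigl[\mathcal{F}(\mu_{t})-\mathcal{F}(\pi)\bigr]\leq \frac{W_2^2(\mu_0,\pi)-W_2^2(\mu_T,\pi)}{2\eta}\leq \frac{W_2^2(\mu_0,\pi)}{2\eta}.
\]
Every subsequent step just converts this cumulative bound into a statement about a single measure.

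For the last-iterate bound, first I would invoke the tightened form of Lemma \ref{lemma:f-descent-smooth}, cf.\ \eqref{eq:f-descent-smooth-2}, which establishes monotone descent $\mathcal{F}(\mu_{t+1})\leq \mathcal{F}(\mu_t)$ whenever $\eta\in(0,1/L]$. Hence $\mathcal{F}(\mu_T)\leq \mathcal{F}(\mu_t)$ for all $1\leq t\leq T$, so each summand above is at least $\mathcal{F}(\mu_T)-\mathcal{F}(\pi)$. This gives $T[\mathcal{F}(\mu_T)-\mathcal{F}(\pi)]\leq W_2^2(\mu_0,\pi)/(2\eta)$, and dividing by $T$ produces the first inequality in \eqref{eq:average-bound-smooth}.

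For the averaged-iterate bound, the plan is to combine the Jensen-type inequality in Lemma \ref{lemma:average-bound},
\[
\mathcal{F}(\bar\mu_{T})-\mathcal{F}(\pi)\leq \frac{1}{T}\sum_{t=0}^{T-1}\bigl[\mathcal{F}(\mu_t)-\mathcal{F}(\pi)\bigr],
\]
(which follows from the geodesic convexity of $\mathcal{F}$ and the recursive construction of $\bar{\mu}_T$) with the same aggregate bound above, after a harmless one-step index shift. The telescoping argument behind Corollary \ref{corollary:f-sum-bound-smooth} is translation-invariant in the time index, so summing Proposition \ref{prop:f-descent-smooth} over $t=0,\ldots,T-1$ equivalently controls the sum over any $T$ consecutive iterates and in particular bounds $\sum_{t=0}^{T-1}[\mathcal{F}(\mu_t)-\mathcal{F}(\pi)]$ by $W_2^2(\mu_0,\pi)/(2\eta)$ (up to the nonnegative remainder $W_2^2(\mu_T,\pi)/(2\eta)$, which we simply drop). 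Dividing by $T$ yields the second inequality.

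The only genuine subtlety is this indexing alignment between the Jensen average (which by \eqref{eq:bar-mu-t-0}--\eqref{eq:bar-mu-t} is built from $\mu_0,\ldots,\mu_{T-1}$) and the cumulative bound (which naturally controls $\mu_1,\ldots,\mu_T$). I do not expect this to be a real obstacle: the descent property gives the last-iterate bound directly, and for the average it suffices to either re-index Proposition \ref{prop:f-descent-smooth} by one step or regard $\bar{\mu}_T$ as the geodesic running average of $\mu_1,\ldots,\mu_T$. Either way, the smooth-case $\mathcal{O}(1/T)$ rate emerges from a single application of the telescoping EVI-type inequality \eqref{eq:f-descent-smooth-prop} together with monotonicity of the objective along the trajectory.
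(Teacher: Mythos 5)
Your proof of the last-iterate bound is essentially identical to the paper's: use monotone descent of $\mathcal{F}(\mu_t)$ (Lemma~\ref{lemma:f-descent-smooth}) to write $\mathcal{F}(\mu_T)-\mathcal{F}(\pi)\le \tfrac1T\sum_{t=1}^T[\mathcal{F}(\mu_t)-\mathcal{F}(\pi)]$, then apply Corollary~\ref{corollary:f-sum-bound-smooth}. That part is correct.

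For the averaged bound, the indexing issue you flag as ``the only genuine subtlety'' is real — in fact the paper's own proof glosses over it. Lemma~\ref{lemma:average-bound} produces $\tfrac1T\sum_{t=0}^{T-1}[\mathcal{F}(\mu_t)-\mathcal{F}(\pi)]$ (since $\bar{\mu}_T$ is built from $\mu_0,\dots,\mu_{T-1}$ per \eqref{eq:bar-mu-t-0}--\eqref{eq:bar-mu-t}), whereas Corollary~\ref{corollary:f-sum-bound-smooth} controls $\sum_{t=1}^T[\mathcal{F}(\mu_t)-\mathcal{F}(\pi)]$; the paper's displayed proof simply writes the latter sum in place of the former. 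By monotone descent, $\sum_{t=0}^{T-1}\ge \sum_{t=1}^{T}$, so the two sums do not interchange for free. Of your two proposed repairs, the first (``re-index Proposition~\ref{prop:f-descent-smooth} by one step'') does not actually close the gap: that proposition bounds $\mathcal{F}(\mu_{t+1})-\mathcal{F}(\pi)$ for $t\ge 0$, so the $t=0$ term $\mathcal{F}(\mu_0)-\mathcal{F}(\pi)$ is never covered, and applying the proposition at $t=-1$ would require a nonexistent $\mu_{-1}$. If you try to bound the stray $\mathcal{F}(\mu_0)-\mathcal{F}(\pi)$ term via smoothness (cf.\ \eqref{eq:init-pi-estimate}), you get $\tfrac{L}{2}W_2^2(\mu_0,\pi)\le \tfrac{1}{2\eta}W_2^2(\mu_0,\pi)$, which gives the right rate but doubles the stated constant. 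Your second repair — take $\bar{\mu}_T$ to be the geodesic running average of $\mu_1,\dots,\mu_T$ rather than $\mu_0,\dots,\mu_{T-1}$ — does fix the issue cleanly and matches what is done in the forward-flow case (\eqref{eq:bar-mu-t-forward-flow-1-first-def}--\eqref{eq:bar-mu-t-forward-flow-2-first-def}, Lemma~\ref{lemma:average}); this is presumably what the paper intended. So your skepticism is well placed, your first proposed fix is not viable, and your second is the correct one.
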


\begin{proof}
    Using Corollary \ref{corollary:f-sum-bound-smooth}, and the fact that $\mathcal{F}(\mu_t)$ is decreasing, we have that 
    \begin{equation}
    \mathcal{F}(\mu_T) - \mathcal{F}(\pi) \leq \frac{1}{T}\left[\sum_{t=1}^T \mathcal{F}(\mu_t) - \sum_{t=1}^T\mathcal{F}(\pi)\right] \leq \frac{W_2^2(\mu_0,\pi)}{2\eta T}.
    \end{equation}
    Meanwhile, using Corollary \ref{corollary:f-sum-bound-smooth}, and Lemma \ref{lemma:average-bound}, we have that
    \begin{equation}
    \mathcal{F}(\bar{\mu}_T) - \mathcal{F}(\pi) \leq \frac{1}{T}\left[\sum_{t=1}^T \mathcal{F}(\mu_t) - \sum_{t=1}^T\mathcal{F}(\pi)\right] \leq \frac{W_2^2(\mu_0,\pi)}{2\eta T}.
    \end{equation}
\end{proof}

\begin{corollary}
\label{corollary:constant-step-smooth}
    Suppose that Assumptions \ref{assumption:lsc}, \ref{assumption:geo-convex}, and \ref{assumption:smooth} hold. Let $(\mu_t)_{t\geq 0}$ denote the sequence of measures defined by \eqref{eq:wasserstein-sub-grad-descent}, the forward Euler discretization of the WGF in \eqref{eq:wasserstein-grad-flow}. Suppose that $\eta_t=\eta$ for all $t\geq 0$, with $\eta\in(0,\frac{1}{L}]$. Let $\bar{\mu}_T$ be the average iterate defined in \eqref{eq:bar-mu-t-forward-flow-1-first-def} - \eqref{eq:bar-mu-t-forward-flow-2-first-def}. Then, for all $\smash{\pi\in\mathcal{P}_2(\mathbb{R}^d)}$, the upper bound in \eqref{eq:average-bound-smooth} is minimized when $\smash{\eta = \frac{1}{L}}$. Moreover, this choice of $\eta$ guarantees
    \begin{equation}
        \mathcal{F}(\mu_T) - \mathcal{F}(\pi)\leq \frac{LW_2^2(\mu_{0},\pi)}{2T} ,\quad\quad \mathcal{F}(\bar{\mu}_T) - \mathcal{F}(\pi) \leq  \frac{LW_2^2(\mu_{0},\pi)}{2T}. 
        \label{eq:optimal-average-bound-smooth}
    \end{equation}
\end{corollary}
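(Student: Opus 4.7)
The plan is to derive the corollary as a direct optimization of the upper bound established in Proposition \ref{prop:smooth-convergence-rate}. Recall from that proposition that, under the stated assumptions and for any $\eta \in (0, \tfrac{1}{L}]$, both
\begin{equation}
\mathcal{F}(\mu_T) - \mathcal{F}(\pi) \leq \frac{W_2^2(\mu_0,\pi)}{2\eta T} \quad \text{and} \quad \mathcal{F}(\bar{\mu}_T) - \mathcal{F}(\pi) \leq \frac{W_2^2(\mu_0,\pi)}{2\eta T}.
\end{equation}
This makes the argument essentially a one-line optimization: the map $\eta \mapsto \tfrac{W_2^2(\mu_0,\pi)}{2\eta T}$ is strictly decreasing on $(0,\infty)$, so its minimum over the feasible interval $(0, \tfrac{1}{L}]$ is attained at the right endpoint $\eta = \tfrac{1}{L}$.

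Substituting $\eta = \tfrac{1}{L}$ back into the bounds yields the required inequalities
\begin{equation}
\mathcal{F}(\mu_T) - \mathcal{F}(\pi) \leq \frac{L\,W_2^2(\mu_0,\pi)}{2T}, \qquad \mathcal{F}(\bar{\mu}_T) - \mathcal{F}(\pi) \leq \frac{L\,W_2^2(\mu_0,\pi)}{2T}.
\end{equation}
There is no real obstacle here, since the admissibility of $\eta = \tfrac{1}{L}$ is already built into the hypothesis $\eta \in (0, \tfrac{1}{L}]$ (the endpoint is included), and Proposition \ref{prop:smooth-convergence-rate} was proved under exactly this constraint. Hence no boundary or feasibility issue arises, and the corollary follows directly.
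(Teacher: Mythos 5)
Your proposal is correct and follows exactly the same route as the paper's own proof: invoke Proposition \ref{prop:smooth-convergence-rate}, observe that the bound $\eta \mapsto \frac{W_2^2(\mu_0,\pi)}{2\eta T}$ is decreasing so the constrained minimum over $(0,\tfrac{1}{L}]$ sits at the endpoint $\eta=\tfrac{1}{L}$, and substitute. Nothing further is needed.
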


\begin{proof}
The result follows immediately from Proposition \ref{prop:smooth-convergence-rate}. Clearly, the upper bound in \eqref{eq:average-bound-smooth} is decreasing as a function of the step size $\eta$. Thus, given the constraint that $\eta\in(0,\frac{1}{L}]$, it is minimized when $\eta=\frac{1}{L}$. Substituting back into \eqref{eq:average-bound-smooth} yields the bounds in \eqref{eq:optimal-average-bound-smooth}. 
\end{proof}

In fact, we can derive a more general bound which holds for all $\eta\in(0,\frac{2}{L})$. The following result provides the Wasserstein analogue of \citet[Proposition 2.1.14]{nesterov2018lectures}.

\begin{proposition}
\label{prop:f-descent-smooth-nesterov}
   Suppose that Assumptions \ref{assumption:lsc}, \ref{assumption:geo-convex}, and \ref{assumption:smooth} hold. Let $(\mu_t)_{t\geq 0}$ denote the sequence of measures defined by \eqref{eq:wasserstein-sub-grad-descent}, the forward Euler discretization of the WGF in \eqref{eq:wasserstein-grad-flow}. Suppose that $\eta_t=\eta$ for all $t\geq 0$, with $\eta\in(0,\frac{2}{L})$. Then, for all $\pi\in\mathcal{P}_2(\mathbb{R}^d)$, we have that 
    \begin{align}
    \label{eq:convergence-rate-smooth-nesterov}
    \mathcal{F}(\mu_T) - \mathcal{F}(\pi) \leq \frac{2W_2^2(\mu_0,\pi)\left(\mathcal{F}(\mu_0) - \mathcal{F}(\pi)\right)}{2W_2^2(\mu_0,\pi) + T\eta(2-L\eta) \left(\mathcal{F}(\mu_0)-\mathcal{F}(\pi)\right)}
\end{align}
\end{proposition}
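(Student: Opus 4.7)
The target inequality is the Wasserstein analogue of Nesterov's classical $\mathcal{O}(1/T)$ rate for smooth convex gradient descent (\cite{nesterov2018lectures}, Theorem 2.1.14), so the plan is to mimic that argument inside the Wasserstein geometry using the tools already established in the excerpt. Writing $\Delta_t := \mathcal{F}(\mu_t) - \mathcal{F}(\pi)$, the endpoint will be a one-step recursion of the form $\Delta_{t+1} \leq \Delta_t - \omega \Delta_t^2$ with $\omega = \eta(2-L\eta)/(2 W_2^2(\mu_0,\pi))$, and from there the standard reciprocal-telescoping trick produces exactly the stated bound.

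The steps I would carry out are as follows. First, invoke Lemma~\ref{lemma:f-descent-smooth}, which holds verbatim for any $\eta \in (0,2/L)$ since the required coefficient $\eta(1 - L\eta/2)$ is positive precisely on that range; this gives the descent inequality $\Delta_{t+1} \leq \Delta_t - \eta(1-L\eta/2)\|\xi_t\|_{L^2(\mu_t)}^2$. Second, apply the subgradient inequality from geodesic convexity (Assumption~\ref{assumption:geo-convex}) together with Cauchy--Schwarz to bound
\begin{equation}
\Delta_t \;\leq\; \int_{\mathbb{R}^d}\!\langle \xi_t(x),\, x - \boldsymbol{t}_{\mu_t}^\pi(x)\rangle\,\mathrm{d}\mu_t(x) \;\leq\; \|\xi_t\|_{L^2(\mu_t)}\, W_2(\mu_t,\pi).
\end{equation}
Third, establish that $W_2(\mu_t,\pi) \leq W_2(\mu_0,\pi)$ for all $t \leq T$ under the assumption $\eta \in (0, 2/L)$ (see the obstacle below). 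Combining these three ingredients yields $\|\xi_t\|_{L^2(\mu_t)}^2 \geq \Delta_t^2 / W_2^2(\mu_0,\pi)$, and substituting into the descent inequality produces $\Delta_{t+1} \leq \Delta_t - \omega \Delta_t^2$.

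Fourth, convert this into a linearly growing lower bound on $1/\Delta_t$: dividing by $\Delta_t \Delta_{t+1}$ and using $\Delta_{t+1} \leq \Delta_t$ (descent) gives $1/\Delta_{t+1} \geq 1/\Delta_t + \omega$. Telescoping over $t = 0,\ldots,T-1$ yields $1/\Delta_T \geq 1/\Delta_0 + T\omega$, i.e.\ $\Delta_T \leq \Delta_0/(1 + T\omega \Delta_0)$, which after clearing denominators is exactly \eqref{eq:convergence-rate-smooth-nesterov}. This final algebraic manipulation is routine.

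The main obstacle is the monotonicity $W_2(\mu_{t+1},\pi) \leq W_2(\mu_t,\pi)$ for the full range $\eta < 2/L$. Proposition~\ref{prop:f-descent-smooth} yields this immediately when $\eta \leq 1/L$, because it shows $W_2^2(\mu_{t+1},\pi) \leq W_2^2(\mu_t,\pi) - 2\eta \Delta_{t+1}$, but it is silent in the strictly larger regime $\eta \in (1/L, 2/L)$. The Euclidean analogue here is Baillon--Haddad cocoercivity, and the Wasserstein version I would prove directly: expanding $W_2^2(\mu_{t+1},\pi) \leq \|(\mathrm{id} - \eta \xi_t) - \boldsymbol{t}_{\mu_t}^\pi\|_{L^2(\mu_t)}^2$ gives
\begin{equation}
W_2^2(\mu_{t+1},\pi) \;\leq\; W_2^2(\mu_t,\pi) - 2\eta \!\int\! \langle \xi_t(x), x - \boldsymbol{t}_{\mu_t}^\pi(x)\rangle \mathrm{d}\mu_t(x) + \eta^2 \|\xi_t\|_{L^2(\mu_t)}^2,
\end{equation}
and one needs the cocoercivity estimate $\int\langle \xi_t, \mathrm{id} - \boldsymbol{t}_{\mu_t}^\pi\rangle\mathrm{d}\mu_t \geq \tfrac{1}{L}\|\xi_t\|_{L^2(\mu_t)}^2$, which follows by combining the smoothness bound from Assumption~\ref{assumption:smooth} (used as in Lemma~\ref{lemma:f-descent-smooth} to compare $\mathcal{F}(\mu_{t+1})$ and $\mathcal{F}(\mu_t)$) with the convexity inequality at $\mu_{t+1}$ relative to $\pi$; the resulting bound becomes $W_2^2(\mu_{t+1},\pi) \leq W_2^2(\mu_t,\pi) - \eta(2/L - \eta)\|\xi_t\|_{L^2(\mu_t)}^2$, which is non-positive exactly when $\eta < 2/L$. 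Establishing this Baillon--Haddad analogue cleanly in the Wasserstein setting is the crux of the proof; everything else reduces to algebraic manipulation along the lines of the classical Euclidean argument.
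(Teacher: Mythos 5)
Your proposal follows essentially the same route as the paper: geodesic convexity plus Cauchy--Schwarz gives $\Delta_t \leq \|\xi_t\|_{L^2(\mu_t)}\,W_2(\mu_0,\pi)$, this is fed into the descent inequality of Lemma~\ref{lemma:f-descent-smooth} to obtain $\Delta_{t+1}\leq \Delta_t - \omega\Delta_t^2$, and the reciprocal-telescoping trick finishes the argument. So the core of your plan is correct and matches the paper.

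The one place you go beyond the paper is the monotonicity $W_2(\mu_t,\pi)\leq W_2(\mu_0,\pi)$, which you rightly identify as necessary to replace $W_2(\mu_t,\pi)$ by $W_2(\mu_0,\pi)$ after Cauchy--Schwarz. The paper's own proof simply writes $W_2(\mu_0,\pi)$ at that step without justification, so you are being more careful than the source. Be aware, however, that your sketched derivation of the cocoercivity estimate (descent lemma at $\mu_t$ plus convexity relative to $\pi$) only yields $\int\langle \xi_t,\mathrm{id}-\boldsymbol{t}_{\mu_t}^{\pi}\rangle\,\mathrm{d}\mu_t \geq \tfrac{1}{2L}\|\xi_t\|_{L^2(\mu_t)}^2$, not the constant $\tfrac{1}{L}$ you need; that weaker bound gives $W_2^2(\mu_{t+1},\pi)\leq W_2^2(\mu_t,\pi)-\eta(\tfrac{1}{L}-\eta)\|\xi_t\|^2_{L^2(\mu_t)}$, which is only non-increasing for $\eta\leq 1/L$. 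To cover the full range $\eta\in(0,2/L)$ you would need the genuine Baillon--Haddad inequality (constant $1/L$), whose Wasserstein version requires the stronger interpolation inequality $\mathcal{F}(\nu)\geq\mathcal{F}(\mu)+\int\langle\nabla_{W_2}\mathcal{F}(\mu),\boldsymbol{t}_{\mu}^{\nu}-\mathrm{id}\rangle\,\mathrm{d}\mu+\tfrac{1}{2L}\|\nabla_{W_2}\mathcal{F}(\nu)\circ\boldsymbol{t}_{\mu}^{\nu}-\nabla_{W_2}\mathcal{F}(\mu)\|^2_{L^2(\mu)}$ and is not established in the excerpt; this is a real gap, but it is a gap shared with (indeed, silently elided by) the paper's own proof.
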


\begin{proof}
First, using the geodesic convexity of $\mathcal{F}$, and the Cauchy-Schwarz inequality, we have that
\begin{align}
    \mathcal{F}(\mu_t) - \mathcal{F}(\pi) &\leq \int \left\langle \xi_t(x), x- \boldsymbol{t}_{\mu_t}^{\pi}(x)\right\rangle \,\mathrm{d}\mu_t(x) 
    \leq \left[\int \|\xi_t(x)\|^2\,\mathrm{d}\mu_t(x)\right]^{\frac{1}{2}}W_2(\mu_0,\pi).
\end{align}
Rearranging, and then substituting into the first inequality from Lemma \ref{lemma:f-descent-smooth}, cf. \eqref{eq:f-descent-smooth}, it follows that
\begin{align}
    \mathcal{F}(\mu_{t}) - \mathcal{F}(\mu_{t+1}) \geq \frac{\eta}{2}(2-L\eta)\int_{\mathbb{R}^d}\|\xi_t(x)\|^2\,\mathrm{d}\mu_t(x) \geq \frac{\eta}{2}(2-L\eta)\frac{\left(\mathcal{F}(\mu_t) - \mathcal{F}(\pi)\right)^2}{W_2^2(\mu_0,\pi)}
\end{align}
That is, introducing the notation $B_t:= \mathcal{F}(\mu_t) - \mathcal{F}(\pi)$, $\omega := \frac{\eta}{2}(2-L\eta)$, $W_k^2:= W_2^2(\mu_k,\pi)$, and $\alpha:= \frac{\omega}{W_0^2}$, the recursion
\begin{equation}
    B_{t} - B_{t+1} \geq \alpha B_t^2  \implies B_{t+1} \leq B_t(1-\alpha B_t).
\end{equation}
Taking the reciprocal of the latter inequality, it follows that 
\begin{equation}
    \frac{1}{B_{t+1}}\geq \frac{1}{B_t(1-\alpha B_t)} = \frac{1}{B_t} \, \frac{1}{1-\alpha B_t} \geq \frac{1}{B_t} + \alpha
\end{equation}
where in the final inequality we have used the fact that $\frac{1}{1-u}\geq 1+u$ for $u\geq 0$. Summing this inequality over $t\in[0,T-1]$, we arrive at
\begin{equation}
    \sum_{t=0}^{T-1} \left[\frac{1}{B_{t+1}} - \frac{1}{B_t}\right]  = \frac{1}{B_T} - \frac{1}{B_0} \geq \alpha T  \implies B_T\leq \frac{B_0}{1+\alpha T B_0} \
\end{equation}
That is, substituting back in the definitions
\begin{align}
    \mathcal{F}(\mu_T) - \mathcal{F}(\pi) \leq \frac{\mathcal{F}(\mu_0) - \mathcal{F}(\pi)}{1+\frac{\frac{\eta}{2}(2-L\eta)}{W_2^2(\mu_0,\pi)}T \left(\mathcal{F}(\mu_0) - \mathcal{F}(\pi)\right)}. 
\end{align}
which, after simplifying, is precisely the required bound.
\end{proof}

Once again, we find that the optimal step size is given by $\eta = \frac{1}{L}$. This is the subject of the next result.

\begin{corollary}
\label{corollary:constant-step-smooth-nesterov}
    Suppose that Assumptions \ref{assumption:lsc}, \ref{assumption:geo-convex}, and \ref{assumption:smooth} hold. Let $(\mu_t)_{t\geq 0}$ denote the sequence of measures defined by \eqref{eq:wasserstein-sub-grad-descent}, the forward Euler discretization of the WGF in \eqref{eq:wasserstein-grad-flow}. Suppose that $\eta_t=\eta$ for all $t\geq 0$, with $\eta\in(0,\frac{1}{L}]$. Let $\bar{\mu}_T$ be the average iterate defined in \eqref{eq:bar-mu-t-forward-flow-1-first-def} - \eqref{eq:bar-mu-t-forward-flow-2-first-def}. Then, for all $\smash{\pi\in\mathcal{P}_2(\mathbb{R}^d)}$, the upper bound in \eqref{eq:average-bound-smooth} is minimized when $\smash{\eta = \frac{1}{L}}$. Moreover, this choice of $\eta$ guarantees
    \begin{equation}
        \mathcal{F}(\mu_T) - \mathcal{F}(\pi)\leq \frac{2LW_2^2(\mu_{0},\pi)}{T+4}.
        \label{eq:optimal-average-bound-smooth-nesterov}
    \end{equation}
\end{corollary}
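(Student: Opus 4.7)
The plan is to invoke Proposition \ref{prop:f-descent-smooth-nesterov} and then (i) optimize the resulting bound in $\eta$, and (ii) absorb the initial suboptimality $\mathcal{F}(\mu_0) - \mathcal{F}(\pi)$ using a Wasserstein descent lemma consequence of $L$-smoothness.

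For step (i), observe that the right-hand side of \eqref{eq:convergence-rate-smooth-nesterov} is monotonically decreasing in the quantity $T\eta(2-L\eta)$ (since both numerator and the first summand $2W_2^2(\mu_0,\pi)$ in the denominator are independent of $\eta$, and the remaining term in the denominator enters linearly with a positive coefficient). Therefore the bound is minimized by maximizing $\eta \mapsto \eta(2-L\eta)$, a downward parabola with unique maximizer $\eta_\star = 1/L$ and maximum value $\eta_\star(2-L\eta_\star) = 1/L$. This maximizer lies in the admissible range $(0,1/L]$, so we may take $\eta = 1/L$. Substituting into \eqref{eq:convergence-rate-smooth-nesterov} and simplifying yields
\begin{equation}
\mathcal{F}(\mu_T) - \mathcal{F}(\pi) \leq \frac{2L\,W_2^2(\mu_0,\pi)\,(\mathcal{F}(\mu_0)-\mathcal{F}(\pi))}{2L\,W_2^2(\mu_0,\pi) + T\,(\mathcal{F}(\mu_0)-\mathcal{F}(\pi))}.
\end{equation}

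For step (ii), I need the quadratic upper bound furnished by $L$-geodesic smoothness. From Assumption \ref{assumption:smooth}, integrating the Hessian bound along the geodesic from $\pi$ to $\mu_0$ (equivalently, the Wasserstein analogue of the Euclidean descent lemma, cf.\ the gradient-Lipschitz characterization in \eqref{eq:l-smoothness-grad}) gives
\begin{equation}
\mathcal{F}(\mu_0) - \mathcal{F}(\pi) \leq \langle \nabla_{W_2}\mathcal{F}(\pi), \boldsymbol{t}_\pi^{\mu_0} - \mathrm{id}\rangle_{L^2(\pi)} + \frac{L}{2}W_2^2(\mu_0,\pi) = \frac{L}{2}W_2^2(\mu_0,\pi),
\end{equation}
where the inner-product term vanishes because $\pi$ is the minimizer of $\mathcal{F}$, so $\nabla_{W_2}\mathcal{F}(\pi) = 0$ in $L^2(\pi)$.

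Combining, write $D^2 := W_2^2(\mu_0,\pi)$ and $B_0 := \mathcal{F}(\mu_0)-\mathcal{F}(\pi)$. The inequality to verify is
\begin{equation}
\frac{2LD^2 B_0}{2LD^2 + TB_0} \leq \frac{2LD^2}{T+4},
\end{equation}
which after cross-multiplying and simplifying reduces to $4B_0 \leq 2LD^2$, i.e.\ $B_0 \leq \tfrac{L}{2}D^2$, which is exactly the smoothness consequence just derived. This closes the proof. The only non-routine ingredient is the Wasserstein descent lemma at the minimizer, but this follows cleanly from Assumption \ref{assumption:smooth} together with the first-order optimality condition $\nabla_{W_2}\mathcal{F}(\pi) = 0$; everything else is algebraic manipulation.
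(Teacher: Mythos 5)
Your proof is correct and follows essentially the same route as the paper's: optimize $\eta\mapsto\eta(2-L\eta)$ over $(0,1/L]$ to obtain $\eta=1/L$, invoke the Wasserstein descent-lemma consequence $\mathcal{F}(\mu_0)-\mathcal{F}(\pi)\leq\frac{L}{2}W_2^2(\mu_0,\pi)$ (using $\nabla_{W_2}\mathcal{F}(\pi)=0$), and substitute into the bound from Proposition~\ref{prop:f-descent-smooth-nesterov}. The only cosmetic difference is that you verify the target inequality directly by cross-multiplying, while the paper substitutes the descent-lemma estimate and appeals to the monotonicity of the bound in $\mathcal{F}(\mu_0)-\mathcal{F}(\pi)$; these are equivalent.
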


\begin{proof}
To maximise the RHS of \eqref{eq:convergence-rate-smooth-nesterov}, it is sufficient to maximise the function $\phi(\eta) = \eta(2-L\eta)$ with respect to $\eta$. Taking the derivative, and setting equal to zero, we have $\phi'(\eta) = 2-2L\eta = 0\implies \eta = \frac{1}{L}$. Substituting back into \eqref{eq:convergence-rate-smooth-nesterov}, we have that 
\begin{equation}
\mathcal{F}(\mu_T) - \mathcal{F}(\pi) \leq \frac{2LW_2^2(\mu_0,\pi)\left(\mathcal{F}(\mu_0) - \mathcal{F}(\pi)\right)}{2LW_2^2(\mu_0,\pi) + T \left(\mathcal{F}(\mu_0)-\mathcal{F}(\pi)\right)}. \label{eq:intermediate-nesterov-bound}
\end{equation}
Next, using the smoothness assumption (Assumption \ref{assumption:smooth}), and the fact that $\xi_{\pi}(x)=0$ $\pi$-a.e., we have that 
\begin{align}
    \mathcal{F}(\mu_0) - \mathcal{F}(\pi)&\leq \int\langle \xi_\pi(x), \boldsymbol{t}_{\mu_0}^{\pi}(x) - x\rangle \mathrm{d}\pi(x) + \frac{L}{2}W_2^2(\mu_0,\pi)\leq \frac{L}{2}W_2^2(\mu_0,\pi). \label{eq:init-pi-estimate}
\end{align}
Finally, substituting this into \eqref{eq:intermediate-nesterov-bound}, which is justified since the RHS is an increasing function of $\mathcal{F}(\mu_0) - \mathcal{F}(\pi)$, yields the required result. 
\end{proof}

\subsubsection{Deterministic Case: Adaptive Step Size}
Finally, we turn our attention to the case where the step size $(\eta_t)_{t\geq 0}$ is defined adaptively according to the formula in \eqref{eq:dog-lr}. In particular, we will show that -- up to logarithmic factors -- it is possible to recover the optimal convergence rate of Wasserstein gradient descent, cf. \eqref{eq:optimal-average-bound-smooth}.

\begin{lemma}
\label{lemma:smooth-grad-bound-forward-euler}
Suppose that Assumption \ref{assumption:lsc}
and \ref{assumption:smooth} hold. Let $(\mu_t)_{t\geq 0}$ denote the sequence of measures defined by \eqref{eq:wasserstein-sub-grad-descent}, the forward Euler discretization of the WGF in \eqref{eq:wasserstein-grad-flow}. 
Then 
\begin{align}
    \sum_{t=0}^{T-1} \int_{\mathbb{R}^d}\|\xi_t(x)\|^2\,\mathrm{d}\mu_t(x) &\leq 2L\sum_{t=0}^{T-1} \left(\mathcal{F}(\mu_t) - \mathcal{F}(\pi)\right)  \label{eq:smooth-grad-bound-1-forward-euler}
\end{align}
\end{lemma}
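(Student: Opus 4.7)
}
The plan is to reduce the claim to a pointwise bound at each iterate, namely
\[
\int_{\mathbb{R}^d}\|\xi_t(x)\|^2\,\mathrm{d}\mu_t(x) \leq 2L\bigl(\mathcal{F}(\mu_t) - \mathcal{F}(\pi)\bigr),\qquad t=0,1,\dots,T-1,
\]
after which summing over $t$ yields the desired inequality immediately. This is the Wasserstein analogue of the standard Euclidean fact $\|\nabla f(x)\|^2 \leq 2L(f(x)-f^{\star})$ for $L$-smooth convex functions.

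First, I would observe that the derivation leading to Lemma \ref{lemma:f-descent-smooth} (where one chooses $T_{\eta}(x)=x-\eta\xi_t(x)$ and uses only the $L$-smoothness of $\mathcal{F}$ at $\mu_t$) never actually uses that $\eta$ equals the algorithm's step size: it is a statement about a \emph{hypothetical} pushforward of $\mu_t$ by $T_{\eta}$ for any $\eta>0$. So for each fixed $t$, define
\[
\tilde{\mu}_{t+1} := \bigl(\mathrm{id} - \tfrac{1}{L}\xi_t\bigr)_{\#}\mu_t.
\]
Applying the proof of Lemma \ref{lemma:f-descent-smooth} with $\eta=\tfrac{1}{L}$ gives
\[
\mathcal{F}(\tilde{\mu}_{t+1}) \leq \mathcal{F}(\mu_t) - \tfrac{1}{L}\bigl(1-\tfrac{1}{2}\bigr)\int_{\mathbb{R}^d}\|\xi_t(x)\|^2\,\mathrm{d}\mu_t(x) = \mathcal{F}(\mu_t) - \tfrac{1}{2L}\int_{\mathbb{R}^d}\|\xi_t(x)\|^2\,\mathrm{d}\mu_t(x).
\]
Second, since $\pi$ is a minimizer of $\mathcal{F}$ over $\mathcal{P}_2(\mathbb{R}^d)$ (by \eqref{eq:wasserstein-optimization}), we have $\mathcal{F}(\pi)\leq \mathcal{F}(\tilde{\mu}_{t+1})$. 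Combining with the previous display and rearranging yields the pointwise bound above.

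Finally, summing the pointwise bound over $t\in\{0,1,\dots,T-1\}$ gives \eqref{eq:smooth-grad-bound-1-forward-euler}. The only substantive step here is the first one, and the subtle point that must be emphasized is that Lemma \ref{lemma:f-descent-smooth} was stated under the assumption $\eta_t=\eta$ constant, but inspection of its proof reveals that the estimate is really a property of the smoothness of $\mathcal{F}$ near $\mu_t$ applied to an arbitrary pushforward, not of the dynamics; so we are entitled to instantiate it with $\eta=1/L$ irrespective of the actual (possibly adaptive) step sizes used in \eqref{eq:wasserstein-sub-grad-descent}. No geodesic convexity of $\mathcal{F}$ is needed for this argument — only smoothness and the minimality of $\pi$ — which matches the hypotheses of the lemma as stated.
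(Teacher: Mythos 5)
Your proposal is correct and follows essentially the same route as the paper: both define the hypothetical pushforward $(\mathrm{id}-\tfrac{1}{L}\xi(\mu))_{\#}\mu$, invoke the smoothness-based descent estimate from the proof of Lemma \ref{lemma:f-descent-smooth} with $\eta=\tfrac{1}{L}$, lower-bound the result by $\mathcal{F}(\pi)$ via minimality, and sum the resulting pointwise inequality $\int\|\xi_t\|^2\,\mathrm{d}\mu_t\leq 2L(\mathcal{F}(\mu_t)-\mathcal{F}(\pi))$ over $t$. Your remark that the descent estimate is a property of an arbitrary pushforward rather than of the algorithm's dynamics, and that no geodesic convexity is needed, is exactly the point the paper's proof relies on.
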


\begin{proof}
For $\eta>0$, let $\mu_{\eta}^{+} = (\mathrm{id} - \eta \xi(\mu))_{\#}\mu$. Then, for any $\mu\in\mathcal{P}_2(\mathcal{X})$, we have, arguing as in the proof of Lemma \ref{lemma:f-descent-smooth}, that
\begin{align}
    \mathcal{F}(\mu_{\eta}^{+}) \leq \mathcal{F}(\mu) 
     -\eta\int_{\mathbb{R}^d}\|\xi(\mu)(x)\|^2\mathrm{d}\mu(x) + L\frac{\eta^2}{2}\int_{\mathbb{R}^d}\|\xi(\mu)(x)\|^2\mathrm{d}\mu(x) 
\end{align}
Setting $\eta = \frac{1}{L}$, it follows that 
\begin{align}
    \mathcal{F}(\mu_{\eta}^{+}) &\leq \mathcal{F}(\mu) -\frac{1}{L}\int_{\mathbb{R}^d}\|\xi(\mu)(x)\|^2\mathrm{d}\mu(x) + L\frac{1}{2L^2}\int_{\mathbb{R}^d}\|\xi(\mu)(x)\|^2\mathrm{d}\mu(x) \\
    &= \mathcal{F}(\mu)-\frac{1}{2L}\int_{\mathbb{R}^d}\|\xi(\mu)(x)\|^2\mathrm{d}\mu(x). 
\end{align}
By definition, we have $\mathcal{F}(\pi) :=\inf_{\mu} \mathcal{F}(\mu) \leq \mathcal{F}(\mu_{\eta}^{+})$. Combining this with the previous display it follows that, for each $\mu\in\mathcal{P}_2(\mathcal{X})$, 
\begin{equation}
    \mathcal{F}(\pi)\leq \mathcal{F}(\mu) -\frac{1}{2L}\int_{\mathbb{R}^d}\|\xi(\mu)(x)\|^2\mathrm{d}\mu(x).
\end{equation}
Rearranging, summing over $t\in[1,T]$, and identifying $\xi(\mu_t):=\xi_t$, it follows straightforwardly that
\begin{align}
    \sum_{t=1}^T \int_{\mathbb{R}^d}\|\xi_t(x)\|^2\,\mathrm{d}\mu_t(x) &\leq 2
    L\sum_{t=1}^T \left(\mathcal{F}(\mu_t) - \mathcal{F}(\pi)\right).
\end{align}
\end{proof}

\begin{proposition}
\label{prop:dog-wgd-bound-1-smooth}
    Suppose that Assumption 
    \ref{assumption:lsc} and \ref{assumption:geo-convex}, and \ref{assumption:smooth} hold. Let $(\mu_t)_{t\geq 0}$ be the sequence of measures defined according to \eqref{eq:wasserstein-sub-grad-descent}, the forward Euler discretization of the WGF in \eqref{eq:wasserstein-grad-flow}, with $(\eta_t)_{t\geq 0}$ defined as in \eqref{eq:dog-lr}. Let $(\bar{\mu}_t)_{t\geq 0}$ be the sequence of measures defined according to \eqref{eq:bar-mu-t-0} - \eqref{eq:bar-mu-t}. Then, for all $t\leq T$, we have 
    \begin{equation}
    \label{eq:dog-wgd-bound-1-smooth}
        \mathcal{F}(\bar{\mu}_T) - \mathcal{F}(\pi) =  \mathcal{O}\left(\frac{L(d_0\log_{+}\frac{\bar{r}_T}{r_{\varepsilon}} + \bar{r}_t)^2}{T}\right).
    \end{equation}
\end{proposition}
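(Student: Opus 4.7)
The plan is to combine the uniform-averaging regret bound already obtained in the nonsmooth setting (Proposition~\ref{prop:dog-wgd-bound-1-uniform}) with the smoothness-based control on cumulative squared gradients (Lemma~\ref{lemma:smooth-grad-bound-forward-euler}) via a self-bounding argument, and then invoke Jensen's inequality (Lemma~\ref{lemma:average-bound}) to pass from the cumulative suboptimality to the suboptimality of the uniform average $\bar{\mu}_T$.

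First, by Lemma~\ref{lemma:average-bound},
\begin{equation}
\mathcal{F}(\bar{\mu}_T) - \mathcal{F}(\pi) \leq \frac{1}{T}\sum_{t=0}^{T-1}\bigl[\mathcal{F}(\mu_t) - \mathcal{F}(\pi)\bigr] =: \frac{S_T}{T},
\end{equation}
so it suffices to bound $S_T$. Next, repeating the dyadic-block argument used in the proof of Proposition~\ref{prop:dog-wgd-bound-1-uniform} verbatim — that is, defining the doubling times $\tau_k$ for $\bar{r}_t$, invoking Lemma~\ref{lemma:tilde-mu-t-bound-2} on each block $[\tau_{k-1},\tau_k)$, using $\bar{r}_{\tau_k-1}\leq 2\bar{r}_{\tau_{k-1}}$, bounding the number of doubling blocks $K = \mathcal{O}\bigl(\log_+\tfrac{\bar{r}_T}{r_\varepsilon}\bigr)$, and noting $\sum_{k=1}^K \bar{r}_{\tau_k} = \mathcal{O}(\bar{r}_T)$ via geometric summation — yields
\begin{equation}
S_T = \mathcal{O}\bigl(A_T\sqrt{G_{T-1}}\bigr), \qquad A_T := d_0\log_+\tfrac{\bar{r}_T}{r_\varepsilon} + \bar{r}_T.
\end{equation}

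The new ingredient is smoothness. By Lemma~\ref{lemma:smooth-grad-bound-forward-euler},
\begin{equation}
G_{T-1} = \sum_{t=0}^{T-1}\!\int_{\mathbb{R}^d}\|\xi_t(x)\|^2\,\mathrm{d}\mu_t(x) \;\leq\; 2L \sum_{t=0}^{T-1}\bigl(\mathcal{F}(\mu_t)-\mathcal{F}(\pi)\bigr) = 2L\,S_T.
\end{equation}
Substituting this into the previous display gives the self-bounding inequality $S_T \leq C A_T\sqrt{2L\,S_T}$ for some absolute constant $C$, which upon squaring and dividing by $S_T$ yields $S_T = \mathcal{O}(L A_T^2)$. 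Dividing by $T$ delivers the claimed bound
\begin{equation}
\mathcal{F}(\bar{\mu}_T) - \mathcal{F}(\pi) = \mathcal{O}\!\left(\frac{L\bigl(d_0\log_+\tfrac{\bar{r}_T}{r_\varepsilon}+\bar{r}_T\bigr)^2}{T}\right).
\end{equation}

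The main obstacle is conceptual rather than technical: one must recognise that, despite the step size being defined from the $\sqrt{G_t}$-type adaptive denominator (which is naturally tied to the $\mathcal{O}(T^{-1/2})$ nonsmooth rate), the smoothness inequality $G_{T-1} \lesssim L\,S_T$ feeds back into the regret bound and upgrades it to the accelerated $\mathcal{O}(T^{-1})$ rate — without any knowledge of $L$ in the step size itself. All intermediate block-by-block bookkeeping is routine, being identical to the nonsmooth case, so the self-bounding trick is where the work really happens.
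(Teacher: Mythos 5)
Your proof is correct and follows essentially the same route as the paper: extract the cumulative suboptimality bound $\sum_{t=0}^{T-1}[\mathcal{F}(\mu_t)-\mathcal{F}(\pi)] = \mathcal{O}(A_T\sqrt{G_{T-1}})$ from the dyadic-block argument underlying Proposition~\ref{prop:dog-wgd-bound-1-uniform}, feed in the smoothness bound $G_{T-1}\leq 2L\sum_{t=0}^{T-1}[\mathcal{F}(\mu_t)-\mathcal{F}(\pi)]$ from Lemma~\ref{lemma:smooth-grad-bound-forward-euler}, solve the resulting self-bounding inequality, and finish with Lemma~\ref{lemma:average-bound}. If anything you are slightly more careful than the paper's text, which cites Proposition~\ref{prop:dog-wgd-bound-1-uniform} directly but actually needs the cumulative-sum intermediate step \eqref{eq:unweighted-penultimate-forward-euler} from its proof — a nuance you make explicit.
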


\begin{proof}
    From Proposition \ref{prop:dog-wgd-bound-1-uniform} and Lemma \ref{lemma:smooth-grad-bound-forward-euler}, we have that
    \begin{equation}
    \label{eq:dog-wgd-bound-1-smooth-intermediate}
        \frac{1}{T}\sum_{t=0}^{T-1}\left(\mathcal{F}(\mu_t) - \mathcal{F}(\pi)\right) =  \mathcal{O}\left(\frac{(d_0\log_{+}\frac{\bar{r}_T}{r_{\varepsilon}} + \bar{r}_T)\sqrt{\frac{L}{T}\sum_{t=0}^{T-1}\left(\mathcal{F}(\mu_t) - \mathcal{F}(\pi)\right)}}{\sqrt{T}}\right).
    \end{equation}
    Dividing through by $\sqrt{\frac{1}{T}\sum_{t=0}^{T-1}\left(\mathcal{F}(\mu_t)- \mathcal{F}(\pi)\right)}$, and squaring, it follows that 
    \begin{equation}
    \label{eq:dog-wgd-bound-1-smooth-intermediate-2}
        \frac{1}{T}\sum_{t=0}^{T-1}\left(\mathcal{F}(\mu_t) - \mathcal{F}(\pi)\right) =  \mathcal{O}\left(\frac{{L}(d_0\log_{+}\frac{\bar{r}_T}{r_{\varepsilon}} + \bar{r}_T)^2}{T}\right).
    \end{equation}
    The result then follows from Lemma \ref{lemma:average-bound}.
\end{proof}

\begin{corollary}
\label{corr:dog-wgd-bound-2-smooth}
    Suppose that Assumption 
    \ref{assumption:lsc} and \ref{assumption:geo-convex}, and \ref{assumption:smooth} hold. Let $(\mu_t)_{t\geq 0}$ be the sequence of measures defined according to \eqref{eq:wasserstein-sub-grad-descent}, the forward Euler discretization of the WGF in \eqref{eq:wasserstein-grad-flow}, with $(\eta_t)_{t\geq 0}$ defined as in \eqref{eq:dog-lr}. Let $(\bar{\mu}_t)_{t\geq 0}$ be the sequence of measures defined according to \eqref{eq:bar-mu-t-0} - \eqref{eq:bar-mu-t}. Then, with probability at least $1 - \mathbb{P}(\bar{r}_T > D)$, it holds that
    \begin{equation}
        \mathcal{F}(\bar{\mu}_{T}) - \mathcal{F}(\pi) = \mathcal{O}\left(\frac{LD^2}{T}\log_{+}^2\left(\frac{D}{r_{\varepsilon}}\right)\right)
    \end{equation}
\end{corollary}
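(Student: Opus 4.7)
The plan is to deduce this corollary directly from Proposition \ref{prop:dog-wgd-bound-1-smooth}, in essentially the same way that Corollary \ref{corr:dog-wgd-bound-2-uniform} was deduced from Proposition \ref{prop:dog-wgd-bound-1-forward-flow-uniform} in the nonsmooth forward-flow setting. Proposition \ref{prop:dog-wgd-bound-1-smooth} already establishes the data-dependent bound
\[
\mathcal{F}(\bar{\mu}_T) - \mathcal{F}(\pi) = \mathcal{O}\!\left(\frac{L\bigl(d_0\log_{+}\tfrac{\bar{r}_T}{r_{\varepsilon}} + \bar{r}_T\bigr)^2}{T}\right),
\]
so all that remains is to replace the random quantities $\bar{r}_T$ and $d_0$ on the right-hand side by the deterministic quantity $D$ on the favorable event $\{\bar{r}_T \leq D\}$, where $D \geq d_0$ is fixed.

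First I would observe that on $\{\bar r_T \leq D\}$ we have $\bar r_T \leq D$ and, by the standing hypothesis $D \geq d_0$, also $d_0 \leq D$. In particular, since $\log_{+}$ is monotone non-decreasing, $\log_{+}\!\bigl(\bar r_T / r_\varepsilon\bigr) \leq \log_{+}\!\bigl(D / r_\varepsilon\bigr)$. Combining these estimates yields
\[
d_0 \log_{+}\!\frac{\bar r_T}{r_\varepsilon} + \bar r_T
\ \leq\ D\log_{+}\!\frac{D}{r_\varepsilon} + D
\ =\ \mathcal{O}\!\left(D\log_{+}\!\frac{D}{r_\varepsilon}\right),
\]
where the final bound uses that $\log_{+}(D/r_\varepsilon) \geq 1$ up to constants in the relevant regime (or, if one insists on $r_\varepsilon$ possibly close to $D$, the trivial bound $\bar r_T \leq D \leq D(1+\log_+(D/r_\varepsilon))$). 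Squaring and multiplying by $L/T$ gives
\[
\frac{L\bigl(d_0\log_{+}\tfrac{\bar r_T}{r_\varepsilon} + \bar r_T\bigr)^2}{T}
\ =\ \mathcal{O}\!\left(\frac{LD^2}{T}\log_{+}^2\!\frac{D}{r_\varepsilon}\right),
\]
which is exactly the claimed rate.

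Finally, since the deduction above was carried out pointwise on $\{\bar r_T \leq D\}$, the resulting bound on $\mathcal{F}(\bar\mu_T) - \mathcal{F}(\pi)$ holds on that event, and therefore with probability at least $1 - \mathbb{P}(\bar r_T > D)$ (which in the present deterministic setting is simply an indicator-function statement). I do not expect any genuine obstacles in this argument: it is a direct substitution combined with monotonicity of $\log_{+}$. The only mild care needed is to ensure that the implicit constants in the $\mathcal{O}(\cdot)$ notation absorb the ``$+D$'' term relative to $D\log_{+}(D/r_\varepsilon)$, which is handled exactly as in the proof of Corollary \ref{corr:dog-wgd-bound-2-uniform}.
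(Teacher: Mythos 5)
Your proposal is correct and follows essentially the same route as the paper: both deduce the claim directly from Proposition \ref{prop:dog-wgd-bound-1-smooth} by substituting $d_0\leq D$ and $\bar r_T\leq D$ on the event $\{\bar r_T\leq D\}$ and absorbing the additive $D$ into $D\log_{+}(D/r_{\varepsilon})$. No gaps.
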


\begin{proof}
Using the fact that $\smash{d_0\leq D}$ and that $\smash{\bar{r}_t \leq \bar{r}_T \leq D}$ on the event $\smash{\{\bar{r}_T \leq D\}}$, it follows from Proposition \ref{prop:dog-wgd-bound-1-smooth} that
    \begin{equation}
        \mathcal{F}(\bar{\mu}_T) - \mathcal{F}(\pi) =\mathcal{O}\left(\frac{L(d_0\log_{+}\frac{\bar{r}_T}{r_{\varepsilon}} + \bar{r}_t)^2}{T}\right) =  \mathcal{O}\left(\frac{LD^2}{T}\log_{+}^2\left(\frac{D}{r_{\varepsilon}}\right)\right).
    \end{equation}
\end{proof}

\section{Additional Proofs: Forward-Flow Discretization}
\label{app:additional-proofs}
In this appendix, we provide the proofs of our main results for the forward-flow discretization of the WGF (see Section \ref{sec:main-results}).

\subsection{Forward-Flow Discretization, Nonsmooth Setting}
\label{app:additional-proofs-forward-flow-non-smooth}

\subsubsection{Deterministic Case: Constant Step Size}
\label{app:additional-proofs-forward-flow-non-smooth-deterministic-constant}
\begin{proof}[Proof of Lemma \ref{lemma:average}]
    It is sufficient to show that $\mathcal{F}(\bar{\mu}_{T})\leq \frac{1}{T}\sum_{t=1}^{T} \mathcal{F}(\mu_t)$. We will prove this result by induction over $T\in\mathbb{N}$. First consider $T=1$. By direct computation, we have $\bar{\mu}_1 = \mu_1$, and so $\smash{\mathcal{F}(\bar{\mu}_1) = \mathcal{F}(\mu_1)  = }$ $\smash{\frac{1}{T}\sum_{t=1}^{1}\mathcal{F}(\mu_t)}$. This proves the base case. Suppose, now, that the result holds for all $t\in\{1,\dots,T\}$. Let $\mu\in\mathcal{P}_2(\mathbb{R}^d)$ and $\nu\in\mathcal{P}_2(\mathbb{R}^d)$. In addition, let $\smash{\xi_{\nu} = \boldsymbol{t}_{\mu}^{\nu} -\mathrm{id}\in\mathcal{T}_{\mu}\mathcal{P}_2(\mathbb{R}^d)}$. Define $\smash{\lambda_{\eta}^{\mu\rightarrow \nu}:[0,1]\rightarrow\mathcal{P}_2(\mathbb{R}^d)}$ as a constant speed geodesic with $\lambda_{0}^{\mu\rightarrow\nu} = \mu$ and $\dot{\lambda_{\mu}^{\nu}}(0) = \xi_{\nu}$. That is,
    \begin{equation}
        \lambda_{\mu}^{\nu}(s) = \left[\mathrm{id} + s \xi_{\nu} \right]_{\#}\mu = \left[\mathrm{id} + s\left(\boldsymbol{t}_{\mu}^{\nu} - \mathrm{id}\right)\right]_{\#}\mu.
    \end{equation} 
     It follows, working from the definition in 
     \eqref{eq:bar-mu-t-forward-flow-1-first-def} - \eqref{eq:bar-mu-t-forward-flow-2-first-def}, that
    \begin{align}
        \bar{\mu}_{T+1} &= \left[\left(1-\frac{1}{T+1}\right)\mathrm{id} + \frac{1}{T+1}\boldsymbol{t}_{\bar{\mu}_{T}}^{\mu_{T+1}}\right]\bar{\mu}_{T}  \\
        &= \left[\mathrm{id} + \frac{1}{T+1}\left(\boldsymbol{t}_{\bar{\mu}_{T}}^{\mu_{T+1}} - \mathrm{id}\right)\right]_{\#}\bar{\mu}_{T} =\lambda_{\bar{\mu}_{T}}^{\mu_{T+1}}\left(\frac{1}{T+1}\right).
    \end{align}
    Using the geodesic convexity of $\mathcal{F}$, we then have
    \begin{align}
        \mathcal{F}(\bar{\mu}_{T+1})&=\mathcal{F}\left(\lambda_{\bar{\mu}_{T}}^{\mu_{T+1}}\left(\frac{1}{T+1}\right)\right)  \\
        &\leq \left(1-\frac{1}{T+1}\right) \mathcal{F}\left(\lambda_{\bar{\mu}_{T}}^{\mu_{T+1}}\left(0\right)\right) + \frac{1}{T+1}\mathcal{F}\left(\lambda_{\bar{\mu}_{T}}^{\mu_{T+1}}\left(1\right)\right) \\
        &=\left(1-\frac{1}{T+1}\right) \mathcal{F}\left(\bar{\mu}_{T}\right) + \frac{1}{T+1}\mathcal{F}\left(\mu_{T+1}\right).
    \end{align}
    Finally, using also the inductive assumption, 
    \begin{align}
        \mathcal{F}(\bar{\mu}_{T+1}) &\leq \left(1-\frac{1}{T+1}\right)\mathcal{F}(\bar{\mu}_{T}) + \frac{1}{T+1}\mathcal{F}(\mu_{T+1}) \\
        &\leq \left(1-\frac{1}{T+1}\right) \frac{1}{T}\displaystyle\sum_{t=1}^{T}\mathcal{F}(\mu_t) +\frac{1}{T+1}\mathcal{F}(\mu_{T+1}) \\
        &= \frac{1}{T}\displaystyle\sum_{t=1}^{T-1}\mathcal{F}(\mu_t) + \frac{1}{T}\mathcal{F}(\mu_T) = \frac{1}{T+1}\displaystyle\sum_{t=1}^{T+1}\mathcal{F}(\mu_t).
    \end{align}
    This completes the inductive step, and thus the proof.
\end{proof}

\begin{proof}[Proof of Lemma \ref{lemma:evi-forward-flow-transport}]
Using the definition of the Wasserstein-2 distance, we have that
    \begin{align}
        W_2^2(\mu_{t},\pi) &=\inf_{\gamma \in \Gamma(\mu_{t},\pi)} \int_{\mathbb{R}^d\times\mathbb{R}^d}\|x - y\|^2\mathrm{d}\gamma(x,y)= \int_{\mathbb{R}^d} \| x - \boldsymbol{t}_{\mu_t}^{\pi}(x)\|^2 \,\mathrm{d}\mu_t(x) \\
        W_2^2(\mu_{t+\frac{1}{2}},\pi) &= \inf_{\gamma \in \Gamma(\mu_{t+\frac{1}{2}},\pi)} \int_{\mathbb{R}^d\times\mathbb{R}^d}\|x - y\|^2\mathrm{d}\gamma(x,y)\leq \int_{\mathbb{R}^d}\|(x-\eta \zeta_t(x)) - \boldsymbol{t}_{\mu_{t}}^{\pi}(x)\|^2\mathrm{d}\mu_{t}(x),
    \end{align}
    Putting these two together, and expanding, it follows straightforwardly that
    \begin{align}
        \frac{W^2_2(\mu_t,\pi) - W_2^2(\mu_{t+\frac{1}{2}},\pi)}{2\eta} &\geq \frac{1}{2\eta} \int_{\mathbb{R}^d} \left[\| x - \boldsymbol{t}_{\mu_t}^{\pi}(x)\|^2  - \|(x-\eta \zeta_t(x)) - \boldsymbol{t}_{\mu_{t}}^{\pi}(x)\|^2\right]\mathrm{d}\mu_{t}(x) \label{eq:intermidate-result-g-1} \\
        & =  \int_{\mathbb{R}^d} \left\langle \zeta_{t}(x), x-\boldsymbol{t}_{\mu_t}^{\pi}(x) \right\rangle \,\mathrm{d}\mu_t(x) - \frac{\eta}{2}\int_{\mathbb{R}^d} \|\zeta_t(x)\|^2\,\mathrm{d}\mu_t(x) \\
        &\geq \mathcal{E}(\mu_t) - \mathcal{E}(\pi) - \frac{\eta}{2}\int_{\mathbb{R}^d} \|\zeta_t(x)\|^2\,\mathrm{d}\mu_t(x) \label{eq:intermidate-result-g-3} 
    \end{align}
    where in the final line we have used the fact that $\mathcal{E}$ is geodesically convex. Rearranging, we thus have that 
    \begin{equation}
        \mathcal{E}(\mu_t) - \mathcal{E}(\pi) \leq \frac{W^2_2(\mu_t,\pi) - W_2^2(\mu_{t+\frac{1}{2}},\pi)}{2\eta} + \frac{\eta}{2}\int_{\mathbb{R}^d} \|\zeta_t(x)\|^2\,\mathrm{d}\mu_t(x). \label{eq:g-bound}
    \end{equation}
\end{proof}

\subsubsection{Deterministic Case: Adaptive Step Size}
\label{app:additional-proofs-forward-flow-non-smooth-deterministic-adaptive}
\begin{proof}[Proof of Lemma \ref{lemma:tilde-mu-t-bound-1-forward-flow}].
    The proof is essentially identical to the proof of Lemma \ref{lemma:average}. We prove the result by induction over $T\in\mathbb{N}$. First consider $T=1$. By direct computation, we have $\tilde{\mu}_1 = \mu_1$, and so $\smash{\mathcal{F}(\tilde{\mu}_1) = \mathcal{F}(\mu_1)  = }$ $\smash{\frac{1}{T}\sum_{t=1}^{1}\mathcal{F}(\mu_t)}$. This proves the base case. Suppose, now, that \eqref{eq:average-ineq-forward-flow} holds for all $t\in\{1,\dots,T\}$. Let $\mu\in\mathcal{P}_2(\mathbb{R}^d)$ and $\nu\in\mathcal{P}_2(\mathbb{R}^d)$. In addition, let $\xi_{\nu} = \boldsymbol{t}_{\mu}^{\nu} -\mathrm{id}\in\mathcal{T}_{\mu}\mathcal{P}_2(\mathbb{R}^d)$. Define $\lambda_{\mu}^{\nu}(\cdot):[0,1]\rightarrow\mathcal{P}_2(\mathbb{R}^d)$ as a constant speed geodesic with $\lambda_{\mu}^{\nu}(0) = \mu$ and $\dot{\lambda}_{\mu}^{\nu}(0) = \xi_{\nu}$. That is,
    \begin{equation}
        \lambda_{\mu}^{\nu}(s) = \left[\mathrm{id} + s \xi_{\nu} \right]_{\#}\mu = \left[\mathrm{id} + s\left(\boldsymbol{t}_{\mu}^{\nu} - \mathrm{id}\right)\right]_{\#}\mu.
    \end{equation} 
     It follows, working from the definition in \eqref{eq:tilde-mu-t-1-forward-flow} - \eqref{eq:tilde-mu-t-2-forward-flow}, that
    \begin{align}
        \tilde{\mu}_{T+1} &= \left[\left(1-\frac{\bar{r}_T}{\sum_{t=1}^{T+1} \bar{r}_t}\right)\mathrm{id} + \frac{\bar{r}_T}{\sum_{t=1}^{T+1} \bar{r}_t}\boldsymbol{t}_{\tilde{\mu}_{T}}^{\mu_{T+1}}\right]_{\#}\tilde{\mu}_{T} \\
        &= \left[\mathrm{id} + \frac{\bar{r}_T}{\sum_{t=1}^{T+1} \bar{r}_t}\left(\boldsymbol{t}_{\tilde{\mu}_{T}}^{\mu_{T+1}} - \mathrm{id}\right)\right]_{\#}\tilde{\mu}_{T} =\lambda_{\tilde{\mu}_{T}}^{\mu_{T+1}}\left(\frac{\bar{r}_T}{\sum_{t=1}^{T+1} \bar{r}_t}\right).
    \end{align}
    Using the geodesic convexity of $\mathcal{F}$, we then have
    \begin{align}
        \mathcal{F}(\tilde{\mu}_{T+1})&=\mathcal{F}\left(\lambda_{\tilde{\mu}_{T}}^{\mu_{T+1}}\left(\frac{\bar{r}_T}{\sum_{t=1}^{T+1} \bar{r}_t}\right)\right) \\
        &\leq \left(1-\frac{\bar{r}_T}{\sum_{t=1}^{T+1} \bar{r}_t}\right) \mathcal{F}\left(\lambda_{\tilde{\mu}_{T}}^{\mu_{T+1}}\left(0\right)\right) + \frac{\bar{r}_T}{\sum_{t=1}^{T+1} \bar{r}_t}\mathcal{F}\left(\lambda_{\tilde{\mu}_{T}}^{\mu_{T+1}}\left(1\right)\right) \\
        &=\left(1-\frac{\bar{r}_T}{\sum_{t=1}^{T+1} \bar{r}_t}\right) \mathcal{F}\left(\tilde{\mu}_{T}\right) + \frac{\bar{r}_T}{\sum_{t=1}^{T+1} \bar{r}_t}\mathcal{F}\left(\mu_{T+1}\right).
    \end{align}
    Finally, using also the inductive hypothesis,
    \begin{align}
        \mathcal{F}(\tilde{\mu}_{T+1}) &\leq \left(1-\frac{\bar{r}_T}{\sum_{t=1}^{T+1} \bar{r}_t}\right)\mathcal{F}(\tilde{\mu}_{T}) + \frac{\bar{r}_T}{\sum_{t=1}^{T+1} \bar{r}_t}\mathcal{F}(\mu_{T+1}) \\
        &\leq \left(1-\frac{\bar{r}_T}{\sum_{t=1}^{T+1} \bar{r}_t}\right) \frac{1}{\sum_{t=1}^{T}\bar{r}_t} \sum_{t=1}^{T}\bar{r}_t\mathcal{F}(\mu_t) +\frac{\bar{r}_T}{\sum_{t=1}^{T+1} \bar{r}_t}\mathcal{F}(\mu_{T+1}) \\
        &= \frac{1}{\sum_{t=1}^{T+1}\bar{r}_t}\displaystyle\sum_{t=1}^{T}\bar{r}_{t}\mathcal{F}(\mu_t) + \frac{1}{\sum_{t=1}^{T+1} \bar{r}_t}\bar{r}_T\mathcal{F}(\mu_{T+1}) \\
        &=\frac{1}{\sum_{t=1}^{T+1}\bar{r}_t}\displaystyle\sum_{t=1}^{T+1}\bar{r}_{t}\mathcal{F}(\mu_t).
    \end{align}
    This completes the inductive step, and hence the proof.
\end{proof}

\begin{proof}[Proof of Lemma \ref{lemma:tilde-mu-t-bound-2-forward-flow-v2}]
    We will bound the two terms on the LHS separately, for now taking out the factor $\frac{1}{2}$. For the first term, 
    arguing similarly to \citet[][Lemma 1]{ivgi2023dog}, we have
    \begin{align}
        &\sum_{t=1}^{T} \frac{\bar{r}_t}{\eta_t}\left[d_t^2 - d_{t+1}^2\right] 
        =\sum_{t=1}^{T} \sqrt{G_t}\left[d_t^2 - d_{t+1}^2\right] 
        =d_1^2\sqrt{G_1} - d_{T+1}^2\sqrt{G_{T}} + \sum_{t=2}^{T} d_t^2\left[\sqrt{G_t} - \sqrt{G_{t-1}}\right] \\[-1mm]
        &\leq\bar{d}_{T+1}^2\sqrt{G_1} - d_{T+1}^2 \sqrt{G_{T}} + \bar{d}_{T+1}^2\sum_{t=2}^{T} \left[\sqrt{G_t} - \sqrt{G_{t-1}}\right] 
        = \sqrt{G_{T}}\left[\bar{d}_{T+1}^2 - d_{T+1}^2\right] \leq 4\bar{r}_{T+1}\bar{d}_{T+1}\sqrt{G_{T}}.
    \end{align}
    In the first inequality, we have used (a) ${d}_{t}\leq d_{T+1}$ for all $t\leq T+1$, and (b) $G_t$ is non-decreasing, which follows from the definitions in \eqref{eq:defs1-forward-flow} - \eqref{eq:defs2-forward-flow}. In the second inequality, writing $S\in\argmax_{1\leq t\leq T+1}d_t$, we have used the fact that 
    \begin{align}
        \bar{d}_{T+1}^2 - d_{T+1}^2 &= (\bar{d}_{T+1} - d_{T+1})(\bar{d}_{T+1} + d_{T+1}) = ({d}_S - d_{T+1})({d}_S + d_{T+1}) \\
        &\leq W_2(\mu_{S-\frac{1}{2}},\mu_{T+\frac{1}{2}}) (d_S + d_{T+1})
        \leq (\bar{r}_S+\bar{r}_{T+1})(d_S+d_{T+1})\leq  4\bar{r}_{T+1}\bar{d}_{T+1}.
    \end{align}
    We now turn our attention to the second term. In this case, we have that 
    \begin{align}
        \sum_{t=1}^T \bar{r}_t\eta_t \hspace{-1mm} \int_{\mathbb{R}^d} \|\zeta_t(x)\|^2\,\mathrm{d}\mu_t(x) =\sum_{t=1}^{T}\frac{\bar{r}_t^2 \int_{\mathbb{R}^d} \|\zeta_t(x)\|^2\,\mathrm{d}\mu_t(x)}{\sqrt{G_t}} \leq \bar{r}_{T+1}^2 \sum_{t=1}^{T}\frac{\int_{\mathbb{R}^d} \|\zeta_t(x)\|^2\,\mathrm{d}\mu_t(x)}{\sqrt{G_t}} \leq 2\bar{r}_{T+1}^2 \sqrt{G_{T}},
    \end{align}
    where the first inequality follows from the fact that $\bar{r}_t$ is non-decreasing, and the second inequality is a consequence of a standard summation result (e.g., \citealp{shalev2012online}, Lemma 2.8; \citealp{ivgi2023dog}, Lemma 4), with $\smash{a_t = \int_{\mathbb{R}^d}\|\zeta_t(x)\|^2\,\mathrm{d}\mu_t(x),~G_t = \sum_{i=1}^{t} \int_{\mathbb{R}^d} \|\zeta_i(x)\|^2\mathrm{d}\mu_i(x)}$. Combining our bounds for (I) and (II), we have the desired result.
\end{proof}

\begin{proof}[Proof of Corollary \ref{corr:dog-wgd-bound-2-forward-flow}]
    First, using \citep[][Lemma 3]{ivgi2023dog}, we have that
    \begin{equation}
    \label{eq:ivgi-lemma-3-forward-flow}
        \max_{1\leq t\leq T} \sum_{s=1}^{t} \frac{\bar{r}_s}{\bar{r}_{t+1}} \geq \frac{1}{e}\left(\frac{T - \log_{+}(\bar{r}_{T+1}/\bar{r}_1)}{ \log_{+}(\bar{r}_{T+1}/\bar{r}_1)}\right)  = \frac{1}{e} \left( \frac{T}{\log_{+}(\bar{r}_{T+1}/\bar{r}_1)} - 1\right) \geq \frac{1}{e} \left( \frac{T}{\log_{+}(D/r_\varepsilon)} - 1\right),
    \end{equation}
    where in the final inequality we just consider the event $\{\bar{r}_{T+1} \leq D\}$. Combining \eqref{eq:ivgi-lemma-3-forward-flow} with \eqref{eq:dog-wgd-bound-1-forward-flow} in Proposition \ref{prop:dog-wgd-bound-1-forward-flow}, using the fact that $\smash{d_1\leq D}$ and that $\smash{\bar{r}_{\tau+1} \leq \bar{r}_{T+1} \leq D}$ on the event $\smash{\{\bar{r}_{T+1} \leq D\}}$, we have
    \begin{equation}
        \mathcal{F}(\tilde{\mu}_\tau) - \mathcal{F}(\pi) =\mathcal{O}\left(\frac{(d_1 + \bar{r}_{\tau+1})\sqrt{G_{\tau}}}{\sum_{s=1}^{\tau}\bar{r}_s/\bar{r}_{\tau+1}}\right) =  \mathcal{O}\left(\frac{D \sqrt{G_{\tau}}}{T}\log_{+}\left(\frac{D}{r_{\varepsilon}}\right)\right).
    \end{equation}
    Finally, using that $\smash{\|\zeta_t\|_{L^2(\mu_t)} \leq G_{D}}$ for all $1\leq t\leq T$ on the event $\{\bar{r}'_{T+1}\leq D\}$, which in turn implies that $G_{\tau} \leq G_{T}= \sum_{t=1}^{T}\|\zeta_t\|_{L^2(\mu)}^2 \leq T{G}_D^2$, we arrive at
    \begin{equation}
        \mathcal{F}(\tilde{\mu}_t) - \mathcal{F}(\pi)= \mathcal{O}\left(\frac{DG_D}{\sqrt{T}}\log_{+}\left(\frac{D}{r_{\varepsilon}}\right)\right).
    \end{equation}
\end{proof}

\begin{proof}[Proof of Proposition \ref{prop:dog-wgd-bound-1-forward-flow-uniform}]
    Our proof follows the proof of \citet[][Proposition 3]{ivgi2023dog}. Define the times
    \begin{align}
        \tau_1&=1 \\
        \tau_{k+1} &= \min\left\{\left\{t\in\{\tau_{k}+1,\dots,T+1\}\,:\,\bar{r}_t\geq 2\bar{r}_{\tau_{k}}\right\}\cup\{T+1\}\right\}, \quad \quad k\geq 1.
    \end{align}
    Let $K+1= \min\{k:\tau_k=T+1\}$. Then, for all $k\leq K$, it holds that $\bar{r}_{\tau_{k}}\geq 2\bar{r}_{\tau_{k-1}} \implies \bar{r}_{\tau_{k}} \geq 2^{k-1} \bar{r}_{\tau_1} = 2^{k-1}\bar{r}_{\varepsilon} \implies 1+\log_{2} \frac{\bar{r}_{\tau_k}}{\bar{r}_{\varepsilon}}\geq k$. In particular, setting $k=K$, we have $K\leq 1+ \log_{2}\frac{\bar{r}_{\tau_{K}}}{\bar{r}_{\varepsilon}}$. By construction, we also have that $\bar{r}_{\tau_K}\leq \bar{r}_{\tau_{K+1}}$ and $\bar{r}_{\tau_{K+1}} = \bar{r}_{T+1}$. Combining this with the previous inequality, we thus have
\begin{equation}
    K \leq 1+ \log_{2}\frac{\bar{r}_{T+1}}{\bar{r}_{\varepsilon}}. \label{eq:k-ineq}
\end{equation}
    Arguing as in the proof of Lemma \ref{lemma:tilde-mu-t-bound-2-forward-flow-v2}, we have that
    \begin{align}
    \sum_{t={\tau_{k-1}}}^{\tau_{k}-1}\bar{r}_t \left[\frac{W_2^2(\mu_{t-\frac{1}{2}},\pi) - W_2^2(\mu_{t+\frac{1}{2}}, \pi)}{2\eta_t} + \frac{\eta_t}{2} \int_{\mathbb{R}^d} \|\zeta_t(x)\|^2\,\mathrm{d}\mu_t(x)\right] &\leq \bar{r}_{\tau_k} (2\bar{d}_{\tau_k}+ \bar{r}_{\tau_k}) \sqrt{G_{\tau_k-1}} \label{eq:tilde-mu-t-bound-2-forward-flow-recall} \\
    &=\mathcal{O}\left(\bar{r}_{\tau_k}(d_1 + \bar{r}_{\tau_k})\sqrt{G_{T}}\right)
    \end{align}
    for any $k\in\{2,\dots,K+1\}$, where in the second line we have used the fact that $G_{\tau_k-1} \leq G_{(T+1)-1} = G_{T}$ as in the proof of Proposition \ref{prop:dog-wgd-bound-1-forward-flow}, and the fact that $\bar{d}_{\tau_k}\leq d_1 + \bar{r}_{\tau_k}$. It follows that 
    \begin{align}
        \sum_{t=\tau_{k-1}}^{\tau_k-1} \mathcal{F}(\mu_t) - \mathcal{F}(\pi) 
        &=\sum_{t=\tau_{k-1}}^{\tau_k-1} \frac{\bar{r}_t}{\bar{r}_t}\left[\mathcal{F}(\mu_t) - \mathcal{F}(\pi)\right] \leq \frac{1}{\bar{r}_{\tau_{k-1}}} \sum_{t=\tau_{k-1}}^{\tau_k-1} \bar{r}_t \left[\mathcal{F}(\mu_t) - \mathcal{F}(\pi)\right] \hspace{-5mm}  \\
        &\leq \frac{1}{\bar{r}_{\tau_{k-1}}}  \sum_{t={\tau_{k-1}}}^{\tau_{k}-1}\bar{r}_t \left[\frac{W_2^2(\mu_{t-\frac{1}{2}},\pi) - W_2^2(\mu_{t+\frac{1}{2}}, \pi)}{2\eta_t} + \frac{\eta_t}{2} \int_{\mathbb{R}^d} \|\zeta_t(x)\|^2\,\mathrm{d}\mu_t(x)\right] \hspace{-5mm}  \\
        &\leq \mathcal{O}\left(\frac{\bar{r}_{\tau_k}}{\bar{r}_{\tau_{k}-1}}(d_1 + \bar{r}_{\tau_k})\sqrt{G_{T}}\right) \hspace{-5mm} \label{eq:non-weight-bound}
    \end{align}
    where in the final line we have used the fact that $\bar{r}_{\tau_k-1} \leq 2\bar{r}_{\tau_{k-1}} \iff \frac{1}{\bar{r}_{\tau_{k-1}}}\leq \frac{2}{\bar{r}_{\tau_k-1}}$. We require a bound for the ratio $\frac{\bar{r}_{\tau_{k}}}{\bar{r}_{\tau_k-1}}$. For any $t\geq 1$, observe that
    \begin{align}
        \bar{r}_{t+1} & \leq \bar{r}_t + W_2(\mu_{t-\frac{1}{2}},\mu_{t+\frac{1}{2}}) 
        \leq \bar{r}_t + \underbrace{W_2(\mu_{t-\frac{1}{2}},\mu_{t})}_{\mathrm{(I)}} + \underbrace{W_2(\mu_{t},\mu_{t+\frac{1}{2}})}_{\mathrm{(II)}}. \label{eq:r_t1_bound}
    \end{align}
    First consider (I). Let $x_{t-\frac{1}{2}}\sim \mu_{t-\frac{1}{2}}$, and define $x_t := x_{t-\frac{1}{2}} + \sqrt{2\eta_{t-1}} z_t$ with $z_t\sim \mathcal{N}(0,\mathbf{I}_d)$, so that $x_t \sim \mu_t$. We then have
    \begin{equation}
        W_2^2(\mu_{t-\frac{1}{2}},\mu_{t}) \leq \mathbb{E}\left[\|x_{t} - x_{t-\frac{1}{2}}\|^2\right] = \mathbb{E}\left[\|\sqrt{2\eta_{t-1}}z_t\|^2\right] = 2\eta_{t-1}d.  \label{eq:1step-bound}
    \end{equation}
    Now consider (II). Define the transport map $T_t:\mathbb{R}^d\rightarrow\mathbb{R}^d$ according to $T_t(x) = x - \eta_t\zeta_t(x)$, so that $(T_t)_{\#}\mu_t = \mu_{t+\frac{1}{2}}$. It follows that 
    \begin{equation}
    W_2^2(\mu_{t},\mu_{t+\frac{1}{2}})\leq \int_{\mathbb{R}^d} \|x - T_t(x)\|^2\,\mathrm{d}\mu_t(x) 
    = \eta_t^2 \int_{\mathbb{R}^d} \|\zeta_t(x)\|^2\,\mathrm{d}\mu_t(x) = \eta_t^2\|\zeta_t\|^2_{L^2(\mu_t)}. \label{eq:2step-bound}
    \end{equation}
    Substituting \eqref{eq:1step-bound} - \eqref{eq:2step-bound} into \eqref{eq:r_t1_bound}, and simplifying, it follows that
    \begin{align}
    \bar{r}_{t+1} &\leq \bar{r}_t + \sqrt{2\eta_{t-1} d} + \eta_t \|\zeta_t\|_{L^2(\mu_t)} = \bar{r}_t + \sqrt{2\eta_{t-1} d} + \frac{\bar{r_t}}{\sqrt{G_t}} \sqrt{G_t - G_{t-1}} \\
    &=\bar{r}_t\left(1 + \frac{\sqrt{2\eta_{t-1}d}}{\bar{r}_t} + \sqrt{1-\frac{G_{t-1}}{G_{t}}} \right)  \leq \bar{r}_t\left(2 + \frac{\sqrt{2\eta_{\mathrm{max}} d}}{{r}_{\varepsilon}} \right) \leq \bigg(2+c_{\varepsilon}\bigg) \bar{r}_t \label{eq:r-t-1-bound}
    \end{align}
    where in the final inequality we have defined $c_{\varepsilon}:=\frac{\sqrt{2\eta_{\mathrm{max}}d}}{r_{\varepsilon}}$. Substituting \eqref{eq:r-t-1-bound} into \eqref{eq:non-weight-bound}, we thus have that 
    \begin{equation}
    \sum_{t=\tau_{k-1}}^{\tau_k-1} \mathcal{F}(\mu_t) - \mathcal{F}(\pi)  = \mathcal{O}\left((d_1 + \bar{r}_{\tau_k})\sqrt{G_{T}}\right).
    \end{equation}
    Summing this display over $k=2,\dots,K+1$, it follows that 
    \begin{equation}
    \sum_{t=1}^{T} \left[\mathcal{F}(\mu_t) - \mathcal{F}(\pi)\right] 
    = \sum_{k=2}^{K+1}\left[\sum_{t=\tau_{k-1}}^{\tau_k-1} \mathcal{F}(\mu_t) - \mathcal{F}(\pi)\right]  = \mathcal{O}\left(\left(d_1K + \sum_{k=2}^{K+1}\bar{r}_{\tau_k}\right)\sqrt{G_{T}}\right). \label{eq:unweighted-penultimate}
    \end{equation}
    From \eqref{eq:k-ineq}, we have that $K=\mathcal{O}(\log_{+}\frac{\bar{r}_{T+1}}{r_{\varepsilon}})$. In addition, from the definition, we have that ${\bar{r}_{\tau_K}} \geq 2^{K-k} \bar{r}_{\tau_k} \implies \bar{r}_{\tau_k} \leq 2^{-K+k}\bar{r}_{\tau_K} \leq 2^{-K+k} \bar{r}_{\tau_{K+1}}= 2^{-K+k} \bar{r}_{T+1}$ for all $2\leq k\leq K$. Thus, in particular,  it holds that $\sum_{k=2}^{K+1} \bar{r}_{\tau_k}\leq \bar{r}_{T+1} \sum_{k=2}^{K+1} 2^{-K+k} = \mathcal{O}(\bar{r}_{T+1})$. Substituting back into \eqref{eq:unweighted-penultimate}, and using Lemma \ref{lemma:average}, we have that
    \begin{equation}
    \mathcal{F}(\bar{\mu}_T) - \mathcal{F}(\pi) \leq \frac{1}{T} \sum_{t=1}^{T} \left[\mathcal{F}(\mu_t) - \mathcal{F}(\pi)\right] 
    = \mathcal{O}\left(\frac{\left(d_1\log_{+}\frac{\bar{r}_{T+1}}{r_{\varepsilon}} + \bar{r}_{T+1}\right)\sqrt{G_{T}}}{T}\right). 
    \end{equation}
\end{proof}

\begin{proof}[Proof of \ref{corr:dog-wgd-bound-2-forward-flow-uniform}]
    On the event $\smash{\{\bar{r}_{T+1} \leq D\}\cap\{\bar{r}'_{T+1}\leq D\}}$, we have $\smash{\bar{r}_{T+1}\leq D}$, $\smash{d_1\leq D}$ and $\smash{\bar{g}_t \leq G_{D}}$ for all $1\leq t\leq T$. The latter implies that $G_{T} \leq \sum_{t=1}^{T}\bar{g}_t^2 \leq T{G}_D^2$. It follows that 
    \begin{equation}
        \mathcal{F}(\bar{\mu}_T) - \mathcal{F}(\pi) =\mathcal{O}\left(\frac{\left(D\log_{+}\frac{D}{r_{\varepsilon}} + D\right)\sqrt{TG_D^2}}{T}\right) = \mathcal{O}\left(\frac{DG_D}{\sqrt{T}}\log_{+}\left(\frac{D}{r_{\varepsilon}}\right)\right).
    \end{equation}
\end{proof}

\begin{proof}[Proof of Proposition \ref{prop:iterate-stability-forward-flow}]
    From Lemma \ref{corollary:evi-forward-flow}, we have that
    \begin{align}
         {d_{k+1}^2 - d_k^2} &\leq 2\eta_k \left(\mathcal{F}(\pi) - \mathcal{F}(\mu_k)\right)  + \eta_k^2\int_{\mathbb{R}^d}\|\xi_k(x)\|^2\mathrm{d}\mu_k(x) \\
         &\leq  \eta_k^2\int_{\mathbb{R}^d}\|\zeta_k(x)\|^2\mathrm{d}\mu_k(x).
        \label{eq:evi-time-dep-forward-flow}
    \end{align}
    We now proceed by induction. The base case holds by assumption: $\bar{r}_1 = r_{\varepsilon} \leq 3d_1$. Suppose now that $\bar{r}_t\leq 3d_1$. We then have, via a telescoping sum, that 
    \begin{align}
        d_{t+1}^2  - d_1^2 = \sum_{k=1}^{t} \left(d_{k+1}^2 - d_{k}^2 \right) &\leq \sum_{k=1}^t \eta_k^2 \int_{\mathbb{R}^d}\|\zeta_k(x)\|^2\mathrm{d}\mu_k(x) \\
         & =  \sum_{k=1}^t\frac{\bar{r}_k^2\int_{\mathbb{R}^d}\|\zeta_k(x)\|^2\mathrm{d}\mu_k(x)}{G'_k} \\
         &= \sum_{k=1}^t\frac{\bar{r}_k^2\left(\sum_{i=1}^{k}\int_{\mathbb{R}^d}\|\zeta_i(x)\|^2\mathrm{d}\mu_i(x) - \sum_{i=1}^{k-1}\int_{\mathbb{R}^d}\|\zeta_i(x)\|^2\mathrm{d}\mu_i(x)\right)}{8^4\log_{+}^2 \left(1+ \frac{k\bar{g}_k^2}{\bar{g}_1^2}\right)\left(G_{k-1} + 16\bar{g}_k^2\right)} \\
         &\leq \frac{\bar{r}_t^2}{8^4}\sum_{k=1}^t\frac{G_{k} - G_{k-1}}{\log_{+}^2 \left(1+ \frac{k\bar{g}_k^2}{\bar{g}_1^2}\right)\left(G_{k-1} + 16\bar{g}_k^2\right)} \\
         &\leq \frac{\bar{r}_t^2}{8^4}\sum_{k=1}^t\frac{G_{k} - G_{k-1}}{\log_{+}^2 \left(\frac{G_k + \bar{g}_k^2}{\bar{g}_1^2}\right)\left(G_{k} + \bar{g}_k^2\right)} \label{eq:penultimate-forward-flow} \\
         &\leq \frac{\bar{r}_t^2}{8^4}\leq \frac{9d_1^2}{8^4} \label{eq:final-forward-flow}
    \end{align}
    where in the penultimate line \eqref{eq:penultimate-forward-flow} we have used the fact that, since $\bar{g}_k\geq \|\zeta_k\|_{L^2(\mu_k)}$ for all $k$, 
    \begin{align}
        \log_{+}^2 \left(1+ \frac{k\bar{g}_k^2}{\bar{g}_1^2}\right)\left(G_{k-1} + 16\bar{g}_k^2\right) &\geq \log_{+}^2 \left(\frac{\bar{g}_1^2 + \bar{g}_k^2 + \sum_{i=2}^{k} \bar{g}_i^2}{\bar{g}_1^2}\right)\left(G_{k-1} + \bar{g}_k^2 + \bar{g}_k^2\right) \\
        &\geq \log_{+}^2 \left(\frac{\sum_{i=1}^{k} \bar{g}_i^2 + \bar{g}_k^2}{\bar{g}_1^2}\right)\left(G_{k-1} + \|\zeta_k\|_{L^2(\mu_k)}^2 + \bar{g}_k^2\right) \\
        &\geq \log_{+}^2 \left( \frac{\sum_{i=1}^k \|\zeta_k\|_{L^2(\mu_k)}^2+\bar{g}_k^2}{\bar{g}_1^2}\right)\left(G_{k} + \bar{g}_k^2\right) \\
        &= \log_{+}^2 \left( \frac{G_k+ \bar{g}_k^2}{\bar{g}_1^2}\right)\left(G_{k} + \bar{g}_k^2\right) 
    \end{align}
    and in the final line \eqref{eq:final-forward-flow} we have used \cite[][Lemma 6]{ivgi2023dog} for the first inequality; and the inductive assumption for the second inequality. Rearranging \eqref{eq:final-forward-flow}, we have that 
    \begin{equation}
        d_{t+1}^2 \leq \left(1 + \frac{9}{8^4}\right)d_1^2 \leq 2^2 d_1^2 
    \end{equation}
    which implies, in particular, that $d_{t+1}\leq 2d_1$. Thus, using the triangle inequality for the Wasserstein distance \citep[e.g.,][]{clement2008elementary}, we conclude, as required, that 
    \begin{equation}
        r_{t+1} := W_2(\mu_{\frac{1}{2}},\mu_{t+\frac{1}{2}}) \leq W_2(\mu_\frac{1}{2},\pi) + W_2(\mu_{t+\frac{1}{2}},\pi) := d_1 + d_{t+1} \leq 3d_1.
    \end{equation}
\end{proof}

\subsubsection{Stochastic Case: Adaptive Step Size}
\label{app:additional-proofs-forward-flow-non-smooth-stochastic-adaptive}
\begin{proof}[Proof of Lemma \ref{lemma:tilde-mu-t-bound-1-stoch-forward-flow}]
    Recall, using the definition of $\mathcal{F}$, that
    \begin{align}
        \mathcal{F}(\mu_t) - \mathcal{F}(\pi) = \left[\mathcal{E}(\mu_t) - \mathcal{E}(\pi)\right] + \left[\mathcal{H}(\mu_t) - \mathcal{H}(\pi)\right] \label{eq:f-def}
    \end{align}
     We will bound each term on the RHS separately. For $\mathcal{E}$, using geodesic convexity; and the definition of $\delta_t(x)$; we have that
     \begin{align}
     \hspace{-4mm} \mathcal{E}({\mu}_t) - \mathcal{E}(\pi) &\leq \int_{\mathbb{R}^d} \langle {\zeta}_t(x), x - t_{\mu_t}^{\pi}(x)\rangle \,\mathrm{d}\mu_t(x) \nonumber \\[1mm]
     &\leq\int_{\mathbb{R}^d} \langle \hat{\zeta}_t(x), x - t_{\mu_t}^{\pi}(x)\rangle \,\mathrm{d}\mu_t(x) - \int_{\mathbb{R}^d} \langle \underbrace{\hat{\zeta}_t(x) - {\zeta}_t(x)}_{\delta_t(x)}, x-\boldsymbol{t}_{\mu_t}^{\pi}(x) \rangle \,\mathrm{d}\mu_t(x) \\[-2mm]
     &= \int_{\mathbb{R}^d} \langle \hat{\zeta}_t(x), x - t_{\mu_t}^{\pi}(x)\rangle \,\mathrm{d}\mu_t(x) - \int_{\mathbb{R}^d} \langle \delta_t(x), x-\boldsymbol{t}_{\mu_t}^{\pi}(x) \rangle \,\mathrm{d}\mu_t(x) \\[3mm]
     &\leq \frac{W_2^2(\mu_t,\pi) - W_2^2(\mu_{t+\frac{1}{2}},\pi)}{2\eta_t}+\frac{\eta_t}{2}\int_{\mathbb{R}^d} \|\hat{\zeta}_t(x)\|^2\,\mathrm{d}\mu_t(x) - \int \langle \delta_t(x), x-\boldsymbol{t}_{\mu_t}^{\pi}(x)\rangle \,\mathrm{d}\mu_t(x) \label{ineq:g}
     \end{align}
     where in the final line we have upper bounded the first term using the same argument as in \eqref{eq:intermidate-result-g-1} - \eqref{eq:intermidate-result-g-3}. Meanwhile, for $\mathcal{H}$, using \citet[Lemma 5]{durmus2019analysis}, we have that
     \begin{equation}
         \mathcal{H}(\mu_t) - \mathcal{H}(\pi) \leq \frac{W_2^2(\mu_{t-\frac{1}{2}},\pi) - W_2^2(\mu_t,\pi)}{2\eta_t} \label{ineq:h}
     \end{equation}
     Substituting the inequalities in \eqref{ineq:g} and \eqref{ineq:h} into \eqref{eq:f-def}; cancelling like terms; and taking a weighted sum, the result follows. 
\end{proof}

\begin{proof}[Proof of Lemma \ref{lemma:tilde-mu-t-bound-2-stoch-forward-flow}]
    The proof is identical to the proof of Lemma 
    \ref{lemma:tilde-mu-t-bound-2-forward-flow-v2}, using the definitions in \eqref{eq:defs1-stoc-forward-flow} - \eqref{eq:defs2-stoc-forward-flow} in place of the definitions in \eqref{eq:defs1-forward-flow} - \eqref{eq:defs2-forward-flow}.
\end{proof}

\begin{proof}[Proof of Lemma \ref{lemma:tilde-mu-t-bound-3-stoch-forward-flow}]
    For $1\leq t\leq T$, define the random variables
    \begin{align}
        Z_t = \bar{r}_{t} \bar{d}_{t}, \quad X_t = \frac{1}{\bar{d}_{t}} \int_{\mathbb{R}^d} \langle \delta_t(x), x - \boldsymbol{t}_{\mu_t}^{\pi}(x) \rangle \,\mathrm{d}\mu_t(x), \quad \hat{X}_t = -\frac{1}{\bar{d}_{t}} \int_{\mathbb{R}^d} \langle \zeta_t(x), x - t_{\mu_t}^{\pi}(x) \rangle \,\mathrm{d}\mu_t(x). 
    \end{align}
    Using these definitions, we have that
    \begin{align}
        \sum_{k=1}^{t} Z_k X_k &= \sum_{k=1}^{t}\bar{r}_{k}\int_{\mathbb{R}^d}\langle {\delta}_k(x), x- \boldsymbol{t}_{\mu_k}^{\pi}(x) \rangle \mathrm{d}\mu_k(x)
    \intertext{
    and also that 
    }
        \sum_{k=1}^{t} (X_k - \hat{X}_k)^2 &= \sum_{k=1}^{t} \left[\frac{1}{\bar{d}_{k}}\int_{\mathbb{R}^d}\langle \hat{\zeta}_k(x), x- \boldsymbol{t}_{\mu_k}^{\pi}(x) \rangle \mathrm{d}\mu_k(x)\right]^2 \\
        &\leq \sum_{k=1}^{t} \frac{1}{\bar{d}_{k}^2}\int_{\mathbb{R}^d} \|\hat{\zeta}_k(x)\|^2\mathrm{d}\mu_k(x) \int_{\mathbb{R}^d} \|x- \boldsymbol{t}_{\mu_k}^{\pi}(x)\|^2 \mathrm{d}\mu_k(x) \\
        &=\sum_{k=1}^{t} \frac{d_{k}^2}{\bar{d}_{k}^2}\int_{\mathbb{R}^d} \|\hat{\zeta}_k(x)\|^2\mathrm{d}\mu_k(x) \\
        & \leq \sum_{k=1}^{t} \int_{\mathbb{R}^d} \|\hat{\zeta}_k(x)\|^2\mathrm{d}\mu_k(x)= G_{t}
    \end{align}
    where the second line follows from the Cauchy-Schwarz inequality, and the third line from the definition of $\bar{d}_k$. It follows, substituting these (in)equalities, that
    \begin{align}
        &\mathbb{P}\left(\exists t \leq T : \left|\sum_{k=1}^{t} \bar{r}_k \int_{\mathbb{R}^d} \langle \delta_k(x),x - \boldsymbol{t}_{\mu_k}^{\pi}(x)\rangle\mathrm{d}\mu_k(x) \right| \geq 8 \bar{r}_{t}\bar{d}_{t}\sqrt{\theta_{t,\delta} G_{t} + \theta_{t,\delta}^2G^2}\right) \\
        =\hspace{.5mm}&\mathbb{P}\left(\exists t \leq T : \left|\sum_{k=1}^{t} Z_kX_k \right| \geq 8 Z_{t}\sqrt{\theta_{t,\delta} G_{t} + \theta_{t,\delta}^2G^2}\right) \\
        \leq\hspace{.5mm}&\mathbb{P}\left(\exists t \leq T : \left|\sum_{k=1}^{t} Z_kX_k \right| \geq 8 Z_{t}\sqrt{\theta_{t,\delta} \sum_{k=1}^{t}(X_k - \hat{X}_k)^2 + \theta_{t,\delta}^2G^2}\right) \\[1mm]
        \leq &\hspace{.5mm}\delta + \mathbb{P}(\exists t\leq T: g_t>G) \\[3mm]
        = \hspace{.5mm}&\delta + \mathbb{P}(\bar{g}_T > G)
    \end{align}
    where in penultimate line we have used \citep[][Lemma 7]{ivgi2023dog}, which we can apply since, with probability one, we can bound $|X_t|$ and $|\hat{X}_t|$, e.g., 
    \begin{align}
        |\hat{X}_t| 
        \leq \frac{1}{\bar{d}_t} \left[\int_{\mathbb{R}^d} \|\zeta_t(x)\|^2 \,\mathrm{d}\mu_t(x)\right]^{\frac{1}{2}} \left[\int_{\mathbb{R}^d} \|x-t_{\mu_t}^{\pi}(x)\|^2 \,\mathrm{d}\mu_t(x)\right]^{\frac{1}{2}} 
        \leq \frac{d_t}{\bar{d}_t} g(\mu_t) \leq \bar{g}_t, 
    \end{align}
    where in the first inequality line we have once again made use of the Cauchy-Schwarz inequality, in the second inequality we use the definition of ${d}_t$, and in the final inequality the definition of $\bar{d}_t$ and $\bar{g}_t$.
\end{proof}

\subsection{Forward-Flow Discretization, Smooth Setting}
\label{app:additional-proofs-forward-flow-smooth}

\subsubsection{Deterministic Case: Constant Step Size}
\label{app:additional-proofs-forward-flow-smooth-deterministic-constant}

\begin{proof}[Proof of Lemma \ref{lemma:g-descent-smooth}]
For each $s\in[0,1]$, let us define the family of transport maps $\Phi_s:\mathbb{R}^d\rightarrow\mathbb{R}^d$ according to
\begin{align}
    \Phi_{s}(x) &= x+s(T_{\eta}(x) - x), \quad \quad \partial_{s}\Phi_s(x) = T_\eta(x) - x
\end{align}
with $T_{\eta}(x) - x = -\eta\zeta_t(x)$. Thus, in particular, $\Phi_{0}(x) = x$ and $\Phi_{1}(x) = x-\eta\zeta_t(x)$. In addition, let us define the curve $(\nu_s)_{s\in[0,1]}$ as the pushforward of $\mu_t$ under this transport map, viz $\nu_s = (\Phi_s)_{\#}\mu_t$, with $\nu_0 = \mu_t$ and $\smash{\nu_1 = \mu_{t+\frac{1}{2}}}$. It is straightforward to verify that this curve solves the continuity equation $\smash{\partial_{s}\nu_s + \nabla \cdot(
\nu_sv_s) = 0}$, with velocity field
\begin{equation}
    v_s(x) = \partial_{s}\Phi_s\left(\Phi_s^{-1}(x)\right) = (T_{\eta} - \mathrm{id})\circ \Phi_{s}^{-1}(x).
\end{equation}
We can verify this as follows. Let $\varphi\in C_c^\infty(\mathbb R^d)$. We then have
\begin{align}
&\frac{\partial}{\partial s} \left[\int_{\mathbb R^d}\varphi(x) \mathrm d\nu_s(x)\right] =
\frac{\partial}{\partial s}\left[\int_{\mathbb R^d}\varphi\big(\Phi_s(x)\big)\mathrm d\nu(x)\right] \\
&= \int \big\langle \nabla\varphi\big(\Phi_s(x)\big),\partial_s\Phi_s(x)\big\rangle\mathrm d\nu(x) =\int \big\langle \nabla\varphi\!\big(\Phi_s(x)\big),T_\eta(x)-x\big\rangle\mathrm d\nu(x) \\
&=\int \big\langle \nabla\varphi(x),\big(T_\eta-\mathrm{id}\big)\!\circ\Phi_s^{-1}(x)\big\rangle\mathrm d\nu_s(x)
=\int \langle \nabla\varphi(x),v_s(x)\rangle\mathrm d\nu_s(x).
\end{align}
which is precisely the weak form of the continuity equation $ \partial_{s}\nu_s + \nabla \cdot(
\nu_sv_s) = 0$. Now, using the chain rule \citep[e.g.,][Section 10.1.2]{ambrosio2008gradient}, we can compute
\begin{equation}
    \frac{\mathrm{d}}{\mathrm{d}s}\mathcal{E}(\nu_s) = \int_{\mathbb{R}^d} \langle \nabla_{W_2}\mathcal{E}(\nu_s)(x), v_s(x)\rangle \nu_s(\mathrm{d}x).
\end{equation}
It follows, integrating over $s\in[0,1]$, and pulling back to the reference measure $\nu_0 = \mu_t$, that 
\begin{align}
    \mathcal{E}(\mu_{t+\frac{1}{2}}) - \mathcal{E}(\mu_t) &= \int_0^1 \left[\int_{\mathbb{R}^d} \left\langle \nabla_{W_2}\mathcal{E}(\nu_s)(x), v_s(x) \right\rangle \nu_s(\mathrm{d}x)\right] \mathrm{d}s \\
    &=\int_0^1 \left[\int_{\mathbb{R}^d} \left\langle \nabla_{W_2}\mathcal{E}(\nu_s)\circ \Phi_s(x), v_s\circ \Phi_s (x) \right\rangle *(\Phi_s^{-1})_{\#}\nu_s(\mathrm{d}x)\right] \mathrm{d}s \\
    &=\int_0^1 \left[\int_{\mathbb{R}^d} \left\langle \nabla_{W_2}\mathcal{E}(\nu_s)\circ \Phi_s(x), T_{\eta}(x) - x \right\rangle \nu_0(\mathrm{d}x)\right] \mathrm{d}s \\
    &=\int_0^1 \left[\int_{\mathbb{R}^d} \left\langle \nabla_{W_2}\mathcal{E}(\nu_s)\circ \Phi_s(x), T_{\eta}(x) - x \right\rangle \,\mathrm{d}\mu_t(x)\right] \mathrm{d}s
\end{align}
Next, adding and subtracting the Wasserstein gradient $\nabla_{W_2}\mathcal{E}(\nu_0) :=\nabla_{W_2}\mathcal{E}(\mu_t)$ inside the inner product, we obtain
\begin{align}
    \mathcal{E}(\mu_{t+\frac{1}{2}}) - \mathcal{E}(\mu_t) 
    &= 
    \int_0^1 \left[\int_{\mathbb{R}^d} \left\langle \nabla_{W_2}\mathcal{E}(\mu_t)(x), T_{\eta}(x) - x \right\rangle \,\mathrm{d}\mu_t(x)\right] \mathrm{d}s \\
    &+ \int_0^1 \left[\int_{\mathbb{R}^d} \left\langle \left[\nabla_{W_2}\mathcal{E}(\nu_s)\circ \Phi_s - \nabla_{W_2}\mathcal{E}(\mu_t)\right](x), T_{\eta}(x) - x \right\rangle \,\mathrm{d}\mu_t(x)\right] \mathrm{d}s \nonumber \\
    &=- \eta \int_0^1 \left[\int_{\mathbb{R}^d} \|\zeta_t(x)\|^2\,\mathrm{d}\mu_t(x)\right] \mathrm{d}s \label{eq:bound-1} \\
    &\hspace{3mm}+\int_0^1 \left[\int_{\mathbb{R}^d} \left\langle \left[\nabla_{W_2}\mathcal{E}(\nu_s)\circ \Phi_s - \nabla_{W_2}\mathcal{E}(\nu_0)\right](x), T_{\eta}(x) - x \right\rangle \,\mathrm{d}\mu_t(x)\right] \mathrm{d}s \nonumber 
\end{align}
where in the second line we have used the facts that $T_{\eta} - \mathrm{id} = -\eta \zeta_t$ and $\nabla_{W_2}\mathcal{E}(\mu_t) = \zeta_t$. For the second term, applying Cauchy-Schwarz, and using smoothness (Assumption \ref{assumption:smooth-forward-flow}),\footnote{We use an equivalent form of the smoothness assumption in terms of the Wasserstein gradients, described in the commentary after Assumption \ref{assumption:smooth-forward-flow} (see Section \ref{sec:assumptions}). In particular, for every a.c.\ $\mu$ and every measurable
$S$ with $S_{\#}\mu=\nu$, it holds that
\begin{equation}
\label{eq:Lip-grad}
\big\|(\nabla_{\mathsf W}\mathcal G(\nu))\circ S-\nabla_{\mathsf W}\mathcal G(\mu)\big\|_{L^2(\mu)}
\ \le\ L\,\|S-\mathrm{id}\|_{L^2(\mu)}.
\end{equation}} we have
\begin{align}
\Big|\mathrm{(II)}\Big|
&\leq \int_0^1
\big\|\nabla_{W_2}\mathcal{E}(\nu_s)\circ\Phi_s-\nabla_{W_2}\mathcal{E}(\nu_0)\big\|_{L^2(\mu_t)}
\|T_\eta-\mathrm{id}\|_{L^2(\mu_t)}\,\mathrm ds\\
&\leq \int_0^1 L\,\|\Phi_s-\mathrm{id}\|_{L^2(\mu_t)}\|T_\eta-\mathrm{id}\|_{L^2(\mu_t)}\,\mathrm ds
\ =\ \int_0^1 L\,s\,\|T_\eta-\mathrm{id}\|_{L^2(\mu_t)}^2\,\mathrm ds\\
&=\ \frac{L}{2}\,\|T_\eta-\mathrm{id}\|_{L^2(\mu_t)}^2
\ =\ \frac{L\eta^2}{2}\int_{\mathbb{R}^d}\|\zeta_t\|^2\,\mathrm d\mu_t, \label{eq:bound-2}
\end{align}
where in the second line we have used the fact that $\|\Phi_s-\mathrm{id}\|_{L^2(\mu_t)}=s\,\|T_\eta-\mathrm{id}\|_{L^2(\mu_t)}$; and in the final line the fact that $\smash{\|T_{\eta} - \mathrm{id}\|_{L^2(\mu_t)}^2 = \|\eta\zeta_t\|^2_{L^2(\mu_t)} = \eta^2 \|\zeta_t\|_{L^2(\mu_t)}^2}$. Combining \eqref{eq:bound-1} and \eqref{eq:bound-2}, and simplifying, we have
\begin{equation}
    \mathcal{E}(\mu_{t+\frac{1}{2}}) \leq \mathcal{E}(\mu_t) -\eta \int_{\mathbb{R}^d}\|\zeta_t(x)\|^2\,\mathrm{d}\mu_t(x) + \frac{L\eta^2}{2}\int_{\mathbb{R}^d}\|\zeta_t(x)\|^2\,\mathrm{d}\mu_t(x)
\end{equation}
which is precisely the desired bound.
\end{proof}

\begin{proof}[Proof of Lemma \ref{lem:heat-step-tight}]
Let $\smash{X \sim \mu_{t+\frac{1}{2}}}$, and set $Y = X + Z$, where $\smash{Z\sim \mathcal{N}(0,2\eta\mathbf{I}_d)}$, independent of $X$, so that $\smash{Y\sim\mu_{t+1}}$. Define the coupling
\begin{equation}
\gamma := \mathrm{Law}\big(X,Y\big), 
\end{equation}
so that $\gamma\in \Gamma(\mu_{t+\frac{1}{2}},\mu_{t+1})$. For $s\in[0,1]$, define
\begin{equation}
\Psi_s(x,y) \coloneqq (1-s)x + s y,\qquad \nu_s \coloneqq (\Psi_s)_{\#}\gamma,
\end{equation}
so that $\nu_0=\mu_{t+\frac12}$ and $\nu_1=\mu_{t+1}$. Using standard results \citep[e.g.,][Section 8.3]{ambrosio2008gradient}, $(\nu_s)_{s\in[0,1]}$ solves the continuity equation $\partial_s \nu_s + \nabla\!\cdot(\nu_s v_s) = 0$ with the velocity field
\begin{equation}
v_s(z)=\int_{\mathbb{R}^d\times\mathbb{R}^d} (y-x) \mathrm d\gamma^z_s(x,y),
\end{equation}
where $\gamma^z_s$ denotes the disintegration of $\gamma$ along $\Psi_s$, i.e., $\int \varphi(z)v_s(z)\nu_s(\mathrm dz) = \iint \varphi(\Psi_s(x,y))(y-x)\mathrm d\gamma(x,y)$ for all test functions $\varphi$. Using the chain rule in Wasserstein space \citep[][Section 10.1.2]{ambrosio2008gradient}, we obtain
\begin{align}
\frac{\mathrm d}{\mathrm ds}\mathcal{E}(\nu_s)
&= \int_{\mathbb{R}^d} \Big\langle \nabla_{W_2}\mathcal{E}(\nu_s)(z),v_s(z)\Big\rangle \nu_s(\mathrm dz) \\
&= \int_{\mathbb{R}^d\times\mathbb{R}^d} \Big\langle \nabla_{W_2}\mathcal{E}(\nu_s)\big(\Psi_s(x,y)\big),y-x\Big\rangle \mathrm d\gamma(x,y). \label{eq:heat-tight-deriv}
\end{align}
Similar to before, integrating over $s\in[0,1]$, and adding and subtracting $\nabla_{W_2}\mathcal{E}(\nu_0)$, we arrive at
\begin{align}
\mathcal{E}(\nu_1)-\mathcal{E}(\nu_0)
&= \underbrace{\int_{\mathbb{R}^d\times\mathbb{R}^d} \Big\langle \nabla_{W_2}\mathcal{E}(\nu_0)(x),y-x\Big\rangle \mathrm d\gamma(x,y)}_{\mathrm{(I)}} \label{eq:heat-tight-I+II}\\
&\quad + \underbrace{\int_0^1 \int_{\mathbb{R}^d\times\mathbb{R}^d} \Big\langle \big(\nabla_{W_2}\mathcal{E}(\nu_s)\circ \Psi_s - \nabla_{W_2}\mathcal{E}(\nu_0)\circ \Psi_0\big)(x,y),y-x\Big\rangle \mathrm d\gamma(x,y)\mathrm ds}_{\mathrm{(II)}}. \nonumber
\end{align}
For the first term, by the definition of $\gamma$, $z=y-x$ is independent of $x$, and satisfies $\mathbb{E}[z]=0$. It follows that 
\begin{equation}
    \mathrm{(I)} = \mathbb{E}\left[\big\langle \nabla_{W_2}\mathcal{E}(\eta_0)(X), Z\big\rangle\right] = \left\langle \mathbb{E}\left[\nabla_{W_2}\mathcal{E}(\eta_0)(X)\right], \mathbb{E}\left[Z\right]\right\rangle = 0. \label{eq:heat-tight-I}
\end{equation}
For the second term, applying Cauchy–Schwarz, and using the smoothness assumption (Assumption~\ref{assumption:smooth-forward-flow}), we have
\begin{align}
|\mathrm{(II)}|
&\leq \int_0^1 \left\| \nabla_{W_2}\mathcal{E}(\nu_s)\circ \Psi_s - \nabla_{W_2}\mathcal{E}(\nu_0)\circ \Psi_0 \right\|_{L^2(\gamma)} \ \left\| y-x \right\|_{L^2(\gamma)} \mathrm{d}s \\
&\leq \int_0^1 L\left\| \Psi_s - \Psi_0 \right\|_{L^2(\gamma)} \ \left\| y-x \right\|_{L^2(\gamma)} \mathrm ds
= \int_0^1 Ls \left\| y-x \right\|_{L^2(\gamma)}^2 \mathrm ds \\
&= \frac{L}{2}\int_{\mathbb{R}^d\times\mathbb{R}^d} \|y-x\|^2 \mathrm d\gamma(x,y)
= \frac{L}{2}\mathbb{E}\|Z\|^2. \label{eq:heat-tight-II}
\end{align}
Substituting \eqref{eq:heat-tight-I}, \eqref{eq:heat-tight-II} into \eqref{eq:heat-tight-I+II}, and identifying $\nu_0=\mu_{t+\frac12}$ and $\nu_1=\mu_{t+1}$, we finally arrive at
\begin{equation}
\mathcal{E}(\mu_{t+1})\leq \mathcal{E}(\mu_{t+\frac12})  + \frac{L}{2}\mathbb{E}\|Z\|^2
= \frac{L}{2}\cdot 2\eta d  = L\eta d.
\end{equation}
\end{proof}

\subsubsection{Deterministic Case: Adaptive Step Size}
\label{app:additional-proofs-forward-flow-smooth-deterministic-adaptive}

\begin{proof}[Proof of Lemma \ref{lemma:smooth-grad-bound-improved}]
For $\eta\in (0,\frac{1}{L})$, and for any $\mu\in\mathcal{P}_2(\mathcal{X})$, let us define $\mu_{\eta}^{+}=(T_{\eta})_{\#}\mu$, where $T_{\eta}:\mathbb{R}^d\rightarrow\mathbb{R}^d$ is the transport map given by $T_{\eta}(x)= x - \eta\zeta(\mu)(x)$. We then have that (see, e.g., Lemma \ref{lemma:g-descent-smooth})
\begin{align}
    \mathcal{E}(\mu_{\eta}^{+}) &\leq \mathcal{E}(\mu) - \eta\left(1-\frac{\eta L}{2}\right) \int \|\zeta(\mu)(x)\|^2\mathrm{d}\mu(x).
     \label{eq:g-half-step}
\end{align} 
We will seek a similar bound for $\mathcal{H}$. Under our smoothness assumption, we have that $\smash{\|\nabla \zeta(\mu)(\cdot)\|_{\mathrm{op}}\leq L}$ for all $\smash{\mu\in\mathcal{P}_2(\mathcal{X})}$. It follows, using also our assumption on the step size, that $\smash{T_{\eta}'(x) = \mathbf{I} - \eta\nabla \zeta(\mu)(x)}$ is invertible, which in turn implies that $T_{\eta}$ is a diffeomorphism. We can thus apply the change-of-variables formula to obtain the density of the pushforward measure $\mu_{\eta}^{+}:=(T_{\eta}^{+})_{\#}\mu$. In particular, we have that 
    \begin{equation}
        \mu_{\eta}^{+}(y) = \mu(x)|\mathrm{det}T'_{\eta}(x)|^{-1} \implies \log \mu_{\eta}^{+}(y) = \log \mu(x) - \log |\mathrm{det}T'_{\eta}(x)|
    \end{equation}
    It follows, integrating w.r.t. $\mu_{\eta}^{+}$, that
    \begin{align}
        \mathcal{H}(\mu_{\eta}^{+}) &:= \int \log \mu_{\eta}^{+}(y) \mu_{\eta}^{+}(y)\mathrm{d}y = \int \left(\log \mu(x) - \log |\mathrm{det}T'_{\eta}(x)| \right) \mu(x)\mathrm{d}x \nonumber \\
        &= \int \log \mu(x) \mu(x)\mathrm{d}x - \int \log |\det T_{\eta}'(x)| \mu(x)\mathrm{d}x = \mathcal{H}(\mu) - \int \log |\det(
        \mathbf{I} - \eta \nabla \zeta(x))|\mu(x)\mathrm{d}x.
    \end{align}
    Using standard results on matrices \citep[e.g.,][]{horn2012matrix}, we have that
    \begin{align}
        -\log \det(\mathbf{I} - \mathbf{A}) &= - \mathrm{Tr}\left[\log (\mathbf{I} - \mathbf{A})\right] = -\mathrm{Tr}\left[\sum_{k=1}^{\infty} \frac{(-1)^{k+1}}{k} (-\mathbf{A})^k\right] \\
        &=\mathrm{Tr}\left[\mathbf{A}\right] + \sum_{k=2}^{\infty}\frac{\mathbf{A}^k}{k}  \\
        &\leq \mathrm{Tr}\left[\mathbf{A}\right] + \sum_{k=2}^{\infty} \frac{\|\mathbf{A}\|_{\mathrm{F}} \|\mathbf{A}^{k-1}\|_{\mathrm{F}}}{k} \\
        &\leq \mathrm{Tr}\left[\mathbf{A}\right] + \sum_{k=2}^{\infty} \frac{\|\mathbf{A}\|_{\mathrm{F}} \|\mathbf{A}\|_{\mathrm{F}}\|\mathbf{A}\|^{k-2}_{\mathrm{op}}}{k} \\
        &\leq \mathrm{Tr}\left[\mathbf{A}\right] + \|\mathbf{A}\|_{\mathrm{F}}^2 \sum_{k=2}^{\infty} \frac{ \|\mathbf{A}\|^{k-2}_{\mathrm{op}}}{k} \\
        &\leq \mathrm{Tr}\left[\mathbf{A}\right] + \frac{\|\mathbf{A}\|_{F}^2}{2(1-\|\mathbf{A}\|_{\mathrm{op}})}.
    \end{align}
    Applying this inequality with $\mathbf{A} = \eta\nabla \zeta(\mu)(x)$, and substituting into the previous display, we arrive at 
    \begin{align}
        \mathcal{H}(\mu_{\eta}^{+}) &\leq \mathcal{H}(\mu) + \eta \int \mathrm{tr}(\nabla \zeta(\mu)(x))\mu(x)\mathrm{d}x + \int \frac{\|\nabla \zeta(\mu)(x)\|_{F}^2}{2(1-\|\nabla \zeta(\mu)(x)\|_{\mathrm{op}})} \mu(x)\mathrm{d}x \\
        &\leq \mathcal{H}(\mu) + \eta d L + \frac{\eta^2 d L}{2(1- \eta L)} \label{eq:h-half-step}
    \end{align}    
    where in the second line we have used the fact that $\|\nabla \zeta(\mu)(x)\|_{\mathrm{op}}\leq L \implies \|\nabla \zeta(\mu)(x)\|_{F}^2 \leq d\|\nabla \zeta(\mu)(x)\|_{\mathrm{op}}^2 \leq d L^2$. Summing the inequalities in \eqref{eq:g-half-step} and \eqref{eq:h-half-step}, we arrive at 
    \begin{equation}
        \mathcal{F}(\mu_{\eta}^{+})\leq \mathcal{F}(\mu) - \eta\left(1-\frac{\eta L}{2}\right) \int \|\zeta(\mu)(x)\|^2 \mathrm{d}\mu(x) + \eta d L + \frac{\eta^2 d L}{2(1-\eta L)}.
    \end{equation}
    Let $\eta=\frac{1}{2L}$. Other choices of $\eta\in(0,\frac{1}{L})$ are possible, and may result in slightly tighter bounds, but this choice will simplify the presentation. We then have 
    \begin{equation}
        \eta-\tfrac{L\eta^2}{2}=\frac{3}{8L}, 
        \quad
        \eta dL=\frac d2,
        \quad
        \frac{\eta^2 d L^2}{2(1-\eta L)}=\frac{d}{4}.
    \end{equation}
    It follows that
    \begin{equation}
        \mathcal{F}(\mu_{\eta}^{+})\leq \mathcal{F}(\mu) - \frac{3}{8L} \int \|\zeta(\mu)(x)\|^2 \mathrm{d}\mu(x) + \frac{3d}{4}
    \end{equation}
    By definition, we have that $\mathcal{F}(\pi):= \inf\mathcal{F}(\mu) \leq \mathcal{F}(\mu_{\eta}^{+})$. Combining this with the previous display, it follows that, for each $\mu\in\mathcal{P}_2(\mathcal{X})$,  
    \begin{equation}
    \mathcal{F}(\pi) \leq \mathcal{F}(\mu) - \frac{3}{8L} \int \|\zeta(\mu)(x)\|^2 \mathrm{d}\mu(x) + \frac{3d}{4}
    \end{equation}
    Rearranging, we have that
    \begin{equation}
        \int \|\zeta(\mu)(x)\|^2 \mathrm{d}\mu(x) \leq \frac{8L}{3}\left(\mathcal{F}(\mu) - \mathcal{F}(\pi)\right) + 2Ld
    \end{equation}
    Finally, summing over $t\in[1,T]$, and identifying $\zeta(\mu_t)(x):=\zeta_t(x)$, we have the desired result:
    \begin{equation}
    \sum_{t=1}^T \int \|\zeta_t(x)\|^2 \,\mathrm{d}\mu_t(x) \leq \frac{8L}{3} \sum_{t=1}^T \left(\mathcal{F}(\mu_t) - \mathcal{F}(\pi)\right) + 2LdT.
    \end{equation}
    \end{proof}

\begin{proof}[Proof of Proposition \ref{prop:smooth-adaptive-convergence-rate-improved}]
    For ease of notation, we begin by defining 
    \begin{equation}
        S_T:=\sum_{t=1}^{T} \left(\mathcal{F}(\mu_t) - \mathcal{F}(\pi)\right), \quad K_T=\left(d_1\log_{+}\frac{\bar{r}_{T+1}}{r_{\varepsilon}} + \bar{r}_{T+1}\right), \quad C_T = dT. \label{eq:quantities}
    \end{equation}
    Now, using Proposition \ref{prop:dog-wgd-bound-1-forward-flow-uniform} and Lemma \ref{lemma:smooth-grad-bound-improved}, we have that
    \begin{align}
        S_T = \mathcal{O}\left(K_T{\sqrt{G_T}}\right) = \mathcal{O}\left(K_T\sqrt{\frac{8L}{3}S_T + 2LC_T}\right)
    \implies
        S_T^2 = \mathcal{O}\left(K_T^2\left(\frac{8L}{3}S_T + 2LC_T\right)\right).
    \end{align}
    Solving this quadratic inequality, it follows that
    \begin{equation}
        S_T= \mathcal{O}\left(\frac{8}{3}LK_T^2 + K_T\sqrt{2LC_T}\right)
    \end{equation}
    Thus, dividing through by $T$, we have 
    \begin{align}
        \frac{1}{T}S_T &= \frac{1}{T}\sum_{t=1}^{T} \left(\mathcal{F}(\mu_t) - \mathcal{F}(\pi)\right) =\mathcal{O}\left(\frac{\frac{8}{3}LK_T^2}{T} + \frac{K_T\sqrt{2LC_T}}{T}\right)
    \end{align}
    Finally, using Lemma \ref{lemma:average} (i.e., the Jensen-type inequality for the average), and re-substituting the three quantities in \eqref{eq:quantities}, the result follows. 
\end{proof}

\begin{proof}[Proof of Corollary \ref{corollary:smooth-dog-convergence}]
First, note that we can neglect the first term in Proposition \ref{prop:smooth-adaptive-convergence-rate-improved}, since asymptotically the second term dominates. Then, on the event $\{\bar{r}_{T+1}\leq D\}$, we have $\bar{r}_{T+1}\leq D$, $d_1\leq D$. It follows that 
    \begin{equation}
        \mathcal{F}(\bar{\mu}_T) - \mathcal{F}(\pi) =\mathcal{O}\left(\frac{\left(D\log_{+}\frac{D}{r_{\varepsilon}} + D\right)\sqrt{2Ld}}{\sqrt{T}}\right) = \mathcal{O}\left(\frac{D\sqrt{2Ld}}{\sqrt{T}}\log_{+}\left(\frac{D}{r_{\varepsilon}}\right)\right).
    \end{equation}
\end{proof}

\end{document}